\let\vide\emptyset 
\let\appartient\in
\let\firstunion\cup
\let\firstinter\cap
\let\savesqcup\sqcup
\let\savesqcap\sqcap
\let\emptyset\vide 
\let\in\appartient
\let\cup\firstunion
\let\sqcup\savesqcup
\let\cap\firstinter
\let\sqcap\savesqcap
\renewcommand{\subset}{\subseteq}
\newcommand{\egdef}{\stackrel{\text{\tiny def}}{=}}
\let\n\newcommand
\newlength{\flgth}
\newlength{\flgthh}
\n{\TOG}{The Open Group}
\renewcommand{\wp}{\mathcal{P}}
\DeclareMathAlphabet{\mathpzc}{OT1}{pzc}{m}{it}
\DeclareMathAlphabet{\concr}{OT1}{cmtt}{m}{n}
\DeclareMathAlphabet{\comm}{OT1}{cmss}{m}{sl}
\newcommand{\textc}[1]%
{{\fontfamily{cmtt}%
\fontseries{m}%
\fontshape{n}%
\selectfont%
#1%
}}
\n{\mathc}[1]{\text{\textnormal{\textc{#1}}}}
\n{\citerinard}{\cite{rugina99pointer,rugina-pointer}}
\newcommand{\lab}{  {}^{\ell}}
\newcommand{\li}[1]{ {}^{\ell_{#1}}  }
\n{\fin}{\ell_{\star}} 
\n{\lend}{\ell_{\infty}}
\newcommand{\lip}[1]{ {}^{\ell{#1}}  }
\newcounter{labels}[figure]
\n{\nl}[1][{}]{\nls \ifthenelse{\equal{#1}{}}{\relax}{\nllab{#1}}}
\n{\nls}{\refstepcounter{labels}\lip{_{\arabic{labels}}}}
\n{\nlf}{}
\n{\nlnew}{\setcounter{labels}{0}}
\n{\nllab}[1]{\label{command-label:#1-6151752}}
\n{\nlrefu}[1]{\li{\ref{command-label:#1-6151752}}}
\n{\nlref}[1]{\nlrefb{#1}}
\n{\nlrefb}[1]{\ell_{\ref{command-label:#1-6151752}}}
\n{\nr}[1][{}]{\nlrefu{#1}}
\n{\cwhile}{\comm{while}}
\n{\cskip}{\comm{skip}}
\n{\ccreate}{\comm{create}}
\n{\crcreate}{\comm{spawn}}
\n{\cspawn}{\comm{spawn}}
\n{\cguard}{\comm{guard}}
\n{\cpar}{\comm{par}}
\n{\cifte}[2]{\comm{if}(#1)\comm{then}#2\comm{else}}
\n{\cif}{\comm{if}}
\n{\cond}{\mathit{cond}}
\n{\guardp}{\cond}
\n{\guardn}{\neg \cond}
\n{\cmd}{\mathit{cmd}}
\n{\stmt}{\mathit{stmt}}
\n{\excmd}{\mathit{example}}
\n{\restrict}[2]{#1_{|#2}}
\n{\compl}[1]{\overline{#1}}
\n{\Lab}{\mathbf{Labels}} 
\n{\Labs}[1]{\mathit{Labs}(#1)}
\n{\Labsf}[1]{\mathit{Labs}_{\mathit{father}}(#1)}
\n{\Labss}[1]{\mathit{Labs}_{\mathit{child}}(#1)}
\newcommand{\va}{
\mathpzc{Var}} 
\newcommand{\m}{\sigma} 
\n{\D}{\wp(\St)}
\n{\DI}{\wp(\Tra)}
\n{\Da}{\mathscr{D}}
\n{\DIa}{\mathscr{R}}
\n{\ids}{\mathbf{Ids}}
\n{\main}{\text{\textit{\textbf{main}}}}
\n{\return}{return}
\n{\returns}{returns}
\n{\returning}{returning}
\n{\returned}{returned}
\n{\sinit}{s_0}
\n{\Pinit}{P_0}
\n{\minit}{\m_0}
\n{\hinit}{\h_0}
\n{\send}{s_1}
\n{\iend}{i_1}
\n{\Pend}{P_1}
\n{\mend}{\m_1}
\n{\hendf}{\h_1}
\n{\csend}{\cstate{_1}}
\n{\sinter}{s}
\n{\Pinter}{P}
\n{\minter}{\m}
\n{\hinterf}{\h}
\n{\sinterb}{s'}
\n{\csinterb}{\cstate{'}}
\n{\stores}{\mathbf{Stores}}
\n{\St}{\mathbf{States}}
\n{\Tra}{\mathbf{Tr}}
\n{\Hist}{\mathbf{Genealogies}}
\n{\h}{g} 
\n{\ini}{\mathit{Init}} 
\n{\TR}[1]{\mathpzc{Tr}_{#1}}
\n{\TRi}[1]{\TR{#1}}
\n{\Tr}{\mathpzc{Tr}}
\n{\cstate}[1]{(i#1,P#1,\m#1,\h#1)}
\n{\regle}[3][{}]{ \infer[\hspace{-0.1cm}\text{\footnotesize #1}]{#3}{#2} }
\n{\sepprem}{ \qquad } 
 \n{\rightlocarrow}{\rightarrow}
\n{\cname}{\(\concr{G}\)-collecting }
\n{\Cname}{\(\concr{G}\)-collecting }
\n{\Cnamep}{\textit{\textc{G}}-collecting }
\n{\abstr}[1]{\mathpzc{#1}}
\n{\asem}[1]{\llparenthesis #1 \rrparenthesis}
\newlength{\lengthacolade}
\n{\lacol}{\big\{\hspace{\lengthacolade}\big| }
\n{\racol}{\big|\hspace{\lengthacolade}\big\} }
\n{\lsemantic}{\big[\hspace{\lengthacolade}\big| }
\n{\rsemantic}{\big|\hspace{\lengthacolade}\big] }
\n{\osem}[1]{\lsemantic #1 \rsemantic} 
\n{\ssem}[1]{\lacol #1 \racol}
\n{\eomega}{^{\uparrow\omega}}
\n{\eint}[1]{^{\uparrow #1}}
\n{\widen}{\triangledown}
\n{\ewiden}{^{\uparrow\widen}}
\newcommand{\lbl}{\mathit{label}}
\newcommand{\thread}{\mathit{thread}}
\newcommand{\ecr}{\mathit{write}}
\newcommand{\evb}{\mathit{bool}}
\newcommand{\vrai}{\mathit{true}}
\newcommand{\faux}{\mathit{false}}
\n{\desce}{\mathit{desc}}
\n{\after}{\mathit{after}}
\n{\dom}{\mathit{Dom}}
\n{\Sche}{\mathit{Schedule}}
\n{\av}[1]{\osem{#1}}
\n{\ur}{\mathit{reach}}
\n{\jc}[4]{#1\vdash_{#2}#3[#4]}
\n{\jcr}[5]{#1\vdash_{#2}#3[#4]}
\n{\jcrs}[3]{#1\vdash_{#2}#3}
\n{\aecr}[1]{\abstr{write}_{#1}}
\newcommand{\aecra}[2]{\aecr{#1}(#2)}
\newcommand{\iaecr}[1]{\abstr{write\text{-}inter}_{#1}}
\newcommand{\iaecra}[2]{\iaecr{#1}(#2)}
\newcommand{\force}{\abstr{enforce}}
\n{\afinit}[1]{\abstr{child}\text{-}\abstr{spawn}_{#1}%
}
\n{\afp}[1]{\abstr{guarantee}_{#1}}
\n{\afw}[1]{\abstr{iter}_{#1}}
\n{\guard}[1]{\abstr{guard}_{#1} }
\n{\assign}[1]{\abstr{assign}_{#1}}
\n{\glue}[1]{\abstr{combine}_{#1}}
\n{\loopwa}{\abstr{loop}}
\n{\acinter}[1]{\abstr{inter}_{#1}}
\n{\aspawn}[1]{\abstr{spawn}_{#1} }
\n{\aexe}{\abstr{execute\text{-}thread}}
\n{\gen}{\mathit{gen}}
\n{\ki}{\mathit{kill}}
\n{\addr}[2]{\mathit{addr}_{#1}(#2)}
\n{\valeur}[2]{\mathit{val}_{#1}(#2)}
\n{\fpd}[1]{\mathc{guarantee}_{#1}}
\n{\fp}[1]{\fpd{#1}}
\n{\cinter}[1]{\mathc{interfere}_{#1}}
\n{\qinter}{\mathc{interfere}}
\n{\cpost}{\mathc{post}}
\n{\cextract}[1]{\cpost({#1})}
\n{\cschedule}{\mathc{schedule-child}}
\n{\ccomb}{\mathc{combine}}
\n{\cinit}{\mathc{init-child}}
\n{\loopw}{\mathc{loop}}
\n{\cexe}{\mathc{execute\text{-}thread}}
\n{\Dval}{_{\text{\tiny{\(\mathscr{D}\)}}}}
\n{\Vval}{_{\text{\tiny{\(V\)}}}}
\n{\DIval}{_{\text{\tiny{\({\mathscr{R}}\)}}}}
\n{\galstore}{_{store}}
\n{\galstate}{_{st}}
\n{\gams}{\gamma\Dval}
\n{\alphs}{\alpha\Dval}
\n{\galtrans}{\DIval}
\n{\gamt}{\gamma\galtrans}
\n{\alpht}{\alpha\galtrans}
\n{\gamconf}{\gamma_{\text{cfg}}}
\n{\alphconf}{\alpha_{\text{cfg}}}
\n{\alphr}{\alpha\DIval}
\n{\gamr}{\gamma\DIval}
\n{\alphl}{\alpha_{\text{L}}}
\n{\gaml}{\gamma_{\text{L}}}
\n{\alphk}{\alpha_{\text{K}}}
\n{\gamk}{\gamma_{\text{K}}}
\n{\C}{\mathscr{C}} 
\n{\A}{\mathscr{A}} 
\n{\confinit}{\langle \ini,\Sche,\Sche \rangle}
\n{\qc}{\langle\concr{S},\concr{G},\concr{A}\rangle}
\n{\qcp}[1]{\langle\concr{S}#1,\concr{G}#1,\concr{A}#1\rangle}
\n{\qu}{\langle \abstr{C},\abstr{L},\abstr{K},\abstr{I}\rangle}
\n{\qup}[1]{\langle \abstr{C#1},\abstr{L#1},\abstr{K#1},\abstr{I#1}\rangle}
\n{\col}{\concr{Reach}}
\n{\colp}{{\concr{Reach}^{+}}}
\n{\extname}{\concr{Ext}}
\n{\ext}[3][]{\extname#1(#2,#3)}
\n{\Se}{\concr{Self}}
\n{\Su}{{\concr{Par}}}
\n{\Sud}{{\concr{Sub}}}
\n{\nq}{[\col,\extname,  \Se,\Su,\Sud]}
\n{\nqp}[1]{[\col #1,\extname#1, \Se #1,\Su #1,\Sud #1]}
\n{\am}{\m^{\sharp}}
\n{\progint}{Parint}
\n{\progP}{MT-Penjili}
\n{\loc}{L.o.C.}
\n{\messag}{Message}
\n{\embarque}{Embedded}
\n{\douze}{Test 12}
\n{\quinze}{Test 15}
\n{\mytopfigure}[3][]{%
\begin{figure}[t]%
\fbox{\hspace{-0.048\textwidth}\parbox{1.02\textwidth}{%
\begin{itemize}%
\vspace{\flgth}%
#3
\vspace{\flgth}%
\end{itemize}}}%
\caption{#2}%
\ifthenelse{\equal{#1}{}}{\relax}{\label{#1}}%
\end{figure}%
}
\n{\mytopfigureb}[3][]{%
\begin{figure}[t]%
\fbox{\parbox{\textwidth}{%
#3
}}
\caption{#2}%
\ifthenelse{\equal{#1}{}}{\relax}{\label{#1}}%
\end{figure}%
}
\n{\myfigure}{\mytopfigure}
\n{\mynumfigure}[3][]{%
\begin{figure}[t]%
\fbox{\hspace{-0.018\textwidth}\parbox{\textwidth}{%
\begin{enumerate}%
\vspace{\flgth}%
#3
\vspace{\flgth}%
\end{enumerate}}}%
\caption{#2}%
\ifthenelse{\equal{#1}{}}{\relax}{\label{#1}}%
\end{figure}%
}
\n{\nummain}{5}
\n{\epais}{\linethickness{0.56mm}}
\let\macomtempamoi\dottedline
\renewcommand{\dottedline}{\linethickness{0.1mm}\macomtempamoi}
\n{\grosseligne}{\epais\macomtempamoi{0.5}}
\n{\mondessinstates}{
\begin{floatingfigure}{3.6cm} \hspace{-4mm}\hspace{-0.5cm}
\fbox{\begin{picture}(36,30)
\put(3,30){
\put(10,0){
\put(0.8,-2){\small\(j_0\)}
\drawline(0,0)(0,-9)
\put(-0.75,-9){\(\bullet\)}
\put(1,-9){{\small\(\sinit\)}}
{
\grosseligne(0,-8)(0,-30)
{
\thinlines\drawline(0,-12)(10,-12) \put(10.8,-12){\small\(j_2\)}
\drawline(0,-5)(-5,-5) \put(-7.5,-5){\small\(j_1\)}
\drawline(0,-22)(5,-22) \put(5.8,-22){\small\(j_5\)}
}
\grosseligne(5,-22)(5,-30) 
}
}
\put(5,-5){
\drawline(0,0)(0,-25)
{\thinlines\drawline(0,-8)(-5,-8)\put(-7.5,-8){\small\(j_3\)}}
\drawline(-5,-8)(-5,-25)
}
\put(20,-12){
\grosseligne(0,0)(0,-18)
{\thinlines
\drawline(0,-4)(10,-4) \put(10.8,-4){\small\(j_4\)}
\put(-0.75,-8){\(\bullet\)}
\put(1,-8){{\small\(\sinter\)}}
\drawline(0,-13)(5,-13) \put(5.8,-13){\small\(j_6\)}
}
\grosseligne(10,-4)(10,-18)
\grosseligne(5,-13)(5,-18)
}
}

\end{picture}}
\caption{States}\label{fig:states}
\end{floatingfigure}
}
\n{\mondessinlocdef}{
\begin{figure}\fbox{

\subfloat[\(\col\)]{\label{figsub:singlethread}
\begin{picture}(14,30)
\put(3,30){
\put(0,0){
\put(0.8,-2){\small\(j_0\)}
\drawline(0,0)(0,-9)
\put(-0.75,-9){\(\bullet\)}
\put(1,-9){{\small\(\sinit\)}}
{
\grosseligne(0,-8)(0,-19.1)}
\put(-0.75,-20.1){\(\bullet\)}
\put(1,-20.1){{\small\(\send\)}}
\drawline(0,-18)(0,-30)
{
\thinlines
\drawline(0,-22)(5,-22) \put(5.8,-22){\small\(j_5\)}
}
\drawline(5,-22)(5,-30) 
}
}
\end{picture}}


\subfloat[\(\col\)]{\label{figsub:interference}
\begin{picture}(22,30)
\put(3,30){
\put(10,0){
\put(0.8,-2){\small\(j_0\)}
\drawline(0,0)(0,-9)
\put(-0.75,-9){\(\bullet\)}
\put(1,-9){{\small\(\sinit\)}}
{
\grosseligne(0,-8)(0,-19.1)}
\put(-0.75,-20.1){\(\bullet\)}
\put(1,-20.1){{\small\(\send\)}}
\drawline(0,-18)(0,-30)
{
\thinlines
\drawline(0,-5)(-5,-5) \put(-7.5,-5){\small\(j_1\)}
\drawline(0,-22)(5,-22) \put(5.8,-22){\small\(j_5\)}
}
\drawline(5,-22)(5,-30) 
}
\put(5,-5){
\drawline(0,0)(0,-25)
{\thinlines\drawline(0,-8)(-5,-8)\put(-7.5,-8){\small\(j_3\)}}
\drawline(-5,-8)(-5,-25)
}
 }

\end{picture}}

\subfloat[\(\Su\)]{\label{figsub:par}
\begin{picture}(36,30)
\put(3,30){
\put(10,0){
\put(0.8,-2){\small\(j_0\)}
\drawline(0,0)(0,-9)
\put(-0.75,-9){\(\bullet\)}
\put(1,-9){{\small\(\sinit\)}}
\dottedline{0.5}(-13,-8)(23,-8)
{ \drawline[0](0,-8)(0,-19.1)}
\put(-0.75,-20.1){\(\bullet\)}
\put(1,-20.1){{\small\(\send\)}}
\dottedline{0.5}(-13,-19.1)(23,-19.1)
\drawline(0,-18)(0,-30)
{
\thinlines
\drawline(0,-12)(10,-12) \put(10.8,-12){\small\(j_2\)}
\drawline(0,-5)(-5,-5) \put(-7.5,-5){\small\(j_1\)}
\drawline(0,-22)(5,-22) \put(5.8,-22){\small\(j_5\)}
}
\drawline(5,-22)(5,-30) 
}
\put(5,-5){
\drawline(0,0)(0,-25)
{\thinlines\drawline(0,-8)(-5,-8)\put(-7.5,-8){\small\(j_3\)}}
\drawline(-5,-8)(-5,-25)
}
\put(20,-12){
\grosseligne(0,0)(0,-7)
{\thinlines
\drawline(0,-4)(10,-4) \put(10.8,-4){\small\(j_4\)}
}
\grosseligne(10,-4)(10,-7)
}
 }

\end{picture}}

\subfloat[\(\Sud\)]{\label{figsub:sub}
\begin{picture}(36,30)
\put(3,30){
\put(10,0){
\put(0.8,-2){\small\(j_0\)}
\drawline(0,0)(0,-9)
\put(-0.75,-9){\(\bullet\)}
\put(1,-9){{\small\(\sinit\)}}
{\drawline[0](0,-8)(0,-19.1)}
\put(-0.75,-20.1){\(\bullet\)}
\put(1,-20.1){{\small\(\send\)}}
\dottedline{0.5}(-13,-19.1)(23,-19.1)
\drawline(0,-18)(0,-30)
{
\thinlines\drawline(0,-12)(10,-12) \put(10.8,-12){\small\(j_2\)}
\drawline(0,-5)(-5,-5) \put(-7.5,-5){\small\(j_1\)}
\drawline(0,-22)(5,-22) \put(5.8,-22){\small\(j_5\)}
}
\drawline(5,-22)(5,-30) 
}
\put(5,-5){
\drawline(0,0)(0,-25)
{\thinlines\drawline(0,-8)(-5,-8)\put(-7.5,-8){\small\(j_3\)}}
\drawline(-5,-8)(-5,-25)
}
\put(20,-12){
\drawline(0,0)(0,-7.5)
{\thinlines
\drawline(0,-4)(10,-4) \put(10.8,-4){\small\(j_4\)}
\drawline(0,-13)(5,-13) \put(5.8,-13){\small\(j_6\)}
}
\drawline(10,-4)(10,-7.5)
{
\grosseligne(0,-7)(0,-18)
\grosseligne(10,-7)(10,-18)
\grosseligne(5,-13)(5,-18)}
}
}
\end{picture}}

}
\caption{\Cname semantics}\label{fig:illusconcr}
\end{figure}
}
\n{\figthschun}{
\begin{figure}\fbox{

\subfloat[\(s_{k_0}\)]{
\begin{picture}(36,30)
\put(3,30){
\put(10,0){
\put(0.8,-2){\small\(j_0\)}
\drawline(0,0)(0,-9)
\put(-0.75,-9){\(\bullet\)}
\put(1,-9){{\small\(s_{0}\)}}
{\drawline[0](0,-8)(0,-19.1)}
\put(-0.75,-19.1){\(\bullet\)}
\put(1,-19.1){{\small\(s_{k_0}\)}
}
\put(-0.75,-27.1){\(\bullet\)}
\put(1,-27.1){{\small\(s_n\)}
}
\drawline(0,-18)(0,-30)
{
\thinlines\drawline(0,-12)(10,-12) \put(10.8,-12){\small\(j_2\)}
\drawline(0,-5)(-5,-5) \put(-7.5,-5){\small\(j_1\)}
\drawline(0,-22)(5,-22) \put(5.8,-22){\small\(j_5\)}
}
\drawline(5,-22)(5,-30) 
}
\put(5,-5){
\drawline(0,0)(0,-25)
{\thinlines\drawline(0,-8)(-5,-8)\put(-7.5,-8){\small\(j_3\)}}
\drawline(-5,-8)(-5,-25)
}
\put(20,-12){
\drawline(0,0)(0,-7.5)
{\thinlines
\drawline(0,-4)(10,-4) \put(10.8,-4){\small\(j_4\)}
\drawline(0,-13)(5,-13) \put(5.8,-16){\small\(j_6\)}
}
\drawline(10,-4)(10,-7.5)
{
\drawline(0,-7)(0,-18)
\drawline(10,-7)(10,-18)
\drawline(5,-13)(5,-18)}
}
}
\end{picture}
}

\subfloat[\(\Su_k\) and \(\Sud_k\)]{\label{subfig:seqparsub}
\begin{picture}(36,30)
\put(3,30){
\put(10,0){
\put(0.8,-2){\small\(j_0\)}
\drawline(0,0)(0,-9)
\put(-0.75,-9){\(\bullet\)}
\dottedline{0.5}(8,-8)(23,-8)
{\drawline[0](0,-8)(0,-19.1)}
\put(-0.75,-19.1){\(\bullet\)}
\dottedline{0.5}(3,-18.1)(23,-18.1)
\put(-0.75,-27.1){\(\bullet\)}
\dottedline{0.5}(3,-26.1)(7,-26.1)
\dottedline{0.5}(8,-29)(8,-8)
\dottedline{0.5}(23,-29)(23,-8)
\drawline(0,-18)(0,-30)
{
\thinlines\drawline(0,-12)(10,-12) 
\put(12,-15){\small\(\Su_1\)}
\drawline(0,-5)(-5,-5) 
\drawline(0,-22)(5,-22) 
\put(2,-22){\small\(\Su_2\)}
\put(2,-29){\small\(\Sud_2\)}
}
\drawline(5,-22)(5,-30) 
}
\put(5,-5){
\drawline(0,0)(0,-25)
{\thinlines\drawline(0,-8)(-5,-8)
}
\drawline(-5,-8)(-5,-25)
}
\put(20,-12){
\drawline(0,0)(0,-7.5)
{\thinlines
\drawline(0,-4)(10,-4) 
\put(3,-10){\small\(\Sud_1\)}
\drawline(0,-13)(5,-13) 
}
\drawline(10,-4)(10,-7.5)
{
\drawline(0,-7)(0,-18)
\drawline(10,-7)(10,-18)
\drawline(5,-13)(5,-18)}
}
}
\end{picture}
}

\subfloat[\(\Su'\) and \(\Sud'\)]{\label{subfig:seqparsubp}
\begin{picture}(36,30)
\put(3,30){
\put(10,0){
\put(0.8,-2){\small\(j_0\)}
\drawline(0,0)(0,-9)
\put(-0.75,-9){\(\bullet\)}
\dottedline{0.5}(8,-8)(23,-8)
{\drawline[0](0,-8)(0,-19.1)}
\put(-0.75,-19.1){\(\bullet\)}
\dottedline{0.5}(3,-18.1)(8,-18.1)
\put(-0.75,-27.1){\(\bullet\)}
\dottedline{0.5}(3,-26.1)(23,-26.1)
\dottedline{0.5}(8,-18.1)(8,-8)
\dottedline{0.5}(23,-29)(23,-8)
\drawline(0,-18)(0,-30)
{
\thinlines\drawline(0,-12)(10,-12) 
\drawline(0,-5)(-5,-5) 
\drawline(0,-22)(5,-22) 
\put(7,-29){\small\(\Sud'\)}
}
\drawline(5,-22)(5,-30) 
}
\put(5,-5){
\drawline(0,0)(0,-25)
{\thinlines\drawline(0,-8)(-5,-8)
}
\drawline(-5,-8)(-5,-25)
}
\put(20,-12){
\drawline(0,0)(0,-7.5)
{\thinlines
\drawline(0,-4)(10,-4) 
\put(3,-8){\small\(\Su'\)}
\drawline(0,-13)(5,-13) 
}
\drawline(10,-4)(10,-7.5)
{
\drawline(0,-7)(0,-18)
\drawline(10,-7)(10,-18)
\drawline(5,-13)(5,-18)}
}
}
\end{picture}
}

}\caption{Sequence}\label{fig:seq}
\end{figure}
}
\n{\classicfig}{
\begin{figure}\fbox{
\begin{picture}(36,30)
\put(3,30){
\put(10,0){
\put(0.8,-2){\small\(j_0\)}
\drawline(0,0)(0,-9)
\put(-0.75,-9){\(\bullet\)}
\put(1,-9){{\small\(\sinit\)}}
{\drawline[0](0,-8)(0,-19.1)}
\put(-0.75,-20.1){\(\bullet\)}
\put(1,-20.1){{\small\(\send\)}}
\drawline(0,-18)(0,-30)
{
\thinlines\drawline(0,-12)(10,-12) \put(10.8,-12){\small\(j_2\)}
\drawline(0,-5)(-5,-5) \put(-7.5,-5){\small\(j_1\)}
\drawline(0,-22)(5,-22) \put(5.8,-22){\small\(j_5\)}
}
\drawline(5,-22)(5,-30) 
}
\put(5,-5){
\drawline(0,0)(0,-25)
{\thinlines\drawline(0,-8)(-5,-8)\put(-7.5,-8){\small\(j_3\)}}
\drawline(-5,-8)(-5,-25)
}
\put(20,-12){
\drawline(0,0)(0,-7.5)
{\thinlines
\drawline(0,-4)(10,-4) \put(10.8,-4){\small\(j_4\)}
\drawline(0,-13)(5,-13) \put(5.8,-13){\small\(j_6\)}
}
\drawline(10,-4)(10,-7.5)
{
\drawline(0,-7)(0,-18)
\drawline(10,-7)(10,-18)
\drawline(5,-13)(5,-18)}
}
}
\end{picture}
}\caption{Sequence}
\end{figure}
}
\n{\figthw}{
\begin{figure}
\begin{tikzpicture}
\draw (1.8,0) ellipse (6pt and 6pt) node {\(\ell_1\)} ;
\draw (4.5,0) ellipse (6pt and 6pt) node {\(\ell_2\)};
\draw[->] (2,0) -- (4.3,0)  node[midway,above] {~~~\(\cguard(\cond)\)};
  \draw[->] (4.7,0.04) -- (6.2,0.4);
\draw[<-] (4.7,-0.04) -- (6.2,-0.4);
\draw (7,0) ellipse (40pt and 20pt) node {\(\Labs{\cmd}\)};
 \draw (11,0) ellipse (6pt and 6pt) node {\(\ell_3\)};
  \draw[->] (8,0) .. controls (12,-0.2) and (7,2.1)  .. (1.9,0.2) node[near end,above] {End of \(\cmd\)};
  \draw[->] (1.9,-0.2) .. controls (6,-1.2) and (8,-1.2)  .. (10.9,-0.2) node[midway,below] {\(\cguard(\neg\cond)\)};

\end{tikzpicture}
\caption{Labels of while}\label{fig:labwhile}
\end{figure}
}
\n{\figthun}{
\begin{figure}
\begin{tikzpicture}
\draw (-0.5,0) ellipse (6pt and 6pt) node {\(\ell_1\)} ;
\draw[->] (-0.3,0.04) -- (1.2,0.4);
\draw[<-] (-0.3,-0.04) -- (1.2,-0.4);
\draw (2,0) ellipse (40pt and 20pt) node {\(\Labs{\cmd_1}\)};
\draw (4.5,0) ellipse (6pt and 6pt) node {\(\ell_2\)};
\draw[<-] (4.3,0) -- (3,0);
\draw[->] (4.7,0.04) -- (6.2,0.4);
\draw[<-] (4.7,-0.04) -- (6.2,-0.4);
\draw (7,0) ellipse (40pt and 20pt) node {\(\Labs{\cmd_2}\)};
\draw (9.5,0) ellipse (6pt and 6pt) node {\(\ell_3\)};
\draw[<-] (9.3,0) -- (8,0);
\end{tikzpicture}
\caption{Labels of sequences}\label{fig:labseq}
\end{figure}
}
\n{\mytable}{%
\begin{floatingtable}[r]{
\begin{tabular}{l|c|c|c|c|}
& \loc & \progint & \multicolumn{2}{c|}{\progP} \\ \cline{3-5}
& & time & time & \begin{tabular}{l}false\\ alarms\end{tabular} \\
\hline
\messag & 65 & 0.05 & 0.20s & 0 \\
\hline
\embarque & 27 100 & - & 0.34s & 7 \\
\hline
\douze & 342 & - & 3.7s & 1 \\
\hline
\quinze & 414 & 3.8  & - & - \\
\hline
\end{tabular}} \caption{Benchmarks}\label{table:benchmarks}
\end{floatingtable}
}
\n{\tableabstraction}{
\begin{floatingtable}%
{\small
\begin{tabular}[r]{|p{4.82cm}|p{2.6cm}|}
 Concrete function & Abstract function\!\!\\
 \hline
 \(\func{(i,P,\m,\h)}{(i,P,\ecr_{lv:=e}(\m),\h)}\)& \(\aecr{lv:=e}: \Da \rightarrow \Da\) \!\!\\
 \hline
 \(\func{\concr{A},\concr{S}}{\cinter{\concr{A}}(\concr{S})}\)&\(\acinter{}:\DIa \times \Da \rightarrow \Da\)\!\!\\
 \hline
 \(\func{\concr{S}}{\{(i,P,\m,\h) , (i,P',\m',\h')\in\Tra  |  \newline (i,P,\m,\h) \in \concr{S} \wedge \m'\in\ecr_{lv:=e}(\concr{S}) \} }\) & \(\iaecr{lv:=e} :\newline \Da \rightarrow \DIa\)\!\!\\
 \hline
 \(\func{S}{\{(i,P,\m,h)\in S\mid  \evb(\m,\cond)=\vrai \}}\) & \(\force_{\cond}\!\! : \Da \rightarrow \Da \)\!\!\\
 \hline
\end{tabular}\vspace{-2mm}}\caption{Given abstractions\vspace{-2mm}}\label{fig:abstractions}
\end{floatingtable}
}
\n{\examples}{%
\begin{figure}%
{%
\nlnew%
\subfloat[\(\nlrefu{init1}\excmd_1,\lend\)]{\label{subfig:while}%
\(\begin{array}{l}%
\nl[init1] x:=0;
\\\nl[a1] \cwhile(\vrai)
\\{}\quad\{\nl[a2]\ccreate(\nl[a3] x:=x+1)\},\lend
\end{array}
\)
}
\subfloat[\(\nlrefu{init2}\excmd_2,\lend\)]{\label{subfig:create}
\(
\begin{array}{l}
\nl[init2] x:=0;\nl[b1] y:=0;
\\\nl[b2] \ccreate(\nl[b3] x=x+y);
\\\nl y:=3,\lend
\end{array}
\)
}\\
\subfloat[\(\li1\excmd_3,\lend\)]{\label{subfig:exemple}
\(\begin{array}{l}
 \li1 y:=0 ; \li2 z:=0;\\
\li3\ccreate(\li4 y:=y+z);\\
\li5 z:=3,\lend
\end{array}\)
} 
\subfloat[\(\nlrefu{init4}\excmd_4,\lend\)]{\label{fig:example:inter}\begin{tabular}{l}
 \(\nl[init4] y:=0 ; \nl z:=0;\)\\
\(\nl\ccreate(\nl \label{label:trois} y:=3);\)\\
\(\nl y:= 1; \label{label:un}   \nl \label{label:deux} z:=y,\lend\)
\end{tabular}}
}\caption{Program Examples}\label{fig:examples}
\end{figure}
}
\n{\myexamplebis}{
\begin{floatingfigure}{3.3cm}\hspace{-9mm}
\fbox{\hspace{-1mm}
\begin{tabular}{l}
 \(\nl y:=0 ; \nl z:=0;\)\\
\(\nl\ccreate(\nl \label{label:trois} y:=3);\)\\
\(\nl y:= 1; \label{label:un}   \nl \label{label:deux} z:=y,\lend\)
\end{tabular}\hspace{-2mm}
}
\caption{Example}\label{fig:example:inter}
\end{floatingfigure}
}
\n{\myexample}{
\begin{floatingfigure}{3.5cm}\hspace{-9mm}
\fbox{\hspace{-1mm}
\begin{tabular}{l}
 \(\nl y:=0 ; \nl z:=0;\)\\
\(\nl\ccreate(\nl y:=y+z);\)\\
\(\nl z:=3,\lend\)
\end{tabular}\hspace{-2mm}
}
\caption{Example}\label{fig:example}
\end{floatingfigure}
}
\n{\subformulareach}{(\restrict{\concr{G}}{\after(\sinit)}\cap\TR{\lab \stmt,\ell'})\cup \restrict{\concr{A}}{\compl{\after(\sinit)}}
}
\n{\formulareach}[1][\send]{\left\{(\sinit,#1)\left|
\begin{array}{l}
 (\sinit,#1) \in \big[
\subformulareach
\big]^{\star}
\\
 \wedge \thread(\sinit)=\thread(#1) \wedge \lbl(\sinit)=\ell
\end{array}
\right.\right\}}
\n{\formaldefsem}{%
\begin{definition}\label{def:concrsem}
\[\av{\lab \stmt,\ell'}\qc \egdef \langle \concr{S}',
 \concr{G}\cup\Se\cup\Su\cup\Sud,
 \concr{A}\cup\Su\cup\Sud   \rangle  \]%
\[\ssem{\lab \stmt,\ell'}\qc \egdef \nq \]%
  where:
 \begin{align*}
\col &= \formulareach
 \\
 \concr{S}' &= \{\send | \send\in\col\langle\concr{S}\rangle \wedge \lbl(\send)=\ell' \}  \\
 \Se &= \{(\sinter,\sinterb)\in \TR{\lab \stmt,\ell'} | \sinter\in \col\langle\concr{S}\rangle \}\\
\Su &= 
\{(\sinter,\sinterb)\in\TR{\lab \stmt,\ell'}| \exists \sinit\in\concr{S} : (\sinit,\sinter)\in \col;\Sche \wedge \sinter\in\after(\sinit) \}
\\
\ext{\sinit}{\send} &= \big[
(\restrict{\concr{G}}{\after(\sinit)}\cap\TR{\lab \stmt,\ell'})\cup \restrict{\concr{A}}{\compl{\after(\sinit)}} \cup \restrict{\concr{G}}{\after(\send)} 
\big]^{\star}\\
\Sud& = 
\left\{(\sinter,\sinterb)\left|\begin{array}{l}\exists \sinit,\send \in\concr{S}\times\concr{S'} : (\sinit,\send)\in\col\wedge  \\(\send,\sinter)\in \ext{s_0}{s_1} \wedge \sinter\in\after(\sinit)\smallsetminus \after(\send)\end{array}
\right.\right\}
 \end{align*}
  \end{definition}
}
\n{\auxiliarydefs}{\mytopfigureb[fig:opfunc]{Auxiliary definitions}{
 \(\thread(i,P,\m,\h) \egdef i\)\\
  \(\lbl(i,P,\m,\h) \egdef P(i)\)\\
  \( \after(i,P,\m,\h)\egdef\{(j,P',\m',\h\cdot\h')\in\St| j\in\desce_{\h'}(\{i\}) \} \)\\
 For \(X\subset \wp(\ids)\) :
  \begin{itemize}
   \item  
  \(\desce_{\epsilon}(X) \egdef X\)
   \item 
  and \(\desce_{(i,\ell,j)\cdot\h}(X)\egdef
 \begin{cases}
  \desce_{\h}(X\cup\{j\}) & \text{if }i\in X\\  
  \desce_{\h}(X) & \text{else}
 \end{cases}
  \)
  \end{itemize} 
 }
}
\n{\figrules}{
\begin{figure}[t]
{
\parbox{\textwidth}{\center
\vspace{\flgth}

\(\begin{array}{cc}\hspace{-0.25cm}
 \regle[\hspace{-0.1cm}assign]{\m'= \ecr_{lv:=e}(\m)}{%
 \li1 lv:=e,\ell_2 \vdash (\ell_1,\m) \rightlocarrow (\ell_2,\m') }
& 
\hspace{-0.15cm}\regle[\hspace{-0.06cm}guard]{\evb(\m,\cond)=\vrai}{%
\hspace{-0.1cm}\li1 \cguard(\cond),\ell_2 \vdash
(\ell_1,\m)\rightlocarrow(\ell_2,\m)}
\end{array}\)\\%
\(\begin{array}{cc}\hspace{-0.2cm}
\regle[while entry]
{\li1 \cguard(\cond),\ell_2\vdash t}
{\li1 \cwhile(\cond)\{\li2\cmd\},\ell_3\vdash t}
&
\regle[while exit]
{\li1 \cguard(\guardn),\ell_3\vdash t}
{\li1 \cwhile(\guardp)\{\li2\cmd\},\ell_3 \vdash t}
\end{array}\)\\%
\(\regle[then]
{\li1 \cguard(\guardp),\ell_2\vdash t}
{\li1 \cifte{\guardp}{\{\li2\cmd_1\}}\{\li3cmd_2\},\ell_4 \vdash t}\)\\
\(\regle[else]
{\li1 \cguard(\guardn),\ell_3\vdash t}
{\li1 \cifte{\guardp}{\{\li2\cmd_1\}}\{\li3cmd_2\},\ell_4 \vdash t}\)\\
\(\regle[parallel]{P(i)=\ell  \sepprem \li1 \stmt,\ell_2\vdash (\ell,\m)\rightlocarrow (\ell',\m') }{
\li1 \stmt,\ell_2\Vdash (i,P,\m,\h) \rightarrow (i,
\fx{P}{i}{\ell'}
,\m',\h)
}\)\\
\(\regle[spawn]{ P(i)=\ell_1 \sepprem \text{\(j\) is fresh 
in \((i,P,\m,\h)\)}\sepprem P'=\fx{ \fx{P}{i}{\ell_{3}} }{j}{\ell_2} }%
{%
\li1\crcreate(\ell_2),\ell_3 \Vdash
(i,P,\m,\h) \rightarrow (i,
P'
,\m,h\cdot (i,\ell_2,j))}\)\\
\(\regle[then body]{\li2\cmd,\ell_4\Vdash \tau }%
{%
\li1 \cifte{\guardp}{\{\li3\cmd_2\}}\{\li3cmd_2\},\ell_4 \Vdash \tau}
\)\\
\(\regle[else body]{\li2\cmd,\ell_4\Vdash \tau }%
{%
\li1 \cifte{\guardp}{\{\li2\cmd_1\}}\{\li3cmd_2\},\ell_4 \Vdash \tau}
\)\\
\(\begin{array}{ccc}
\regle[create]{\li1\crcreate(\ell_2),\ell_{3} \Vdash\tau}{\li1\ccreate(\li2\cmd),\ell_{3} \Vdash\tau}%
&\quad&
\regle[while body]{\li2\cmd,\ell_1\Vdash \tau }%
{%
\li1 \cwhile(\cond)\{\li2\cmd\},\ell_3 \Vdash \tau}
\end{array}\)\\
\(\begin{array}{ccc}
\regle[sequence 1]{\li{1}\cmd_1,\ell_{2}\Vdash \tau }%
{%
\li1 \cmd_1;\li2\cmd_2,\ell_3 \Vdash \tau}
&\quad&
\regle[sequence 2]{\li{2}\cmd_2,\ell_{3}\Vdash \tau }%
{%
\li1 \cmd_1;\li2\cmd_2,\ell_3 \Vdash \tau}
\end{array}\)\\
\(\begin{array}{ccc}
\regle[child]{\li2\cmd,\lend\Vdash \tau }%
{%
\li1 \ccreate(\li2\cmd),\ell_3 \Vdash \tau}
&\quad&
\regle[schedule]{
\text{\(P(j)\) is defined}\sepprem i\neq j}
{
\lab \stmt,\ell'\Vdash (i,P,\m,\h)\rightarrow(j,P,\m,\h)
}\end{array} \)
}}
\caption{Operational semantics rules}\label{figrules}
\end{figure}
}
\n{\pluspres}{\hspace{-0.2cm}}
\n{\figsyntax}{%
\begin{figure}[t]\fbox{\parbox{\textwidth}{\center%
\vspace{\flgth}%
\(
 \hspace{-0.5cm}\begin{array}{cclr}
 lv & ::= \pluspres&\pluspres & \text{left value}\\
  & |\pluspres &\pluspres x & \text{variable}\\
  & |\pluspres &\pluspres\pluspres {}^{\ast}e\hspace{-1cm} & \hspace{-1cm}\text{pointer deref}\\
e & ::= \pluspres&\pluspres & \hspace{-1cm}\text{expression}\\
 & |\pluspres & c & \text{constant}\\
 & |\pluspres &\pluspres lv & \text{left value}\\
 & |\pluspres &\pluspres o(e_1,e_2)\hspace{-0.3cm} & \text{operator}\\
 & |\pluspres &\pluspres \&x &\text{address}\\
\cond & ::= & &\pluspres \hspace{-1cm}\text{condition}\\
 & |\pluspres &\pluspres x & \text{variable}\\
 & |\pluspres &\pluspres \neg \cond\hspace{-0.3cm} & \text{negation}\\
 \end{array}
 \begin{array}{cclr}
 \cmd & ::= \pluspres& \pluspres&\pluspres \text{command}\\
 & |\pluspres &\pluspres \lab lv := e & \text{assignment}\\
 & |\pluspres &\pluspres \cmd_1;\cmd_2\hspace{-1cm} & \text{sequence}\\
 & |\pluspres &\pluspres \cifte{\cond}{\{\cmd_1\}}\{\cmd_2\} \hspace{-1.8cm}& \text{if}\\
 & |\pluspres &\pluspres \lab \cwhile(\cond)\{\cmd\}\hspace{-1cm}&\text{while}\\
 & |\pluspres &\pluspres \lab \ccreate(\cmd)&\text{new thread}\\
\stmt & ::= \pluspres& \pluspres& \text{statement}\\
 & |\pluspres &\pluspres \cmd,\ell' & \text{command}\\
 & |\pluspres &\pluspres \lab \cguard(\cond),\ell' & \text{guard}\\
 & |\pluspres &\pluspres \lab \crcreate(\ell''),\ell'&\text{new thread}\\
\end{array}
\)}}\caption{Syntax}\label{figsyntax}
\end{figure}}
\n{\figabstractsemantics}{%
\n{\milieu}{\abstr{Q}&\egdef&}%
\begin{figure}[t]\center\fbox{
\(\begin{array}{rcl}
\asem{\lab lv:=e} \milieu  \assign{lv:=e}\abstr{Q} \\ 
\asem{\li{1}\cmd_1;\li{2}\cmd_2} \milieu \asem{ \li{2}\cmd_2}\circ \asem{\li{1}\cmd_1} \abstr{Q} \\
\asem{\li1\cwhile(\guardp)\{\li2\cmd\}} \milieu \guard{\guardn} \loopwa\eomega \abstr{Q}\\
\text{with } \loopwa (\abstr{Q'}) &\egdef& \big(\asem{\cmd} \circ \guard{\guardp} \abstr{Q'}\big)\sqcup\abstr{Q'}\\
\asem{\li1\ccreate(\li2\cmd)}\milieu \glue{\abstr{Q}'}\circ \afp{\li2\cmd}\eomega \circ \afinit{\ell_2} (\abstr{Q})
\\
\text{with } \abstr{Q'} &\egdef& \aspawn{\ell_2}(\abstr{Q})
\end{array}\)}
\caption{Abstract semantics}\label{fig:abstract}
\end{figure}%
}
\n{\extendedsyntax}{%
\myfigure[mfeatures]{Extended syntax}{
\item[]%
\(\asem{\li0\cpar\{\li1\cmd_1|\li2\cmd_2\}}(\abstr{Q})\egdef\langle \abstr{C_1}\sqcap\abstr{C_2},\abstr{L},
\abstr{K'}
,\abstr{I}_1\sqcup\abstr{I}_2  \rangle\)
\\ with  
\(\qup{_1}= \afp{\li1\cmd_1,\lend} \circ \afinit{\ell_1} (\abstr{Q})  \)%
\\and \(\qup{_2}= \afp{\li2\cmd_2,\lend} \circ \afinit{\ell_2} (\abstr{Q})  \)%
\\and \(\abstr{K'}=\fx{\fx{\abstr{K}}{\ell_1}{\abstr{K_2}(\fin)\sqcup\abstr{K}(\ell_1)}}{\ell_2}{\abstr{K}_1(\fin)\sqcup\abstr{K}(\ell_2)}\)%
}%
}
\n{\basicsemfuncs}{
\begin{figure}[t]\begin{center}
 \fbox{\parbox{0.98\textwidth}{\vspace{-4mm}%
 \[\hspace{-0.2cm}\begin{array}{rcl}
  \cinter{\concr{A}}(\concr{S})&\egdef& \left\{s'\left| \exists s\in\concr{S}  :
   \begin{array}{l} 
   (s,s')\in(  \restrict{\concr{A}}{\compl{\after(s)}}\cup\Sche)^{\star}
   \\
   \wedge \thread(s)=\thread(s')
   \end{array}
     \right. \right\}
  \\
  \cextract{\ell}&\egdef&\left\{s'\left| \begin{array}{l}\exists s = (i,P,\m,\h\cdot(i,\ell,j))\in\St : \\  s'\in\after(s)\end{array}\right.\right\}
  \\
   \cschedule(\concr{S}) &\egdef& \left\{(j,P,\m,\h')\left| \exists i,\h : \begin{array}{l}(i,P,\m,\h')\in \concr{S}\\ \wedge \h'=\h\cdot(i,\ell,j)\end{array}\right.\right\}
   \\
\cinit_{\ell}(\qc)&\egdef  &
 \langle \cinter{\concr{A}\cup (\restrict{\concr{G}}{\cextract{\ell}})}\circ \cschedule(\concr{S}),\\ & & {}\quad\Sche,
  \concr{A}\cup(\restrict{\concr{G}}{\cextract{\ell}}) \rangle
 \\
 \ccomb_{\qc}(\concr{G}')&\egdef&  \langle \cinter{\concr{A}\cup\concr{G}' }(\concr{S}),\concr{G}\cup\concr{G}'
, \concr{A}\cup\concr{G}'  \rangle
\\
\cexe_{f,\concr{S},\concr{A}}(\concr{\concr{G}}) &\egdef& \concr{G}' \text{ with } \langle \concr{S}',\concr{G'},\concr{A}'\rangle = f \qc
\\
\fpd{f}\qc &\egdef& \cexe_{f,\concr{S},\concr{A}}\eomega (\concr{G})
   \end{array}\]\vspace{-4mm}
}}
\caption{Basic semantic functions}\label{concrfcts}
\end{center}
\end{figure}
}
\n{\figbasicabstractsem}{%
\begin{figure}[t]\center
 \fbox{
\vspace{-4mm}\begin{tabular}{rcl}
 $\assign{lv:=e} \qu$ &$\egdef$& $\langle \acinter{\abstr{I}}\circ\aecra{lv:=e}{\abstr{C}}, \abstr{L},\abstr{K''} ,\abstr{I}\rangle$\\
 \multicolumn{3}{l}{
 $\text{with } \abstr{K''}=\func{\ell}{\text{if } \ell\in\abstr{L} 
 \text{ then } \abstr{K}(\ell)\sqcup \iaecra{lv:=e}{\abstr{C}} \text{ else } \abstr{K}(\ell)    }$
 }
 \\
 $\guard{\guardp}\qu $&$\egdef$&$ \langle \acinter{\abstr{I}}\circ\force_{\guardp}(\abstr{C}),\abstr{L},\abstr{K},\abstr{I} \rangle$ 
 \\
 $\aspawn{\ell}\qu$&$\egdef$&$\langle \abstr{C},\abstr{L}\cup\{\ell\},\abstr{K},\abstr{I} \rangle$
 \\
 $\afinit{\ell} \qu $&$\egdef$&$ \langle \acinter{\abstr{I}\sqcup\abstr{K}(\ell) }(\abstr{C}), \abstr{L}, \func{\ell}{\bot}, \abstr{I}\sqcup\abstr{K}(\ell) \rangle$
 \\
 $\glue{\qu} (\abstr{K}') $&$\egdef$&$ \langle \acinter{\abstr{I}\sqcup\abstr{K'}(\fin)  }(\abstr{C}),\abstr{L},\abstr{K}\sqcup\abstr{K'},\abstr{I}\sqcup \abstr{K'}(\fin) \rangle$
 \\
 $\aexe_{\lab \cmd,\ell',\abstr{C},\abstr{L},\abstr{I}}({\abstr{K}}) $&$\egdef$& $\concr{\abstr{K}}'$\\
  $\text{with } \qup' $&$=$&$ \asem{\lab \cmd,\ell'} \qu$
  \\
  $\afp{\lab \cmd,\ell'}(\qu)$&$\egdef$&$\aexe_{\lab \cmd,\ell',\abstr{C},\abstr{L},\abstr{I}}\eomega({\abstr{K}})$
\end{tabular}\vspace{-4mm}
%
}
\caption{Basic abstract semantic functions}\label{fig:basicabstract}
\end{figure}
}
\let\newcommandbis\newcommand
\newcommand{\fleche}{\mapsto}
\newcommand{\func}[2]{\lambda #1. #2 } 
\newcommand{\fx}[3]{#1[#2 \fleche #3]} 
\newcommand{\nf}[3]{#1\smallsetminus[#2 \fleche #3]}  
\newcommand{\ndf}[2]{#1(#2)\uparrow} 
\newcommandbis{\notationdesfonctions}{
\paragraph{Notation :}  \( \fx{f}{z}{t} \) refer to \(f\) in which the value of \( z \) as change : it is now defined and is value is \( t \).

\( \nf{f}{z}{t} \) refer to \(f\) in which the value of \( z \) which was \(t\) has changed : it is now undefined 

\(\ndf{f}{x}\) will mean that \(f(x)\) is not defined
}
\newcommandbis{\notationdesfonctionslight}{%
\paragraph{Notation :}  \( \fx{f}{z}{t} \) refer to \(f\) in which the value of \( z \) as change : it is now defined and is value is \( t \).}
\newtheorem{definition}{Definition}
\newtheorem{proposition}{Proposition}
\newtheorem{theorem}{Theorem}
\newtheorem{lemma}{Lemma}
\newtheorem{claim}{Claim}
\begin{document}

 \title{From Single-thread to Multithreaded: An Efficient Static Analysis Algorithm}

\author{
  Jean-Loup Carr\'e
   \and 
  Charles Hymans
}

\maketitle

\begin{abstract}
A great variety of static analyses that compute safety properties of single-thread programs have now been developed. This paper presents a systematic method to extend a class of such static analyses, so that they handle programs with multiple POSIX-style threads. Starting from a pragmatic operational semantics, we build a denotational semantics that expresses reasoning \textit{\`a la} assume-guarantee. The final algorithm is then derived by abstract interpretation. It analyses each thread in turn, propagating interferences between threads, in addition to other semantic information. The combinatorial explosion, ensued from the explicit consideration of all interleavings, is thus avoided. The worst case complexity is only increased by a factor $n$ compared to the single-thread case, where $n$ is the number of instructions in the program. We have implemented prototype tools, demonstrating the practicality of the approach.
\end{abstract}

\section{Introduction}

Many static analyses have been developed to check safety properties of sequential programs  \cite{mine:LCTES06,AllamigeonGodardHymansSAS06,steensgaard96pointsto,CousotCousot04-WCC,mine:padoII} while
more and more software applications are multithreaded. Naive approaches to analyze such applications would run by exploring all possible interleavings, which is impractical. Some previous proposals avoid this combinatorial explosion (see Related Work). Our contribution is to show that \emph{every} static analysis framework for single-thread programs extends to one that analyzes multithreaded code with dynamic thread creation and with only a modest increase in complexity.
We ignore concurrency specific bugs, e.g., race conditions or deadlocks, as do some other authors \cite{DBLP:conf/concur/LammichM07}. If any, such bugs can be detected using orthogonal techniques \cite{locksmith,701315}.

\paragraph{Outline} We describe in Section \ref{syntax} a toy imperative language.
 This contains essential features of C with POSIX threads \cite{posix-but} with a thread creation primitive. 
The main feature of multithreaded code is that parallel threads may \emph{interfere}, i.e., side-effects of one thread may change the value of variables in other threads.
To take interference between threads into account, we model the behavior of a program by an infinite transition system: this is the operational semantics of our language, which we describe in Section \ref{subsection:evol}.
It is common practice in abstract interpretation to go from the concrete to the abstract semantics through an intermediate so-called collecting semantics \cite{CousotCousot92-1}. In our case a different but similar concept is needed, which we call \cname semantics, and which we introduce in Section \ref{section:nonst}.
This semantics will discover states, accumulate transitions encountered in the current thread and collect interferences from other threads. 
The main properties of this semantics---Proposition \ref{prop:basic} and Theorem \ref{theorem:denot}---are the technical core of this paper. These properties allow us to overapproximate the \cname semantics by a denotational semantics. Section \ref{abstract} then derives an abstract semantics from the \cname semantics through abstract interpretation. We discuss algorithmic issues, implementation, question of precision, and possible extensions in Section \ref{algosm}, and examine the complexity of our analysis technique in section \ref{section:complexity}, and conclude in Section \ref{section:conclusion}.

\paragraph{Related Work}
A great variety of static analyses that compute safety
properties of single-thread
programs have been developed, e.g., intervals \cite{CousotCousot04-WCC}, points-to-graph \cite{andersen94program,steensgaard96pointsto}, non-relational stores \cite{mine:LCTES06,AllamigeonGodardHymansSAS06} or relational stores such as octagons \cite{mine:padoII}. 

Our approach is similat to Rugina and Rinard \citerinard, in the sens that  we also use an abstract semantics that derives tuples containing information about current states, transitions of the current thread, and interference from other threads. While their main parallel primitive is \(\cpar\), which runs too threads ans waits for their completion before resuming computation, we are mostly interested in the more challenging thread creation primitive \(\ccreate\), which spawn a thread that can survive its father. In Section \ref{improvement}, we handle \(\cpar\) to show how they can be dealt with our techniques.

Some authors present generalizations of specific analyses to multithreaded code, e.g., Venet and Brat \cite{996869} and Lammich and M\"uller-Olm \cite{DBLP:conf/concur/LammichM07}, while our framework extends any single-threaded code analysis.

Our approach also has some similarities with Flanagan and Qadeer \cite{DBLP:conf/spin/FlanaganQ03}.
 They use a
model-checking approach to verify multi-threaded programs.
Their algorithm computes a guarantee condition for each thread; 
one can see our static analysis framework as computing a guarantee, too.
Furthermore, both analyses abstract away both number and ordering
of interferences from other threads.  Flanagan and Qadeer's approach
still keeps some concrete information, in the form of triples containing a
thread id, and concrete stores before and after transitions.  They
claim that their algorithm takes polynomial time in the size of
the computed set of triples.  However, such sets can have exponential
size in the number of global variables of the program.  When the
nesting depth of loops and thread creation statements is bounded,
our algorithm works in polynomial time.  Moreover, we demonstrate
that our analysis is still precise on realistic examples.
Finally, while Flanagan and Qadeer assume a given, static, set of
threads created at program start-up, we handle dynamic thread
creation.
The same restriction is required in Malkis \emph{et al.\@} \cite{DBLP:conf/sas/MalkisPR07}.

The 3VMC tool \cite{360206} has a more general scope. This is an extension of TVLA designed to do shape analysis and to detect specific multithreaded bugs. However, even without multithreading, TVLA already runs in doubly exponential time \cite{repsperso}.

Other papers focus on bugs that arise because of multithreading primitives. This is orthogonal to our work. See \cite{1040299,reduction75} for atomicity properties, Locksimth and Goblint tools \cite{locksmith,SPLST/Vojdani07,Vene03globalinvariants} for data-races and \cite{701315} for deadlock detection using geometric ideas.

\section{Syntax and Operational Semantics}
\label{syntax}

\figsyntax

\examples

\subsection{Simplified Language.} The syntax of our language is given in Fig.~\ref{figsyntax}. The syntax of the language is decomposed in two parts: commands (\(\cmd\)) and statements (\(\stmt\)). A statement \(\cmd,\ell'\) is a command with a return label where it should go after completion. E.g., in Fig \ref{subfig:while}, a thread at label \(\ell_3\) will execute \(\nr[a2]\ccreate(\nr[a3] x:=x+1),\nlrefb{a1}\).
Commands and statements are labeled, and we denote by \(\Lab\) the set of labels. Labels represent the control flow: the statement \(\lab \stmt,\ell'\) begins at label \(\ell\) and terminates at label \(\ell'\), e.g., in Fig \ref{subfig:create}, a thread at label \(\ell_2\) will execute the assignment \(x:=x+1\) and go to label \(\ell_3\). It is assumed that in a given command or statement each label appears only once. Furthermore, to represent the end of the execution, we assume a special label \(\lend\) which never appears in a command, but may appear as the return label of a statement. Intuitively, this label represents the termination of a thread: a thread in this label will not be able to execute any statement.

Notice that sequences \(\cmd_1;\cmd_2\) are not labeled. Indeed, the label of a sequence is implicitly the label of the first command, e.g., the program of Fig. \ref{subfig:create} is a sequence labeled by \(\nlref{init2}\). We write \(\lab \cmd\) when the label of \(\cmd\) is \(\ell\) and we write \(\lab \stmt,\ell'\) the statement \(\stmt\) labeled by \(\ell\) and \(\ell'\).
 A program is represented by a statement of the form \(\lab \cmd,\lend\). Other statements represent a partial execution of a program.
  The statements \(\ccreate\), \(\cwhile\) and \(\cif\) are not atomic, there are composed of several basic steps, e.g., to enter in a \(\cwhile\) loop. To model these basic steps, we introduce the statements \(\li1\crcreate(\ell_2),\ell_3\) and \(\li1\cguard(\cond),\ell_2\). Then, the semantics of  \(\ccreate\), \(\cwhile\) and \(\cif\) will be defined using the semantics of \(\li1\crcreate(\ell_2),\ell_3\) and \(\li1\cguard(\cond),\ell_2\).
 Local variables are irrelevant to our work. Then, all variables in our language are global. 
 Nevertheless, local variables have been implemented (See Section \ref{algosm}) as a stack.
 
This is a toy imperative language with dynamic thread creation. It can easily be extended to handle real-world languages like C or Ada, see Sections \ref{subsec:extensions} and \ref{algosm}.

\subsection{Description of the system.} To represent threads, we use a set \(\ids\) of \emph{thread identifiers}. During an execution of a program, each thread is represented by a different identifier.
We assume a distinguished identifier \(\main \in\ids\), and take it to denote the initial thread.

When a program is executed, 
threads go from a label to another one independently.
A \emph{control point} is a partial function \(P\) that maps thread identifiers to labels and that is defined in \(\main\). 
A control point associates each thread with its current label. The domain of \(P\) is the set of created threads, the other identifiers may be used after in the execution, for new threads. Let \(\mathbb{P}\) be the set of control points.
We write \(\dom(P)\) the domain of \(P\) and let \(\fx{P}{i}{\ell}\) be the partial function defined by   \(\fx{P}{i}{\ell}(j)\egdef
\begin{cases}
 \ell & \text{if }i=j\\
 P(j) &\text{if }i\in\dom(P)\smallsetminus\{j\}\\
 \text{undefined}& \text{else}\\
\end{cases}
\)

Furthermore, threads may create other threads at any time. A \emph{genealogy} of threads is a finite sequence of tuples \((i,\ell,j)\in\ids\times\Lab\times\ids\) such that 
\begin{inparaenum}[(a)]
 \item each two tuples \((i_1,\ell_1,j_1)\) and \((i_2,\ell_2,j_2)\) have distinct third component (i.e., \(j_1\neq j_2\)),
 \item \(\main\) is never the third component of a tuple.
\end{inparaenum}
Such a tuple \((i,\ell,j)\) means that thread \(i\) created thread \(j\) at label \(\ell\). We write \(j\) has been \emph{created} in \(\h\) to say that a uple \((i,\ell,j)\) appears in \(\h\). Let \(\Hist\) be the set of genealogies. We write \(\h\cdot \h'\) the \emph{concatenation} of the genealogies \(\h\) and \(\h'\).
The hypothesis (a) means that a thread is never created twice, the hypothesis (b) means that the thread  \(\main\) is never created: it already exists at the begining of the execution.

We let \(\stores\) be the set of \emph{stores}.
We leave the precise semantics of stores undefined for now, and only require two primitives 
 \(\ecr_{lv:=e}(\m)\) and \(\evb(\m,\cond)\). Given a store \(\m\), \(\ecr_{lv:=e}\) returns the store modified by the assignment \(lv:=e\). 
 The function \(\evb\) evaluates a condition \(\cond\) in a store \(\m\), returning \(\vrai\) or \(\faux\).

A uple \(\cstate{}\in\ids\times \mathbb{P}\times \stores \times \Hist\) is a \emph{state} if
 \begin{inparaenum}[(a)]
  \item \(i\in\dom(P)\),
  \item \(\dom(P)\) is the disjoint union between \(\{\main\}\) and the set of threads created in \(\h\).
 \end{inparaenum}
 Let \(\St\) be the set of states.
A state is a tuple \( (i,P,\m,\h)\) where \(i\) is the currently running thread, \(P\) states where we are in the control flow, \(\m\) is the current store and \(\h\) is the genealogy of thread creations. \(\dom(P)\) is the set of existing threads.
The hypothesis (a) means that the current thread exists, the hypothesis (b) means that the only threads that exist are the initial threads and the thread created in the past.

In the single-threaded case, only the store and the control point of the unique thread is needed. In the case of several threads, the control point of each thread is needed: this is \(P\).

There are two standard ways to model interferences between threads:
\begin{itemize}
  \item Either all threads are active, and at any time any threads can fire a transition,
 \item or, in each state there is an ``active thread'', a.k.a., a ``current thread'', and some so called schedule transitions can change the active thread.
\end{itemize}
Our model rests on latter choice: this allows us to keep track of a thread during execution.
Thread ids do not carry information as to how threads were created. This is the role of the \(\h\) component of states.

Given a program \(\li0\cmd,\lend\) the set \(\ini\) of initial states 
is the set of tuples \( (\main,P_0 , \m,\epsilon)\) where \(\dom(P_0)=\{\main\}\), \(P_0(\main)=\ell_0\), \(\m\) is an arbitrary store, and \(\epsilon\) is the empty word.

A \emph{transition} is a pair of states \(\tau=\big((i,P,\m,\h),(i',P',\m',\h\cdot\h')\big)\) such that \(\forall j\in\dom(P)\smallsetminus\{i\}, P(j)=P'(j)\) and if \((j,\ell,j')\) is a letter of \(\h'\), then \(j=i\) and \(P(i)=\ell\).

 We denote by \(\Tra\) the set of all transitions and we denote by \(\Sche\egdef\{((i,P,\m,\h),(j,P,\m,\h))\in\Tr\mid i\neq j \}\) the set of transitions that may appear in the conclusion of rule ``schedule'', respectively.
  A transition in \(\Sche\) only changes the identifier of the current thread. 
\figrules

\subsection{Evolution.} 
\label{subsection:evol}

To model interleavings, we use a small step semantics: each statement gives rise to an infinite transition system over states where edges \(s_1\rightarrow s_2 \) correspond to elementary computation steps from state \(s_1\) to \(s_2\). We define the judgment \( \li1 \stmt,\ell_2 \Vdash s_1 \rightarrow s_2\) to state that \(s_1\rightarrow s_2 \) is one of these global computation steps that arise when \(\cmd\) is executed, \returning{} to label \(\ell'\) on termination. 
To simplify semantic rules, we use an auxiliary judgment \(\li1 \stmt,\ell_2 \vdash (\ell,\m)\rightarrow(\ell',\m')\) to describe evolutions that are local to a given thread.

Judgments are derived using the rules of Fig.~\ref{figrules}.
 The rule ``parallel'' transforms local transitions into global transitions. ``While body'' and ``sequence'' rules are global because while loop and sequences may contain global subcommands, e.g., \(\li1\cwhile(x)\{\li2\ccreate(\li3x:=0)\}\).
In ``spawn'', the expression ``\(j\) is fresh in \((i,P,\m,\h)\)'' means that \(i\neq j\) and \(P(j)\) is not defined and \(j\) nevers appears in \(\h\), i.e., in \(\h\), there is no tuples \((i,\ell,i')\) with \(i\) or \(i'\) equal to \(j\). Intuitively, a fresh identifier is an identifier that has never been used (we keep track of used identifiers in \(\h\)).

We define the set of transitions generated by the statement \(\lab{}\stmt,\ell'\):
\[\TR{\lab{}\stmt,\ell'}=\{(s,s')\mid\lab \stmt,\ell'\Vdash s\rightarrow s' \}.\]

Notice that, unlike Flanagan and Qadeer \cite{DBLP:conf/spin/FlanaganQ03}, an arbitrary number of threads may be spawned, e.g., with the program \(\nlref{init1}\excmd_1,\lend\) of Fig.~\ref{subfig:while}. Therefore, \(\ids\) is infinite, an so are  \(\mathbb{P}\) and \(\TR{\lab{}\stmt,\ell'}\). Furthermore, \(\stores\) may be  infinite, e.g., if store maps variables to integers.
Therefore, we cannot have a complexity depending of cardinal of \(\TR{\lab{}\stmt,\ell'}\).

\paragraph{Example}
Let us consider stores that are maps from a unique variable to an integer. We write \([x=n]\) the store that maps \(x\) to the integer \(n\). The transitions generated by the statements extracted from Fig. \ref{subfig:while} are:
\begin{align*}
 \TR{\nr[init1] x:=0,\nlref{a1}}=&\{((i,P,[x=n],\h),(i,\fx{P}{i}{\nlrefb{a1}},[x=0],\h))\mid P(i)=\nlrefb{init1}\\& \wedge i\in\ids\wedge n\in\mathbb{Z}\}.\\
 \TR{\nr[a3] x:=x+1,\lend}=&
  \{((i,P,[x=n],\h),(i,\fx{P}{i}{\lend},[x=n+1],\h))\mid P(i)=\nlrefb{a3} \\&\wedge i\in\ids\wedge n\in\mathbb{Z}\}.
\end{align*}

\subsection{Properties of the language}\label{subsec:extensions}

Let \(\Labs{\lab \cmd,\lend}\) be the set of labels of the statement \(\lab \cmd,\lend\).

We also define by induction on commands, the set of labels of subthreads \(\Labss{\cdot}\) by \(\Labss{\li1\ccreate(\li2\cmd),\ell_3}=\Labs{\li2\cmd,\lend}\),\\
 \(\Labss{\li1\cmd_1,\li2\cmd_2,\ell_3} = 
 \Labss{\li1\cmd_1,\ell_2}
\cup 
 \Labss{\li2\cmd_2,\ell_3}
 \),\\
 \(\Labss{\li1\cifte{\guardp}{\{\li2\cmd_1\}}{\{\li3\cmd_2\}},\ell_4}=\\{}\quad\Labss{\li2\cmd_1,\ell_4}\cup\Labss{\li3\cmd_2,\ell_4}\),\\ \(\Labss{\li1\cwhile(\guardp)\{\li2\cmd\},\ell_3}=\Labss{\li2\cmd_1,\ell_1}\),\\
 and, for basic commands \(\Labss{\li1basic,\ell_2}=\emptyset\).

A statement generates only transitions from its labels and to its labels, this is formalized by the following lemma:

\begin{lemma}\label{lemma:A}
If \((s,s')\in\TR{\lab\stmt,\ell'}\smallsetminus \Sche\) then \(\lbl(s)\in\Labs{\lab\stmt,\ell'}\smallsetminus\{\ell'\}\) and \(\lbl(s')\in\Labs{\lab\stmt,\ell'}\) and \(\thread(s)=\thread(s')\).
\end{lemma}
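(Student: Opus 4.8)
The plan is to argue by induction on the height of the derivation of $\lab\stmt,\ell'\Vdash s\rightarrow s'$ built from the rules of Fig.~\ref{figrules}. Two preliminary adjustments make the induction close cleanly: I first isolate an auxiliary statement about the \emph{local} judgment $\vdash$, and I strengthen the global statement so as to absorb the ``schedule'' rule, which otherwise obstructs a naive induction.

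\textbf{Local claim.} I would first prove, by induction on the derivation of $\li1\stmt,\ell_2\vdash(\ell_a,\m_a)\rightarrow(\ell_b,\m_b)$, that $\ell_a\in\Labs{\li1\stmt,\ell_2}\smallsetminus\{\ell_2\}$ and $\ell_b\in\Labs{\li1\stmt,\ell_2}$. The base cases ``assign'' and ``guard'' are immediate, since both send $\ell_1$ to $\ell_2$ and $\Labs{\li1 lv:=e,\ell_2}=\Labs{\li1\cguard(\cond),\ell_2}=\{\ell_1,\ell_2\}$. The remaining local rules (``while entry'', ``while exit'', ``then'', ``else'') leave the local transition unchanged and only enlarge the statement, so each transfers the induction hypothesis through the inclusion of the guard's label set into that of the enclosing statement; the clause $\ell_a\neq(\text{return label})$ survives because, by the single-occurrence convention on labels, the enclosing return label does not occur inside the guarded sub-statement.

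\textbf{Main induction.} Rather than inducting on ``non-schedule'' derivations directly — impossible, since ``schedule'' can occur as a premise of the congruence-style rules — I would prove the stronger disjunction: for every derivation of $\lab\stmt,\ell'\Vdash s\rightarrow s'$, \emph{either} $(s,s')\in\Sche$, \emph{or} the three conclusions hold. The organising observation is that only ``parallel'', ``spawn'' and ``schedule'' manufacture a new pair of states; every other global rule copies the same pair to its conclusion. Thus ``schedule'' yields $(s,s')\in\Sche$ by definition of $\Sche$; ``parallel'' is settled by the local claim applied to $\li1\stmt,\ell_2$, where moreover $\thread(s)=\thread(s')=i$ (so $(s,s')\notin\Sche$) and $\lbl(s),\lbl(s')$ are the source and target of the underlying local transition; ``spawn'' is checked directly, $\lbl(s)$ and $\lbl(s')$ being the start and return labels of $\li1\crcreate(\ell_2),\ell_3$ with $\thread(s)=\thread(s')$. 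For each congruence rule (``sequence 1/2'', ``while body'', ``then body'', ``else body'', ``create'', ``child'') I apply the induction hypothesis to the premise: if it returns $(s,s')\in\Sche$ then so does the conclusion, as the pair is copied verbatim; otherwise I transfer the three conclusions using $\Labs{\text{sub-statement}}\subseteq\Labs{\lab\stmt,\ell'}$, with the thread equality inherited unchanged.

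\textbf{Expected difficulty.} The two delicate points are the treatment of $\Sche$ and the clause $\lbl(s)\neq\ell'$. The disjunctive strengthening is precisely what allows a nested ``schedule'' to be carried through the congruence rules, and one must confirm that each such rule preserves $\Sche$-membership (immediate, since the state pair is copied). The harder bookkeeping is in the three rules where the sub-statement's return label differs from $\ell'$ — ``sequence 1'' (sub-return $\ell_2$), ``while body'' (sub-return the loop head $\ell_1$) and ``child'' (sub-return $\lend$): there the induction hypothesis only excludes the sub-statement's own return label, so one must argue separately that $\ell'\notin\Labs{\text{sub-statement}}$, which is exactly where the hypothesis that each label occurs once is used. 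The lemma then follows at once: since $(s,s')\in\TR{\lab\stmt,\ell'}\smallsetminus\Sche$ rules out the first disjunct, the three conclusions must hold.
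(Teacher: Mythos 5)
The paper contains no proof of Lemma~\ref{lemma:A}: it is stated as the direct formalization of the remark preceding it (``a statement generates only transitions from its labels and to its labels'') and is used as a black box afterwards, so there is no paper proof to compare yours against. Judged on its own, your proof is correct, and its architecture is the natural one. The two-level induction (an auxiliary claim for the local judgment \(\vdash\), then an induction on derivations of \(\Vdash\)) mirrors the two-level structure of the operational semantics, and your disjunctive restatement (either \((s,s')\in\Sche\) or the three conclusions hold) is exactly what is needed to carry ``schedule'' derivations through the congruence rules, since those rules copy the state pair verbatim. Your treatment of the return-label clause is also right: when the sub-statement's return label coincides with the enclosing one (``sequence 2'', or ``child'' in the degenerate case \(\ell_3=\lend\)) the induction hypothesis excludes it directly, and otherwise the single-occurrence convention makes the enclosing return label foreign to the sub-statement.

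One point deserves to be made explicit, because the paper leaves \(\Labs{\cdot}\) informal: your transfer step uses the inclusion \(\Labs{\li2\cmd,\lend}\subseteq\Labs{\li1\ccreate(\li2\cmd),\ell_3}\) in the ``child'' case, and this requires \(\lend\in\Labs{\li1\ccreate(\li2\cmd),\ell_3}\). This is not vacuous: a child thread can legitimately step \emph{into} \(\lend\) when its command terminates, so \(\lbl(s')=\lend\) genuinely occurs. Any reading of \(\Labs{\cdot}\) under which the lemma is true at all must therefore count \(\lend\) among the labels of a statement containing \(\ccreate\); this is a convention to record rather than a flaw in your argument, but your proof silently depends on it and it would be worth stating alongside the local claim.
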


As a consequence of Lemma \ref{lemma:A}, we have the following lemma :
\begin{lemma}
 \label{lemma:Abis}
 If \(\lbl(s)\notin \Labs{\lab \stmt ,\ell'}\smallsetminus\{\ell'\}\) then for all state \(s'\),  \((s,s')\notin\TR{\lab\stmt,\ell'}\smallsetminus \Sche\)
\end{lemma}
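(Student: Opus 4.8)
The statement to prove is the contrapositive of the conclusion of Lemma \ref{lemma:A}. Concretely, Lemma \ref{lemma:A} asserts that whenever $(s,s')\in\TR{\lab\stmt,\ell'}\smallsetminus\Sche$, one necessarily has $\lbl(s)\in\Labs{\lab\stmt,\ell'}\smallsetminus\{\ell'\}$ (together with the two other conclusions). Lemma \ref{lemma:Abis} simply repackages the first of these implications in its logically equivalent contrapositive form: if the source label $\lbl(s)$ fails to lie in $\Labs{\lab\stmt,\ell'}\smallsetminus\{\ell'\}$, then no state $s'$ can make $(s,s')$ a non-schedule transition of the statement. So the plan is short: apply Lemma \ref{lemma:A} directly.

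First I would fix an arbitrary state $s$ and assume the hypothesis $\lbl(s)\notin\Labs{\lab\stmt,\ell'}\smallsetminus\{\ell'\}$. I would then argue by contradiction: suppose for contradiction that some state $s'$ satisfies $(s,s')\in\TR{\lab\stmt,\ell'}\smallsetminus\Sche$. By Lemma \ref{lemma:A}, this membership forces $\lbl(s)\in\Labs{\lab\stmt,\ell'}\smallsetminus\{\ell'\}$, which directly contradicts the standing hypothesis. Hence no such $s'$ exists, i.e., $(s,s')\notin\TR{\lab\stmt,\ell'}\smallsetminus\Sche$ for every state $s'$, which is exactly the conclusion sought.

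There is essentially no obstacle here: the lemma is a purely formal restatement of (part of) Lemma \ref{lemma:A}, and the derivation is a one-line application of the contrapositive. The only point requiring a modicum of care is to notice that Lemma \ref{lemma:A} yields three conjuncts in its conclusion, whereas Lemma \ref{lemma:Abis} only negates the first conjunct in its hypothesis; this is harmless, since establishing $(s,s')\in\TR{\lab\stmt,\ell'}\smallsetminus\Sche$ in the contradiction step already licenses the full conclusion of Lemma \ref{lemma:A}, in particular its first conjunct, which is all that is needed to reach the contradiction. No induction on the structure of $\stmt$ is required, as that inductive work has already been discharged in the proof of Lemma \ref{lemma:A}.
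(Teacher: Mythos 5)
Your proof is correct and matches the paper's own treatment: the paper states Lemma \ref{lemma:Abis} as an immediate consequence of Lemma \ref{lemma:A}, which is precisely the contrapositive argument you give. Nothing further is needed.
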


If, during the execution of a statement \(\lab\stmt,\ell'\), a thread creates another thred, then, the subthread is in a label of the command, furthermore, it is in \(\Labss{\lab\stmt,\ell'}\).
\begin{lemma}\label{lemma:Asub}
If \((s,s')=(\cstate{},\cstate{'})\in\TR{\lab\stmt,\ell'}\smallsetminus \Sche\) and \(j\in\dom(P')\smallsetminus\dom(P)\) then \(P'(j)\in \Labss{\lab\stmt,\ell'}\subset\Labs{\lab\stmt,\ell'}\).
\end{lemma}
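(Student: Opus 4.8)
The plan is to proceed by induction on the derivation of the judgment $\lab\stmt,\ell'\Vdash s\rightarrow s'$ given by the rules of Fig.~\ref{figrules}, since $(s,s')\in\TR{\lab\stmt,\ell'}\smallsetminus\Sche$ means exactly that such a derivation exists and that it does not end with the ``schedule'' rule. The guiding observation is that a thread is added to the control point's domain---i.e.\ $\dom(P')\smallsetminus\dom(P)\neq\emptyset$---in only one rule, namely ``spawn''; every other rule either leaves the domain unchanged or merely lifts a transition of a strict subcommand. So the statement should follow from the inductive hypothesis together with the inclusion $\Labss{\text{premise}}\subseteq\Labss{\text{conclusion}}$ that can be read directly off the inductive definition of $\Labss{\cdot}$.

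First I would dispose of the rules that do not enlarge the domain. In ``parallel'' the new control point is $\fx{P}{i}{\ell'}$, whose domain equals $\dom(P)$ because $i\in\dom(P)$ by state validity; hence $\dom(P')\smallsetminus\dom(P)=\emptyset$ and the conclusion holds vacuously. The local rules ``assign'', ``guard'', ``while entry/exit'', ``then'' and ``else'' only feed ``parallel'' and never touch $P$, so they are covered by the same argument, while ``schedule'' is excluded since we work in $\TR{\lab\stmt,\ell'}\smallsetminus\Sche$.

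The base case is ``spawn'' applied to $\li1\crcreate(\ell_2),\ell_3$: here $P'=\fx{\fx{P}{i}{\ell_3}}{j}{\ell_2}$, so the unique new thread is $j$ with $P'(j)=\ell_2$. Connecting this label $\ell_2$ to $\Labss{\cdot}$ is the point I expect to require the most care, because $\Labss{\cdot}$ is defined by induction on commands and a bare $\crcreate$ primitive carries no obvious subthread labels. The resolution is that in the semantics a $\crcreate$ is always reached through the ``create'' rule, whose conclusion is a $\ccreate$ statement $\li1\ccreate(\li2\cmd),\ell_3$ with $\Labss{\li1\ccreate(\li2\cmd),\ell_3}=\Labs{\li2\cmd,\lend}$; since $\ell_2$ is precisely the entry label of the child command $\cmd$, we have $\ell_2\in\Labs{\li2\cmd,\lend}$, which is exactly what is needed. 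Thus the ``create'' step is where the newly created thread's label is first certified to be a subthread label.

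For the remaining lifting rules---``sequence 1/2'', ``while body'', ``then body'', ``else body'' and ``child''---the transition is inherited unchanged from a premise about a strict subcommand. Applying the inductive hypothesis yields $P'(j)\in\Labss{\text{premise}}$, and each defining equation of $\Labss{\cdot}$ exhibits $\Labss{\text{premise}}$ as a subset of $\Labss{\text{conclusion}}$; for ``child'' one uses $\Labss{\li2\cmd,\lend}\subseteq\Labs{\li2\cmd,\lend}=\Labss{\li1\ccreate(\li2\cmd),\ell_3}$, i.e.\ the second, easy inclusion $\Labss{\cdot}\subseteq\Labs{\cdot}$ of the statement, which is itself a one-line structural induction on the definition of $\Labss{\cdot}$. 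This closes the induction and delivers both $P'(j)\in\Labss{\lab\stmt,\ell'}$ and $\Labss{\lab\stmt,\ell'}\subseteq\Labs{\lab\stmt,\ell'}$.
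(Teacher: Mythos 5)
The paper states Lemma~\ref{lemma:Asub} without any proof, so there is nothing to compare your argument against; judged on its own, your induction on the derivation of \(\lab\stmt,\ell'\Vdash s\rightarrow s'\) is the natural argument and is correct on every case the paper actually relies on. The dispatch of the rules that leave \(\dom(P)\) unchanged (``parallel'' and the local rules feeding it, ``schedule'' being excluded by hypothesis), the lifting rules via monotonicity of \(\Labss{\cdot}\) under its defining equations, and the pairing of ``spawn'' with ``create'' --- which is indeed the only place where the new thread's label \(\ell_2\in\Labs{\li2\cmd,\lend}=\Labss{\li1\ccreate(\li2\cmd),\ell_3}\) gets certified --- are all sound, as is your observation that the ``child'' case needs the auxiliary inclusion \(\Labss{\cdot}\subset\Labs{\cdot}\) proved by structural induction.

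The one weak point is your claim that ``in the semantics a \(\crcreate\) is always reached through the `create' rule.'' That is false as a reading of the lemma's scope: statements, per Fig.~\ref{figsyntax}, include bare spawn statements \(\lab \crcreate(\ell''),\ell'\), and the paper manipulates \(\TR{\li1\cspawn(\ell_2),\ell_3}\) as a standalone object (e.g., Lemma~\ref{lemma:Gcr}, and point~\ref{fcreate} of Theorem~\ref{theorem:denot} applies the semantics to \(\li1\crcreate(\ell_2),\ell_3\) directly). For \(\lab\stmt,\ell'=\li1\crcreate(\ell_2),\ell_3\) the derivation consists of the spawn rule alone, there is no enclosing create step, and the conclusion would require \(\ell_2\in\Labss{\li1\crcreate(\ell_2),\ell_3}\) --- a set which the paper's definition (given by induction on commands, with \(\emptyset\) for basic ones) either leaves undefined or makes empty. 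So either one adopts the convention \(\Labss{\li1\crcreate(\ell_2),\ell_3}=\{\ell_2\}\), which makes this base case immediate and is clearly what is intended (spawn being the residue of create), or one restricts the lemma to command statements, for which your proof is complete; in the paper's subsequent uses (Lemma~\ref{lemma:F} and Proposition~\ref{prop:apres1}) the lemma is only ever invoked in situations where thread creation comes from a \(\ccreate\) inside a command, so nothing downstream breaks. This defect is inherited from the paper's definitions rather than introduced by you, but your proof should state the convention explicitly instead of asserting that the case cannot arise.
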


\begin{lemma}\label{lemma:Asub2}
 If \((s,s')\in\TR{\lab\stmt,\ell'}\smallsetminus \Sche\) and \(\lbl(s)\in\Labss{\lab\stmt,\ell'}\smallsetminus\{\ell'\}\) then \(\lbl(s')\in\Labs{\lab\stmt,\ell'}\).
 
 Furthermore \(\ell\notin\Labs{\lab\stmt,\ell'}\) and \(\ell'\notin\Labs{\lab\stmt,\ell'}\).
\end{lemma}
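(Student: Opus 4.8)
The plan is to prove both assertions \emph{simultaneously} by structural induction on $\stmt$, following the recursive definition of $\Labss{\cdot}$. The guiding principle is that the rules of Fig.~\ref{figrules} let one decompose every non-schedule transition of a compound statement into a transition of exactly one of its immediate sub-statements, and that a subthread label of the whole statement is a genuine label of some sub-command sitting under a $\ccreate$. I would first observe that the literal conclusion $\lbl(s')\in\Labs{\lab\stmt,\ell'}$ of the first assertion is already delivered by Lemma~\ref{lemma:A}; the informative version, which I would actually carry through the induction, is the sharper $\lbl(s')\in\Labss{\lab\stmt,\ell'}$, expressing that once the running thread sits at a subthread label it stays at subthread labels. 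Likewise I read the ``Furthermore'' clause as $\ell\notin\Labss{\lab\stmt,\ell'}$ and $\ell'\notin\Labss{\lab\stmt,\ell'}$, since $\ell$ is always an ordinary (non-subthread) label of the statement.

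The base cases are the basic statements ($lv:=e$, $\cguard(\cond)$, $\crcreate(\ell_2)$), where $\Labss{\li1 basic,\ell_2}=\emptyset$: the hypothesis $\lbl(s)\in\Labss\smallsetminus\{\ell'\}$ is then vacuous and $\ell,\ell'\notin\emptyset$ is immediate. For the inductive step I would go through sequence, if-then-else, while and create. In each compound case I first use the ``each label appears only once'' assumption together with the induction hypothesis for the ``Furthermore'' clause to show that the subthread-label sets of the immediate sub-statements are pairwise disjoint and avoid the boundary (entry, junction and return) labels; this is exactly what lets me invoke Lemma~\ref{lemma:Abis} to rule out that the transition was generated by the ``wrong'' sub-statement. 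Concretely, for $\li1\cmd_1;\li2\cmd_2,\ell_3$ a source label in $\Labss{\li1\cmd_1,\ell_2}$ cannot lie in $\Labs{\li2\cmd_2,\ell_3}\smallsetminus\{\ell_3\}$ (it avoids the junction $\ell_2$ by the induction hypothesis and the remaining labels of $\cmd_2$ by ``appears once''), so by Lemma~\ref{lemma:Abis} the transition is one of $\TR{\li1\cmd_1,\ell_2}\smallsetminus\Sche$ and the induction hypothesis applies; symmetrically for $\cmd_2$, and analogously for the branches of an if and for the body of a while via rules ``then body'', ``else body'' and ``while body''. The create case $\li1\ccreate(\li2\cmd),\ell_3$ carries the genuine content: here $\Labss=\Labs{\li2\cmd,\lend}$, the spawning step (rules ``create''/``spawn'') moves the father from $\ell_1$ to $\ell_3$ and so has source label $\ell_1\notin\Labss$, hence falls outside the hypothesis, while any transition whose source lies in $\Labss$ is, by rule ``child'', a transition of $\li2\cmd,\lend$, whose target therefore lands in $\Labs{\li2\cmd,\lend}=\Labss$ (by Lemma~\ref{lemma:A} applied to $\li2\cmd,\lend$). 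For the ``Furthermore'' clause in every case, the entry label $\ell$ and the return label $\ell'$ are father-level labels that, by ``appears once'', are distinct from every label occurring in a spawned sub-command, hence outside $\Labss$.

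The step I expect to be the main obstacle is the label bookkeeping rather than any deep argument: attributing each transition to a \emph{unique} sub-statement forces the ``Furthermore'' clause to be carried as a genuine strengthening of the induction, since it is precisely the fact that boundary labels are not subthread labels that makes the decomposition unambiguous, so the two assertions must be proved together. The delicate edge case is the distinguished end label $\lend$, which, unlike ordinary labels, may legitimately recur as the return label of several statements; one must check that the exclusion $\ell'\notin\Labss$ is intended for $\ell'\neq\lend$ (or that $\lend$ is treated separately), so that a father and a child both returning to $\lend$ do not spuriously place $\lend$ simultaneously among the boundary labels and in the subthread-label set.
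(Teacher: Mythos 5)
The paper never proves this lemma: like Lemmas \ref{lemma:A}, \ref{lemma:Abis}, \ref{lemma:Asub} and \ref{lemma:A'}, it is asserted as a structural property of the rules of Fig.~\ref{figrules}, so there is no official argument to compare yours against; your structural induction is the natural candidate, and it is essentially correct. Moreover, both of your interpretive moves are vindicated by the paper's only use of the lemma, in the proof of Lemma \ref{lemma:F}: there the induction maintains that every non-root descendant sits at a label of \(\Labss{\lab\stmt,\ell'}\) and appeals to the present lemma to keep it there, which is exactly your strengthened conclusion \(\lbl(s')\in\Labss{\lab\stmt,\ell'}\) (the literal conclusion with \(\Labs{\lab\stmt,\ell'}\) would be useless there, being already contained in Lemma \ref{lemma:A}); and the ``Furthermore'' clause is quoted there as ``\(\ell\) and \(\ell'\) are not in \(\Labss{\lab\stmt,\ell'}\)'', confirming your reading that \(\Labs\) is a slip for \(\Labss\) (with \(\Labs\) the clause is false outright, since \(\ell,\ell'\in\Labs{\lab\stmt,\ell'}\)).

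Two caveats, one of which you raised yourself. First, your \(\lend\) worry is a genuine defect of the lemma, not of your proof: if \(\cmd\) contains a subcommand \(\ccreate(\li2\cmd'')\) then \(\Labs{\li2\cmd'',\lend}\subset\Labss{\lab\cmd,\lend}\), and \(\lend\in\Labs{\li2\cmd'',\lend}\) (a statement's return label must count among its labels, otherwise Lemma \ref{lemma:A} itself fails for transitions ending at \(\ell'\)); hence for a whole program, where \(\ell'=\lend\), the clause \(\ell'\notin\Labss{\lab\stmt,\ell'}\) is false, and the lemma must either be restricted to \(\ell'\neq\lend\) or treat \(\lend\) apart. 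Second---and this is the one place your write-up is too quick---the same degeneracy breaks the pairwise disjointness you invoke in the sequence (and if/while) cases: when both \(\cmd_1\) and \(\cmd_2\) contain creates, \(\lend\) belongs both to \(\Labss{\li1\cmd_1,\ell_2}\) and to \(\Labs{\li2\cmd_2,\ell_3}\smallsetminus\{\ell_3\}\), so Lemma \ref{lemma:Abis} no longer attributes a transition with source label \(\lend\) to a unique sub-statement. The repair is precisely the separate treatment of \(\lend\) you already call for, instantiated here: since \(\lend\) never appears in a command, the rules of Fig.~\ref{figrules} generate no non-schedule transition whose source label is \(\lend\) (the sources of assign, guard and spawn transitions are labels occurring in the command), so that case is vacuous. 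With this observation added, your induction closes.
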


Notice that in Fig. \ref{figrules} some statements are ``atomic''. We call these statements \emph{basic statements}.
Formally, a basic statement is a statement of the form \(\li1 lv:=e,\ell_2\), \(\li1\cguard(\cond),\ell_2\) or \(\li1 \cspawn(\ell_3),\ell_2 \).

On basic statement, we have a more precise lemma on labels:
\begin{lemma}\label{lemma:A'}Let \(\li1 basic,\ell_2\) be a basic statement.\\
 If \((s,s')=(\cstate{},\cstate{'})\in \TR{\li1 basic,\ell_2}\smallsetminus\Sche\) then \(\thread(s)=\thread(s')\) and \(\lbl(s)=\ell_1\) and \(\lbl(s')=\ell_2\).
\end{lemma}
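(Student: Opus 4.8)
The plan is to prove the lemma by rule inversion on the derivation of \(\li1 basic,\ell_2\Vdash s\rightarrow s'\), treating the three shapes of a basic statement separately. The starting observation is that, among the rules of Fig.~\ref{figrules}, the only ones whose conclusion is a \emph{global} judgment \(\Vdash\) for a syntactically atomic statement are ``parallel'', ``spawn'' and ``schedule'': every other global rule (``sequence 1'', ``sequence 2'', ``while body'', ``then body'', ``else body'', ``create'', ``child'') concludes a judgment about a composite command---a sequence, a \(\cwhile\), an \(\cif\) or a \(\ccreate\)---and so cannot match \(\li1 lv:=e,\ell_2\), \(\li1\cguard(\cond),\ell_2\) or \(\li1\cspawn(\ell_3),\ell_2\). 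Next I would eliminate ``schedule'': its conclusion always has the form \(((i,P,\m,\h),(j,P,\m,\h))\) with \(i\neq j\), i.e.\ a transition belonging to \(\Sche\); so if the last rule were ``schedule'' we would get \((s,s')\in\Sche\), contradicting \((s,s')\in\TR{\li1 basic,\ell_2}\smallsetminus\Sche\). Hence the last rule is ``parallel'' or ``spawn''.

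For an assignment \(\li1 lv:=e,\ell_2\) or a guard \(\li1\cguard(\cond),\ell_2\), rule ``spawn'' does not apply, since its conclusion is a judgment for a \(\crcreate\) statement. So the last rule is ``parallel'', and inverting it yields \(s=(i,P,\m,\h)\), \(s'=(i,\fx{P}{i}{\ell'},\m',\h)\) with \(P(i)=\ell\) and a local judgment \(\li1 basic,\ell_2\vdash(\ell,\m)\rightarrow(\ell',\m')\). Inverting this local judgment in turn, only ``assign'' (resp.\ ``guard'') has a conclusion matching an assignment (resp.\ a guard) statement---the local rules ``while entry/exit'' and ``then/else'' conclude judgments for \(\cwhile\) and \(\cif\) statements and cannot apply. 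This forces \(\ell=\ell_1\) and \(\ell'=\ell_2\), whence \(\lbl(s)=P(i)=\ell_1\), \(\lbl(s')=\fx{P}{i}{\ell'}(i)=\ell_2\) and \(\thread(s)=i=\thread(s')\).

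For a spawn \(\li1\cspawn(\ell_3),\ell_2\), rule ``parallel'' cannot be the last step, because its premise is a local judgment for the statement and no local rule derives a judgment for a \(\crcreate\) statement. Hence the last rule is ``spawn'', and reading off its conclusion immediately gives \(\lbl(s)=\ell_1\), \(\lbl(s')=\ell_2\) (the return label), and that the running thread \(i\) is unchanged. The only delicate point---hence the main obstacle---is the bookkeeping in these two inversions: one must check exhaustively against Fig.~\ref{figrules} that no structural global rule, and no local rule other than the expected one, can conclude a judgment for an atomic statement, while not overlooking that ``schedule'' applies to an arbitrary statement. Once this inspection is carried out, all three endpoint equalities follow by simply reading the conclusions of ``parallel''/``spawn'' together with ``assign''/``guard''.
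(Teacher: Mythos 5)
Your proof is correct: the case analysis on the last rule of the derivation (eliminating ``schedule'' because its conclusion always lies in \(\Sche\), then inverting ``parallel'' combined with ``assign'' or ``guard'', or inverting ``spawn'') is sound and exhaustive against the rules of Fig.~\ref{figrules}. The paper states this lemma without any proof, treating it as immediate from the operational semantics, so your rule-inversion argument simply makes explicit the routine inspection the paper omits.
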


\section{\Cnamep Semantics}
\label{section:nonst}
 
 \subsection{Basic Concepts}
 
  \auxiliarydefs
 
 To prepare the grounds for abstraction, we introduce an intermediate semantics, 
 called \cname semantics, 
 which associates a function on configurations with each statement.
 The aim of this semantics is to associate with each statement a transfer function that will be abstracted 
 (see Section \ref{abstract}) 
 as an abstract transfer function.

A \emph{concrete configuration} is a tuple \(\concr{Q}=\qc\) :
\begin{inparaenum}
 \item \(\concr{S}\) is the current state of the system during an execution,
 \item \(\concr{G}\), for \emph{guarantee}, represents what the current thread and its descendants can do
 \item and \(\concr{A}\), for \emph{assume}, represents what the other threads can do.
\end{inparaenum}

Formally, \(\concr{S}\) is a set of states, and \(\concr{G}\) and \(\concr{A}\) are sets of transitions containing \(\Sche\).
The set of concrete configurations is a complete lattice for the ordering \(\qcp{_1}\leqslant\qcp{_2} \Leftrightarrow \concr{S_1}\subset\concr{S_2}\wedge\concr{G_1}\subset\concr{G_2}\wedge\concr{A_1}\subset\concr{A_2}  \).
Proposition \ref{prop:coroc} will establish the link between operational and \cname semantics.

 \mondessinstates

Figure~\ref{fig:states} illustrates the execution of a whole program. Each vertical line represents the execution of a thread from top to bottom, and each horizontal line represents the creation of a thread. At the beginning (top of the figure), there is only the thread \(\main=j_0\). 

During execution, each thread may execute transitions. At state \(\sinit\), \(\thread(\sinit)\) denotes the \emph{currently running thread} (or \emph{current thread}), see Fig.~\ref{fig:opfunc}. On Fig.~\ref{fig:states}, the current thread of \(\sinit\) is \(j_0\) and the current thread of \(\sinter\) is \(j_2\).

During the program execution given in Fig.~\ref{fig:states}, \(j_0\) creates \(j_1\). We say that \(j_1\) is a \emph{child} of \(j_0\) and \(j_0\) is the \emph{parent} of \(j_1\). Furthermore, \(j_1\) creates \(j_3\). We then  introduce the concept of \emph{descendant}: the thread \(j_3\) is a descendant of \(j_0\) because it has been created by \(j_1\) which has been created by \(j_0\). More precisely, descendants depend on genealogies. Consider the state \(\sinit=(j_0,\Pinit,\minit,\hinit)\) with \(\hinit=[(j_0,\ell_1,j_1)]\): the set of descendants of \(j_0\) from \(\hinit\) (written \(\desce_{\hinit}(\{j_0\})\), see Fig.~\ref{fig:opfunc}) is just \(\{j_0,j_1\}\).
 The set of descendants of a given thread increases during the execution of the program. In Fig.~\ref{fig:states}, the genealogy of \(\sinter\) is of the form \(\hinit\cdot \hinterf\) for some \(\hinterf\), here \(\hinterf=[(j_0,\ell_2,j_2),(j_1,\ell_3,j_3),(j_2,\ell_4,j_4)]\). When the execution of the program reaches the state \(\sinter\), the set of descendants of \(j_0\) from \(\hinit\cdot \hinterf\) is \(\desce_{\hinit\cdot\hinterf}(\{j_0\})=\{j_0,j_1,j_2,j_3,j_4\}\).
 
In a genealogy, there are two important pieces of information. First, there is a tree structure: a thread creates children that may creates children and so on... Second, there is a global time, e.g., in \(\hinterf\), the thread \(j_2\) has been created before the thread \(j_3\).

\begin{lemma}\label{lemma:Descendant}
 Let \(\h\cdot\h'\) a genealogy and \(i\), \(j\) which are not created in \(\h'\).
 Therefore, either \(\desce_{\h'}(\{j\}) \subset \desce_{\h\cdot\h'}(\{i\}) \) or \(\desce_{\h'}(\{j\}) \cap \desce_{\h\cdot\h'}(\{i\}) = \emptyset\).
 \end{lemma}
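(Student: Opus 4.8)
The plan is to first reduce the statement about the mixed genealogy $\h\cdot\h'$ to a statement about the single suffix $\h'$, and then to prove a clean dichotomy for $\desce_{\h'}$ alone. The key preliminary fact is the composition identity $\desce_{\h\cdot\h'}(X)=\desce_{\h'}(\desce_{\h}(X))$, which I would establish by a straightforward induction on the length of $\h$: the base case $\h=\epsilon$ is immediate, and in the inductive step one splits on whether the head's first component already lies in $X$, which matches the two branches in the definition of $\desce$. Writing $Y\egdef\desce_{\h}(\{i\})$, the identity rewrites $\desce_{\h\cdot\h'}(\{i\})$ as $\desce_{\h'}(Y)$, so it remains to show that either $\desce_{\h'}(\{j\})\subset\desce_{\h'}(Y)$ or $\desce_{\h'}(\{j\})\cap\desce_{\h'}(Y)=\emptyset$.

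Before the case split I would record one structural observation that the hypotheses are designed to provide: no element of $\{j\}\cup Y$ is created in $\h'$. Indeed $j$ and $i$ are not created in $\h'$ by assumption, and every other element of $Y=\desce_{\h}(\{i\})$ is created in $\h$, hence---since $\h\cdot\h'$ is a genealogy and each identifier occurs as a third component at most once (property (a))---cannot also be created in $\h'$. This observation is exactly what prevents the two families of descendants from leaking into one another.

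The dichotomy is then driven by whether $j\in Y$. If $j\in Y$, then $\{j\}\subset Y$, and monotonicity of $\desce_{\h'}$ (again an easy induction on $\h'$) yields the inclusion $\desce_{\h'}(\{j\})\subset\desce_{\h'}(Y)$. If $j\notin Y$, then $\{j\}\cap Y=\emptyset$, and I would prove disjointness via the following general claim, by induction on $\h'$: \emph{if $X_1\cap X_2=\emptyset$ and no element of $X_1\cup X_2$ is created in $\h'$, then $\desce_{\h'}(X_1)\cap\desce_{\h'}(X_2)=\emptyset$}. In the inductive step for a head $(a,\ell,c)$, the new third component $c$ is created in $\h'$, hence lies outside $X_1\cup X_2$; when $a$ belongs to, say, $X_1$, the recursion is launched with $X_1\cup\{c\}$ and $X_2$, which remain disjoint, and $c$ is not created in the tail by uniqueness, so both invariants survive and the induction hypothesis applies.

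The main obstacle is precisely this disjointness case, and its subtlety is that disjointness of the initial sets alone does \emph{not} force disjointness of the descendant sets: a root that is itself created inside $\h'$ can be ``recaptured'' by the other family (e.g.\ with $\h'=(a,\ell,c)$, the disjoint roots $\{a\}$ and $\{c\}$ produce overlapping descendants). The purpose of the hypotheses ``$i$ and $j$ are not created in $\h'$'' is to exclude exactly this, and the invariant ``no current root is created in the remaining suffix'' must be threaded carefully through the induction, relying at each step on the single-creation property (a) of genealogies. Once this invariant is correctly stated and shown to hold at the start for $\{j\}$ and $Y$, the verifications in each branch are routine.
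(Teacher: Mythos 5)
Your proof is correct, and it takes a genuinely different route from the paper's. The paper establishes the dichotomy by a single induction on \(\h'\) that peels off the \emph{last} letter, writing \(\h'=\h''\cdot(i',\ell',j')\): the induction hypothesis is the disjunction itself for \(\h\) and \(\h''\), and each branch is checked to survive the appended creation, with property (a) of genealogies invoked inside the induction so that the fresh identifier \(j'\) cannot land in the wrong set (the published proof in fact writes \(j\) at the point where it means \(j'\)). You instead factor the problem: the composition identity \(\desce_{\h\cdot\h'}(X)=\desce_{\h'}(\desce_{\h}(X))\) freezes the role of \(\h\) into the single set \(Y=\desce_{\h}(\{i\})\); the dichotomy is then decided once and for all by whether \(j\in Y\); and the two branches follow from two self-contained lemmas, each proven by induction on the \emph{first} letter of \(\h'\) --- monotonicity of \(\desce_{\h'}\) in its argument, and preservation of disjointness of two root sets under the invariant that no current root is created in the remaining suffix. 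Your decomposition buys modularity: the disjointness lemma is stated for arbitrary \(X_1,X_2\) and is reusable, your counterexample with \(\h'=(a,\ell,c)\) isolates exactly why the non-creation hypothesis is needed, and the composition identity is a fact the paper itself uses silently later (for instance in the proof of Lemma \ref{lemma:H}, where \(\desce_{\h_1\cdot\h}(\{i_0\})=\desce_{\h}(\desce_{\h_1}(\{i_0\}))\) is asserted without justification), so proving it explicitly is a genuine bonus. What the paper's direct induction buys is brevity --- no auxiliary lemmas --- at the cost of carrying the disjunction through the induction and a heavier case analysis. Both arguments ultimately rest on the same two facts, which you identify explicitly: descendant sets consist of the roots plus identifiers created in the genealogy, and property (a) guarantees each identifier is created at most once.
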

\begin{proof}
 We prove this lemma by induction on \(\h'\).
 If \(\h'=\epsilon\), then \(\desce_{\epsilon}(\{j\}) = \{j\} \).
 
 Let us consider the case \(\h'=\h''\cdot(i',\ell,j')\).
 By induction hypothesis  either \(\desce_{\h''}(\{j\}) \subset \desce_{\h\cdot\h''}(\{i\}) \) or \(\desce_{\h''}(\{j\}) \cap \desce_{\h\cdot\h''}(\{i\}) = \emptyset\).
 
 In the first case, if \(i'\in \desce_{\h''}(\{j\})\), therefore \(j'\in \desce_{\h''\cdot(i',\ell',j')}(\{j\})\) and \(j'\in \desce_{\h\cdot\h''\cdot(i',\ell',j')}(\{i\})\), else \(j'\notin \desce_{\h''\cdot(i',\ell',j')}(\{j\})\).
 
 In the second case, let us consider the subcase \(i'\in \desce_{\h''}(\{j\})\). Therefore \(i'\notin \desce_{\h\cdot\h''}(\{i\})\). In addition to this, \(j\) is not created in \(\h\cdot\h''\) (a thread cannot be created twice in a genealogy), therefore \(j\notin \desce_{\h\cdot\h''}(\{i\})\).
 Hence \(j'\in \desce_{\h''\cdot(i',\ell',j')}(\{j\})\) and \(j'\notin \desce_{\h\cdot\h''\cdot(i',\ell',j')}(\{i\})\).
 
 The subcase \(i'\in \desce_{\h\cdot\h''}(\{i\})\) is similar. Let us consider the subcase \(i'\notin\desce_{\h''}(\{j\})\cup \desce_{\h\cdot\h''}(\{i\})\).
 Therefore \(\desce_{\h\cdot\h''\cdot(i',\ell',j')}(\{i\}) = \desce_{\h\cdot\h''}(\{i\})  \) and \(\desce_{\h''\cdot(i',\ell',j')}(\{j\}) = \desce_{\h''}(\{j\})\).
\end{proof}

 We also need to consider sub-genealogies such as \(\hinterf\). In this partial genealogy, \(j_1\) has not been created by \(j_0\). Hence \(\desce_{\hinterf}(\{j_0\})=\{j_0,j_2,j_4\}\). Notice that \(j_3\notin \desce_{\hinterf}(\{j_0\})\) even though the creation of \(j_3\) is in the genealogy \(\hinterf\).

During an execution, after having encountered a state \(\sinit=(j_0,\Pinit,\minit,\hinit)\) we distinguish two kinds of descendants of \(j_0\):
\begin{inparaenum}[(i)]
 \item \label{enum:past}those which already exist in state \(\sinit\) (except \(j_0\) itself) and their descendants,
 \item \label{enum:future}\(j_0\) and its other descendants.
\end{inparaenum}
Each thread of kind (\ref{enum:past}) has been created by a statement executed by \(j_0\).
We call \(\after(\sinit)\) the states from which a thread of kind (\ref{enum:future}) can execute a transition.
In Fig.~\ref{fig:states}, the thick lines describe all the states encountered while executing the program that fall into \(\after(\sinit)\).

The following lemma explicits some properties of \(\after\):
\begin{lemma}\label{lemma:F+}
 Let \(T\) a set of transitions.
 Let \((s_0,s_1)\in T^{\star}\) therefore:
 \begin{enumerate}
  \item If \(\thread(s_0)=\thread(s_1)\) then \(s_1\in\after(s_0)\) 
  \item If \(s_1\in\after(s_0)\) then \(\after(s_1)\subset\after(s_0)\)
 \end{enumerate}
\end{lemma}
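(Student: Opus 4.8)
The plan is to reduce both items to three elementary, purely combinatorial facts about \(\desce\), each proved by a straightforward induction. Writing \(\h,\h',\h''\) for genealogies and \(X\) for a set of identifiers, these are: \emph{extensivity}, \(X\subset\desce_{\h}(X)\); \emph{monotonicity}, \(X\subset Y \Rightarrow \desce_{\h}(X)\subset\desce_{\h}(Y)\); and \emph{composition}, \(\desce_{\h\cdot\h'}(X)=\desce_{\h'}(\desce_{\h}(X))\). Each follows by induction on \(\h\) (for composition, on the first genealogy), splitting on the two cases of the defining equation of \(\desce_{(i,\ell,j)\cdot\h}\); I expect these to be routine.

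For the first item I would first isolate a genealogy-extension claim that does not mention the current thread at all: if \((s_0,s)\in T^{\star}\) with \(s_0=(i_0,P_0,\m_0,\h_0)\), then the genealogy of \(s\) has the form \(\h_0\cdot\h'\) for some \(\h'\). This is proved by induction on the length of the \(T^{\star}\)-path, using only that every element of \(T\subset\Tra\) is a transition, hence (by the definition of \(\Tra\)) sends a state with genealogy \(\h\) to a state with genealogy \(\h\cdot(\text{something})\); since concatenation is associative the prefix \(\h_0\) is preserved along the whole path. I then specialise to \(s_1\): since by hypothesis \(\thread(s_1)=\thread(s_0)=i_0\), I may write \(s_1=(i_0,P_1,\m_1,\h_0\cdot\h')\), and extensivity gives \(i_0\in\{i_0\}\subset\desce_{\h'}(\{i_0\})\); by the definition of \(\after\) this is exactly \(s_1\in\after(s_0)\).

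For the second item I would unfold the two memberships. From \(s_1\in\after(s_0)\), with \(s_0=(i_0,P_0,\m_0,\h_0)\), write \(s_1=(i_1,P_1,\m_1,\h_0\cdot\h_1)\) with \(i_1\in\desce_{\h_1}(\{i_0\})\). Take an arbitrary \(t\in\after(s_1)\), say \(t=(j,P,\m,\h_0\cdot\h_1\cdot\h_2)\) with \(j\in\desce_{\h_2}(\{i_1\})\); the goal is \(t\in\after(s_0)\), i.e.\ \(j\in\desce_{\h_1\cdot\h_2}(\{i_0\})\). Here \(\{i_1\}\subset\desce_{\h_1}(\{i_0\})\) by the hypothesis on \(i_1\), so monotonicity and then composition yield \(\desce_{\h_2}(\{i_1\})\subset\desce_{\h_2}(\desce_{\h_1}(\{i_0\}))=\desce_{\h_1\cdot\h_2}(\{i_0\})\), whence \(j\in\desce_{\h_1\cdot\h_2}(\{i_0\})\) and \(t\in\after(s_0)\).

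The only point requiring a little care --- and the reason the two items are not symmetric --- is the ordering in the first proof: because schedule transitions may switch the current thread to one that is not a descendant of \(i_0\), the intermediate states along a \(T^{\star}\)-path need not lie in \(\after(s_0)\). I therefore deliberately prove genealogy extension for \emph{all} reachable states unconditionally, and invoke the equal-thread hypothesis only at the very end. With that sequencing the argument goes through, and no step should present a genuine obstacle beyond the bookkeeping of the \(\desce\) inductions.
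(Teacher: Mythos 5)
Your proof is correct. For item~1 it follows the paper's own argument almost exactly: genealogy extension along the \(T^{\star}\)-path (which you rightly make explicit as an induction on path length, a step the paper glosses over by citing ``the definition of transitions''), followed by extensivity of \(\desce\) and the equal-thread hypothesis applied only at the end. Where you genuinely diverge is item~2: the paper invokes its dichotomy result (Lemma~\ref{lemma:Descendant}: two descendant sets are either nested or disjoint), observes that \(i_1\) lies in both \(\desce_{\h'_2}(\{i_1\})\) and \(\desce_{\h'_1\cdot\h'_2}(\{i_0\})\), and concludes containment; you instead prove the containment directly from monotonicity and the composition law \(\desce_{\h\cdot\h'}(X)=\desce_{\h'}(\desce_{\h}(X))\). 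Your route is more elementary and self-contained: it avoids the side condition of Lemma~\ref{lemma:Descendant} that the relevant identifiers not be created in the suffix genealogy --- a hypothesis the paper's application implicitly needs (it holds because \(i_0\) and \(i_1\) already exist in the domains of \(P_0\) and \(P_1\), so they cannot be created later) but never verifies. The paper's approach, in exchange, reuses a lemma it needs anyway for Lemma~\ref{lemma:F-}, whereas yours requires establishing the three \(\desce\) facts; both sets of auxiliary facts are routine inductions, so the trade-off is essentially one of bookkeeping.
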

\begin{proof}
Let \(\cstate{_0}=s_0\) and \(\cstate{_1}=s_1\). By definition of transitions, there exists \(\h'_1\) such that \(\h_1=\h_0\cdot\h'_1\).
Because \(i_0\in\desce_{\epsilon}(\{i_0\})\),  \(i_0\in\desce_{\h'_1}(\{i_0\})\). Therefore, if \(\thread(s)=\thread(s')\), i.e., \(i_1=i_0\), then \(s_1\in\after(s_0)\) (By definition of \(\after\)).

 Let us assume that \(s_1\in\after(s_0)\).
 Let \(s_2=\cstate{_2}\in\after(s_1)\). Therefore, there exists \(\h'_2\) such that \(\h_2=\h_1\cdot\h'_2=\h_0\cdot\h'_1\cdot\h'_2\) and \(i_2\in\desce_{\h'_2}(\{i_1\})\).
 Because \(s_1\in\after(s_0)\), by definition, \(i_1\in\desce_{\h'_1}(\{i_0\})\). Therefore \(i_1\in\desce_{\h'_2}(\{i_1\})\cap\desce_{\h'_1\cdot\h'_2}(\{i_0\})\). According to Lemma \ref{lemma:Descendant}, \(\desce_{\h'_2}(\{i_1\})\subset\desce_{\h'_1\cdot\h'_2}(\{i_0\})\).
  Hence  \(i_2\in\desce_{\h'_1\cdot\h'_2}(\{i_0\})\) and therefore \(s_2\in\after(s_0)\).
\end{proof}

When a schedule transition is executed, the current thread change. The futur descendants of the past current thread and the new current thread are diffents. This is formalized by the following lemma:

\begin{lemma} \label{lemma:F-}
 If \((s_1,s_2)\in \Sche\) then \(\after(s_1)\cap\after(s_2)=\emptyset\).
\end{lemma}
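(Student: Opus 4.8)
The plan is to unfold the definition of $\after$ and reduce the claim to a purely combinatorial statement about $\desce$. By definition of $\Sche$, the schedule transition has the form $s_1=(i,P,\m,\h)$ and $s_2=(j,P,\m,\h)$ with $i\neq j$, the two states sharing the same control point, store and genealogy. Reading off the definition of $\after$ from Fig.~\ref{fig:opfunc}, we get $\after(s_1)=\{(k,P',\m',\h\cdot\h')\in\St\mid k\in\desce_{\h'}(\{i\})\}$ and $\after(s_2)=\{(k,P',\m',\h\cdot\h')\in\St\mid k\in\desce_{\h'}(\{j\})\}$. Suppose, for contradiction, that some state $s$ lies in both. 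Its genealogy is a fixed word $G$; membership in $\after(s_1)$ forces $G=\h\cdot\h'_1$ and membership in $\after(s_2)$ forces $G=\h\cdot\h'_2$, so by left-cancellativity of concatenation $\h'_1=\h'_2=:\h'$. Hence the current thread $k$ of $s$ satisfies $k\in\desce_{\h'}(\{i\})\cap\desce_{\h'}(\{j\})$, and it suffices to prove $\desce_{\h'}(\{i\})\cap\desce_{\h'}(\{j\})=\emptyset$.

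First I would record that neither $i$ nor $j$ is created in $\h'$. Indeed $i,j\in\dom(P)$ since $s_1,s_2$ are states, so by the state condition each of them is either $\main$ or is created in $\h$; as $\h\cdot\h'$ is a valid genealogy, hypothesis (b) ($\main$ is never created) and hypothesis (a) (no thread is created twice) together forbid $i$ and $j$ from being created in $\h'$. With this in hand I would apply Lemma~\ref{lemma:Descendant} to the genealogy $\h'$ written with empty prefix, $\epsilon\cdot\h'$, and to the pair $i,j$: it yields either $\desce_{\h'}(\{j\})\subset\desce_{\h'}(\{i\})$ or $\desce_{\h'}(\{j\})\cap\desce_{\h'}(\{i\})=\emptyset$. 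The second alternative is exactly the disjointness we want, so it only remains to exclude the first.

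To rule out containment I would use two elementary facts, each a one-line induction on $\h'$ from the recursive definition of $\desce$: (i) the seed is always a descendant, so $j\in\desce_{\h'}(\{j\})$; and (ii) every element of $\desce_{\h'}(\{i\})$ is either $i$ itself or a thread created in $\h'$, since only third components of tuples of $\h'$ are ever added. If $\desce_{\h'}(\{j\})\subset\desce_{\h'}(\{i\})$ held, then (i) would give $j\in\desce_{\h'}(\{i\})$, whereas (ii) together with $j\neq i$ and the fact that $j$ is not created in $\h'$ gives $j\notin\desce_{\h'}(\{i\})$, a contradiction. Hence the disjointness alternative holds, contradicting $k$ belonging to both sets, and therefore $\after(s_1)\cap\after(s_2)=\emptyset$.

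The argument is not deep; the points that require care are purely bookkeeping. The main one is checking that the \emph{same} suffix $\h'$ governs both membership conditions, which is the left-cancellativity step, and extracting the hypothesis that $i$ and $j$ are not created in $\h'$ from state validity. This last hypothesis is genuinely necessary and is the crux of making Lemma~\ref{lemma:Descendant} applicable and of discarding its containment branch: for $\h'=(i,\ell,j)$ one has $j\in\desce_{\h'}(\{i\})\cap\desce_{\h'}(\{j\})$, so distinctness of $i$ and $j$ alone would not suffice.
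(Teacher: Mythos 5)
Your proof is correct and takes essentially the same route as the paper's: both argue by contradiction, observe that the two states share the same genealogy so a common element of \(\after(s_1)\cap\after(s_2)\) yields a single suffix \(\h'\), use membership of the two thread ids in \(\dom(P)\) to conclude that neither is created in \(\h'\), and then invoke the dichotomy of Lemma~\ref{lemma:Descendant} (with empty prefix) to get disjointness after excluding the containment branch. The only difference is presentational: you spell out the bookkeeping (left-cancellativity giving the common \(\h'\), the seed property \(j\in\desce_{\h'}(\{j\})\), and that elements of \(\desce_{\h'}(\{i\})\) are either \(i\) or created in \(\h'\)) that the paper leaves implicit.
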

\begin{proof}
Let \(\cstate{_1}=s_1\) and \(i_2=\thread(s_2)\). Therefore \((i_2,P_1,\m_1,\h_1)=s_2\).
Let \(s=\cstate{}\in \after(s_1)\cap\after(s_2)\).

By definition of \(\after\), there exists \(\h'\) such that \(\h=\h_1\cdot\h'\), \(i\in \desce_{\h'}(\{i_1\})\) and \(i\in \desce_{\h'}(\{i_2\})\).
Furthermore \(i_1\) and \(i_2\) are in \(\dom(P_1)\). Therefore \(i_1\) and \(i_2\) are either created in \(\h_1\), or are \(\main\). Hence, \(i_1\) and \(i_2\) cannot be created in \(\h'\). Therefore, \(i_2\notin\desce_{\h'}(\{i_1\})\) and therefore \(\desce_{\h'}(\{i_2\})\subset\desce_{\epsilon\cdot\h'}(\{i_1\})\). Using Lemma \ref{lemma:Descendant} we conclude that \(\desce_{\h'}(\{i_1\})\cap\desce_{\h'}(\{i_2\}) = \emptyset\).
This is a contradiction with \(i\in \desce_{\h'}(\{i_1\})\) and \(i\in \desce_{\h'}(\{i_2\})\).
\end{proof}

During the execution of a set of transition \(T\) that do not create threads, the set of descendants does not increase:
\begin{lemma} \label{lemma:Ndesce}
 Let \(T\) a set of transitions such that:\\ for all \((s,s')=(\cstate{},\cstate')\in T, \h=\h'\).\\
 Let \(s_0=\cstate{_0}\),  \(s=(i,P,\m,\h_0\cdot\h)\) and \(s=(i',P',\m',\h_0\cdot\h\cdot\h')\).
 
 If \((s,s')\in( \restrict{\concr{A}}{\compl{\after(s_0)}}\cup T)^{\star}\) then \(\desce_{\h\cdot\h'}\{i_0\}=\desce_{\h}\{i_0\}\).
\end{lemma}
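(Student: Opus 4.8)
The plan is to strengthen the statement into an invariant that travels along a derivation witnessing $(s,s')\in(\restrict{\concr{A}}{\compl{\after(s_0)}}\cup T)^{\star}$, and then argue by induction on the length of that derivation. Fix a path $s=t_0\rightarrow t_1\rightarrow\cdots\rightarrow t_n=s'$ whose every step lies in $\restrict{\concr{A}}{\compl{\after(s_0)}}\cup T$. Since a transition only extends the genealogy by concatenation on the right, each intermediate state can be written $t_k=(i_k,P_k,\m_k,\h_0\cdot\h\cdot\rho_k)$ with $\rho_0=\epsilon$ and $\rho_n=\h'$. I would prove by induction on $k$ that $\desce_{\h\cdot\rho_k}\{i_0\}=\desce_{\h}\{i_0\}$; taking $k=n$ then yields the lemma.

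The base case $k=0$ is immediate since $\rho_0=\epsilon$. For the inductive step, consider the step $t_k\rightarrow t_{k+1}$ and distinguish the two kinds of transitions allowed in the closure. If the step belongs to $T$, then by the hypothesis on $T$ the genealogy is unchanged, so $\rho_{k+1}=\rho_k$ and the invariant is inherited verbatim from the induction hypothesis. The only interesting case is a step in $\restrict{\concr{A}}{\compl{\after(s_0)}}$, for which the source $t_k$ satisfies $t_k\notin\after(s_0)$.

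For that case, write $\rho_{k+1}=\rho_k\cdot\rho$ where $\rho$ collects the tuples created by this single transition. By the definition of a transition, every letter of $\rho$ has first component equal to the current thread $\thread(t_k)=i_k$; call it $c$. Now the definition of $\after$ says precisely that $t_k\in\after(s_0)$ iff $i_k\in\desce_{\h\cdot\rho_k}\{i_0\}$, the part of $t_k$'s genealogy after $\h_0$ being $\h\cdot\rho_k$. Hence $t_k\notin\after(s_0)$ gives $c\notin\desce_{\h\cdot\rho_k}\{i_0\}$, which by the induction hypothesis equals $\desce_{\h}\{i_0\}$. Since $\desce$ accumulates its argument from left to right, $\desce_{\h\cdot\rho_k\cdot\rho}\{i_0\}=\desce_{\rho}(\desce_{\h\cdot\rho_k}\{i_0\})$; and because every letter of $\rho$ has creator $c$, which is absent from the never-growing set $\desce_{\h\cdot\rho_k}\{i_0\}$, none of these letters adds anything, so $\desce_{\rho}(\desce_{\h\cdot\rho_k}\{i_0\})=\desce_{\h\cdot\rho_k}\{i_0\}=\desce_{\h}\{i_0\}$. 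This establishes the invariant at $k+1$.

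The main obstacle is the bookkeeping in the last paragraph: one must use the definition of a transition to see that a creator is always the current thread of the source state, translate the membership condition defining $\after$ into a statement about $\desce$ applied to that current thread, and exploit the left-to-right (here constant) evaluation of $\desce$ so that several tuples created with the same absent creator still contribute nothing. No appeal to Lemma~\ref{lemma:Descendant} seems necessary; the argument rests only on the definitions of $\after$, of a transition, and of $\desce$.
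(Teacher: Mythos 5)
Your proof is correct and follows essentially the same route as the paper's: an induction along the path, splitting each step into a $T$-step (genealogy unchanged) versus a step in $\restrict{\concr{A}}{\compl{\after(s_0)}}$, whose source being outside $\after(s_0)$ forces the creating thread out of $\desce_{\h\cdot\rho_k}\{i_0\}$, so the newly appended tuples contribute nothing. You merely spell out the bookkeeping (the role of the transition definition and the left-to-right evaluation of $\desce$) that the paper leaves implicit, and you are right that Lemma~\ref{lemma:Descendant} is not needed here.
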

\begin{proof}
 Let \(s_1,\ldots,s_n\) a sequence of states such that \(s_1=s\), for all \(k\in\{1,\ldots,n-1\}\), \((s_k,s_{k+1)}\in \restrict{\concr{A}}{\compl{\after(s_0)}}\cup T)^{\star}\), and \(s_n=s'\).
 
 Let \((i_k,P_k,\m_k,\h_0\cdot\h\cdot\h_k)=s_k\).
 
 If \(\h_k\neq\h_{k+1}\) then, \((s_k,s_{k_1})\in\restrict{\concr{A}}{\compl{\after(s_0)}}\) and then \(i_k\notin \desce_{\h\cdot\h_k}\{i_0\} \) and then \(\desce_{\h\cdot\h_k}\{i\}=\desce_{\h\cdot\h_{k+1}}\{i_0\}\).
 
 Therefore, in all cases \(\desce_{\h\cdot\h_k}\{i\}=\desce_{\h\cdot\h_{k+1}}\{i\}\) and then, by straightforward induction, \(\desce_{\h\cdot\h'}\{i\}=\desce_{\h}\{i\}\).
 \end{proof}

\begin{lemma} \label{lemma:Ndescew}
 Let \(T\) a set of transitions such that:\\ for all \((s,s')=(\cstate{},\cstate')\in T, \h=\h'\).\\
 Let \(s=(i,P,\m,\h)\) and \(s=(i',P',\m',\h\cdot\h')\).
 
 If \((s,s')\in( \restrict{\concr{A}}{\compl{\after(s_0)}}\cup T)^{\star}\) then \(\desce_{\h'}\{i\}=\{i\}\).
\end{lemma}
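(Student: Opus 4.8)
The plan is to read the free symbol $s_0$ in the hypothesis as the start state $s$ itself; this is the only reading under which the statement holds in general, since if $\after(s_0)$ failed to contain $\after(s)$ then an $\concr{A}$-transition whose running thread descends from $i$ could append a child of $i$ and break the conclusion. Under this reading the claim is exactly the instance of Lemma~\ref{lemma:Ndesce} obtained by taking its reference state to be $s$, its offset $i_0$ to be $i$, its prefix $\h_0$ to be $\h$, and its intermediate genealogy to be $\epsilon$. With this substitution the ``$s$'' of Lemma~\ref{lemma:Ndesce} becomes $(i,P,\m,\h_0\cdot\epsilon)=(i,P,\m,\h)$, its ``$s'$'' becomes $(i',P',\m',\h_0\cdot\epsilon\cdot\h')=(i',P',\m',\h\cdot\h')$, the hypothesis $(s,s')\in(\restrict{\concr{A}}{\compl{\after(s)}}\cup T)^{\star}$ is verbatim the one required, and the conclusion $\desce_{\epsilon\cdot\h'}\{i\}=\desce_{\epsilon}\{i\}$ collapses to $\desce_{\h'}\{i\}=\{i\}$ because $\desce_{\epsilon}\{i\}=\{i\}$ by definition. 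So the first thing I would do is record this identification and invoke Lemma~\ref{lemma:Ndesce}.

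If a self-contained argument is preferred, I would replay the induction underlying Lemma~\ref{lemma:Ndesce}. Decompose the witness of $(s,s')\in(\restrict{\concr{A}}{\compl{\after(s)}}\cup T)^{\star}$ into single transitions $s=s_1\to\cdots\to s_n=s'$, write $s_k=(i_k,P_k,\m_k,\h\cdot\h'_k)$, and prove by induction on $k$ that $\desce_{\h'_k}\{i\}=\{i\}$. The base case $k=1$ holds because $\h'_1=\epsilon$. For the inductive step, if $(s_k,s_{k+1})\in T$ then the hypothesis on $T$ forces $\h'_k=\h'_{k+1}$ and there is nothing to prove. Otherwise $(s_k,s_{k+1})\in\restrict{\concr{A}}{\compl{\after(s)}}$, so $s_k\notin\after(s)$, which by the definition of $\after$ means $i_k\notin\desce_{\h'_k}\{i\}=\{i\}$, i.e.\ $i_k\neq i$. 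By the definition of a transition, every thread created between $s_k$ and $s_{k+1}$ has the running thread $i_k$ as its creator, so the genealogy appended in that step consists only of letters whose first component is $i_k\neq i$; feeding such letters through $\desce$ starting from $\{i\}$ adds nothing, whence $\desce_{\h'_{k+1}}\{i\}=\desce_{\h'_k}\{i\}=\{i\}$.

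The single step I would treat with care is the last one: that a creation performed by a thread lying outside $\desce_{\h'_k}\{i\}$ leaves the descendant set unchanged. This is precisely where the $\compl{\after(s)}$ restriction is used, and it rests on unfolding the recursive definition of $\desce$ (the letter $(i_k,\ell,j')$ is discarded exactly because $i_k\notin\{i\}$) together with the split $\desce_{\h'_k\cdot\widehat{\h}}(X)=\desce_{\widehat{\h}}(\desce_{\h'_k}(X))$. Everything else --- the decomposition into single transitions and the observation that each transition only extends the genealogy --- is routine and mirrors the proof of Lemma~\ref{lemma:Ndesce}.
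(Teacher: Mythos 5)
Your proposal is correct and matches the paper's proof, which consists exactly of the one-line instantiation you give first: apply Lemma~\ref{lemma:Ndesce} with $s_0=s$ (so that the prefix is $\h$, the intermediate genealogy is $\epsilon$, and the conclusion collapses to $\desce_{\h'}\{i\}=\desce_{\epsilon}\{i\}=\{i\}$). Your additional self-contained replay of the induction is sound but unnecessary, since it just re-proves Lemma~\ref{lemma:Ndesce} in this special case.
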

\begin{proof}
 Apply Lemma \ref{lemma:Ndesce} with \(s_0=s\).
\end{proof}

These lemmas has a consequence on \(\after\):

\begin{lemma}\label{lemma:E'}
Let \(T\) a set of transitions such that:\\ for all \((s,s')=(\cstate{},\cstate')\in T, \h=\h'\).\\
If \((s_0,s_1)\in( \restrict{\concr{A}}{\compl{\after(s_0)}}\cup T)^{\star}\) and \(s_1\in\after(s_0)\) then
\(\thread(s_1) = \thread(s_0)\).
\end{lemma}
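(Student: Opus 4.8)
The statement is essentially a bookkeeping consequence of the genealogy lemmas established above, so the plan is to reduce it to Lemma~\ref{lemma:Ndescew}, which already carries all the combinatorial content. Write \(\cstate{_0}=s_0\) and \(\cstate{_1}=s_1\); the goal is then to show \(i_1=i_0\), which is exactly \(\thread(s_1)=\thread(s_0)\).

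The first step is to unfold the hypothesis \(s_1\in\after(s_0)\) using the definition of \(\after\) from Fig.~\ref{fig:opfunc}: it produces a genealogy \(\h'\) with \(\h_1=\h_0\cdot\h'\) and \(i_1\in\desce_{\h'}(\{i_0\})\). Hence it suffices to prove that \(i_0\) acquires no genuine descendant along \(\h'\), that is, \(\desce_{\h'}(\{i_0\})=\{i_0\}\); for then \(i_1\in\{i_0\}\) forces \(i_1=i_0\).

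The key move is to observe that the reachability hypothesis \((s_0,s_1)\in(\restrict{\concr{A}}{\compl{\after(s_0)}}\cup T)^{\star}\) has precisely the shape required by Lemma~\ref{lemma:Ndescew}, instantiated with the pair \((s,s')=(s_0,s_1)\): the restriction to \(\compl{\after(s_0)}\) matches on the nose, and \(T\) creates no threads by the standing assumption that \(\h=\h'\) on transitions of \(T\). That lemma then yields \(\desce_{\h'}(\{i_0\})=\{i_0\}\) directly, and combining with \(i_1\in\desce_{\h'}(\{i_0\})\) closes the argument.

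I do not expect any real obstacle inside this proof: the substance has already been discharged upstream, in the inductive Lemma~\ref{lemma:Ndesce} and its specialization Lemma~\ref{lemma:Ndescew}, which show that the descendant set of the initial current thread cannot grow so long as the foreign interferences we follow remain outside \(\after(s_0)\). Once those are in hand, Lemma~\ref{lemma:E'} reduces to a single unfolding of the definition of \(\after\) followed by one invocation, and the only point demanding care is checking that the instantiation \((s,s')=(s_0,s_1)\) genuinely matches the hypotheses of Lemma~\ref{lemma:Ndescew} (in particular that the \(\compl{\after(\cdot)}\) restrictions refer to the same state \(s_0\)).
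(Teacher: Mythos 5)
Your proposal is correct and coincides with the paper's own proof: both decompose \(s_1\) as \((i_1,P_1,\m,\h_0\cdot\h')\), invoke Lemma~\ref{lemma:Ndescew} (instantiated at the pair \((s_0,s_1)\), noting the \(\compl{\after(s_0)}\) restrictions agree) to obtain \(\desce_{\h'}(\{i_0\})=\{i_0\}\), and combine this with \(i_1\in\desce_{\h'}(\{i_0\})\) from the definition of \(\after\). No difference in substance, only in the amount of commentary.
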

\begin{proof}
 Let \(\cstate{_0} =s_0 \) and \((i_1,P_1,\m,\h_0\cdot\h_1)=s_1\).
 By Lemma \ref{lemma:Ndescew} \(\desce_{\h_1}\{i_0\}=\{i_0\}\) and by definition of \(\after\), \(i_1\in \desce_{\h_1}\{i_0\}\).
\end{proof}

\begin{lemma}\label{lemma:H}
Let \(T_1\) a set of transitions such that:\\ for all \((s,s')=(\cstate{},\cstate')\in T, \h=\h'\).\\
Let \(T_2\) a set of transitions.

Let \(s_0,s_1,s\) three states such that  \((s_0,s_1)\in T_1^{\star}\), \(\thread(s_0)=\thread(s_1)\) and \((s_1,s)\in T^{\star}\).

If \(s\in\after(s_0)\) then \(s\in\after(s_1)\).
\end{lemma}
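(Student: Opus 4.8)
The plan is to reduce the statement to the single observation that $\after(\cdot)$ depends only on the current thread and the genealogy of a state, and on nothing else. Indeed, unfolding the definition in Fig.~\ref{fig:opfunc}, for $s_0=\cstate{_0}$ we have $\after(s_0)=\{(j,P',\m',\h_0\cdot\h')\in\St\mid j\in\desce_{\h'}(\{i_0\})\}$, so whether a given state lies in $\after(s_0)$ is decided entirely by $i_0$ and $\h_0$, the components $P_0$ and $\m_0$ being irrelevant. Consequently, if I can show that $s_0$ and $s_1$ have the same current thread and the same genealogy, then $\after(s_0)=\after(s_1)$ as sets, and the desired implication is immediate.

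The equality of current threads, $\thread(s_0)=\thread(s_1)$, is given as a hypothesis. For the genealogies I would argue that $T_1^{\star}$ preserves them: by the defining property of $T_1$, every single step $(s,s')\in T_1$, written $(s,s')=(\cstate{},\cstate{'})$, satisfies $\h=\h'$, i.e., leaves the genealogy component unchanged. A straightforward induction on the length of a $T_1$-path witnessing $(s_0,s_1)\in T_1^{\star}$ then shows that $s_0$ and $s_1$ carry the same genealogy; the reflexive base case is trivial, and the inductive step merely chains one genealogy-preserving transition. Writing $s_0=\cstate{_0}$ and $s_1=\cstate{_1}$, I thus obtain $\h_0=\h_1$, and together with $i_0=i_1$ this yields $\after(s_0)=\after(s_1)$.

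To make the conclusion explicit: if $s=(j,P_s,\m_s,\h_s)\in\after(s_0)$, then $\h_s=\h_0\cdot\h'$ with $j\in\desce_{\h'}(\{i_0\})$ for some $\h'$; substituting $\h_0=\h_1$ and $i_0=i_1$ gives $\h_s=\h_1\cdot\h'$ with $j\in\desce_{\h'}(\{i_1\})$, that is, $s\in\after(s_1)$. I do not anticipate any genuine obstacle: the only point requiring care is the bookkeeping that a $T_1$-path truly fixes the genealogy, which rests squarely on the hypothesis imposed on $T_1$. It is worth remarking that the reachability hypothesis $(s_1,s)\in T^{\star}$ (and the auxiliary set $T_2$) plays no role in this argument; the equality $\after(s_0)=\after(s_1)$ holds outright, so the stated implication follows for every state $s$, whether or not it is reachable from $s_1$.
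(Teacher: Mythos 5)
Your argument is sound for the statement exactly as printed, and under that literal reading the lemma is indeed trivial: genealogy-preserving steps plus equal current threads give \(\after(s_0)=\after(s_1)\), so the conclusion follows at once. But your own closing observation --- that the hypothesis \((s_1,s)\in T^{\star}\) and the set \(T_2\) play no role --- is the tell-tale sign that this reading is not the intended one. The statement is garbled: comparing it with its companion Lemmas \ref{lemma:E'} and \ref{lemma:Ndescew} (which carry the same ``genealogy-preserving \(T\)'' hypothesis), and with the two places where Lemma \ref{lemma:H} is actually invoked (Claims \ref{claim:ifSu} and \ref{claim:ifSud}, where the path from \(s_0\) to \(s_1\) is a \(\col\)-path and therefore interleaves statement steps with interference steps), the intended hypothesis is \((s_0,s_1)\in(\restrict{\concr{A}}{\compl{\after(s_0)}}\cup T_1)^{\star}\), not \((s_0,s_1)\in T_1^{\star}\). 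The paper's own proof makes this explicit by writing \(s_1=(i_1,P_1,\m,\h_0\cdot\h_1)\) with a possibly nonempty extension \(\h_1\) of the genealogy, something that could never happen under your reading.

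Under the intended hypothesis your central step fails. Interference transitions fired from states outside \(\after(s_0)\) may be spawn transitions, so they append creation records to the genealogy and \(\h_1\neq\epsilon\) in general; then your claimed equality \(\after(s_0)=\after(s_1)\) is false (for instance \(s_0\in\after(s_0)\), but \(s_0\notin\after(s_1)\) because its genealogy does not extend \(\h_0\cdot\h_1\)), and only the inclusion \(\after(s_1)\subset\after(s_0)\) survives. The actual content of the lemma, which the paper's proof supplies, is threefold: first, by Lemma \ref{lemma:Ndescew} the interfering threads create no descendants of \(i_0\), i.e.\ \(\desce_{\h_1}(\{i_0\})=\{i_0\}\); second, the hypothesis \((s_1,s)\in T_2^{\star}\) is used precisely to factor the genealogy of \(s\) as \(\h_0\cdot\h_1\cdot\h\), since transitions only ever extend genealogies; third, these combine into \(\desce_{\h_1\cdot\h}(\{i_0\})=\desce_{\h}(\desce_{\h_1}(\{i_0\}))=\desce_{\h}(\{i_1\})\), which converts \(s\in\after(s_0)\) into \(s\in\after(s_1)\). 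So your proof establishes only the letter of a mis-stated lemma; for the statement the paper needs and proves, the key idea (Lemma \ref{lemma:Ndescew}) is missing, and the genealogy-equality step on which your whole argument rests would break.
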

\begin{proof}
 Let \(\cstate{_0} =s_0 \), \((i_1,P_1,\m,\h_0\cdot\h_1)=s_1\) and \((i,P,\m,\h_0\cdot\h_1\cdot\h)=s\).
 By Lemma \ref{lemma:Ndescew} \(\desce_{\h_1}\{i_0\}=\{i_0\}\) and by definition of \(\after\), \(i_1\in \desce_{\h_1}\{i_0\}\).
 Therefore \(\desce_{\h_1\cdot\h}(\{i_0\}) = \desce_{\h}(\desce_{\h_1}(\{i_0\})) = \desce_{\h}(\{i_0\})\).
 
 Because \(s\in\after(s_0)\), \(i\desce_{\h_1\cdot\h}(\{i_0\})\), therefore \(i\desce_{\h}(\{i_0\})\). Hence \(s\in\after(s_1)\).
\end{proof}

\subsection{Definition of the \Cname Semantics}

 \mondessinlocdef

Let us recall some classical definitions.
For any binary relation \(R\) on states let \(\restrict{R}{S}=\{(s,s')\in R \mid s\in S\}\) be the \emph{restriction} of \(R\) to \(S\) and \(R\langle S \rangle=\{s'\mid\exists s\in S : (s,s')\in R\}\) be the \emph{application} of \(R\) on \(S\). \(R ; R' =\{(s,s'') \mid\exists s'\in\St : (s,s')\in R \wedge (s',s'')\in R' \}\) is the \emph{composition} of \(R\) and \(R'\). Let \(R^{\star} = \bigcup_{k\in\mathbb{N}}R^{k}\) where \(R^{0}=\{(s,s) \mid s\in\St\}\) and \(R^{k+1}=R; R^{k}\). Finally, for any set of states \(S\), let \(\compl{S}=\St\smallsetminus S\) be the \emph{complement} of \(\concr{S}\).

The definition of the \cname semantics \(\osem{\lab \stmt,\ell'}\) of a statement \(\lab \stmt,\ell'\) requires some intermediate relations and sets. The formal definition is given by the following definition:
 
 \formaldefsem
 
 Let us read together, on some special cases shown in Fig.~\ref{fig:illusconcr}. This will explain the rather intimidating of Definition \ref{def:concrsem} step by step, introducing the necessary complications as they come along.

The statement is executed between states \(\sinit=(j_0,P,\m,\h)\) and \(\send=(j_0,P',\m',\h\cdot \h')\).

Figure~\subref{figsub:singlethread} describes the single-thread case: there is no thread interaction during the execution of \(\lab \stmt,\ell'\). The thread \(j_5\) is spawned after the execution of the statement. 
E.g., in Fig.~\ref{subfig:create}, \(\nr[b1] y:=0;\nlref{b2}\).

In this simple case, a state \(\sinter\) is reachable from \(\sinit\) if and only if there exists a path from \(\sinit\) to \(\sinter\) using only transitions done by the unique thread (these transitions should be in the guarantee \(\concr{G}\)) and that are generated by the statement. \(\concr{S'}\) represents the final states reachable from \(\concr{S}\). Finally, in this case:
 \begin{align*}
 \col&=\{(\sinit,\send)\in\big[\concr{G}\cap \TRi{\lab \stmt,\ell'} \big]^{\star}| \lbl(\sinit)=\ell\}\\
 \concr{S}' &= \{\send \mid \send\in\col(\concr{S}) \wedge \lbl(\send)=\ell' \} \\
 \Se&=\{(\sinter,\sinterb)\in \TRi{\lab \stmt,\ell'} \mid \sinter\in \col(\concr{S}) \}\\
 \av{\lab \stmt,\ell'}&\langle \concr{S}, \concr{G} , \Sche \rangle = \langle \concr{S}',
 \concr{G}\cup\Se,
 \Sche   \rangle \vspace{-1.5mm}
 \Su&=\Sud=\emptyset
 \end{align*}

Figure~\subref{figsub:interference} is more complex: \(j_0\) interferes with threads \(j_1\) and \(j_3\).
These interferences are assumed to be in \(\concr{A}\). Some states can be reached only with such interference transitions. E.g, consider the statement \(\li{_{\ref{label:un}}} y:= 1; \li{_{\ref{label:deux}}} z:=y,\lend\) in Fig.~\ref{fig:example:inter}: 
at the end of this statement, the value of \(z\) may be \(3\), because the statement \(\li{_{\ref{label:trois}}}y:=3,\lend\) may be executed when the thread \(\main\) is at label \(\ell_{\ref{label:deux}}\). 
Therefore, to avoid missing some reachable states, transitions of \(\concr{A}\) are taken into account in the definition of \(\col\).
 In Fig.~\subref{figsub:interference}, the statement \(\lab \stmt,\ell'\) is executed by descendants of \(j_0\) of kind (\ref{enum:future}) (i.e., \(\after(\sinit)\)), and the interferences come from \(j_1\) and \(j_3\) which are descendants of kind (\ref{enum:past}) (i.e., in \(\compl{\after(\sinit)}\)).
Finally, we find the complete formula of Definition~\ref{def:concrsem}: \[\col = \formulareach. \]

In Fig.~\subref{figsub:par}, when \(j_0\) executes the statement \(\lab \stmt,\ell'\) it creates subthreads (\(j_2\) and \(j_4\)) which execute transitions in parallel of the statement. The guarantee \(\concr{G}\) is not supposed to contain only transitions executed by the current thread but also these transitions. These transitions, represented by thick lines in Fig.~\subref{figsub:par}, are collected into the set \(\Su\).
Consider such a transition, it is executed in parallel of the statement, i.e., from a state of \(\Sche\circ\col(\{s_0\})\). Furthermore, this transition came from the statement, and not from an earlier thread, hence from \(\after(\sinit)\).
\[\Su = 
\{(\sinter,\sinterb)\in\TRi{\lab \stmt,\ell'}\mid \exists \sinit\in\concr{S} : ( \sinit,\sinter)\in \Sche\circ \col \wedge \sinter\in\after(\sinit) \}.
\]

The threads created by \(j_0\) when it executes the statement \(\lab \stmt,\ell'\) may survive when this statement \returns{} in \(\send\), as shown in Fig.~\subref{figsub:sub}. Such a thread \(i\) (here, \(i\) is \(j_4\) or \(j_5\) or \(j_6\)) can execute transitions that are not in \(\Su\). \(\Sud\) collects these transitions. The creation of \(i\) results of a \(\ccreate\) statement executed between \(\sinit\) and \(\send\). Hence, such a transition \((s,s')\) is executed from a state in \(\after(\sinit)\smallsetminus\after(\send)\). The path from \(\send\) to \(s\) is comprised of transitions 
 in \((\restrict{\concr{G}}{\after(\sinit)}\cap\TRi{\lab \stmt,\ell'})\cup \restrict{\concr{A}}{\compl{\after(\sinit)}}\) (similarly to \(\col\)) and of transitions of \(j_0\) or \(j_5\) under the dotted line, i.e., transitions in \(\restrict{\concr{G}}{\after(\send)}\). 

\subsection{Properties of the \Cname Semantics }

\basicsemfuncs

To prepare for our static analysis we provide a compositional analysis of the \cname semantics in Theorem \ref{theorem:denot} below. To this end, we introduce a set of helper functions, see Fig.~\ref{concrfcts}. We define, for any extensive\footnote{A function \(f\) of domain \(D\) is \emph{extensive} if and only if for every set \(X\subset D\), \(X \subset f(X)\)} function \(f\), \(f\eomega(X)\egdef\bigcup_{n\in\mathbb{N}}f^n(X)\).
 
The function \(\cinter{\concr{A}}(\concr{S})\) returns states that are reachable from \(\concr{S}\) by applying interferences in \(\concr{A}\).
Notice that these interferences do not change the label of the current thread:
\begin{lemma}\label{lemma:B}
 Let \(s=\cstate{}\) and \(s'=\cstate'\).
 If \((s,s') \in ( \restrict{\concr{A}}{\compl{\after(s)}}\cup\Sche )^{\star}\) then \(P(i)=P'(i)\), i.e., \(\lbl(s)=P'(\thread(s))\). 
 
 If furthermore \(\thread(s)=\thread(s')\) then \(\lbl(s)=\lbl(s')\).
\end{lemma}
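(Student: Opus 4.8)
The plan is to linearize the reflexive--transitive closure and then track the label of the distinguished thread \(i=\thread(s)\) edge by edge. Writing \(R=\restrict{\concr{A}}{\compl{\after(s)}}\cup\Sche\), the hypothesis \((s,s')\in R^{\star}\) yields a finite path \(s=s_0,s_1,\dots,s_n=s'\) with every consecutive pair in \(R\); I set \(s_k=\cstate{_k}\), so that \(i_0=i\) and \(P_0=P\). The whole argument reduces to the invariant
\[P_k(i)=P(i)\qquad\text{for all }k,\]
which I would prove by induction on \(k\), the base case being immediate. Once this holds, taking \(k=n\) gives \(P(i)=P'(i)\), i.e. \(\lbl(s)=P'(\thread(s))\); and the ``furthermore'' clause follows at once, since \(\thread(s)=\thread(s')\) means \(i'=i\), whence \(\lbl(s')=P'(i')=P'(i)=P(i)=\lbl(s)\).

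For the inductive step I would distinguish the two kinds of edge. If \((s_k,s_{k+1})\in\Sche\), a schedule transition leaves the control point untouched (\(P_{k+1}=P_k\)), so trivially \(P_{k+1}(i)=P_k(i)\). The interesting case is an assume edge \((s_k,s_{k+1})\in\restrict{\concr{A}}{\compl{\after(s)}}\). Being a transition, it can alter only the label of its own current thread: by the definition of a transition, \(P_k(j)=P_{k+1}(j)\) for every \(j\in\dom(P_k)\smallsetminus\{i_k\}\) (the freshly spawned threads do not concern the already-existing \(i\)). Since \(i\in\dom(P)\subset\dom(P_k)\) — the domain of the control point only grows along transitions — it therefore suffices to prove \(i_k\neq i\), that is \(\thread(s_k)\neq\thread(s)\).

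This inequality is the crux, and it is exactly where I would invoke Lemma \ref{lemma:F+}. The prefix of the path gives \((s,s_k)\in R^{\star}\), and \(R\) is a set of transitions, so were \(\thread(s_k)=\thread(s)\) to hold, part~1 of Lemma \ref{lemma:F+} would force \(s_k\in\after(s)\); but applicability of an edge in \(\restrict{\concr{A}}{\compl{\after(s)}}\) requires \(s_k\in\compl{\after(s)}\), a contradiction. Hence \(i_k\neq i\), so the assume edge also preserves the label of \(i\), closing the induction. The only genuine obstacle is this step: recognising that an assume transition whose source lies outside \(\after(s)\) can never be executed by \(i\) itself, which is precisely the content of Lemma \ref{lemma:F+}(1) read along the prefix.
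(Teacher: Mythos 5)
Your proposal is correct and follows essentially the same route as the paper: linearize the closure into a path, induct on the invariant \(P_k(i)=P(i)\), with schedule edges leaving \(P\) untouched and assume edges handled by noting their source lies outside \(\after(s)\), hence cannot have current thread \(i\). The only difference is cosmetic: the paper leaves the implication ``\(s_k\notin\after(s)\Rightarrow i_k\neq i\)'' implicit, whereas you justify it explicitly via Lemma \ref{lemma:F+}(1), which is a sound (and arguably cleaner) way to discharge that step.
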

\begin{proof}
 There exists a sequence of states \(s_0\), \ldots, \(s_n\) such that \(s_0=s\) and \(s_n=s'\) and for all \(k\in\{0,\ldots,n-1\}\), \((s_k,s_{k+1})\in \restrict{\concr{A}}{\compl{\after(s)}}\cup\Sche\).
 
 Let \(\cstate{_k}=s_k\). Let us prove by induction that \(P_k(i)=P(i)\).
If \((s_k,s_{k+1})\in\Sche\) and \(P_k(i)=P(i)\) then \(P_{k+1}(i)=P(i)\).
If \((s_k,s_{k+1})\in \restrict{\concr{A}}{\compl{\after(s)}}\) and \(P_k(i)=P(i)\) then \(s_k\notin\after(s_k)\) and then \(i_k\neq i\) and then \(P_{k+1}(i)=P_{k}(i)=P(i)\).
\end{proof}

The function 
\(\cextract{\ell}\) computes the set of states that may be reached after having created a thread at label \(\ell\); \(\cschedule\) applies a schedule transition to the last child of the current thread. The function \(\cinit_{\ell}\) computes a configuration for the last child created at \(\ell\), taking into account interferences with its parent using \(\cextract{\ell}\); notice that we need here the genealogies to define \(\cextract{\ell}\) and then to have Theorem \ref{theorem:denot}. 
The function \(\cexe\) computes a part of the guarantee (an under-approximation), given the semantics of a command represented as a function \(f\) from configuration to configuration. And \(\fp{}\) iterates \(\cexe\) to compute the whole guarantee.

During the execution of a statement \(\lab\stmt,\ell'\), some interference transition may be fired at any time.
Nevertheless, the labels of the thread(s) executing the statement are still in a label of the statement:
\begin{lemma}
 \label{lemma:F}
 If \((s_0,s)\in (\TR{\lab\stmt,\ell'}\cup\restrict{\concr{A}}{\compl{\after(s_0)}})^{\ast}\), \(\lbl(s_0)\in\Labs{\lab\stmt,\ell'}\) and \(s\in\after(s_0)\) then \(\lbl(s)\in \Labs{\lab\stmt,\ell'}\).
 
 Futhermore, if \(\lbl(s)=\ell'\) or \(\lbl(s)=\ell\) then \(\thread(s_0)=\thread(s)\).
\end{lemma}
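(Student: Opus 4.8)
The plan is to prove both parts simultaneously by induction on the length of a witnessing path \(s_0=t_0\to t_1\to\cdots\to t_n=s\), every step of which lies in \(\TR{\lab\stmt,\ell'}\cup\restrict{\concr{A}}{\compl{\after(s_0)}}\). Write \(t_k=(i_k,P_k,\m_k,\h_k)\). Since every transition only extends the genealogy, \(\h_k=\h_0\cdot\h_k'\) for an increasing family \(\h_k'\) with \(\h_0'=\epsilon\), and \(t_k\in\after(s_0)\) precisely when \(i_k\in\desce_{\h_k'}(\{i_0\})\). The invariant I would carry is: \emph{for every \(j\in\desce_{\h_k'}(\{i_0\})\) one has \(P_k(j)\in\Labs{\lab\stmt,\ell'}\), and moreover \(P_k(j)\in\Labss{\lab\stmt,\ell'}\) whenever \(j\neq i_0\).} Applied to \(t_n=s\) this yields exactly the statement: \(s\in\after(s_0)\) gives \(i_n\in\desce_{\h_n'}(\{i_0\})\), hence \(\lbl(s)=P_n(i_n)\in\Labs{\lab\stmt,\ell'}\); and if \(\lbl(s)\in\{\ell,\ell'\}\) then, since \(\ell,\ell'\notin\Labss{\lab\stmt,\ell'}\) by the ``furthermore'' clause of Lemma~\ref{lemma:Asub2}, the descendant \(i_n\) cannot be proper, so \(i_n=i_0=\thread(s_0)\).

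The base case is immediate, as \(\desce_{\epsilon}(\{i_0\})=\{i_0\}\) and \(P_0(i_0)=\lbl(s_0)\in\Labs{\lab\stmt,\ell'}\) by hypothesis. For the inductive step I would classify the step \((t_k,t_{k+1})\). The harmless cases are the schedule transitions, whether produced by the schedule rule (so that \(\Sche\subset\TR{\lab\stmt,\ell'}\)) or drawn from \(\concr{A}\): they leave \(P\) and \(\h\) untouched, hence leave \(\desce_{\h_k'}(\{i_0\})\) and all relevant labels unchanged, and the invariant passes through verbatim. The structural remark driving the remaining cases is that a transition modifies only the current thread's label and spawns only children of that current thread. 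Consequently, for an interference step from \(\restrict{\concr{A}}{\compl{\after(s_0)}}\) the source obeys \(i_k\notin\desce_{\h_k'}(\{i_0\})\); its new children are therefore \emph{not} added to \(\desce_{\cdot}(\{i_0\})\), so \(\desce_{\h_{k+1}'}(\{i_0\})=\desce_{\h_k'}(\{i_0\})\), while \(P\) changes only at the off-cone thread \(i_k\), and the invariant transfers intact.

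The substantive case is a non-schedule step in \(\TR{\lab\stmt,\ell'}\). By Lemma~\ref{lemma:A} it fixes the current thread (\(i_{k+1}=i_k\)) and carries its label from \(\Labs{\lab\stmt,\ell'}\smallsetminus\{\ell'\}\) into \(\Labs{\lab\stmt,\ell'}\). If \(i_k\notin\desce_{\h_k'}(\{i_0\})\) the preceding paragraph applies unchanged. If \(i_k\in\desce_{\h_k'}(\{i_0\})\), then \(\desce_{\h_{k+1}'}(\{i_0\})\) is \(\desce_{\h_k'}(\{i_0\})\) enlarged by the freshly created children of \(i_k\); Lemma~\ref{lemma:Asub} places each such child's label in \(\Labss{\lab\stmt,\ell'}\), the untouched descendants \(j\neq i_k\) inherit the invariant, and \(i_k\) itself keeps a label in \(\Labs{\lab\stmt,\ell'}\) by Lemma~\ref{lemma:A}, which must be sharpened to \(\Labss{\lab\stmt,\ell'}\) when \(i_k\neq i_0\) using Lemma~\ref{lemma:Asub2}.

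I expect the main obstacle to be exactly that last sharpening: showing a \emph{proper} descendant never carries \(\ell\) or \(\ell'\). This is what forces the invariant to record the labels of all in-cone threads rather than the current one alone---without this, schedule transitions, which reassign the current thread to an arbitrary existing descendant, could not be controlled---and it requires \(\Labss{\lab\stmt,\ell'}\) to be stable under a subthread's own execution, a fact marginally stronger than the literal conclusion of Lemma~\ref{lemma:Asub2} and obtained by combining it with Lemma~\ref{lemma:Asub} and \(\ell,\ell'\notin\Labss{\lab\stmt,\ell'}\). The conceptual key throughout is the clean separation between on-cone threads, advanced only by \(\TR{\lab\stmt,\ell'}\), and off-cone threads, advanced only by the restricted assumption \(\restrict{\concr{A}}{\compl{\after(s_0)}}\); this disjointness is what makes the two transition families non-interfering and lets the induction close.
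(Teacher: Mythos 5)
Your proof is correct and follows essentially the same route as the paper's: induction along the witnessing path with the invariant that every thread in \(\desce_{\h_k'}(\{i_0\})\) carries a label in \(\Labs{\lab\stmt,\ell'}\), proper descendants carrying labels in \(\Labss{\lab\stmt,\ell'}\), using Lemma~\ref{lemma:A} for steps of \(i_0\), Lemma~\ref{lemma:Asub} for freshly spawned children, Lemma~\ref{lemma:Asub2} for steps of proper descendants, and \(\ell,\ell'\notin\Labss{\lab\stmt,\ell'}\) for the ``furthermore'' clause. If anything, you are more explicit than the paper about the two points it glosses over (schedule steps, and the need for \(\Labss{\lab\stmt,\ell'}\) to be stable under a subthread's own execution, which the paper dismisses with ``we conclude similarly by Lemma~\ref{lemma:Asub2}'').
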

\begin{proof}
  There exists a path \(s_1,\ldots,s_n\) such that \(s_n=s\) and for all \(k\in\{0,\ldots,n-1\}\), \((s_k,s_{k-1})\in \TR{\lab\stmt,\ell'}\cup\restrict{\concr{A}}{\compl{\after(s_0)}}\).
 Let \(\cstate{_0}=s_0\) and for  \(k\geqslant 1\),
 let \((i_k,P_k,\m_k,\h_0\cdot\h_k)=s_k\). 
 
 Let us prove by induction on \(k\) that \(P_{k}(i)\in\Labs{\lab\stmt,\ell'}\) and for all \(j\in\desce_{\h_k}(\{i_0\})\smallsetminus\{i_0\}\), \(P_{k}(j)\in\Labss{\lab\stmt,\ell'}\).
 
 Let us assume that \(k\) satisfy the induction property, and let us show that \(k+1\) satifies the induction property.
 
 In the case \((s_k,s_{k+1})\in\restrict{\concr{A}}{\compl{\after(s_0)}}\), \(i_k\notin\desce_{\h_k}(\{i_0\})\) and then for all \(j=\desce_{\h_k}(\{i_0\})=\desce_{\h_{k+1}}(\{i_0\})\), \(P_k(j)=P_{k+1}(j)\).
 
 In the case \((s_k,s_{k+1})\in\TR{\lab\stmt,\ell'}\) and \(i_k=i_0\), by Lemma \ref{lemma:A},  \(P_{k+1}(i_k)\in\Labs{\lab\stmt,\ell'}\). Furthermore, if \(j\in\desce_{\h_k}(\{i_0\}) \) then \(P_k(j)=P_{k+1}(j)\).
 If \(j\in\desce_{\h_{k+1}}(\{i_0\})\smallsetminus\desce_{\h_k}(\{i_0\})\), then \(j\in\dom(P_{k+1})\smallsetminus\dom(P_k)\) and by Lemma \ref{lemma:Asub}, \(P_{k+1}(j)\in \Labss{\lab\stmt,\ell'}\).
 
 In the case\((s_k,s_{k+1})\in\TR{\lab\stmt,\ell'}\) and \(i_k=i_0\), we conclude similarly by Lemma \ref{lemma:Asub2}.
If \(s\in\after(s_0)\), then \(i_n\in\desce_{\h_n}(\{i_0\})\) and therefore \(\lbl(s)\in \Labs{\lab\stmt,\ell'}\).

If \(\lbl(s)=\ell'\) or \(\lbl(s)=\ell\), then, because by Lemma \ref{lemma:Asub2}, \(\ell\) and \(\ell'\) are not in \(\Labss{\lab \stmt,\ell'}\), we have \(\thread(s_0)=\thread(s)\).
\end{proof}

The following lemma summarizes the consequences on \(\col\) of Lemmas \ref{lemma:F+} and \ref{lemma:F}:

\begin{lemma}\label{lemma:R} Let \(\nq = \ssem{\lab\stmt,\ell'}\qc\).

If \((s_0,s)\in\col\) therefore \(s\in\after(s_0)\), \(\after(s)\subset\after(s_0)\) and \(\lbl(s)\in\Labs{\lab\stmt,\ell'}\). 
\end{lemma}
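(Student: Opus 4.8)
The plan is to unfold the definition of \(\col\) and then read the three conclusions directly off Lemmas \ref{lemma:F+} and \ref{lemma:F}, which is precisely what the sentence introducing the statement announces. Writing \(T\egdef(\restrict{\concr{G}}{\after(s_0)}\cap\TR{\lab\stmt,\ell'})\cup \restrict{\concr{A}}{\compl{\after(s_0)}}\) for the relation appearing in Definition \ref{def:concrsem}, the membership \((s_0,s)\in\col\) unpacks into exactly three facts: \((s_0,s)\in T^{\star}\), \(\thread(s_0)=\thread(s)\), and \(\lbl(s_0)=\ell\). The whole argument is then a matter of feeding these into the two earlier lemmas in the right order.

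First I would dispatch the two statements about \(\after\). Since \((s_0,s)\in T^{\star}\) and \(\thread(s_0)=\thread(s)\), applying the first item of Lemma \ref{lemma:F+} (instantiated with the transition set \(T\)) gives \(s\in\after(s_0)\). With this in hand, the second item of the same lemma immediately yields \(\after(s)\subset\after(s_0)\). The order matters: \(s\in\after(s_0)\) must be secured first, since it is the hypothesis of the second item.

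For the remaining claim \(\lbl(s)\in\Labs{\lab\stmt,\ell'}\), I would first observe the inclusion \(T\subset \TR{\lab\stmt,\ell'}\cup\restrict{\concr{A}}{\compl{\after(s_0)}}\), which holds because \(\restrict{\concr{G}}{\after(s_0)}\cap\TR{\lab\stmt,\ell'}\) is trivially contained in \(\TR{\lab\stmt,\ell'}\). Consequently \((s_0,s)\in(\TR{\lab\stmt,\ell'}\cup\restrict{\concr{A}}{\compl{\after(s_0)}})^{\star}\), and together with \(\lbl(s_0)=\ell\in\Labs{\lab\stmt,\ell'}\) and the already-established \(s\in\after(s_0)\), all hypotheses of Lemma \ref{lemma:F} are met; its conclusion delivers \(\lbl(s)\in\Labs{\lab\stmt,\ell'}\).

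Since every step is just an invocation of a previously proved lemma, there is no genuine obstacle here; the only points requiring care are bookkeeping ones, namely reusing \(s\in\after(s_0)\) as a shared hypothesis for both the second item of Lemma \ref{lemma:F+} and for Lemma \ref{lemma:F}, and checking the set inclusion that lets Lemma \ref{lemma:F} (stated for the set \(\TR{\lab\stmt,\ell'}\cup\restrict{\concr{A}}{\compl{\after(s_0)}}\)) apply to a \(T^{\star}\)-path. One should also confirm that \(\ell=\lbl(s_0)\) genuinely lies in \(\Labs{\lab\stmt,\ell'}\), i.e.\ that the entry label of a statement counts among its labels, which is what makes the label hypothesis of Lemma \ref{lemma:F} available.
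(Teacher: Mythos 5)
Your proof is correct and follows exactly the paper's own argument: the paper likewise unfolds \(\col\) and cites Lemma \ref{lemma:F+} for \(s\in\after(s_0)\) and \(\after(s)\subset\after(s_0)\), then Lemma \ref{lemma:F} for \(\lbl(s)\in\Labs{\lab\stmt,\ell'}\). You merely make explicit the bookkeeping the paper leaves implicit (the thread-equality and label clauses in the definition of \(\col\), and the inclusion of the \(\col\)-relation into the transition set of Lemma \ref{lemma:F}), which is fine.
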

\begin{proof}
 \((s_0,s)\in\big[\subformulareach\big]^{\star} \), then by Lemma \ref{lemma:F+}, \(s\in\after(s_0)\) and  \(\after(s)\subset\after(s_0)\). Furthermore, by Lemma \ref{lemma:F}, \(\lbl(s)\in\Labs{\lab\stmt,\ell'}\). 
\end{proof}

The following proposition show that \(\fpd{}\) collect all transitions generated by a statement.

\begin{proposition}[Soundness of \(\fpd{}\)]\label{proposition:guarantee}
 Let \(\qc\) a concrete configuration, \(\lab \stmt,\ell'\) a statement and  
 \(\concr{G}_{\infty} = \fp{\av{\lab \stmt,\ell'}}\qc\).
 Let \(s_0\in\concr{S}\) and \(s\in\after(s_0)\) such that \((s,s')\in \TRi{\lab \stmt,\ell'}\).
 
 If \((s_0,s)\in  \big[
\restrict{(\TRi{\lab \stmt,\ell'})}{\after(s_0)}\cup \restrict{\concr{A}}{\compl{\after(s_0)}}
\big]^{\star}\) 
then \((s,s') \in \concr{G}_{\infty}\)
\end{proposition}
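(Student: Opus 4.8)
The plan is to exhibit $\concr{G}_{\infty}$ as a continuous fixpoint and then to feed the hypothesised path into its closure by an induction on the length of that path. Write $f=\av{\lab \stmt,\ell'}$ and $g=\cexe_{f,\concr{S},\concr{A}}$, so that $\concr{G}_{\infty}=g\eomega(\concr{G})=\bigcup_{n}g^{n}(\concr{G})$. I would first record that $g$ is extensive (by definition $g(\concr{G})=\concr{G}\cup\Se\cup\Su\cup\Sud$), monotone in its guarantee argument (the objects $\col,\Se,\Su,\Sud$ and $\ext{\sinit}{\send}$ all use $\concr{G}$ positively), and continuous (the closures involved are unions over finite paths, and membership of a single edge is finite information). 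Hence $g^{n}(\concr{G})$ is an increasing chain with union $\concr{G}_{\infty}$, and $g(\concr{G}_{\infty})=\concr{G}_{\infty}$. Writing ${\col}_{\infty},{\Se}_{\infty},{\Su}_{\infty},{\Sud}_{\infty}$ for the objects of Definition~\ref{def:concrsem} evaluated at $\langle\concr{S},\concr{G}_{\infty},\concr{A}\rangle$, this fixpoint equation yields ${\Se}_{\infty}\cup{\Su}_{\infty}\cup{\Sud}_{\infty}\subset\concr{G}_{\infty}$, so it suffices to place $(s,s')$ into one of these three sets.

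Set $R=\big[\,((\concr{G}_{\infty})_{|\after(s_0)}\cap\TRi{\lab \stmt,\ell'})\cup\restrict{\concr{A}}{\compl{\after(s_0)}}\,\big]^{\star}$, the reach appearing in ${\col}_{\infty}$ for the source $s_0$. The heart of the proof is the claim: \emph{if $u\in\after(s_0)$ with $(s_0,u)\in R$ and $(u,v)\in\TRi{\lab \stmt,\ell'}$, then $(u,v)\in\concr{G}_{\infty}$.} If $(u,v)\in\Sche$ this is immediate, since $\Sche\subset\concr{G}\subset\concr{G}_{\infty}$, so assume $(u,v)\notin\Sche$; I also use here that $\lbl(s_0)=\ell$, the entry label at which the semantics is applied. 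If $\thread(u)=\thread(s_0)$, then $(s_0,u)\in R$ together with the thread and label conditions is exactly $(s_0,u)\in{\col}_{\infty}$, so $u\in{\col}_{\infty}\langle\concr{S}\rangle$ and $(u,v)\in{\Se}_{\infty}$. If $\thread(u)\neq\thread(s_0)$, let $w$ be $u$ with its active thread reset to $\thread(s_0)$; since $\thread(s_0)\in\dom(P)$ and domains and genealogies only grow along the execution, $w\in\after(s_0)$ and the schedule step $u\rightarrow w$ lies in $\Sche\subset(\concr{G}_{\infty})_{|\after(s_0)}\cap\TRi{\lab \stmt,\ell'}$. Appending it to the $R$-path gives $(s_0,w)\in{\col}_{\infty}$, and the reverse schedule $w\rightarrow u$ yields $(s_0,u)\in{\col}_{\infty};\Sche$ with $u\in\after(s_0)$, hence $(u,v)\in{\Su}_{\infty}$. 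Thus ${\Sud}_{\infty}$ is not needed here: the descendant case is absorbed by ${\Su}_{\infty}$ precisely because one can always schedule back to the spawning thread.

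It remains to show that the hypothesised path is itself an $R$-path, i.e. that each $\TRi{\lab \stmt,\ell'}$-edge it takes inside $\after(s_0)$ already belongs to $\concr{G}_{\infty}$. I would prove by induction on the length of the witnessing path $s_0=u_0\rightarrow\cdots\rightarrow u_m=s$ that every prefix satisfies $(s_0,u_k)\in R$. Steps in $\restrict{\concr{A}}{\compl{\after(s_0)}}$ are copied verbatim; for a step $(u_k,u_{k+1})\in(\TRi{\lab \stmt,\ell'})_{|\after(s_0)}$, the induction hypothesis gives $(s_0,u_k)\in R$, so the claim applies and places $(u_k,u_{k+1})$ in $\concr{G}_{\infty}$, after which the edge may be appended as an element of $(\concr{G}_{\infty})_{|\after(s_0)}\cap\TRi{\lab \stmt,\ell'}$. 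Finally, with $(s_0,s)\in R$ now established, one further application of the claim to $(s,s')$ gives $(s,s')\in\concr{G}_{\infty}$, as desired. Throughout, Lemmas~\ref{lemma:F+} and~\ref{lemma:F} supply the bookkeeping that the reached states stay in $\after(s_0)$ with labels in $\Labs{\lab \stmt,\ell'}$, while Lemma~\ref{lemma:F-}, resting on the genealogy Lemma~\ref{lemma:Descendant}, underlies the membership $w\in\after(s_0)$.

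The step I expect to be most delicate is the descendant case of the claim. One must check that the reset state $w$ is reachable by a genuine $R$-path and that both schedule hops remain inside $\after(s_0)$ and inside $\TRi{\lab \stmt,\ell'}$: this requires the spawning thread to persist in $\dom(P)$ and to differ from the descendant, and it requires the observation---crucial to avoiding ${\Sud}_{\infty}$---that a path built only from $\TRi{\lab \stmt,\ell'}$ and $\restrict{\concr{A}}{\compl{\after(s_0)}}$ never drives the spawning thread past its exit label $\ell'$ (by Lemma~\ref{lemma:A} no transition of the statement starts at $\ell'$), so that no post-return transition---which $\col$ cannot use---is ever needed. Confirming these genealogical facts through Lemmas~\ref{lemma:F-} and~\ref{lemma:Descendant} is where the real work lies.
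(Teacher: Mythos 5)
Your argument is correct, and its engine is the same as the paper's: every \(\TR{\lab \stmt,\ell'}\)-edge taken inside \(\after(s_0)\) along the hypothesised path gets absorbed into \(\Se\cup\Su\) of an application of \(\osem{\lab \stmt,\ell'}\), by exactly your two cases (same thread: \(\Se\); foreign thread: \(\Su\), after scheduling back to \(\thread(s_0)\)). Where you genuinely differ is the scaffolding. The paper never uses that \(\concr{G}_{\infty}\) is a fixpoint: writing \(\concr{G}_m\) for the \(m\)-th iterate \(\cexe_{\osem{\lab \stmt,\ell'},\concr{S},\concr{A}}^m(\concr{G})\), it fixes an arbitrary \(m\) and picks the \emph{first} edge of the path lying in \(\restrict{(\TR{\lab \stmt,\ell'})}{\after(s_0)}\smallsetminus\concr{G}_m\); by minimality the prefix before that edge is a legal \(\col_m\)-path, so the edge lands in \(\Se_m\cup\Su_m\subset\concr{G}_{m+1}\), and since the path is finite the index of the first missing edge strictly increases with \(m\), so every edge --- in particular \((s,s')\) --- is eventually captured. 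This stage-wise ``first missing edge'' trick needs only that the \(\concr{G}_m\) increase, and so sidesteps continuity entirely. Your route instead proves \(\cexe\) continuous (your one-line justification --- closures are unions over finite paths --- is the right one), passes to the fixpoint inclusion \(\Se_{\infty}\cup\Su_{\infty}\cup\Sud_{\infty}\subset\concr{G}_{\infty}\), and then inducts on path position; this is cleaner to state, but it turns continuity into a genuine proof obligation that the paper's formulation avoids. Note that your explicit schedule-back construction (the state \(w\), the two \(\Sche\) hops, the check that the spawning thread persists in the domain) is precisely what the paper compresses into the words ``by definition, \((s_{k},s_{k+1})\in\Se_m\cup\Su_m\)'', so spelling it out is where the real substance of your write-up lies; your observation that \(\Sud\) is never needed also matches the paper. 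Finally, you are right to flag the use of \(\lbl(s_0)=\ell\): it is required for any pair \((s_0,\cdot)\) to enter \(\col\) at all (Definition \ref{def:concrsem}), it is not among the stated hypotheses, and the paper's proof depends on it in exactly the same implicit way --- without it the statement is in fact false (take \(\concr{S}\) a singleton at an interior label of the statement, \(\concr{G}=\concr{A}=\Sche\), and the empty path: then \(\concr{G}_{\infty}=\Sche\) while \((s,s')\) need not be a schedule transition). It is an implicit hypothesis satisfied in every application the paper makes, e.g.\ Proposition \ref{prop:coroc}, where \(\concr{S}=\ini\) consists of entry-label states; so this is a flaw of the proposition's statement shared by both proofs, not a gap specific to yours.
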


\begin{proof}
Let \(\qcp{_k} = \cexe_{\osem{\lab \stmt ,\ell'},\concr{S},\concr{A}}^k \concr{G}\)\\
and \(\nqp{_{k}} = \osem{\lab \stmt ,\ell'} \langle \concr{S} , \concr{G}_k , \concr{A}  \rangle \) \\
and \(T= \TRi{\lab \stmt,\ell'}\)

Let \(s_0,\ldots,s_{n+1}\) a path such that \(s_n=s\), \(s_{n+1}=s'\) and for all \(k\), \((s_k,s_{k+1}) \in \big[\restrict{T}{\after(s_0)}\cup \restrict{\concr{A}}{\compl{\after(s_0)}}
\big]^{\star}\).
Let \(m\) an arbitrary integer.
Then, let \(k_0\) the smallest \(k\) (if it exists) such that \((s_{k},s_{k+1}) \in \restrict{T}{\after(s_0)}\smallsetminus \concr{G}_{m} \).
Then, by definition, \((s_{k},s_{k+1})\in \Se_{m}\cup\Su_{m}\subset \concr{G}_{m+1}\subset\concr{G}_{\infty}\).
\end{proof}

\subsection{Basic Statements}

Basic statement have common properties, therefore, we will study them at the same time.
Proposition \ref{prop:basic} explain how to overapproximate the semantics of a basic statement.
It will be used in the abstract semantics.

An execution path of a basic statement can be decomposed in interferences, then one transition of the basic statement, and then, some other interferences.
The following lemma show this.
This lemma will allow us to prove Proposition \ref{prop:basic}.

\begin{lemma}\label{lemma:C}Let \(\li1 basic,\ell_2\) be a basic statement,\\ 
and \(\nq=\ssem{\li1 basic,\ell_2} \qc \).
Let  \((s_0,s) \in \col\)
then:
\begin{itemize}
 \item either \(s \in \cinter{\concr{A}}(\{s_0\})\) and \(\lbl(s)=\ell_1\),
 \item or \(s \in\cinter{\concr{A}}(\TR{\li1 basic,\ell_2} \smallsetminus\Sche \langle \cinter{\concr{A}}(\{s_0\}) \rangle)\) \\and \(\lbl(s)=\ell_2\)
\end{itemize}
\end{lemma}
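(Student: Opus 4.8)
The plan is to expand the definition of \(\col\) for the basic statement and then split the single transition relation that generates it into a ``genuine basic step'' part and an ``interference'' part. Unfolding Definition~\ref{def:concrsem} for \(\li1 basic,\ell_2\), the pair \((s_0,s)\in\col\) means \((s_0,s)\in R^{\star}\) with \(\thread(s_0)=\thread(s)\) and \(\lbl(s_0)=\ell_1\), where \(R\egdef(\restrict{\concr{G}}{\after(s_0)}\cap\TR{\li1 basic,\ell_2})\cup\restrict{\concr{A}}{\compl{\after(s_0)}}\). Since \(\Sche\) is contained in both \(\concr{G}\) and \(\concr{A}\) and in every \(\TR{\cdot}\), every schedule step lies in \(R\) irrespective of where its source sits; hence I would write \(R=B\cup I\) with \(B\egdef\restrict{\concr{G}}{\after(s_0)}\cap(\TR{\li1 basic,\ell_2}\smallsetminus\Sche)\) (the genuine basic-statement steps inside \(\after(s_0)\)) and \(I\egdef\restrict{\concr{A}}{\compl{\after(s_0)}}\cup\Sche\). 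The point of this split is that \(I\) is exactly the relation whose closure defines \(\cinter{\concr{A}}(\{s_0\})\).

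The technical core is a structural claim: along any fixed \(R\)-path \(s_0\to\cdots\to s\) there is \emph{at most one} \(B\)-step, and it is executed by \(i_0\egdef\thread(s_0)\). First I would argue that before the first \(B\)-step only \(I\)-steps occur, so by Lemma~\ref{lemma:Ndescew} the set of future descendants of \(i_0\) is still \(\{i_0\}\); since the source of that \(B\)-step lies in \(\after(s_0)\), its current thread is a future descendant, hence \(i_0\), and by Lemma~\ref{lemma:A'} the step carries \(i_0\) from \(\ell_1\) to \(\ell_2\) (and, in the spawn case, creates a child at \(\ell_3\neq\ell_1\), distinct by label-uniqueness). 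After this step, Lemma~\ref{lemma:B} keeps \(i_0\) at \(\ell_2\) along any further \(I\)-steps, while Lemma~\ref{lemma:Ndesce} confines the future descendants to \(i_0\) together with the possible \(\ell_3\)-child, whose label cannot change under \(I\)-steps (interference acts only outside \(\after(s_0)\), schedule changes no label). As a \(B\)-step requires source label \(\ell_1\) (Lemma~\ref{lemma:A'}) and no future descendant is ever again at \(\ell_1\), no second \(B\)-step is possible.

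The conclusion then follows by cases on whether a \(B\)-step occurs. If none does, the whole path is in \(I^{\star}\), so \((s_0,s)\in I^{\star}\) with \(\thread(s_0)=\thread(s)\) gives \(s\in\cinter{\concr{A}}(\{s_0\})\), and Lemma~\ref{lemma:B} forces \(\lbl(s)=\ell_1\): the first alternative. If a (unique) \(B\)-step \((t,t')\) occurs, the prefix gives \((s_0,t)\in I^{\star}\) with \(\thread(t)=i_0\), so \(t\in\cinter{\concr{A}}(\{s_0\})\) and therefore \(t'\in(\TR{\li1 basic,\ell_2}\smallsetminus\Sche)\langle\cinter{\concr{A}}(\{s_0\})\rangle\). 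For the suffix \((t',s)\in I^{\star}\), I would note \(t'\in\after(s_0)\) and \(\after(t')\subset\after(s_0)\) (Lemma~\ref{lemma:F+}), whence \(I\subset\restrict{\concr{A}}{\compl{\after(t')}}\cup\Sche\); together with \(\thread(t')=\thread(s)=i_0\) this yields \(s\in\cinter{\concr{A}}(\{t'\})\), i.e. \(s\in\cinter{\concr{A}}\big((\TR{\li1 basic,\ell_2}\smallsetminus\Sche)\langle\cinter{\concr{A}}(\{s_0\})\rangle\big)\), and Lemma~\ref{lemma:B} gives \(\lbl(s)=\ell_2\): the second alternative.

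The hard part is the structural claim of the second paragraph: showing the basic step fires at most once and only on \(i_0\) requires tracking \emph{simultaneously} which threads are future descendants of \(i_0\) (Lemmas~\ref{lemma:Ndesce} and~\ref{lemma:Ndescew}) and what labels they carry (Lemmas~\ref{lemma:A'} and~\ref{lemma:B}), which in a clean write-up is most safely organized as an induction on the path that maintains the invariant ``no future descendant of \(i_0\) is at \(\ell_1\) once the step has fired.'' The one easily-overlooked bookkeeping point is the passage from \(\compl{\after(s_0)}\) to \(\compl{\after(t')}\) needed to recast the suffix as \(\cinter{\concr{A}}(\{t'\})\); this is exactly where the monotonicity \(\after(t')\subset\after(s_0)\) from Lemma~\ref{lemma:F+} is used.
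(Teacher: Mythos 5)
Your proof is correct and follows essentially the same route as the paper's: split the \(\col\)-path into an interference/schedule prefix, at most one genuine (non-schedule) step of the basic statement, and an interference/schedule suffix, where Lemma~\ref{lemma:A'} together with the label preservation of Lemma~\ref{lemma:B} rules out a second genuine step, and monotonicity of \(\after\) (Lemma~\ref{lemma:F+}) lets the suffix be read as \(\cinter{\concr{A}}\) of the post-step state. The only difference is presentational: where the paper invokes Lemma~\ref{lemma:E'} and leaves the passage from \(\compl{\after(s_0)}\) to \(\compl{\after(t')}\) implicit, you track descendants directly via Lemmas~\ref{lemma:Ndesce} and~\ref{lemma:Ndescew} and make that bookkeeping (and the spawned-child case) explicit.
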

\begin{proof}
Let us consider the case \((s_0,s) \in (\restrict{\concr{A}}{\compl{\after(s_0)}}\cup\Sche)^{\star}\).
By definition of \(\col\), \(\thread(s_0)=\thread(s)\). Therefore \(s \in \cinter{\concr{A}}(\{s_0\})\). 
By Lemma \ref{lemma:B}, \(\lbl(s_0)=\lbl(s)\), hence, \(\lbl(s)=\ell_1\).

Let us consider the case \((s_0,s) \notin (\restrict{\concr{A}}{\compl{\after(s_0)}}\cup\Sche)^{\star}\)
Because \((s_0,s)\in\col\), \((s_0,s) \in [(\restrict{\concr{G}}{{\after(s_0)}}\cap\TR{\li1 basic,\ell_2})\restrict{\concr{A}}{\compl{\after(s_0)}}]^{\star}\).
So \((s_0,s) \in (\restrict{\concr{A}}{\compl{\after(s_0)}}\cup\Sche)^{\star};[\restrict{\concr{G}}{{\after(s_0)}}\cap\TR{\li1 basic,\ell_2}\smallsetminus\Sche];[(\restrict{\concr{G}}{{\after(s_0)}}\cap\TR{\li1 basic,\ell_2})\restrict{\concr{A}}{\compl{\after(s_0)}}]^{\star}\).

Let \(s_1, s_2, s_3,\ldots,s_n\) a sequence of states such that \((s_1,s_2)\in(\restrict{\concr{A}}{\compl{\after(s_0)}}\cup\Sche)^{\star} \) and \((s_2,s_3)\in \restrict{\concr{G}}{{\after(s_0)}}\cap\TR{\li1 basic,\ell_2}\smallsetminus\Sche\) and for all \(k\in\{3,\ldots,n\}\), \((s_k,s_{k+1})\in (\restrict{\concr{G}}{{\after(s_0)}}\cap\TR{\li1 basic,\ell_2})\restrict{\concr{A}}{\compl{\after(s_0)}}\).

Notice that \((s_1,s_2)\in \restrict{\concr{G}}{{\after(s_0)}}\) and therefore \(s_1\in\after(s_0)\).
By Lemma \ref{lemma:E'}, \(\thread(s_0)=\thread(s_1)\). Therefore \(s_1 \in \cinter{\concr{A}}(\{s_0\})\).
 
By Lemma \ref{lemma:A'}, \(\lbl(s_2)=\ell_2\).

Let \(k_0\) the smallest (if it exists) \(k\geqslant 2\) such that \((s_{k},s_{k+1})\in \TR{\li1 basic,\ell_2} \smallsetminus\Sche\).
Therefore \((s_2,s_{k_0})\in (\restrict{\concr{A}}{\compl{\after(s_0)}}\cup\Sche)^{\star}\).
By Lemma \ref{lemma:B}, \(\lbl(s_{k_0})=\lbl(s_2)=\ell_2\). According to Lemma \ref{lemma:A'}, this is a contradiction. Therefore, for all \(k\in\{3,\ldots,n\}\), \((s_k,s_{k+1})\in \Sche\cup\restrict{\concr{A}}{\compl{\after(s_0)}}\).

By Lemma \ref{lemma:A}, \(\thread(s_1)=\thread(s_2)\), hence \(\thread(s_2)=\thread(s)\). Therefore \(s_2\in\cinter{\concr{A}}(\{s_2\})\)
\end{proof}

Now, we introduce some claims on the semantics of basic statements.
Claims \ref{claim:SuV} and \ref{claim:SudV} say that when a basic statement is executed, only one thread is executed. Notice that \(\cspawn\) creates a subthread, but does not execute it.
The Claim \ref{claim:SeV}  caracterizes the transitions done by the current thread. The Claim \ref{claim:BS} gives an overapproximation of \(\concr{S'}\), the set of states reached at the end of the execution of a basic statement.

\begin{claim}\label{claim:SuV}
 Let \(\li1 basic,\ell_2\) a basic statement and \(\nq=\ssem{\li1 basic,\ell_2}\qc\).
 Therefore, \(\Su=\emptyset\).
 \end{claim}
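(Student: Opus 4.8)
The plan is to argue by contradiction. Suppose $(s,s')\in\Su$. Unfolding the definition of $\Su$ in Definition~\ref{def:concrsem} and the relational composition $\col;\Sche$, there must be states $s_0\in\concr{S}$ and $s''$ with $(s_0,s'')\in\col$ and $(s'',s)\in\Sche$, together with $s\in\after(s_0)$ and $(s,s')\in\TR{\li1 basic,\ell_2}$. I will show that these requirements clash over the identity of the thread that is about to move in $s$.

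First I would extract the thread information. By the definition of $\col$ the executing thread is preserved along the path and starts at the entry label, so $\thread(s'')=\thread(s_0)$ and $\lbl(s_0)=\ell_1$. Since $(s'',s)\in\Sche$ merely switches the active thread, we get $\thread(s)\neq\thread(s'')=\thread(s_0)$; thus in $s$ the scheduled thread is \emph{not} the thread $\thread(s_0)$ that ran the statement. The core step is then to apply Lemma~\ref{lemma:F}. Because a $\col$-path is built from transitions in $\restrict{\concr{G}}{\after(s_0)}\cap\TR{\li1 basic,\ell_2}$ and in $\restrict{\concr{A}}{\compl{\after(s_0)}}$, and $(s'',s)\in\Sche\subseteq\TR{\li1 basic,\ell_2}$, the whole path yields $(s_0,s)\in(\TR{\li1 basic,\ell_2}\cup\restrict{\concr{A}}{\compl{\after(s_0)}})^{\star}$; moreover $\lbl(s_0)=\ell_1\in\Labs{\li1 basic,\ell_2}$ and $s\in\after(s_0)$. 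Lemma~\ref{lemma:F} therefore applies, and its \emph{furthermore} clause says that whenever $\lbl(s)$ is the entry label $\ell_1$ or the exit label $\ell_2$ one must have $\thread(s)=\thread(s_0)$ --- which we have just excluded. Hence $\lbl(s)\notin\{\ell_1,\ell_2\}$. But if $(s,s')$ were a genuine step of the basic statement, i.e. $(s,s')\in\TR{\li1 basic,\ell_2}\smallsetminus\Sche$, then Lemma~\ref{lemma:A'} would force $\lbl(s)=\ell_1$, a contradiction. So no advancing transition lies in $\Su$, which gives $\Su=\emptyset$.

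The step I expect to be the main obstacle is the bookkeeping around Lemma~\ref{lemma:F}: one must check that gluing the $\col$-path to the single schedule edge keeps the pair inside $(\TR{\li1 basic,\ell_2}\cup\restrict{\concr{A}}{\compl{\after(s_0)}})^{\star}$ (using $\Sche\subseteq\TR{\li1 basic,\ell_2}$), and that the entry label $\ell_1$ is exactly the statement label $\ell$ appearing in the furthermore clause. The $\cspawn$ instance also warrants a word, since it is the only basic statement that adds a descendant: that child sits at its own entry label rather than at $\ell_1$, so by Lemma~\ref{lemma:A'} it can never fire the statement's transition, in agreement with the label computation above. Finally, a pure schedule edge $(s,s')\in\Sche$ carries no new content for the guarantee, being already contained in $\concr{G}$; the substantive content of $\Su$ is precisely the advancing transitions ruled out above, so the conclusion $\Su=\emptyset$ stands.
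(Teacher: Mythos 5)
Your argument, as written, proves the inclusion \(\Su\subset\Sche\) and not the stated equality \(\Su=\emptyset\), and the difference is precisely where the difficulty of this claim lives. The positive part is fine: from \((s_0,s'')\in\col\), \((s'',s)\in\Sche\) and \(s\in\after(s_0)\) you correctly get \(\thread(s)\neq\thread(s_0)\); the furthermore clause of Lemma~\ref{lemma:F} then yields \(\lbl(s)\notin\{\ell_1,\ell_2\}\), and Lemma~\ref{lemma:A'} excludes every transition of \(\TR{\li1 basic,\ell_2}\smallsetminus\Sche\). But your last paragraph, which dismisses schedule edges because they ``carry no new content for the guarantee'', is not a proof: \(\Sche\subset\TR{\li1 basic,\ell_2}\) (the rule ``schedule'' applies to every statement), so a schedule edge is a legitimate candidate member of \(\Su\), and membership in \(\Su\) is settled by Definition~\ref{def:concrsem}, not by whether the edge is informative. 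Worse, the case you wave away can genuinely occur. Take \(basic=\cspawn(\ell_3)\), a state \(s_0=(i_0,P_0,\m_0,\h_0)\in\concr{S}\) with \(P_0(i_0)=\ell_1\), and suppose the spawn transition \((s_0,s_1)\), with \(s_1=(i_0,P_1,\m_0,\h_0\cdot(i_0,\ell_3,j))\), lies in \(\concr{G}\) (nothing in the definition of a concrete configuration forbids this). Then \((s_0,s_1)\in\col\); the schedule step \((s_1,s)\) to the fresh child \(j\) gives \(s\in\after(s_0)\), because \(j\in\desce_{(i_0,\ell_3,j)}(\{i_0\})\); and now any \((s,s')\in\Sche\subset\TR{\li1\cspawn(\ell_3),\ell_2}\) satisfies every condition defining \(\Su\). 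So the schedule case cannot be closed by your label-based route at all; what your method honestly delivers is \(\Su\subset\Sche\) — which, incidentally, is all that Proposition~\ref{prop:basic} needs, since configurations satisfy \(\Sche\subset\concr{G}\cap\concr{A}\) by definition — but it is not the claim as stated.

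The paper's proof is structured differently, and the difference is exactly what lets it assert emptiness: it never inspects the pair \((s,s')\). From the witnesses \((s_0,s_1)\in\col\), \((s_1,s)\in\Sche\), \(s\in\after(s_0)\), it derives a contradiction about the state \(s\) alone: on one hand, a path made of interferences restricted to \(\compl{\after(s_0)}\) and of statement transitions that ends inside \(\after(s_0)\) must end on the thread it started from, so \(\thread(s)=\thread(s_0)\) (this is the genealogy-preservation reasoning of Lemmas~\ref{lemma:Ndescew} and~\ref{lemma:E'}; the citation of Lemma~\ref{lemma:F+} in the paper's text is garbled); on the other hand, the schedule step forces \(\thread(s)\neq\thread(s_1)=\thread(s_0)\). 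Because this contradiction concerns \(s\) and not \((s,s')\), it eliminates schedule members of \(\Su\) as well. Note that this thread-identity step is itself only licensed when the statement's transitions do not extend genealogies, i.e., for \(lv:=e\) and \(\cguard\); for \(\cspawn\) it breaks for exactly the reason exhibited above. So the gap in your proposal is real and not patchable by one more lemma: to handle schedule edges you must either reproduce the paper's thread-identity contradiction (valid for the genealogy-preserving basic statements) or weaken the conclusion to \(\Su\subset\Sche\) and adjust the downstream proposition accordingly.
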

\begin{proof}
Let \((s,s')\in\Su\).
Therefore, \((s,s')\in\col;\Sche\langle \concr{S} \rangle\).
So, there exists \(s_0\in\concr{S_0}\) and \(s_1\) such that \((s_0,s_1)\in\col\), \((s_1,s)\in\Sche\) and \(s\in\after(s_0)\).
Hence, by Lemma \ref{lemma:F+}, \(\thread(s)=\thread(s')\).
Given that \((s,s_1)\in\col\), \(\thread(s)=\thread(s_1)\).
But, because \((s_1,s)\in\Sche\), \(\thread(s)\neq\thread(s_1)\).
There is a contradiction. Hence \(\Su=\emptyset\).
\end{proof}

\begin{claim}\label{claim:SudV}
 Let \(\li1 basic,\ell_2\) a basic statement and \(\nq=\ssem{\li1 basic,\ell_2}\qc\).
 Therefore, \(\Sud=\emptyset\).
 \end{claim}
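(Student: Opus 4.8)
The plan is to argue by contradiction, mirroring the proof of Claim~\ref{claim:SuV}. Suppose $(\sinter,\sinterb)\in\Sud$. Unfolding Definition~\ref{def:concrsem}, there are $\sinit\in\concr{S}$ and $\send\in\concr{S}'$ with $(\sinit,\send)\in\col$, $(\send,\sinter)\in\ext{\sinit}{\send}$, $\sinter\in\after(\sinit)\smallsetminus\after(\send)$, and a collected transition $(\sinter,\sinterb)\in\TR{\li1 basic,\ell_2}$. The transitions of interest are the non-schedule ones $(\sinter,\sinterb)\in\TR{\li1 basic,\ell_2}\smallsetminus\Sche$ (a schedule transition carries no information beyond $\Sche\subset\concr{G},\concr{A}$ and is disposed of exactly as in Claim~\ref{claim:SuV}), so Lemma~\ref{lemma:A'} gives $\lbl(\sinter)=\ell_1$. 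From $\send\in\concr{S}'$ we get $\lbl(\send)=\ell_2$, and from $(\sinit,\send)\in\col$ Lemma~\ref{lemma:R} yields $\send\in\after(\sinit)$, $\after(\send)\subset\after(\sinit)$ and $\thread(\sinit)=\thread(\send)$.

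The heart of the argument is to show $\lbl(\sinter)\in\Labss{\li1 basic,\ell_2}$, which is absurd once we recall that $\ell_1\notin\Labss{\li1 basic,\ell_2}$ (the last part of Lemma~\ref{lemma:Asub2}). First I would pin down the running thread $k=\thread(\sinter)$. Since $\sinter\in\after(\sinit)$, $k$ is a descendant of $\thread(\sinit)$, while $\sinter\notin\after(\send)$ forbids $k$ from being $\thread(\send)=\thread(\sinit)$ and from being any thread branched off only after $\send$. Inspecting the three kinds of transitions making up $\ext{\sinit}{\send}$, any thread created after $\send$ lands either in $\compl{\after(\sinit)}$ (creations carried by $\restrict{\concr{A}}{\compl{\after(\sinit)}}$) or in $\after(\send)$ (creations carried by $\restrict{\concr{G}}{\after(\send)}$); hence $k$ must have been created strictly between $\sinit$ and $\send$, by a transition of the statement itself. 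Lemma~\ref{lemma:Asub} then places its creation label in $\Labss{\li1 basic,\ell_2}$.

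It remains to check that $k$ never leaves that label along the path from $\send$ to $\sinter$, so that $\lbl(\sinter)$ still lies in $\Labss{\li1 basic,\ell_2}$; this is the step I expect to be the main obstacle, because $\ext{\sinit}{\send}$ contains the extra generator $\restrict{\concr{G}}{\after(\send)}$ that is not covered by Lemma~\ref{lemma:F}. The point to exploit is that $k$'s label can change only on a transition in which $k$ runs, and such a state lies in $\after(\sinit)\smallsetminus\after(\send)$: it is therefore touched neither by $\restrict{\concr{A}}{\compl{\after(\sinit)}}$ (whose sources lie in $\compl{\after(\sinit)}$) nor by $\restrict{\concr{G}}{\after(\send)}$ (whose sources lie in $\after(\send)$), leaving only genuine statement transitions, which by Lemma~\ref{lemma:A'} demand source label $\ell_1$. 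Since $k$ starts at a label of $\Labss{\li1 basic,\ell_2}$, which does not contain $\ell_1$, a straightforward induction along the path, in the style of the proof of Lemma~\ref{lemma:F} and with the $\restrict{\concr{G}}{\after(\send)}$ steps shown inert on $k$, keeps $k$ fixed. This gives $\lbl(\sinter)\in\Labss{\li1 basic,\ell_2}$, contradicting $\lbl(\sinter)=\ell_1$, and so $\Sud=\emptyset$. The assignment and guard cases are subsumed, since they spawn no thread during the statement, whence no state of $\after(\sinit)\smallsetminus\after(\send)$ is reachable from $\send$ at all.
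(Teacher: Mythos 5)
Your proof is correct in substance but follows a genuinely different route from the paper's, and the comparison is worth recording. The paper reasons on thread identities and genealogies: writing $\h_0\cdot\h_1$ for the genealogy of $\send$, it first shows that no descendant of $\thread(\sinit)$ is created between $\sinit$ and $\send$ (formally $\desce_{\h_1}(\{i_0\})=\{i_0\}$, obtained by applying Lemma~\ref{lemma:E'} with $T=\TR{\li1 basic,\ell_2}$), and deduces that every state reachable from $\send$ that lies in $\after(\sinit)$ already lies in $\after(\send)$; hence no admissible source state $\sinter$ exists at all, and $\Sud=\emptyset$ vacuously. You reason on labels instead: admissible sources may exist, but their current thread sits at a child label and never reaches $\ell_1$, so by Lemma~\ref{lemma:A'} no transition of $\TR{\li1 basic,\ell_2}\smallsetminus\Sche$ fires from them. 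Your route is in fact the more robust one: the paper's appeal to Lemma~\ref{lemma:E'} is illegitimate when the basic statement is $\li1 \cspawn(\ell_3),\ell_2$, whose transitions extend the genealogy, violating that lemma's hypothesis; and for $\cspawn$ the paper's intermediate conclusion is outright false --- after the spawn, a schedule transition to the fresh child yields a state of $\after(\sinit)\smallsetminus\after(\send)$ reachable within $\ext{\sinit}{\send}$. The spawn case is exactly where the claim has content, and your label argument is what handles it; as you note, for assignment and guard the genealogy argument disposes of everything at once.

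Two points need repair, one substantive. First, schedule transitions cannot be ``disposed of exactly as in Claim~\ref{claim:SuV}'': there the contradiction comes from forcing $\thread(\sinter)=\thread(\sinit)$, and that is precisely what fails here, since for $\cspawn$ an admissible source with $\thread(\sinter)\neq\thread(\sinit)$ exists and schedule transitions do fire from it. What your argument actually proves is $\Sud\cap(\TR{\li1 basic,\ell_2}\smallsetminus\Sche)=\emptyset$, i.e.\ $\Sud\subset\Sche$ under the reading of Definition~\ref{def:concrsem} in which $\Sud$, like $\Se$ and $\Su$, collects transitions of $\TR{\li1 basic,\ell_2}$. That is all the development ever uses ($\Sud$ is only unioned into $\concr{G}$ and $\concr{A}$, which already contain $\Sche$), but it is not literally $\Sud=\emptyset$, and no proof can give more. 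Second, your dichotomy ``threads created after $\send$ land in $\compl{\after(\sinit)}$ or in $\after(\send)$'' overlooks creations fired after $\send$ from states of $\after(\sinit)\smallsetminus\after(\send)$ by the generator $\restrict{\concr{G}}{\after(\sinit)}\cap\TR{\li1 basic,\ell_2}$, so $k$ need not have branched off before $\send$. This is fixed by running your closing induction simultaneously over all threads of $\after(\sinit)\smallsetminus\after(\send)$ rather than over the single thread $k$: each such thread sits at a label distinct from $\ell_1$, hence by Lemma~\ref{lemma:A'} can neither move nor spawn, so this set of threads and their labels are stable along the path, and in particular $\lbl(\sinter)\neq\ell_1$, giving your contradiction.
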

\begin{proof}
 Let \((s,s')\in\Sud\). There exists \(s_0\in\concr{S}\) and \(s_1\) such that \((s_0,s_1)\in\col\), \((s_2,s)\in\ext{s_0}{s_1}\) and \(s\in\after(s_0)\smallsetminus\after(s_1)\).
 
Let \(\cstate{_0}=s_0\) and \((i_1,P_1,\m_1,\h_0\cdot\h_1)=s_1\).
Because \((s_0,s_1)\in\col\), \(\thread(s_0)=\thread(s_1)\).
Let \(j\in\desce_{\h_1}(\{i_0\})\). Let \(s'_1=(j,P_1,\m_1,\h_0\cdot\h_1)\).
Therefore \(s'_1\in \after(s_0)\) and \((s_0,s'_1)\in (\TR{\li1 basic,\ell_2}\cup\restrict{\concr{A}}{\compl{\after(s_0)}})^{\star};(\Sche^{\star})\).
By lemma \ref{lemma:E'}, \(j=\thread(s'_1)=\thread(s_0)=i_0\).
Hence \(\desce_{\h_1}(\{i_0\})=\{i_0\}\).

Let \((i,P,\m,\h_0\cdot\h_1\cdot\h)=s\). By definition of \(\desce\) and a straightforward induction on \(\h\), \(\desce_{\h_1\cdot\h}(\{i_0\})=\desce_{\h}(\{i_0\})\).

Because \(s\in\after(s_0)\), then \(i\in \desce_{\h_1\cdot\h}(\{i_0\})\). Therefore \(i=i_0\). By Lemma \ref{lemma:F+}, \(s\in\after(s_1)\). This is contradictory with \(s\in\after(s_0)\smallsetminus\after(s_1)\).
Hence \(\Sud=\emptyset\).
\end{proof}

\begin{claim}\label{claim:SeV}
 Let \(\li1 basic,\ell_2\) a basic statement and \(\nq=\ssem{\li1 basic,\ell_2}\qc\).\\
 Therefore, \(\Su\subset \{(s,s')\in\TR{\li1 basic,\ell_2}\mid s\in \cinter{\concr{A}}(\concr{S}) \}\cup\Sche\).
 \end{claim}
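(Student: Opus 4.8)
The plan is to unfold the definition of $\Se$ and feed it into the decomposition lemma for basic statements, Lemma~\ref{lemma:C}. By Definition~\ref{def:concrsem}, $\Se=\{(s,s')\in\TR{\li1 basic,\ell_2}\mid s\in\col\langle\concr{S}\rangle\}$, so any $(s,s')\in\Se$ satisfies $(s,s')\in\TR{\li1 basic,\ell_2}$ and admits some $s_0\in\concr{S}$ with $(s_0,s)\in\col$. It therefore suffices to show, for each such $(s,s')$, that either $s\in\cinter{\concr{A}}(\concr{S})$ (so that the transition lands in the left-hand set of the union) or $(s,s')\in\Sche$.

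First I would apply Lemma~\ref{lemma:C} to the pair $(s_0,s)\in\col$, which splits into two cases. In the first case, $s\in\cinter{\concr{A}}(\{s_0\})$; since $s_0\in\concr{S}$ and $\cinter{\concr{A}}$ is monotone in its set argument (in Fig.~\ref{concrfcts} the set $\concr{S}$ occurs only under the existential quantifier $\exists s\in\concr{S}$), we get $\cinter{\concr{A}}(\{s_0\})\subset\cinter{\concr{A}}(\concr{S})$, hence $s\in\cinter{\concr{A}}(\concr{S})$. Combined with $(s,s')\in\TR{\li1 basic,\ell_2}$, this places $(s,s')$ in the left-hand set, as desired.

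In the second case, Lemma~\ref{lemma:C} additionally gives $\lbl(s)=\ell_2$. Here I would invoke the sharp label lemma for basic statements, Lemma~\ref{lemma:A'}: every non-schedule transition $(s,s')\in\TR{\li1 basic,\ell_2}\smallsetminus\Sche$ has $\lbl(s)=\ell_1$. Since each label occurs only once in a statement, $\ell_1\neq\ell_2$, so the equality $\lbl(s)=\ell_2$ rules out $(s,s')\in\TR{\li1 basic,\ell_2}\smallsetminus\Sche$. As $(s,s')\in\TR{\li1 basic,\ell_2}$ by hypothesis, this forces $(s,s')\in\Sche$, which is the right-hand set of the union.

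The argument is essentially a routine case split once Lemma~\ref{lemma:C} is available. I expect the only delicate point to be the second case, where one must carefully combine the label information $\lbl(s)=\ell_2$ supplied by Lemma~\ref{lemma:C} with the converse constraint $\lbl(s)=\ell_1$ of Lemma~\ref{lemma:A'} and with the distinctness $\ell_1\neq\ell_2$ to conclude that such a transition can only be a scheduling transition. Beyond correctly matching these label constraints, no genuine obstacle is anticipated.
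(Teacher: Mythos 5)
Your proof is correct and takes essentially the same route as the paper's: both rest on the dichotomy of Lemma~\ref{lemma:C} combined with the label information of Lemma~\ref{lemma:A'}, the only difference being order of use (the paper removes \(\Sche\) up front and uses Lemma~\ref{lemma:A'} to exclude the second branch of Lemma~\ref{lemma:C}, whereas you apply Lemma~\ref{lemma:C} first and use Lemma~\ref{lemma:A'} to force the second branch into \(\Sche\) --- a purely cosmetic reordering). Note also that, like the paper's own proof, you correctly read the \(\Su\) in the claim statement as a typo for \(\Se\), which is what the surrounding text intends (Claim~\ref{claim:SuV} already gives \(\Su=\emptyset\)).
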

\begin{proof}
Let \((s,s')\in \Se\smallsetminus\Sche\).
Then \((s,s')\in \TR{\li1 basic,\ell_2} \) and \(s\in\col\langle\concr{S} \rangle \).
Then, there exists \(\sinit\in\concr{S}\) such that \((s_0,s)\in\col\).
 Because \((s_0,s)\in \TR{\li1 basic,\ell_2}\smallsetminus\Sche \), by Lemma \ref{lemma:A'}, \(\lbl(s)=\ell_1\neq\ell_2\).
 By Lemma \ref{lemma:C}, \(s \in \cinter{\concr{A}}(\{s_0\}) \subset \cinter{\concr{A}}(\concr{S})\). 
 Because \(\thread(s_0)=\thread(s)\), \((s,s')\in \Se\).
\end{proof}

\begin{claim}
 \label{claim:BS}
 Let \(\li1 basic,\ell_2\) a basic statement, \(\qcp'=\osem{\li1 basic,\ell_2}\qc\) and \(\nq=\ssem{\li1 basic,\ell_2}\qc\).\\
 Therefore, \(\concr{S}'\subset\cinter{\concr{A}} \big(\TR{\li1 basic,\ell_2}\smallsetminus \Sche \langle  \cinter{\concr{A}}(\concr{S})\rangle\big)\).
\end{claim}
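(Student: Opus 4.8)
The plan is to reduce the statement directly to Lemma~\ref{lemma:C}, which already isolates, for a single initial state, the exact shape of overapproximation appearing in the claim. Only two things remain after invoking it: to discard one of its two cases, and to pass from a singleton initial state to the whole set $\concr{S}$ by monotonicity.

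First I would fix an arbitrary $\send\in\concr{S}'$ and unfold Definition~\ref{def:concrsem}: there is some $\sinit\in\concr{S}$ with $(\sinit,\send)\in\col$ and $\lbl(\send)=\ell_2$. Applying Lemma~\ref{lemma:C} to the pair $(\sinit,\send)$ yields two alternatives. The first forces $\lbl(\send)=\ell_1$, which is incompatible with $\lbl(\send)=\ell_2$ because the entry label $\ell_1$ and the return label $\ell_2$ of a basic statement $\li1 basic,\ell_2$ are distinct (see the closing remark). Hence only the second alternative survives, giving $\send\in\cinter{\concr{A}}\big(\TR{\li1 basic,\ell_2}\smallsetminus\Sche\langle\cinter{\concr{A}}(\{\sinit\})\rangle\big)$.

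It then remains to enlarge $\{\sinit\}$ to $\concr{S}$. Since $\sinit\in\concr{S}$ and $\cinter{\concr{A}}$ is monotone in its argument (its defining condition is an existential over the initial state, so any witness for a smaller set is a witness for a larger one), we get $\cinter{\concr{A}}(\{\sinit\})\subset\cinter{\concr{A}}(\concr{S})$. Writing $R$ for $\TR{\li1 basic,\ell_2}\smallsetminus\Sche$, both the application $R\langle\cdot\rangle$ and $\cinter{\concr{A}}$ are monotone, so chaining the inclusions yields $\cinter{\concr{A}}\big(R\langle\cinter{\concr{A}}(\{\sinit\})\rangle\big)\subset\cinter{\concr{A}}\big(R\langle\cinter{\concr{A}}(\concr{S})\rangle\big)$; combined with the membership from the previous paragraph this gives $\send\in\cinter{\concr{A}}\big(\TR{\li1 basic,\ell_2}\smallsetminus\Sche\langle\cinter{\concr{A}}(\concr{S})\rangle\big)$, which is exactly the claim.

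There is no substantial computation: the content of the argument has all been spent inside Lemma~\ref{lemma:C}, and what is left is bookkeeping with monotone operators. The sole delicate point — the one place the argument could break — is the case split, where one must be certain that the ``$\lbl=\ell_1$'' branch is genuinely excluded, i.e.\ that $\ell_1\neq\ell_2$. I would justify this explicitly from the naming convention of Section~\ref{syntax} that each label appears only once in a statement: for a basic statement the entry label $\ell_1$ and return label $\ell_2$ are two separate occurrences, so they cannot coincide, ruling out a degenerate statement where the two cases of Lemma~\ref{lemma:C} would overlap.
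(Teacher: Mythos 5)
Your proposal is correct and follows essentially the same route as the paper's own proof: both unfold the definition of \(\concr{S}'\) to obtain \(\lbl(\send)=\ell_2\) and a witness \(\sinit\in\concr{S}\) with \((\sinit,\send)\in\col\), use \(\ell_2\neq\ell_1\) to discard the first alternative of Lemma~\ref{lemma:C}, and finish by monotonicity to pass from \(\cinter{\concr{A}}(\{\sinit\})\) to \(\cinter{\concr{A}}(\concr{S})\). The only difference is presentational: you spell out why \(\ell_1\neq\ell_2\) (the labels-appear-once convention) and the monotonicity bookkeeping, both of which the paper leaves implicit.
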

\begin{proof}
Let \(s\in\concr{S}'\). Therefore, \(\lbl(s)=\ell_2\) and there exists \(s_0\in\concr{S}\) such that \((s_0,s)\in\col\).

Because \(\lbl(s)=\ell_2\neq\ell_1\), according to Lemma \ref{lemma:C}, \(s \in\cinter{\concr{A}}(\TR{\li1 basic,\ell_2} \smallsetminus\Sche \langle \cinter{\concr{A}}(\{s_0\}) \rangle) \subset\cinter{\concr{A}}(\TR{\li1 basic,\ell_2} \smallsetminus\Sche \langle \cinter{\concr{A}}(\concr{S}) \rangle) \)
\end{proof}

\n{\Gnew}{\concr{G}_{\text{new}}}

\begin{proposition}[Basic statements]\label{prop:basic}
 Let \(\li1 basic,\ell_2\) be a basic statement, then:
  \[\osem{\li1 basic,\ell_2}\qc \leqslant \langle \concr{S''}, \concr{G}\cup\Gnew ,\concr{A} \rangle\]
 where \(\concr{S}'' = \cinter{\concr{A}} \big(\TR{\li1 basic,\ell_2}\smallsetminus \Sche \langle  \cinter{\concr{A}}(\concr{S})\rangle\big)\)
\\ and \(\Gnew = \{(s,s')\in\TR{\li1 basic,\ell_2}\mid s\in \cinter{\concr{A}}(\concr{S})  \} \)
\end{proposition}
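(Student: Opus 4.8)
The plan is to read off the definition of the \cname semantics and reduce the single inequality to the three pointwise inclusions demanded by the configuration order $\leqslant$. Writing $\nq=\ssem{\li1 basic,\ell_2}\qc$, Definition~\ref{def:concrsem} gives
\[
\osem{\li1 basic,\ell_2}\qc = \langle \concr{S}',\, \concr{G}\cup\Se\cup\Su\cup\Sud,\, \concr{A}\cup\Su\cup\Sud \rangle,
\]
so establishing the proposition amounts to proving the three inclusions $\concr{S}'\subset\concr{S}''$, then $\concr{G}\cup\Se\cup\Su\cup\Sud\subset\concr{G}\cup\Gnew$, and finally $\concr{A}\cup\Su\cup\Sud\subset\concr{A}$.

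First I would eliminate the subthread components. Since a basic statement never runs a created subthread, Claim~\ref{claim:SuV} and Claim~\ref{claim:SudV} give $\Su=\emptyset$ and $\Sud=\emptyset$. This immediately collapses the guarantee component to $\concr{G}\cup\Se$ and reduces the assume inclusion to the trivial $\concr{A}\subset\concr{A}$. The state inclusion is then handed to us directly: $\concr{S}''$ is \emph{defined} to be the right-hand side of Claim~\ref{claim:BS}, so $\concr{S}'\subset\concr{S}''$ is exactly that claim.

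The only step needing care is the guarantee component, where I must show $\Se\subset\concr{G}\cup\Gnew$. Claim~\ref{claim:SeV} supplies $\Se\subset\Gnew\cup\Sche$, so the residual obstacle is absorbing the schedule transitions $\Sche$. Here I would invoke the standing convention that every guarantee set contains $\Sche$ (this is part of what it means to be a concrete configuration); hence $\Sche\subset\concr{G}$, and therefore $\Se\subset\Gnew\cup\Sche\subset\concr{G}\cup\Gnew$. Combining the three inclusions yields the desired $\osem{\li1 basic,\ell_2}\qc\leqslant\langle\concr{S}'',\concr{G}\cup\Gnew,\concr{A}\rangle$.

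The whole argument is thus an assembly of the four preceding claims, and I expect no genuine difficulty beyond bookkeeping: the substantive work has already been done in Claims~\ref{claim:SuV}--\ref{claim:BS} (themselves resting on Lemma~\ref{lemma:C} and the label lemmas for basic statements). The single subtlety to flag explicitly is the $\Sche\subset\concr{G}$ observation, since without it the schedule transitions appearing in Claim~\ref{claim:SeV} would not obviously fit into the target guarantee $\concr{G}\cup\Gnew$.
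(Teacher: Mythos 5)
Your proof is correct and matches the paper's own argument, which simply cites Claims~\ref{claim:SuV}, \ref{claim:SudV}, \ref{claim:SeV} and \ref{claim:BS}; you have merely spelled out the componentwise bookkeeping that the paper leaves implicit. Your two flagged subtleties are sound readings of the paper: Claim~\ref{claim:SeV} is indeed a bound on \(\Se\) (its statement's ``\(\Su\)'' is a typo, as its proof shows), and \(\Sche\subset\concr{G}\) holds by the paper's definition of a concrete configuration.
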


\begin{proof}
This proposition is a straightforward consequence of Claims \ref{claim:SuV}, \ref{claim:SudV}, \ref{claim:SeV} and \ref{claim:BS}.
\end{proof}

\subsection{Overapproximation of the \Cname Semantics}

The next theorem shows how the \cname semantics can be over-approximated by a denotational semantics, and is the key point in defining the abstract semantics.

\begin{theorem}\label{theorem:denot}
\begin{enumerate}[1.\!]
 \item\label{fcomposition}%
\(\av{\li1\cmd_1;\li2\cmd_2,\ell_3}(\concr{Q}) \leqslant \av{\li2\cmd_2,\ell_3}\circ\av{\li1\cmd_1,\ell_2}(\concr{Q})\) 
\item\label{fif}\(\av{\li1\cifte{(\cond}{\{\li2\cmd_1\}}\{\li4\cmd_2\},\ell_3}(\concr{Q}) 
\leqslant \\
\av{\li2\cmd_1,\ell_3}\circ\av{\li1\cguard(\cond),\ell_2}(\concr{Q}) 
\sqcup
\av{\li4\cmd_2,\ell_3}\circ\av{\li1\cguard(\neg\cond),\ell_4}(\concr{Q}) \)
\item\label{fwhile}%
 \(\av{\li1 \cwhile(\guardp)\{\li2\cmd\},\ell_3 }(\concr{Q})\leqslant \av{\li1 \cguard(\guardn),\ell_3} \circ \loopw\eomega (\concr{Q})\)\\
with \(\loopw(\concr{Q}') = \big( \av{\li2\cmd,\ell_1} \circ 
\av{\li1  \cguard(\guardp),\ell_2} (\concr{Q}') \big) \sqcup \concr{Q'}\)
\item\label{fcreate}%
\(\av{\li1\ccreate(\li2\cmd),\ell_3}(\concr{Q}) \leqslant  
\ccomb_{\concr{Q}'}\circ\fp{\av{\li2\cmd,\lend}}\circ \cinit_{\ell_2} (\concr{Q}')\)\\
with  \(\concr{Q}' = \av{\li1\crcreate(\ell_2),\ell_3}(\concr{Q})\)
\end{enumerate}
\end{theorem}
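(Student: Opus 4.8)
The plan is to prove the four inequalities independently, each by unfolding Definition~\ref{def:concrsem} on the left-hand side and checking inclusion componentwise (for the state set $\concr{S}'$, for the guarantee, and for the assume) against the right-hand side. The common engine is a \emph{path-splitting} argument built on the operational rules of Fig.~\ref{figrules}: reading "sequence~1/2", "then/else(+body)", "while entry/exit/body" and "create"+"child" gives, up to $\Sche$, a decomposition of the transition set of a composite statement into those of its sub-statements, e.g.\ $\TR{\li1\cmd_1;\li2\cmd_2,\ell_3}\smallsetminus\Sche=(\TR{\li1\cmd_1,\ell_2}\cup\TR{\li2\cmd_2,\ell_3})\smallsetminus\Sche$. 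Combined with Lemma~\ref{lemma:F} (a thread executing the statement stays inside its labels, and reaches the entry/exit labels only as the initial thread) and Lemmas~\ref{lemma:A},~\ref{lemma:Asub}, this lets me cut any $\col$-path of a composite at the first state where the current thread attains the intermediate label, yielding a prefix that is a $\col$-path of the first sub-statement and a suffix that is a $\col$-path of the second.

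For the sequence case I would set $\concr{Q}_1=\av{\li1\cmd_1,\ell_2}(\concr{Q})$ with components $\concr{S}_1,\concr{G}_1,\concr{A}_1$, and observe $\concr{A}\subseteq\concr{A}_1$ and $\concr{G}\subseteq\concr{G}_1$, so every interference and every forward transition available to the composite remains available when running $\cmd_2$ from $\concr{Q}_1$. Splitting a $\col$-path at $\ell_2$ puts the cut state in $\concr{S}_1$ (here Lemma~\ref{lemma:F} forces the current thread to equal $\thread(\sinit)$ at $\ell_2$, and Lemma~\ref{lemma:H} gives $\after(s_{\ell_2})\subseteq\after(\sinit)$ with the relevant transitions still restricted correctly), and makes the suffix a $\col$-path of $\av{\li2\cmd_2,\ell_3}(\concr{Q}_1)$; this yields the inclusion for $\concr{S}'$. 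For the guarantee and assume I must show every transition in $\Se,\Su,\Sud$ of the composite reappears on the right: the $\Se$ part follows from the transition decomposition, while the $\Su,\Sud$ parts are attributed to whichever sub-statement spawned the subthread, using Lemmas~\ref{lemma:F+},~\ref{lemma:F-} to track descendants across the cut and Proposition~\ref{proposition:guarantee} to certify that the spawning sub-statement's guarantee already contains them.

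The conditional case is the sequence argument specialized to two branches joined by $\sqcup$: the first transition of a nontrivial path is a guard, so every path factors through exactly one of $\av{\li2\cmd_1,\ell_3}\circ\av{\li1\cguard(\cond),\ell_2}$ or $\av{\li4\cmd_2,\ell_3}\circ\av{\li1\cguard(\neg\cond),\ell_4}$, and the reachable configuration is bounded componentwise by their join. The while case I would prove by induction on the number of loop iterations: a path terminating at $\ell_3$ is $n\geqslant 0$ blocks (each a $\cguard(\guardp)$ followed by the body $\cmd$) then the exit guard $\cguard(\guardn)$; I show the configuration after $n$ blocks lies below $\loopw^{n}(\concr{Q})$, so the union over $n$ lies below $\loopw\eomega(\concr{Q})$, and applying the exit guard gives the claim. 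Monotonicity of the basic transfer functions and the $\sqcup\concr{Q}'$ disjunct inside $\loopw$ (which records "zero further iterations") make the induction go through.

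The create case is the real obstacle and I would treat it last. After the spawn step $\concr{Q}'=\av{\li1\crcreate(\ell_2),\ell_3}(\concr{Q})$ (a basic statement, governed by Proposition~\ref{prop:basic}), the child runs $\li2\cmd,\lend$ concurrently with the parent and may outlive it; its transitions are precisely what $\Su$ and $\Sud$ of the composite collect. The plan is to show that $\cinit_{\ell_2}(\concr{Q}')$ sets up the child's initial configuration correctly---scheduling the fresh thread via $\cschedule$ and supplying, as its assume, the parent's guarantee filtered through $\cextract{\ell_2}$---and that $\fp{\av{\li2\cmd,\lend}}$ then over-approximates every transition the child and its own descendants perform, by Proposition~\ref{proposition:guarantee}. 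The delicate point is the genealogical bookkeeping: I must use Lemma~\ref{lemma:Descendant} together with Lemmas~\ref{lemma:F+} and~\ref{lemma:F-} to show the child's descendants form a subtree disjoint from the parent's own future $\after(\send)$, so that a composite child-transition lying in $\after(\sinit)\smallsetminus\after(\send)$ is faithfully reproduced inside the child's fixpoint and not dropped; and finally that $\ccomb_{\concr{Q}'}$ merges the child's guarantee back into the parent, re-applying $\qinter$ so that the interferences the child exerts on the surviving parent---captured in $\Sud$ through the $\extname$ relation---are accounted for. Reconciling the $\extname$ extension used to define $\Sud$ with the composition $\ccomb_{\concr{Q}'}\circ\fp{}\circ\cinit_{\ell_2}$ is where the bulk of the work lies.
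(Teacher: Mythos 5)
Your proposal is correct, and its engine---componentwise comparison of the three components of a configuration after unfolding Definition~\ref{def:concrsem}, powered by label-based decomposition of the transition sets and path-splitting at the intermediate label---is the same as the paper's, down to the create case being handled by \(\cinit_{\ell_2}\), Proposition~\ref{proposition:guarantee} and the genealogy lemmas. Two divergences are worth recording. First, for the while case the paper does not unroll: it works directly at the fixpoint \(\concr{Q}_\omega=\loopw\eomega(\concr{Q})\) and runs a minimal-length-path argument (Lemmas~\ref{lemma:apresw} and~\ref{lemma:Cw}): if a minimal path issued from \(\concr{S}_\omega\) contained a completed iteration, the state reached after that iteration would already lie in the image of \(\col_\omega\), contradicting minimality. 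Your induction on the number of iterations followed by a union over \(n\) proves the same bound and is legitimate, since \(\loopw\) is extensive and \(\loopw\eomega=\bigcup_n\loopw^n\); it is arguably more elementary, at the price of redoing the sequence-style cut once per block. Second, in the sequence case your appeal to Proposition~\ref{proposition:guarantee} is misplaced: that proposition is about the \(\fp{}\) iteration and is used in the paper only for \(\ccreate\) (and for Proposition~\ref{prop:coroc}); what actually certifies the attribution of spawned-thread transitions across the cut is Proposition~\ref{prop:apres1} together with the \(\extname\)-lemmas (Lemmas~\ref{lemma:Ext1} and~\ref{lemma:Ext2}). When you carry this out, state the attribution with the correct cross-term, as in Claim~\ref{lemma:seq:par2}: a transition of a thread spawned by \(\cmd_1\) but fired while the main thread is already executing \(\cmd_2\) belongs to \(\Su'\) of the composite yet is covered by \(\Sud_1\), not by \(\Su_1\) or \(\Su_2\); it is precisely because the output of \(\av{\li1\cmd_1,\ell_2}\) accumulates \(\Sud_1\) into both the guarantee and the assume fed to \(\av{\li2\cmd_2,\ell_3}\) that the right-hand side is large enough.
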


While points \ref{fcomposition} and \ref{fwhile} are as expected, the overapproximation of semantics of \(\li1\ccreate(\li2\cmd),\ell_3\) (point \ref{fcreate}) computes interferences which will arise from executing the child and its descendants with \(\fp{}\) and then combines this result with the configuration of the current thread. 
This theorem will be proved later.

The following proposition consider a statement \(\lab\stmt,\ell'\) set of transition \(T\). The only constraint on \(T\) is on the use of labels of \(\lab\stmt,\ell'\). 

The proposition consider an execution of the statement from a state \(s_0\) to a state \(s_1\), and, after, an execution \(s_2,\ldots,s_n\) of other commands.
The labels of \(\lab\stmt,\ell'\) mays only be used :
\begin{itemize}
 \item for interferences,
 \item or by the statement,
 \item after having applied the statement, i.e., after \(s_1\).
\end{itemize}.
This Proposition ensures us that any transition executed by a thread created during the execution of \(\lab\stmt,\ell'\) (i.e., between \(s_0\) and \(s_1\)) is a transition generated by the statement \(\lab\stmt,\ell'\).

\begin{proposition}\label{prop:apres1}
 Let \(\lab\stmt,\ell'\) a statement, \\\(\nq=\ssem{\lab\stmt,\ell'}\qc\).
  Let \((s_0,s_1)\in\col\) and \(T\) a set of transitions such that for all \((s,s')\in T\), if \(\lbl(s)\in\Labs{\lab\stmt,\ell'}\) then \((s,s')\in \TR{\lab\stmt,\ell'}\) or \(s\in\after(s_1)\cup\compl{\after(s_0)}\). 
 
 Let \(s_2,\ldots,s_n\) a sequence of states such that for all \(k\in\{1,\ldots,n-1\}\), \((s_{k},s_{k+1})\in T\).
 Therefore, if \(s_k\in\after(s_0)\) then either \(s_k\in\after(s_1)\) or \((s_{k},s_{k+1})\in \TR{\lab\stmt,\ell'}\) 
\end{proposition}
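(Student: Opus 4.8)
The plan is to isolate a single label invariant and to let the hypothesis on $T$ discharge it step by step along the chain $s_1,\ldots,s_n$. First I would fix notation. By definition of $\col$ one has $\thread(s_0)=\thread(s_1)$, so write $s_0=(i_0,P_0,\m_0,\h_0)$ and $s_1=(i_0,P_1,\m_1,\h_0\cdot\h_1)$; since each transition of $T$ only extends the genealogy, every $s_k$ has genealogy $\h_0\cdot\h_1\cdot\hat{\h}_k$ with $\hat{\h}_1=\epsilon$, and I write $s_k=(i_k,P_k,\m_k,\h_0\cdot\h_1\cdot\hat{\h}_k)$. Put $Y_k=\desce_{\hat{\h}_k}(\{i_0\})$ and $Z_k=\desce_{\h_1\cdot\hat{\h}_k}(\{i_0\})$, so that by the definition of $\after$ one has $s_k\in\after(s_1)\Leftrightarrow i_k\in Y_k$, $s_k\in\after(s_0)\Leftrightarrow i_k\in Z_k$, and $Y_k\subset Z_k$ by monotonicity of $\desce$. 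The difference $D_k=Z_k\smallsetminus Y_k$ collects exactly the threads lying in $\after(s_0)$ but not in $\after(s_1)$, i.e., the descendants of $i_0$ whose lineage passes through a creation recorded in $\h_1$.

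The whole proposition reduces to the invariant $\mathcal{I}(k)$: \emph{for every $j\in D_k$, $P_k(j)\in\Labs{\lab\stmt,\ell'}$}. Granting $\mathcal{I}(k)$, suppose $s_k\in\after(s_0)$; if $s_k\in\after(s_1)$ we are done, and otherwise $i_k\in D_k$, so $\lbl(s_k)=P_k(i_k)\in\Labs{\lab\stmt,\ell'}$. Since then $s_k\notin\after(s_1)\cup\compl{\after(s_0)}$, the hypothesis on $T$ forces $(s_k,s_{k+1})\in\TR{\lab\stmt,\ell'}$, which is the conclusion. For the base case $k=1$, where $D_1=\desce_{\h_1}(\{i_0\})\smallsetminus\{i_0\}$, I would invoke Lemma~\ref{lemma:F}: as $(s_0,s_1)\in\col$ uses only transitions of $\subformulareach\subset\TR{\lab\stmt,\ell'}\cup\restrict{\concr{A}}{\compl{\after(s_0)}}$, with $\lbl(s_0)=\ell\in\Labs{\lab\stmt,\ell'}$ and $s_1\in\after(s_0)$ (Lemma~\ref{lemma:R}), the descendant invariant proved inside Lemma~\ref{lemma:F} gives $P_1(j)\in\Labss{\lab\stmt,\ell'}\subset\Labs{\lab\stmt,\ell'}$ for every such $j$, i.e., $\mathcal{I}(1)$.

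For the inductive step I would analyse $(s_k,s_{k+1})\in T$ according to whether its acting thread $i_k$ belongs to $D_k$. Using the definition of $\desce$ and the freshness of any spawned thread $j'$, a short computation shows that $D_{k+1}=D_k$ unless $i_k\in D_k$ and $i_k$ creates a thread $j'$, in which case $D_{k+1}=D_k\cup\{j'\}$; moreover only $i_k$ may change its label. When $i_k\notin D_k$, no member of $D_{k+1}=D_k$ changes its label, so $\mathcal{I}(k+1)$ follows from $\mathcal{I}(k)$. When $i_k\in D_k$, the argument of the previous paragraph already forces $(s_k,s_{k+1})\in\TR{\lab\stmt,\ell'}$; then Lemma~\ref{lemma:A} gives $P_{k+1}(i_k)\in\Labs{\lab\stmt,\ell'}$ for a non-schedule step (a schedule step leaves $P$, $\m$ and the genealogy untouched, so $\mathcal{I}$ is preserved trivially), Lemma~\ref{lemma:Asub} gives $P_{k+1}(j')\in\Labss{\lab\stmt,\ell'}\subset\Labs{\lab\stmt,\ell'}$ for the possibly created child, and every remaining member of $D_{k+1}$ keeps its label from $D_k$. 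Hence $\mathcal{I}(k+1)$.

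The main obstacle I expect is the descendant bookkeeping across a creation step: one must verify precisely, from the recursive definition of $\desce$ together with freshness (and, for the disjointness/containment used implicitly, Lemma~\ref{lemma:Descendant}), that $D$ grows by at most the single new child and only when its creator already lies in $D_k$, so that the invariant is broken neither by a freshly spawned thread nor by the at-most-one label a transition rewrites. Everything else is a routine propagation driven by the self-reinforcing loop between $\mathcal{I}$ and the constraint on $T$: the invariant certifies $\lbl(s_k)\in\Labs{\lab\stmt,\ell'}$, which triggers the constraint, which pins the transition inside $\TR{\lab\stmt,\ell'}$, which in turn re-establishes the invariant.
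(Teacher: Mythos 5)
Your proof is correct and follows essentially the same route as the paper's: the paper's argument is precisely an induction along the chain on the invariant that every thread lying in \(\after(s_0)\smallsetminus\after(s_1)\) has its label in \(\Labs{\lab\stmt,\ell'}\) (your \(\mathcal{I}(k)\)), with the base case discharged via the path witnessing \(\col\) (Lemmas \ref{lemma:R}/\ref{lemma:F}) and the inductive step via the self-reinforcing interplay of the hypothesis on \(T\) with Lemmas \ref{lemma:A} and \ref{lemma:Asub}. Your explicit bookkeeping with \(Y_k\), \(Z_k\), \(D_k\) is, if anything, a cleaner rendering of the paper's (notationally muddled) difference set, so there is no gap to report.
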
%
\begin{proof}%
Let 
for all \(k\geqslant 1\), let \((i_k,P_k,\m_k,\h_0\cdot\h_k)=s_k\).

 Let us show by induction on \(k\geqslant 1\) that for all \(j\), if \(j\in\desce_{\h_0\cdot\h_k}(\{i_1\})\smallsetminus \desce_{\h_k}(\{i_1\})\) then \(P_k(j)\in\Labs{\lab\stmt,\ell'}\).
 
Let \(j_0\in\desce_{\h_0\cdot\h_k}(\{i_1\})\smallsetminus \desce_{\h_0}(\{i_1\})\) and \(s'_1=(j_0,P_1,\m_1,\h_0\cdot\h_1)\). 
Therefore \(s'_1\in \after(s_0)\). 
Given that\((s_0,s'_1)\in \col;\Sche\),  by Lemma \ref{lemma:R}, \(P_1(j_1)=\lbl(s'_1)\in \Labs{\lab\stmt,\ell'}\). 
 
By induction hypothesis, for all \(j\), if \(j\in\desce_{\h_0\cdot\h_{k-1}}(\{i_1\})\smallsetminus \desce_{\h_{k-1}}(\{i_1\})\) then \(P_{k-1}(j)\in\Labs{\lab\stmt,\ell'}\).

 Let \(j\in \desce_{\h_0\cdot\h_k}(\{i_1\})\smallsetminus \desce_{\h_k}(\{i_1\})\).
 
 If \(\thread(s_{k-1})=j\), therefore, \(s_{k-1}\in\after(s_0)\smallsetminus\after(s_1)\). Furthermore, by induction hypothesis, \(P_{k-1}(j)=\lbl(s_{k-1})\in \Labs{\lab\stmt,\ell'} \). By definition of \(T\), \((s_{k-1},s_k)\in \TR{\lab\stmt,\ell'}\). By Lemma \ref{lemma:A}, \(P_{k}(j)=\lbl(s_k)\in \Labs{\lab\stmt,\ell'}\).
 
 If \(j\in\dom(P_{k})\smallsetminus\dom(P_{k-1})\), then, \(\thread(s_{k-1})\in \desce_{\h_0\cdot\h_{k-1}}(\{i_1\})\smallsetminus \desce_{\h_{k-1}}(\{i_1\})\). Hence, as above, \((s_{k-1},s_k)\in \TR{\lab\stmt,\ell'}\). Hence, according to Lemma \ref{lemma:Asub}, \(P_{k}(j)=\lbl(s_k)\in \Labs{\lab\stmt,\ell'}\).
 
 Else, by definition of a transition, \(P_{k-1}(j)=P_{k}(j)\).
 
 Let \(k\) such that \(s_k\in\after(s_0)\), hence, either \(s_k\in\after(s_1)\), or \(s_k\notin\after(s_1)\). In the last case \(i_k\in \desce_{\h_0\cdot\h_{k-1}}(\{i_1\})\smallsetminus \desce_{\h_{k-1}}(\{i_1\}) \), and therefore \(\lbl(s_k)\in \Labs{\lab\stmt,\ell'} \). Hence, by definition of \(T\), \((s_k,s_{k+1})\in\TR{\lab\stmt,\ell'}\).
\end{proof}

\subsubsection{Proof of Property \ref{fcomposition} of Theorem \ref{theorem:denot}} 

\begin{lemma}
 \label{lemma:G}
 \(\TR{\li1\cmd_1;\li2\cmd_2,\ell_3}=\TR{\li1\cmd_1,\ell_2}\cup\TR{\li2\cmd_2,\ell_3}\)
\end{lemma}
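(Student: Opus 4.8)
The plan is to unfold the definition $\TR{\lab\stmt,\ell'}=\{(s,s')\mid\lab\stmt,\ell'\Vdash s\rightarrow s'\}$ and argue purely by inspection of the inference rules of Fig.~\ref{figrules}, asking which rules can have a conclusion of the shape $\li1\cmd_1;\li2\cmd_2,\ell_3\Vdash\tau$. Since the global judgment $\Vdash$ is syntax-directed on the statement, I would first enumerate exactly the rules whose conclusion unifies with a sequence statement; these turn out to be ``sequence 1'', ``sequence 2'', ``schedule'', and a priori ``parallel''.

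For the inclusion $\supseteq$, the argument is immediate: given $(s,s')\in\TR{\li1\cmd_1,\ell_2}$, i.e.\ $\li1\cmd_1,\ell_2\Vdash s\rightarrow s'$, rule ``sequence 1'' yields $\li1\cmd_1;\li2\cmd_2,\ell_3\Vdash s\rightarrow s'$, so $(s,s')\in\TR{\li1\cmd_1;\li2\cmd_2,\ell_3}$; symmetrically ``sequence 2'' covers $\TR{\li2\cmd_2,\ell_3}$. For the converse $\subseteq$, I would do a case analysis on the last rule used to derive $\li1\cmd_1;\li2\cmd_2,\ell_3\Vdash\tau$. Rules ``sequence 1'' and ``sequence 2'' hand back exactly the premises $\li1\cmd_1,\ell_2\Vdash\tau$ and $\li2\cmd_2,\ell_3\Vdash\tau$, placing $\tau$ in $\TR{\li1\cmd_1,\ell_2}\cup\TR{\li2\cmd_2,\ell_3}$.

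The one genuinely delicate point is to rule out ``parallel''. Its premise requires a \emph{local} judgment $\li1\cmd_1;\li2\cmd_2,\ell_3\vdash(\ell,\m)\rightlocarrow(\ell',\m')$, but no rule of Fig.~\ref{figrules} derives a local ($\vdash$) judgment whose subject is a sequence: the $\vdash$ rules only concern assignments, guards, and the while/if head-steps. Hence ``parallel'' cannot fire on a sequence, and this is the step I expect to be the main obstacle, since it relies on the fact that sequences have no atomic local step of their own.

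It remains to handle ``schedule'', which fires for \emph{every} statement and whose side condition ``$P(j)$ is defined'' is precisely the requirement that $(j,P,\m,\h)$ be a state; thus it produces exactly the transitions of $\Sche$, independently of the statement. Applying ``schedule'' to $\li1\cmd_1,\ell_2$ shows $\Sche\subset\TR{\li1\cmd_1,\ell_2}$, so any $\tau$ obtained for the sequence via ``schedule'' already lies in the right-hand union and that case is absorbed. Collecting the three surviving cases gives the equivalence ``$\li1\cmd_1;\li2\cmd_2,\ell_3\Vdash\tau$ holds if and only if $\li1\cmd_1,\ell_2\Vdash\tau$ or $\li2\cmd_2,\ell_3\Vdash\tau$ holds'', which is exactly the claimed set equality $\TR{\li1\cmd_1;\li2\cmd_2,\ell_3}=\TR{\li1\cmd_1,\ell_2}\cup\TR{\li2\cmd_2,\ell_3}$.
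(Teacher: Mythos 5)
Your proof is correct: the paper states this lemma without proof, treating it as immediate from Fig.~\ref{figrules}, and your rule-inversion argument supplies exactly the justification left implicit — the inclusion \(\supseteq\) via ``sequence 1''/``sequence 2'', and \(\subseteq\) by inverting the only rules whose conclusion can mention a sequence. Your two delicate points are also handled correctly: ``parallel'' is indeed blocked because no rule derives a local judgment \(\vdash\) whose subject is a sequence, and the ``schedule'' case is absorbed since that rule generates all of \(\Sche\) for \emph{every} statement, so \(\Sche\subset\TR{\li1\cmd_1,\ell_2}\) (a fact the paper itself uses later, e.g.\ in the proof of Lemma \ref{lemma:Cbis}).
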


In this section, we consider an initial configuration : \(\concr{Q}_0 =\qcp{_0}\) and a sequence \(\li1\cmd_1;\li2\cmd_2,\ell_3\).
We write \(\Tr_1=\TR{\li1\cmd_1,\ell_2}\) and \(\Tr_2=\TR{\li2\cmd_2,\ell_3}\) and \(\Tr=\TR{\li1\cmd_1;\li2\cmd_2,\ell_3}\)

Define:\\
 \(\concr{Q}' =\qcp{'}=\av{\li1\cmd_1;\li2\cmd_2,\ell_3}(\concr{Q}_0)\)\\ \(\concr{K} =\nqp{}=\ssem{\li1\cmd_1;\li2\cmd_2,\ell_3}(\concr{Q}_0)\)\\
 \(\concr{Q}_1 =\qcp{_1}=\av{\li1\cmd_1,\ell_2}(\concr{Q}_0)\)\\ \(\concr{K}_1 =\nqp{_1}=\ssem{\li1\cmd_1,\ell_2}(\concr{Q}_0)\)\\
 \(\concr{Q}_2 =\qcp{_2}=\av{\li2\cmd_2,\ell_3}(\concr{Q}_1)\)\\
 \(\concr{K}_2 =\nqp{_2}=\ssem{\li2\cmd_2,\ell_3}(\concr{Q}_1)\)

\begin{lemma}
 \label{lemma:G+}
 If \((s,s')\in\Tr\) and \(\lbl(s)\in \Labs{\li1\cmd_1,\ell_2}\smallsetminus\{\ell_2\}\) then \((s,s')\in\Tr_1\).
 
 If \((s,s')\in\Tr\) and \(\lbl(s)\in \Labs{\li2\cmd_2,\ell_3}\) then \((s,s')\in\Tr_2\).
\end{lemma}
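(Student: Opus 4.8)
The plan is to reduce everything to the set decomposition $\Tr=\Tr_1\cup\Tr_2$ furnished by Lemma~\ref{lemma:G}, and then to separate the two pieces according to the source label of a transition. Before doing so I would set aside the schedule transitions: since the rule ``schedule'' applies to \emph{every} statement $\lab\stmt,\ell'$, the set $\Sche$ is contained in each of $\Tr_1$, $\Tr_2$ and $\Tr$. Hence a transition $(s,s')\in\Sche$ already lies in both $\Tr_1$ and $\Tr_2$ and trivially satisfies both assertions. It therefore suffices to treat the generic case $(s,s')\in\Tr\smallsetminus\Sche=(\Tr_1\smallsetminus\Sche)\cup(\Tr_2\smallsetminus\Sche)$.

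The key ingredient is that the two label sets overlap only at the junction label: from the standing assumption that each label occurs at most once in a command, together with the fact that $\ell_2$ is simultaneously the return label of $\li1\cmd_1,\ell_2$ and the entry label of $\li2\cmd_2,\ell_3$, one obtains $\Labs{\li1\cmd_1,\ell_2}\cap\Labs{\li2\cmd_2,\ell_3}=\{\ell_2\}$. Consequently $\Labs{\li1\cmd_1,\ell_2}\smallsetminus\{\ell_2\}$ is disjoint from $\Labs{\li2\cmd_2,\ell_3}$, and symmetrically $\Labs{\li2\cmd_2,\ell_3}\smallsetminus\{\ell_2\}$ is disjoint from $\Labs{\li1\cmd_1,\ell_2}$.

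For the first assertion, assume $(s,s')\in\Tr$ with $\lbl(s)\in\Labs{\li1\cmd_1,\ell_2}\smallsetminus\{\ell_2\}$. If $(s,s')\in\Tr_2\smallsetminus\Sche$, then Lemma~\ref{lemma:A} applied to $\TR{\li2\cmd_2,\ell_3}$ forces $\lbl(s)\in\Labs{\li2\cmd_2,\ell_3}$, contradicting the disjointness just noted; equivalently, one may invoke Lemma~\ref{lemma:Abis} directly to conclude $(s,s')\notin\Tr_2\smallsetminus\Sche$. The only remaining possibility in $(\Tr_1\cup\Tr_2)\smallsetminus\Sche$ is then $(s,s')\in\Tr_1\smallsetminus\Sche\subset\Tr_1$. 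The second assertion is entirely symmetric: from $\lbl(s)\in\Labs{\li2\cmd_2,\ell_3}$ and Lemma~\ref{lemma:A} one rules out $(s,s')\in\Tr_1\smallsetminus\Sche$ (which would force $\lbl(s)\in\Labs{\li1\cmd_1,\ell_2}\smallsetminus\{\ell_2\}$, disjoint from $\Labs{\li2\cmd_2,\ell_3}$), leaving $(s,s')\in\Tr_2\smallsetminus\Sche\subset\Tr_2$.

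The only genuine obstacle is the label-disjointness fact, which is not isolated as a numbered lemma in the excerpt but follows directly from the global single-occurrence assumption on labels; once it is made precise, the remainder is a short case analysis driven by Lemmas~\ref{lemma:G} and~\ref{lemma:A} (or~\ref{lemma:Abis}). The other point requiring mild care is the systematic handling of $\Sche$, which must be removed at the outset precisely because it belongs to both $\Tr_1$ and $\Tr_2$ and is therefore invisible to the label test of Lemma~\ref{lemma:A}.
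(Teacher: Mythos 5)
Your proof is correct and takes essentially the same route as the paper's: both rest on the decomposition \(\Tr=\Tr_1\cup\Tr_2\) of Lemma~\ref{lemma:G}, the pairwise distinctness of labels in the sequence, and Lemma~\ref{lemma:A}/Lemma~\ref{lemma:Abis} to rule out membership in the wrong component. If anything, your explicit set-aside of \(\Sche\) is more careful than the paper's own proof, which silently treats Lemma~\ref{lemma:Abis}'s exclusion of \(\Tr_2\smallsetminus\Sche\) as an exclusion of all of \(\Tr_2\), a gap that is closed exactly by your observation that \(\Sche\) is contained in both \(\Tr_1\) and \(\Tr_2\).
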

\begin{proof}
Let us consider that \(\lbl(s)\in \Labs{\li1\cmd_1,\ell_2}\smallsetminus\{\ell_2\}\). Hence because labels of \(\li1cmd_1;\li2\cmd_2,\ell_3\) are pairwise distinct, \(\lbl(s)\notin\Labs{\li2\cmd_3,\ell_3}\).
 By Lemma \ref{lemma:Abis}, \((s,s')\notin\Tr_2\). Hence, by Lemma \ref{lemma:G}, \((s,s')\notin\Tr_1\)
 
 The case \(\lbl(s)\in \Labs{\li2\cmd_2,\ell_3}\) is similar. 
\end{proof}

\begin{lemma}\label{lemma:Cbis}
Using the above notations, for every \((s_0,s)\in\col\) such that \(s_0\in\concr{S}_0\),
 \begin{itemize}
  \item either \((s_0,s)\in\col_1\) and \(\lbl(s)\neq\ell_2\)
  \item or there exists \(s_1\in\concr{S}_1\) such that \((s_0,s_1)\in\col_1\), \((s_1,s)\in\col_2\)
 \end{itemize}
 \end{lemma}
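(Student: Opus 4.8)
The plan is to unfold the definition of $\col$ into an explicit path and to split it at the instant the distinguished thread $i_0:=\thread(s_0)=\thread(s)$ reaches the junction label $\ell_2$. Since $(s_0,s)\in\col$, there is a path $s_0=u_0\to\cdots\to u_m=s$ whose steps lie in $R:=(\restrict{\concr{G}_0}{\after(s_0)}\cap\Tr)\cup\restrict{\concr{A}_0}{\compl{\after(s_0)}}$, with $\lbl(s_0)=\ell_1$ and $\thread(s_0)=\thread(s)$. By Lemma~\ref{lemma:G} every ``real'' step (one in $\restrict{\concr{G}_0}{\after(s_0)}\cap\Tr$) is in $\Tr_1$ or in $\Tr_2$, and by Lemma~\ref{lemma:G+} which of the two is determined by the label of its current thread: a source label in $\Labs{\li1\cmd_1,\ell_2}\smallsetminus\{\ell_2\}$ forces $\Tr_1$, and a source label in $\Labs{\li2\cmd_2,\ell_3}$ forces $\Tr_2$. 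Writing $u_k=(i_k,P_k,\m_k,\h_0\cdot\h_k)$, I would follow the label $P_k(i_0)$: interference steps leave it unchanged (their current thread lies outside $\after(s_0)$, hence differs from $i_0$), so it only moves when $i_0$ itself fires a real step. Define $k_0$ as the first index with $P_{k_0}(i_0)=\ell_2$.

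First I would establish a prefix invariant by induction on $k$ (for $k\le k_0$, or for all $k$ if $k_0$ is undefined), exactly in the style of the proof of Lemma~\ref{lemma:F}: as long as $i_0$ has not reached $\ell_2$, thread $i_0$ sits at a label of $\Labs{\li1\cmd_1,\ell_2}\smallsetminus\{\ell_2\}$ and every current descendant of $i_0$ sits at a label of $\Labss{\li1\cmd_1,\ell_2}$ (using Lemma~\ref{lemma:A} and Lemma~\ref{lemma:Asub} on the real steps, Lemma~\ref{lemma:Asub2} to keep $\ell_2$ out of the child labels, and descendant disjointness for the interference steps). Consequently every real step of this prefix has its source at a label in $\Labs{\li1\cmd_1,\ell_2}\smallsetminus\{\ell_2\}$, hence lies in $\Tr_1$ by Lemma~\ref{lemma:G+}; so the prefix is a genuine $\col_1$-path, which together with $\lbl(s_0)=\ell_1$ yields $(s_0,u_k)\in\col_1$ for every prefix endpoint with $\thread(u_k)=i_0$. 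If $k_0$ does not exist the prefix is the whole path, so $(s_0,s)\in\col_1$ and $\lbl(s)=P_m(i_0)\neq\ell_2$, which is the first alternative.

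If $k_0$ exists, set $s_1:=u_{k_0}$. The step into $u_{k_0}$ changes $P(i_0)$, so its current thread must be $i_0$ and it is not a schedule; hence $\thread(s_1)=i_0$ and $\lbl(s_1)=\ell_2$. The prefix claim gives $(s_0,s_1)\in\col_1$, and since $s_0\in\concr{S}_0$ and $\lbl(s_1)=\ell_2$ this places $s_1\in\concr{S}_1$. It then remains to show the suffix $u_{k_0}\to\cdots\to u_m$ witnesses $(s_1,s)\in\col_2$, i.e.\ lies in $[(\restrict{\concr{G}_1}{\after(s_1)}\cap\Tr_2)\cup\restrict{\concr{A}_1}{\compl{\after(s_1)}}]^{\star}$ with $\thread(s_1)=\thread(s)=i_0$; recall $\concr{G}_1=\concr{G}_0\cup\Se_1\cup\Su_1\cup\Sud_1$ and $\concr{A}_1=\concr{A}_0\cup\Su_1\cup\Sud_1$. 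For the suffix I would invoke Proposition~\ref{prop:apres1} instantiated at $\li1\cmd_1,\ell_2$ with $(s_0,s_1)\in\col_1$, which tells us each suffix state $u_k\in\after(s_0)$ either lies in $\after(s_1)$ or begins a $\Tr_1$-step. A real $\Tr_2$-step thus has its source in $\after(s_1)$ and, being in $\concr{G}_0\subseteq\concr{G}_1$, falls into $\restrict{\concr{G}_1}{\after(s_1)}\cap\Tr_2$; a real $\Tr_1$-step is fired by a thread of the $\cmd_1$-subtree already present at $s_1$, whose source lies in $\after(s_0)\smallsetminus\after(s_1)$ (Lemma~\ref{lemma:Descendant}, in the manner of Lemma~\ref{lemma:F-}); and the genuine interferences have source in $\compl{\after(s_0)}\subseteq\compl{\after(s_1)}$ since $\after(s_1)\subseteq\after(s_0)$ by Lemma~\ref{lemma:R}.

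The hard part will be the $\Tr_1$-step case of the suffix, namely matching a step of the surviving $\cmd_1$-subtree against the definition of $\Sud_1$. There one must exhibit the witnesses required by $\Sud$: $(s_0,s_1)\in\col_1$, the membership $u_k\in\after(s_0)\smallsetminus\after(s_1)$, and crucially that the segment $s_1\to u_k$ already lies in the extension relation $\ext{s_0}{s_1}$ of $\concr{K}_1$. This last point is the delicate bookkeeping: one checks that the three kinds of suffix steps ($\Tr_1$-steps with source in $\after(s_0)$, $\Tr_2$-steps with source in $\after(s_1)$, and interferences with source in $\compl{\after(s_0)}$) are each already contained in $(\restrict{\concr{G}_0}{\after(s_0)}\cap\Tr_1)\cup\restrict{\concr{A}_0}{\compl{\after(s_0)}}\cup\restrict{\concr{G}_0}{\after(s_1)}$, so that $(s_1,u_k)\in\ext{s_0}{s_1}$ and hence the step lies in $\Sud_1\subseteq\concr{A}_1$ with source in $\compl{\after(s_1)}$. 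Combining the three classifications gives the desired $\col_2$-path and closes the second alternative.
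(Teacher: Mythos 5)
Your proposal follows the same overall strategy as the paper's proof: cut the execution at the point where control passes from \(\cmd_1\) to \(\cmd_2\), show the prefix is a \(\col_1\)-path ending in a state \(s_1\in\concr{S}_1\) with \(\lbl(s_1)=\ell_2\) and \(\thread(s_1)=\thread(s_0)\), then reclassify the suffix steps into \((\restrict{\concr{G}_1}{\after(s_1)}\cap\Tr_2)\cup\restrict{\concr{A}_1}{\compl{\after(s_1)}}\) using Proposition~\ref{prop:apres1} together with the \(\Sud_1\subseteq\concr{A}_1\) bookkeeping through \(\ext[_1]{s_0}{s_1}\). The differences are tactical: you cut at the first index where \(P_k(i_0)=\ell_2\) and re-prove a prefix invariant by hand, whereas the paper cuts at the first step in \(\Tr_2\smallsetminus\Sche\) and identifies the label and thread of that step's source via Lemmas~\ref{lemma:F}, \ref{lemma:Abis} and pairwise distinctness of labels; both cuts produce valid witnesses, and your version even absorbs the paper's separate ``\(\lbl(s)=\ell_2\), take \(s_1=s\)'' subcase.

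There is, however, one step whose justification as written would not go through: the claim that a suffix step in \(\Tr_1\smallsetminus\Sche\) has its source in \(\after(s_0)\smallsetminus\after(s_1)\) ``by Lemma~\ref{lemma:Descendant}, in the manner of Lemma~\ref{lemma:F-}''. Genealogical disjointness yields this only once you already know the firing thread is one of the threads created strictly between \(s_0\) and \(s_1\) (your ``\(\cmd_1\)-subtree already present at \(s_1\)''). But nothing genealogical rules out the other possibility, namely that a state of \(\after(s_1)\) --- the thread \(i_0\) itself, or a thread it spawns after \(s_1\) --- fires a \(\Tr_1\smallsetminus\Sche\) step: genealogies do not know which code a thread is running. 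Excluding that case is a statement about labels: states of \(\after(s_1)\) sit at labels of \(\Labs{\li2\cmd_2,\ell_3}\), hence by Lemma~\ref{lemma:Abis} and distinctness of labels they cannot fire \(\Tr_1\smallsetminus\Sche\) steps. Moreover this cannot be obtained by a one-shot appeal to Lemma~\ref{lemma:F}, because that lemma's hypothesis requires the path out of \(s_1\) to consist of \(\Tr_2\) steps and interferences, which is exactly what is being proved; the paper therefore runs an induction along the suffix in which the classification of step \(k\) uses the reclassification of the steps before it (``if \((s_k,s_{k+1})\in\restrict{\concr{G}_0}{\after(s_0)}\cap\Tr_1\smallsetminus\Sche\) then \(s_k\notin\after(s_1)\)''). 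You clearly have this machinery --- it is the same interleaved label-tracking induction you deploy for the prefix --- so the plan is repairable, but the appeal to descendant disjointness alone is a genuine gap at the crux of the \(\Sud_1\) case.
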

 
 \begin{proof}
 Let \((s_0,s)\in\col\).
 Either \((s_0,s)\in\col_1\) or \((s_0,s)\notin\col_1\).
 
 In the first case, either \(\lbl(s)\neq\ell_2\), or \(\lbl(s)=\ell_2\). If \(\lbl(s)=\ell_2\), then, by definition, \(s\in\concr{S}_1\). By definition, \((s,s)\in\col_2\) and \((s,s)\in\ext[_1]{s_0}{s} \). We just have to choose \(s_1=s\).
 
 In the second case, \((s_0,s)\notin\col_1\). Let \(T_0 = (\restrict{\concr{G_0}}{\after(s_0)}\cap\Tr_1)\cup\restrict{\concr{A_0}}{\compl{\after(s_0)}}\).
 Since \((s,s')\in\col'\) , \(\thread(s_0)=\thread(s)\) and \(\lbl(s_0)=\ell_1\). Furthermore \((s_0,s)\notin \col_1\), so \((s_0,s)\notin T_0^{\star} \).
 Since \((s,s')\in\col'\subset [(\restrict{\concr{G_0}}{\after(s_0)}\cap\Tr)\cup\restrict{\concr{A_0}}{\compl{\after(s_0)}}]^{\star}  \), \(\Tr=\Tr_1\cup\Tr_2\) (using Lemma \ref{lemma:G})  and \(\Tr_1\supset\Sche\) , therefore \((s_0,s) \in [ T_0 \cup  (\restrict{\concr{G_0}}{\after(s_0)}\cap\Tr_2\smallsetminus \Sche)]^{\star}\).

 Recall \((s_0,s)\notin T^{\star}\), hence \((s_0,s)\in T_0^{\star};(\restrict{\concr{G_0}}{\after(s_0)}\cap\Tr_2\smallsetminus \Sche); [ T_0 \cup  (\restrict{\concr{G_0}}{\after(s_0)}\cap\Tr_2)]^{\star}\).
 Therefore, there exists \(s_1\), \(s_2\) such that:
 \begin{itemize}
  \item \((s_0,s_1)\in T_0^{\star}\)
  \item \((s_1,s_2)\in\restrict{\concr{G_0}}{\after(s_0)}\cap\Tr_2\smallsetminus \Sche \)
  \item \((s_2,s)\in [T_0 \cup  (\restrict{\concr{G_0}}{\after(s_0)}\cap\Tr_2)]^{\star}\)
 \end{itemize}
 
 Since \(s_0\in\concr{S}_0\), \(\lbl(s_0)=\ell_1 \in\Labs{\li1\cmd_1,\ell_2} \).
Since \((s_1,s_2)\in\restrict{\concr{G_0}}{\after(s_0)}\), \(s_1\in\after(s_0)\). Furthemore \((s_0,s_1)\in T_0^{\star} \subset \Tr_1\cup\restrict{\concr{A_0}}{\compl{\after(s_0)}} \), so, according to Lemma \ref{lemma:F}, \(\lbl(s_1)\in\Labs{\li1\cmd_1,\ell_2}\).

Given that \((s_1,s_2)\in\Tr_2\smallsetminus\Sche\), according to Lemma \ref{lemma:Abis}, \(\lbl(s_1)\in\Labs{\li2\cmd_2,\ell_3}\).
Hence \(\lbl(s_1)\in\Labs{\li2\cmd_2,\ell_3}\cap \Labs{\li1\cmd_1,\ell_2}\). 
Because the labels of \(\li1\cmd_1;\li2\cmd_2,\ell_3\) are pairwise distincts, \(\lbl(s_1)=\ell_2 \). Using Lemma \ref{lemma:F}, we conclude that \(\thread(s_0)=\thread(s_1)\).

Given that \(\thread(s_0)=\thread(s)\) and \(\lbl(s_0)=\ell_1\) and \( (s_0,s_1)\in T_0^{\star} \), we conclude that \((s_0,s_1)\in\col_1\).
Furthermore \(\lbl(s_1)=\ell_2\) and \(s_0\in\concr{S_0}\), therefore \(s_1\in\concr{S}_1\).

\((s_1,s)\in [T_0 \cup  (\restrict{\concr{G_0}}{\after(s_0)}\cap\Tr_2)]^{\star}\). Therefore, by proposition \ref{prop:apres1}, \((s_1,s)\in [T_0 \cup  (\restrict{\concr{G_0}}{\after(s_1)}\cap\Tr_2)]^{\star}\subset \ext[_1]{s_0}{s_1}\).
 
Recall that 
\((s_2,s)\in [T_0 \cup  (\restrict{\concr{G_0}}{\after(s_0)}\cap\Tr_2)]^{\star}\), then there exists \(s_3,\ldots,s_n\) such that for all \(k\in\{3,\ldots,n-1\}\), \((s_k,s_{k+1})\in T_0 \cup  (\restrict{\concr{G_0}}{\after(s_0)}\cap\Tr_2)\).
By definition, if \((s_k,s_{k+1})\in \restrict{\concr{G_0}}{\after(s_0)}\cap\Tr_1\), then \((s_k,s_{k+1})\in\Sud_1\).
 
 We show by induction on \(k\) that if \((s_k,s_{k+1})\in \restrict{\concr{G_0}}{\after(s_0)}\cap\Tr_1\smallsetminus\Sche\), then \(s_k\notin\after(s_1)\).
 By induction hypothesis, \((s_2,s_k)\in\restrict{(\restrict{\concr{G_0}}{\after(s_0)}\cap\Tr_1)}{\compl{\after(s_1)}}\cup\restrict{\concr{A_0}}{\compl{\after(s_0)}}\cup (\restrict{\concr{G_0}}{\after(s_0)}\cap\Tr_2)]^{\star}\).
 Therefore, by Lemma \ref{lemma:F}, if \(s_k\in\after(s_2)\), then \(\lbl(s_k)\in\Labs{\li2cmd_2,\ell_3}\). Therefore, because labels are pairwise distinct, if \(s_k\in\after(s_2)\), then \(\lbl(s_k)\notin\Labs{\li1cmd_1,\ell_2}\smallsetminus\{\ell_2\}\). Therefore, by Lemma \ref{lemma:Abis}, if \(s_k\in\after(s_2)\), then \((s_k,s_{k+1})\notin\Tr_1\).
 
 Hence, \((s_1,s)\in [\restrict{{\Sud_1}}{\compl{\after(s_1)}}\cup\restrict{\concr{A_0}}{\compl{\after(s_0)}}\cup (\restrict{\concr{G_0}}{\after(s_0)}\cap\Tr_2)]^{\star}\).
 By Lemma \ref{lemma:F+}, \(\after(s_1)\subset\after(s_0)\), hence \((s_1,s)\in [\restrict{(\Sud_1\cup\concr{A_0})}{\compl{\after(s_0)}}\cup (\restrict{\concr{G_0}}{\after(s_0)}\cap\Tr_2)]^{\star}\subset[\restrict{\concr{{A_1}}}{\compl{\after(s_0)}}\cup (\restrict{\concr{G_0}}{\after(s_0)}\cap\Tr_2)]^{\star} \).
 Therefore \((s_1,s)\in\col_2\).
\end{proof}
 
 \begin{lemma}\label{lemma:Cter}
Using the above notations, for every \((s_0,s)\in\col\) such that \(s_0\in\concr{S}_0\) and \(s'\in\concr{S'}\), there exists \(s_1\in\concr{S}_1\) such that \((s_0,s_1)\in\col_1\), \((s_1,s)\in\col_2\) and \((s_1,s)\in \ext[_1]{s_0}{s_1}\).
 \end{lemma}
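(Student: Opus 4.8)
The plan is to obtain Lemma~\ref{lemma:Cter} as a strengthening of Lemma~\ref{lemma:Cbis}, using the extra hypothesis that \(s\in\concr{S'}\) (I read the printed \(s'\in\concr{S'}\) as \(s\in\concr{S'}\), since the conclusion constrains \(s\)) to discard the first alternative of that dichotomy. First I would record that, since \(s\in\concr{S'}\) and \(\concr{S'}\) is by Definition~\ref{def:concrsem} the set of states reached by \(\li1\cmd_1;\li2\cmd_2,\ell_3\) whose label equals the return label \(\ell_3\), we have \(\lbl(s)=\ell_3\).

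Next I would apply Lemma~\ref{lemma:Cbis} to \((s_0,s)\in\col\), obtaining the two-case dichotomy. Suppose the first alternative held, i.e.\ \((s_0,s)\in\col_1\). Then Lemma~\ref{lemma:R}, applied to the statement \(\li1\cmd_1,\ell_2\) with its own reachability relation \(\col_1\), gives \(\lbl(s)\in\Labs{\li1\cmd_1,\ell_2}\). But \(\Labs{\li1\cmd_1,\ell_2}\) consists of the labels occurring in \(\cmd_1\) together with its return label \(\ell_2\), whereas \(\ell_3\) is the return label of the whole sequence; since all labels of \(\li1\cmd_1;\li2\cmd_2,\ell_3\) are pairwise distinct, \(\ell_3\neq\ell_2\) and \(\ell_3\) differs from every label of \(\cmd_1\), so \(\ell_3\notin\Labs{\li1\cmd_1,\ell_2}\), contradicting \(\lbl(s)=\ell_3\). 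Hence the second alternative holds, yielding \(s_1\in\concr{S}_1\) with \((s_0,s_1)\in\col_1\) and \((s_1,s)\in\col_2\).

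It then remains only to produce the third conclusion \((s_1,s)\in\ext[_1]{s_0}{s_1}\). Here I would observe that the second case in the proof of Lemma~\ref{lemma:Cbis} already establishes exactly this membership as an intermediate step, via the inclusion \((s_1,s)\in[T_0\cup(\restrict{\concr{G_0}}{\after(s_1)}\cap\Tr_2)]^{\star}\subset\ext[_1]{s_0}{s_1}\); so the same witness \(s_1\) simultaneously satisfies all three requirements, and no new construction is needed. The only points requiring care, which I expect to be the main (though routine) obstacle, are the label-disjointness argument \(\ell_3\notin\Labs{\li1\cmd_1,\ell_2}\) justified by pairwise distinctness of labels, and making explicit that the \(\ext\)-membership is not a fresh claim but is already contained in the machinery used to prove Lemma~\ref{lemma:Cbis}.
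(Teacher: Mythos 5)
Your proposal is correct and follows essentially the same route as the paper's own proof: rule out the first alternative of Lemma~\ref{lemma:Cbis} by combining \(\lbl(s)=\ell_3\) (from \(s\in\concr{S'}\)) with Lemma~\ref{lemma:R} and pairwise-distinctness of labels, then take the witness \(s_1\) from the second alternative. Your added remark that the membership \((s_1,s)\in \ext[_1]{s_0}{s_1}\) is established inside the proof of Lemma~\ref{lemma:Cbis} rather than in its literal statement is accurate, and in fact makes explicit a step the paper glosses over when it cites Lemma~\ref{lemma:Cbis} for all three conclusions.
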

 
\begin{proof}
If \((s_0,s)\in\col_1\), then, according to Lemma \ref{lemma:R}, \(\lbl(s)\in\Labs{\li1cmd_1,\ell_2}\). In this case \(\lbl(s)\neq\ell_3\). This is not possible because \(s\in\concr{S'}\).
 
 Therefore, according to Lemma \ref{lemma:Cbis} there exists \(s_1\in\concr{S}_1\) such that \((s_0,s_1)\in\col_1\), \((s_1,s)\in\col_2\) and \((s_1,s)\in \ext[_1]{s_0}{s_1}\)
\end{proof}

\begin{lemma}\label{lemma:Ext1}
 Using the notations of this section, let \(s_0\in\concr{S_0},s_1\in\concr{S_1},s_2\in\concr{S}_2,s\) such that \((s_0,s_1)\in\col_1\), \((s_1,s_2)\in\col_2\cap\ext[_1]{s_0}{s_1}\) and \((s_2,s)\in\ext{s_0}{s_2}\).
 Therefore \((s_1,s)\in \ext[_1]{s_0}{s_1}\).
\end{lemma}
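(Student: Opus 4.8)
The plan is to realise both $\ext[_1]{s_0}{s_1}$ and $\ext{s_0}{s_2}$ as reflexive-transitive closures, to splice the two given paths into a single path issuing from $s_1$, and then to use Proposition~\ref{prop:apres1} to rewrite every $\Tr_2$-step of that path so that it falls into the generator of $\ext[_1]{s_0}{s_1}$.

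Write $G_1=(\restrict{\concr{G_0}}{\after(s_0)}\cap\Tr_1)\cup\restrict{\concr{A_0}}{\compl{\after(s_0)}}\cup\restrict{\concr{G_0}}{\after(s_1)}$ for the generator of $\ext[_1]{s_0}{s_1}$ and $G=(\restrict{\concr{G_0}}{\after(s_0)}\cap\Tr)\cup\restrict{\concr{A_0}}{\compl{\after(s_0)}}\cup\restrict{\concr{G_0}}{\after(s_2)}$ for that of $\ext{s_0}{s_2}$. By Lemma~\ref{lemma:R} applied to $(s_0,s_1)\in\col_1$ and to $(s_1,s_2)\in\col_2$ one has $\after(s_2)\subset\after(s_1)\subset\after(s_0)$, hence $\restrict{\concr{G_0}}{\after(s_2)}\subset\restrict{\concr{G_0}}{\after(s_1)}$. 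Put $T\egdef G\cup\restrict{\concr{G_0}}{\after(s_1)}$; then $G\subset T$ and, since $\restrict{\concr{G_0}}{\after(s_0)}\cap\Tr_1\subset G$, also $G_1\subset T$. From the hypotheses $(s_1,s_2)\in\ext[_1]{s_0}{s_1}=[G_1]^\star$ and $(s_2,s)\in\ext{s_0}{s_2}=[G]^\star$ I obtain, by concatenation, an explicit $T$-path $s_1=u_0,u_1,\dots,u_N=s$.

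Now I would invoke Proposition~\ref{prop:apres1} for the statement $\li1\cmd_1,\ell_2$, the pair $(s_0,s_1)\in\col_1$, the set $T$ and the path $u_0,\dots,u_N$; its conclusion is that for every $k$ with $u_k\in\after(s_0)$ one has $u_k\in\after(s_1)$ or $(u_k,u_{k+1})\in\Tr_1$. With this I check that each step lies in $G_1$. Steps in $\restrict{\concr{A_0}}{\compl{\after(s_0)}}$, in $\restrict{\concr{G_0}}{\after(s_1)}$ (which, by the containment above, also covers the $\restrict{\concr{G_0}}{\after(s_2)}$ part of $G$), and in $\restrict{\concr{G_0}}{\after(s_0)}\cap\Tr_1$ are already in $G_1$. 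For a step in $\restrict{\concr{G_0}}{\after(s_0)}\cap\Tr$ I use Lemma~\ref{lemma:G} ($\Tr=\Tr_1\cup\Tr_2$): if it is in $\Tr_1$ it is in $G_1$; otherwise it is in $\Tr_2$, and since $u_k\in\after(s_0)$ the proposition gives $u_k\in\after(s_1)$ (placing the step in $\restrict{\concr{G_0}}{\after(s_1)}\subset G_1$) or else $(u_k,u_{k+1})\in\Tr_1\cap\Tr_2$, which by Lemma~\ref{lemma:A} and the pairwise distinctness of the labels of $\li1\cmd_1;\li2\cmd_2,\ell_3$ equals $\Sche\subset\Tr_1$, again landing in $\restrict{\concr{G_0}}{\after(s_0)}\cap\Tr_1\subset G_1$. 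Hence the whole path lies in $G_1$, so $(s_1,s)\in[G_1]^\star=\ext[_1]{s_0}{s_1}$, as required. (Since the hypothesis already supplies $(s_1,s_2)\in\ext[_1]{s_0}{s_1}$, one may equivalently argue only on the $s_2\to s$ tail and close under transitivity of $[G_1]^\star$.)

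The step I expect to be the main obstacle is discharging the hypothesis of Proposition~\ref{prop:apres1} for $T$: one must verify that every $(t,t')\in T$ with $\lbl(t)\in\Labs{\li1\cmd_1,\ell_2}$ satisfies $(t,t')\in\Tr_1$ or $t\in\after(s_1)\cup\compl{\after(s_0)}$. The assume generator ($t\in\compl{\after(s_0)}$), the $\after(s_1)$- and $\after(s_2)$-generators ($t\in\after(s_1)$), and the $\Tr_1$-generator are immediate; the delicate case is a non-schedule $\Tr_2$-step whose source carries the junction label $\ell_2$, which must be relocated into $\after(s_1)$ (or shown impossible) by means of Lemma~\ref{lemma:A} and the subthread-label bookkeeping of Lemmas~\ref{lemma:Asub} and~\ref{lemma:Asub2}. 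This is precisely the delicate point already handled in the proof of Lemma~\ref{lemma:Cbis}, whose argument I would reuse.
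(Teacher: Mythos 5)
Your proof is correct and takes essentially the same route as the paper's: both expand \(\Tr\) into \(\Tr_1\cup\Tr_2\) via Lemma~\ref{lemma:G}, exploit the nesting \(\after(s_2)\subset\after(s_1)\subset\after(s_0)\), and invoke Proposition~\ref{prop:apres1} (for the statement \(\li1\cmd_1,\ell_2\) and the pair \((s_0,s_1)\)) to relocate the \(\Tr_2\)- and \(\after(s_2)\)-steps into the generator of \(\ext[_1]{s_0}{s_1}\), finishing by transitivity of the closure---your splicing of the two paths versus the paper's rewriting of the \(s_2\)-to-\(s\) tail followed by \(\ext[_1]{s_0}{s_1};\ext[_1]{s_0}{s_1}=\ext[_1]{s_0}{s_1}\) is an immaterial difference, and you even note this equivalence yourself. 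The obstacle you flag, discharging the hypothesis of Proposition~\ref{prop:apres1} for non-schedule \(\Tr_2\)-transitions whose source carries the junction label \(\ell_2\), is genuine, but the paper's own proof applies the proposition without verifying that hypothesis at all, so your treatment (which at least names the issue and the lemmas needed to settle it) is no less complete than the published one.
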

\begin{proof}
Notice that, by Lemma \ref{lemma:F+}, \(\after(s_2)\subset\after(s_1)\subset\after(s_0)\).

Recall that:

 \(\ext{s_0}{s_2}=\big[
(\restrict{\concr{G_0}}{\after(s_0)}\cap\Tr)\cup \restrict{\concr{A_0}}{\compl{\after(s_0)}} \cup \restrict{\concr{G_0}}{\after(s_2)} 
\big]^{\star}\)

 \(\ext[_1]{s_0}{s_1}=\big[
(\restrict{\concr{G_0}}{\after(s_0)}\cap\Tr_1)\cup \restrict{\concr{A_0}}{\compl{\after(s_0)}} \cup \restrict{\concr{G_0}}{\after(s_1)} 
\big]^{\star}\)

By Lemma \ref{lemma:G},  \(\ext{s_0}{s_2}=\big[
(\restrict{\concr{G_0}}{\after(s_0)}\cap\Tr_1)\cup(\restrict{\concr{G_0}}{\after(s_0)}\cap\Tr_2)\cup \restrict{\concr{A_0}}{\compl{\after(s_0)}} \cup \restrict{\concr{G_0}}{\after(s_2)} 
\big]^{\star}\). Let \(T=(\restrict{\concr{G_0}}{\after(s_0)}\cap\Tr_2)\cup\restrict{\concr{G_0}}{\after(s_2)}\).
Therefore, because \(\after(s_2)\subset\after(s_0)\), \(\ext{s_0}{s_2}=\big[
(\restrict{\concr{G_0}}{\after(s_0)}\cap\Tr_1)\cup \restrict{\concr{A_0}}{\compl{\after(s_0)}} \cup \restrict{T}{\after(s_0)}\big]^{\star}\).

By Proposition \ref{prop:apres1}, \((s_2,s)\in\big[
(\restrict{\concr{G_0}}{\after(s_0)}\cap\Tr_1)\cup \restrict{\concr{A_0}}{\compl{\after(s_0)}} \cup \restrict{T}{\after(s_1)}\big]^{\star}\). Because \(\after(s_2)\subset\after(s_1)\subset\after(s_0)\), \(\restrict{T}{\after(s_1)}  = (\restrict{\concr{G_0}}{\after(s_1)}\cap\Tr_2)\cup\restrict{\concr{G_0}}{\after(s_2)}\).
Hence \((s_2,s) \in \ext[_1]{s_0}{s_1}\).
Hence \((s_1,s)\in\ext[_1]{s_0}{s_1};\ext[_1]{s_0}{s_1}=\ext[_1]{s_0}{s_1} \).
\end{proof}

\begin{lemma}\label{lemma:Ext2}
 Using the notations of this section, let \(s_0\in\concr{S_0},s_1\in\concr{S_1},s_2\in\concr{S}_2,s\) such that \((s_0,s_1)\in\col_1\), \((s_1,s_2)\in\col_2\cap\ext[_1]{s_0}{s_1}\) and \((s_2,s)\in\ext{s_0}{s_2}\).
 Therefore \((s_2,s)\in\ext[_2]{s_1}{s_2}\).
\end{lemma}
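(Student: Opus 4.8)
The plan is to expand both $\ext{s_0}{s_2}$ and $\ext[_2]{s_1}{s_2}$ as reflexive–transitive closures and then show that every edge of some $\ext{s_0}{s_2}$-path realising $(s_2,s)$ already lies in the generating set of $\ext[_2]{s_1}{s_2}$. First I record the ingredients. From Definition \ref{def:concrsem} and $\concr{Q}_1=\av{\li1\cmd_1,\ell_2}(\concr{Q}_0)$ I have the monotonicity relations $\concr{G_0}\subset\concr{G_1}$, $\concr{A_0}\subset\concr{A_1}$ and, crucially, $\Sud_1\subset\concr{A_1}$. Lemma \ref{lemma:R} applied to $(s_0,s_1)\in\col_1$ and to $(s_1,s_2)\in\col_2$ gives the nesting $\after(s_2)\subset\after(s_1)\subset\after(s_0)$ (hence $\compl{\after(s_0)}\subset\compl{\after(s_1)}$), and since $s_2\in\concr{S}_2$ is a final state, $\lbl(s_2)=\ell_3$; because the labels of the sequence are pairwise distinct with $\ell_2$ their only shared label, $\Labs{\li1\cmd_1,\ell_2}\smallsetminus\{\ell_2\}$ and $\Labs{\li2\cmd_2,\ell_3}$ are disjoint. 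Writing
\[G = (\restrict{\concr{G_0}}{\after(s_0)}\cap\Tr_1)\cup\restrict{\concr{A_0}}{\compl{\after(s_0)}}\cup(\restrict{\concr{G_0}}{\after(s_1)}\cap\Tr_2)\cup\restrict{\concr{G_0}}{\after(s_2)},\]
I reduce the hypothesis $(s_2,s)\in\ext{s_0}{s_2}$ to $(s_2,s)\in[G]^{\star}$ by splitting $\Tr=\Tr_1\cup\Tr_2$ with Lemma \ref{lemma:G} and applying Proposition \ref{prop:apres1} to $\li1\cmd_1,\ell_2$ and $(s_0,s_1)\in\col_1$ — exactly the computation opening the proof of Lemma \ref{lemma:Ext1}.

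Next I set
\[G' = (\restrict{\concr{G_1}}{\after(s_1)}\cap\Tr_2)\cup\restrict{\concr{A_1}}{\compl{\after(s_1)}}\cup\restrict{\concr{G_1}}{\after(s_2)},\]
so $\ext[_2]{s_1}{s_2}=[G']^{\star}$, and it suffices to check that every edge of a $G$-path from $s_2$ to $s$ lies in $G'$. Three of the four pieces of $G$ land in $G'$ by pure monotonicity: $\restrict{\concr{A_0}}{\compl{\after(s_0)}}\subset\restrict{\concr{A_1}}{\compl{\after(s_1)}}$ (using $\compl{\after(s_0)}\subset\compl{\after(s_1)}$ and $\concr{A_0}\subset\concr{A_1}$), while the two $\concr{G_0}$-terms enlarge to the corresponding $\concr{G_1}$-terms. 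The only real work concerns an edge $(t,t')\in\restrict{\concr{G_0}}{\after(s_0)}\cap\Tr_1$: schedule edges are harmless since $\Sche\subset\Tr_2$ and $\Sche\subset\concr{A_1}$, so I may assume $(t,t')\in\Tr_1\smallsetminus\Sche$, whence by Lemma \ref{lemma:A} $\lbl(t)\in\Labs{\li1\cmd_1,\ell_2}\smallsetminus\{\ell_2\}$, a label outside $\Labs{\li2\cmd_2,\ell_3}$.

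For such an edge I run a dichotomy on its source $t$. Applying Proposition \ref{prop:apres1} to $\li2\cmd_2,\ell_3$ and $(s_1,s_2)\in\col_2$ with the transition set $G$ — its hypothesis holds, since an edge of $G$ whose source carries a $\cmd_2$-label is either a $\Tr_2$-edge or has its source in $\after(s_2)\cup\compl{\after(s_1)}$, the $\Tr_1\smallsetminus\Sche$-edges being excluded by label disjointness — gives: if $t\in\after(s_1)$ then $t\in\after(s_2)$ or $(t,t')\in\Tr_2$. The second alternative contradicts label disjointness, so $t\in\after(s_1)$ forces $t\in\after(s_2)$, whence $(t,t')\in\restrict{\concr{G_0}}{\after(s_2)}\subset\restrict{\concr{G_1}}{\after(s_2)}\subset G'$. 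Otherwise $t\in\after(s_0)\smallsetminus\after(s_1)$, and I claim $(t,t')\in\Sud_1$ with witnesses $s_0,s_1$: indeed $(s_0,s_1)\in\col_1$, $s_1\in\concr{S_1}$, $t\in\after(s_0)\smallsetminus\after(s_1)$, and $(s_1,t)\in\ext[_1]{s_0}{s_1}$ because $(s_1,s_2)\in\ext[_1]{s_0}{s_1}$ by hypothesis while $(s_2,t)\in[G]^{\star}\subset\ext[_1]{s_0}{s_1}$ — every term of $G$ lies in the generating set of $\ext[_1]{s_0}{s_1}$, using $\restrict{\concr{G_0}}{\after(s_2)}\subset\restrict{\concr{G_0}}{\after(s_1)}$ — together with the transitivity of $\ext[_1]{s_0}{s_1}$. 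Since $\Sud_1\subset\concr{A_1}$ and $t\in\compl{\after(s_1)}$, this places $(t,t')\in\restrict{\concr{A_1}}{\compl{\after(s_1)}}\subset G'$, completing the classification and hence the proof.

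The main obstacle is precisely these $\Tr_1$-edges occurring after $s_2$, i.e. child threads spawned inside $\cmd_1$ that keep running past the return point: they are the one part of $G$ not visible in $G'$ by monotonicity alone. The two devices that tame them are the $\cmd_2$-instance of Proposition \ref{prop:apres1}, which rules out such edges from $\after(s_1)\smallsetminus\after(s_2)$ by a label-disjointness contradiction, and their identification as elements of $\Sud_1$, which is exactly where the hypothesis $(s_1,s_2)\in\ext[_1]{s_0}{s_1}$ and the transitivity of $\ext[_1]{s_0}{s_1}$ are used. I expect the bookkeeping around the $\restrict{\concr{G_0}}{\after(s_2)}$-term — ensuring it never lets a $\cmd_1$-edge slip into $\after(s_1)\smallsetminus\after(s_2)$ — to be the single delicate spot.
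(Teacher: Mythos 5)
Your proof is correct and follows essentially the same route as the paper's: both reduce the $\ext{s_0}{s_2}$-path via Proposition \ref{prop:apres1} applied first to $\li1\cmd_1,\ell_2$ and then to $\li2\cmd_2,\ell_3$, and both dispose of the residual $\Tr_1$-edges with source in $\after(s_0)\smallsetminus\after(s_1)$ by identifying them as elements of $\Sud_1\subset\concr{A_1}$ (via Lemma \ref{lemma:Ext1}, which you partly re-derive inline), so that they fall into $\restrict{\concr{A_1}}{\compl{\after(s_1)}}$. The only difference is organizational: the paper passes to $\concr{G_1}$ by monotonicity at the start while you keep $\concr{G_0}$ until the end, which changes nothing of substance.
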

\begin{proof}
Notice that, by Lemma \ref{lemma:F+}, \(\after(s_2)\subset\after(s_1)\subset\after(s_0)\).

Recall that 

 \(\ext{s_0}{s_2}=\big[
(\restrict{\concr{G_0}}{\after(s_0)}\cap\Tr)\cup \restrict{\concr{A_0}}{\compl{\after(s_0)}} \cup \restrict{\concr{G_0}}{\after(s_2)} 
\big]^{\star}\)

 \(\ext[_2]{s_1}{s_2}=\big[
(\restrict{\concr{G_1}}{\after(s_1)}\cap\Tr_2)\cup \restrict{\concr{A_1}}{\compl{\after(s_1)}} \cup \restrict{\concr{G_1}}{\after(s_2)} 
\big]^{\star}\)

Since \((s_2,s)\in\ext{s_0}{s_2}\), \(\concr{A_0}\subset\concr{A_1}\), \(\concr{G_0}\subset\concr{A_1}\), and \(\after(s_1)\subset\after(s_0)\)
there exists \(s_3,\ldots,s_n\) such that \(s_n=s\) and for all \(k\in  \{3,\ldots,n-1\}\), 
\((s_k,s_{k+1})\in (\restrict{\concr{G_1}}{\after(s_0)}\cap\Tr) \cup \restrict{\concr{A_1}}{\compl{\after(s_1)}} \cup \restrict{\concr{G_1}}{\after(s_2)}\).

Due to Lemma \ref{lemma:G}, for all \(k\in  \{3,\ldots,n-1\}\), 
\((s_k,s_{k+1})\in (\restrict{\concr{G_1}}{\after(s_0)}\cap\Tr_1) \cup (\restrict{\concr{G_1}}{\after(s_0)}\cap\Tr_2)\cup \restrict{\concr{A_1}}{\compl{\after(s_1)}} \cup \restrict{\concr{G_1}}{\after(s_2)}\).

Because \((s_1,s_2)\in\col_2\), \((s_1,s_2)\in\big[ (\restrict{\concr{G_1}}{\after(s_1)}\cap\Tr_2) \restrict{\concr{A_1}}{\compl{\after(s_1)}} \big]^{\star}
\subset
\big[(\restrict{\concr{G_1}}{\after(s_0)}\cap\Tr_2) \cup (\restrict{\concr{G_1}}{\after(s_0)}\cap\Tr_2)\cup \restrict{\concr{A_1}}{\compl{\after(s_1)}} \cup \restrict{\concr{G_1}}{\after(s_2)}\big]^{\star}\).

Hence, by Proposition \ref{prop:apres1} applied on the statement \(\li1\cmd_1,\ell_2\), for all \(k\in  \{3,\ldots,n-1\}\), 
\((s_k,s_{k+1})\in (\restrict{\concr{G_1}}{\after(s_0)}\cap\Tr_1)\cup (\restrict{\concr{G_1}}{\after(s_1)}\cap\Tr_2) \cup \restrict{\concr{A_1}}{\compl{\after(s_1)}} \cup \restrict{\concr{G_1}}{\after(s_2)} \).

Given that \((\restrict{\concr{G_1}}{\after(s_0)}\cap\Tr_1)=(\restrict{\concr{G_1}}{\after(s_0)\smallsetminus\after(s_0)}\cap\Tr_1)\cup(\restrict{\concr{G_1}}{\after(s_1)}\cap\Tr_1)\) and \(\restrict{\concr{G_1}}{\after(s_2)}\cap\Tr_1 \subset \restrict{\concr{G_1}}{\after(s_2)}\), by Proposition \ref{prop:apres1} applied on the statement \(\li2\cmd_2,\ell_3\), we conclude that for all \(k\in  \{3,\ldots,n-1\}\), 
\((s_k,s_{k+1})\in (\restrict{\concr{G_1}}{\after(s_0)\smallsetminus\after(s_1)}\cap\Tr_1)\cup (\restrict{\concr{G_1}}{\after(s_1)}\cap\Tr_2) \cup \restrict{\concr{A_1}}{\compl{\after(s_1)}} \cup \restrict{\concr{G_1}}{\after(s_2)} \).
Let \(k_0\) such that \((s_{k_0},s_{k_0+1})\in (\restrict{\concr{G_1}}{\after(s_0)\smallsetminus\after(s_1)}\cap\Tr_1)\smallsetminus \restrict{\concr{G_1}}{\after(s_2)}\). By Lemma \ref{lemma:Ext1}, \((s_1,s_{k_0})\in\ext[_1]{s_0}{s_1}\).
Therefore \((s_{k_0},s_{k_0+1})\in\Sud_1\).

Hence \((s_2,s)\in \big[
\restrict{{\Sud_1}}{\after(s_0)\smallsetminus\after(s_1)}\cup (\restrict{\concr{G_1}}{\after(s_1)}\cap\Tr_2) \cup \restrict{\concr{A_1}}{\compl{\after(s_0)}} \cup \restrict{\concr{G_1}}{\after(s_2)}
\big]^{\star}\).
Because \(\restrict{{\Sud_1}}{\after(s_0)\smallsetminus\after(s_1)}\subset\restrict{\concr{A}}{\compl{\after{s_1}}}\), we conclude that \((s_2,s)\in\ext[_2]{s_1}{s_2}\).
\end{proof}

To prove the Property \ref{fcomposition} of Theorem \ref{theorem:denot}, we have to prove that \(\concr{Q}_2\geqslant \concr{Q}'\). We claim that 
\begin{inparaenum}[(a)]
 \item \(\concr{S}'\subset \concr{S}_2\)
 \item \(\Se'\subset \Se_1\cup\Se_2\)
 \item \(\Su'\subset\Su_1\cup\Su_2\cup \Sud_1 \)
 \item \(\Sud'\subset \Sud_1\cup\Sud_2 \)
\end{inparaenum}.
Using this claims and the definition of the semantics \(\osem{\cdot}\), we conclude that \(\concr{Q}_2\geqslant \concr{Q}'\).

Now, we prove these claims:

\begin{claim}\label{lemma:seq:s}
 Using the notations of this section, \(\concr{S}'\subset \concr{S}_2\).
\end{claim}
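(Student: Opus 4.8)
The plan is to reduce the claim directly to Lemma~\ref{lemma:Cter}, after unfolding the two relevant instances of Definition~\ref{def:concrsem}.

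First I would record the two state sets explicitly. Applying Definition~\ref{def:concrsem} to the sequence gives \(\concr{S}' = \{s \mid s\in\col\langle\concr{S}_0\rangle \wedge \lbl(s)=\ell_3\}\), and applying it to \(\av{\li2\cmd_2,\ell_3}(\concr{Q}_1)\) gives \(\concr{S}_2 = \{s \mid s\in\col_2\langle\concr{S}_1\rangle \wedge \lbl(s)=\ell_3\}\), where \(\concr{S}_1\) is the state component of \(\concr{Q}_1\) and \(\col_2\) is the reachability relation computed for \(\li2\cmd_2,\ell_3\) from \(\concr{S}_1\). The only subtlety to keep straight is that the return label of the whole sequence and of \(\li2\cmd_2,\ell_3\) is the same label \(\ell_3\), so the label conditions match exactly.

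Then I would take an arbitrary \(s\in\concr{S}'\). By the description above there is some \(s_0\in\concr{S}_0\) with \((s_0,s)\in\col\), and moreover \(\lbl(s)=\ell_3\). These are precisely the hypotheses of Lemma~\ref{lemma:Cter} (namely \((s_0,s)\in\col\) with \(s_0\in\concr{S}_0\) and \(s\in\concr{S}'\)), so the lemma yields a state \(s_1\in\concr{S}_1\) with \((s_0,s_1)\in\col_1\) and \((s_1,s)\in\col_2\); the additional conclusion \((s_1,s)\in\ext[_1]{s_0}{s_1}\) is not needed for this claim. Finally, from \(s_1\in\concr{S}_1\) and \((s_1,s)\in\col_2\) I obtain \(s\in\col_2\langle\concr{S}_1\rangle\), and together with \(\lbl(s)=\ell_3\) this is exactly the membership condition defining \(\concr{S}_2\). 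Hence \(s\in\concr{S}_2\), and since \(s\) was arbitrary, \(\concr{S}'\subset\concr{S}_2\).

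I do not expect a genuine obstacle at this level: all the difficulty has already been absorbed into Lemma~\ref{lemma:Cter}, whose content is the path-decomposition of Lemma~\ref{lemma:Cbis} (splitting a \(\col\)-execution of the sequence into a \(\col_1\)-phase ending at a state labelled \(\ell_2\), followed by a \(\col_2\)-phase, using that the labels of \(\li1\cmd_1;\li2\cmd_2,\ell_3\) are pairwise distinct). Once that decomposition is granted, the present claim is a one-line consequence of chasing the definitions of the two \(\concr{S}\)-components, so the proof is essentially bookkeeping.
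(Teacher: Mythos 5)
Your proof is correct and follows essentially the same route as the paper: both arguments take \(s\in\concr{S}'\), extract \(s_0\in\concr{S}_0\) with \((s_0,s)\in\col\) and \(\lbl(s)=\ell_3\), invoke Lemma~\ref{lemma:Cter} to obtain \(s_1\in\concr{S}_1\) with \((s_1,s)\in\col_2\), and conclude \(s\in\concr{S}_2\) by the definition of the state component. Your version merely spells out the definitional unfolding of \(\concr{S}'\) and \(\concr{S}_2\) a bit more explicitly than the paper does.
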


\begin{proof}
 Let \(s\in\concr{S}'\),
so there exists \(s_0\in\concr{S}\) such that \((s_0,s)\in\col'\) and \(\lbl(s)=\ell_3\).
According to Lemma \ref{lemma:Cter} there exists \(s_1\in\concr{S}_1\) such that \((s_1,s)\in\col_2\). Therefore \(s\in\concr{S}_2\).
\end{proof}

\begin{claim}\label{lemma:seq:par}
  Using the notations of this section, \(\Se'\subset \Se_1\cup\Se_2\).
\end{claim}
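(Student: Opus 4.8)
The plan is to unfold the definition of $\Se'$ and route each of its transitions to exactly one of $\Se_1$ or $\Se_2$ according to where along the sequence its source state lies. Concretely, I would take $(s,s')\in\Se'$; by the definition of the \Cname semantics this means $(s,s')\in\Tr$ together with some $s_0\in\concr{S}_0$ satisfying $(s_0,s)\in\col'$. The whole argument then hinges on deciding, from the reachability of $s$, whether $s$ should be regarded as a state visited while $\li1\cmd_1$ was still running or one reached after control has passed to $\li2\cmd_2$.

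To make that decision I would invoke Lemma \ref{lemma:Cbis}, which already performs exactly this dichotomy for the pair $(s_0,s)\in\col'$: either $(s_0,s)\in\col_1$ with $\lbl(s)\neq\ell_2$, or there is an intermediate $s_1\in\concr{S}_1$ with $(s_0,s_1)\in\col_1$ and $(s_1,s)\in\col_2$. These two alternatives will become the two cases of the proof, one feeding $\Se_1$ and the other feeding $\Se_2$.

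In the first case, Lemma \ref{lemma:R} (for the statement $\li1\cmd_1,\ell_2$) gives $\lbl(s)\in\Labs{\li1\cmd_1,\ell_2}$, and since $\lbl(s)\neq\ell_2$ we in fact have $\lbl(s)\in\Labs{\li1\cmd_1,\ell_2}\smallsetminus\{\ell_2\}$; the first clause of Lemma \ref{lemma:G+} then promotes $(s,s')\in\Tr$ to $(s,s')\in\Tr_1$. As $(s_0,s)\in\col_1$ with $s_0\in\concr{S}_0$, we have $s\in\col_1\langle\concr{S}_0\rangle$, so $(s,s')\in\Se_1$. In the second case, Lemma \ref{lemma:R} (for $\li2\cmd_2,\ell_3$, with source $s_1$) yields $\lbl(s)\in\Labs{\li2\cmd_2,\ell_3}$, so the second clause of Lemma \ref{lemma:G+} gives $(s,s')\in\Tr_2$; and since $(s_1,s)\in\col_2$ with $s_1\in\concr{S}_1$, we get $s\in\col_2\langle\concr{S}_1\rangle$ and therefore $(s,s')\in\Se_2$. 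Combining the two cases yields $\Se'\subset\Se_1\cup\Se_2$.

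The main obstacle is not the bookkeeping but ensuring that a transition lands in the correct one of $\Tr_1,\Tr_2$; this is entirely delegated to Lemma \ref{lemma:G+}, whose validity rests on the pairwise distinctness of the labels of $\li1\cmd_1,\ell_2$ and $\li2\cmd_2,\ell_3$. One point worth checking is that schedule transitions require no separate treatment: since $\Sche\subset\Tr_1\cap\Tr_2$ and Lemma \ref{lemma:G+} constrains only $\lbl(s)$, the same case analysis applies verbatim when $(s,s')\in\Sche$.
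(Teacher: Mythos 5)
Your proof is correct and follows essentially the same route as the paper's: the dichotomy via Lemma \ref{lemma:Cbis}, label information to route the transition into $\Tr_1$ or $\Tr_2$ via Lemma \ref{lemma:G+}, and the definitions of $\Se_1,\Se_2$. The only cosmetic difference is that in the second case you invoke Lemma \ref{lemma:R} (applied to $(s_1,s)\in\col_2$) where the paper cites Lemma \ref{lemma:F}; since Lemma \ref{lemma:R} packages exactly that consequence, this is the same argument---and your closing observation about schedule transitions being covered because $\Sche\subset\Tr_1\cap\Tr_2$ is a sound point the paper leaves implicit.
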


\begin{proof}
 Let \((s,s')\in\Se'\). So \((s,s')\in\Tr\), and there exists \(s_0\in\concr{S}\) such that \((s_0,s)\in\col'\).

According to Lemma \ref{lemma:Cbis} either \((s_0,s)\in\col_1\) and \(\lbl(s)\neq \ell_2\), or there exists \(s_1\in\concr{S}_1\) such that \((s_0,s_1)\in\col_1\) and \((s_1,s)\in\col_2\).

In the first case, according to Lemma \ref{lemma:R}, \(\lbl(s)\in\Labs{\li1\cmd_1,\ell_2}\). 
Since \(\lbl(s)\neq \ell_2\) and by Lemma \ref{lemma:G+}, \((s,s')\in\Tr_1\).
Hence, by definition, \((s,s')\in\Se_1\)

In the second case, by Lemma \ref{lemma:F}, \(\lbl(s')\in\Labs{\li2\cmd_2,\ell_3}\).
Since \((s,s')\in\Tr\), by Lemma \ref{lemma:G+} \((s,s')\in\Tr_2\).
Given that \(s\in\col\langle\concr{S}_1\rangle\) and \((s,s')\in\Tr_2\), we conclude that \((s,s')\in\Se_2\).
\end{proof}

\begin{claim}\label{lemma:seq:par2}
 Using the notations of this section \(\Su'\subset\Su_1\cup\Su_2\cup \Sud_1 \).
\end{claim}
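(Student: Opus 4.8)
The plan is to fix an arbitrary $(s,s')\in\Su'$ and route it into one of the three target sets by a case analysis driven by Lemma \ref{lemma:Cbis}. Unfolding the definition of $\Su'$ (whose initial state set is $\concr{S}_0$), there are $s_0\in\concr{S}_0$ and an intermediate state $\bar s$ with $(s_0,\bar s)\in\col'$, $(\bar s,s)\in\Sche$ and $s\in\after(s_0)$; moreover $(s,s')\in\Tr=\Tr_1\cup\Tr_2$ by Lemma \ref{lemma:G}. The schedule step forces $\thread(\bar s)\neq\thread(s)$, a fact I will use repeatedly. Applying Lemma \ref{lemma:Cbis} to $(s_0,\bar s)\in\col'$ yields two cases: in Case~1, $(s_0,\bar s)\in\col_1$ with $\lbl(\bar s)\neq\ell_2$; in Case~2, there is $s_1\in\concr{S}_1$ with $(s_0,s_1)\in\col_1$, $(s_1,\bar s)\in\col_2$ and $(s_1,\bar s)\in\ext[_1]{s_0}{s_1}$ (this last membership being what the proof of Lemma \ref{lemma:Cbis} delivers, as extracted in Lemma \ref{lemma:Cter}).

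In Case~1 I claim $(s,s')\in\Su_1$. Since $\Sche\subset\Tr_1$, we have $(s_0,s)\in(\Tr_1\cup\restrict{\concr{A}_0}{\compl{\after(s_0)}})^{\star}$ with $\lbl(s_0)=\ell_1$ and $s\in\after(s_0)$, so Lemma \ref{lemma:F} gives $\lbl(s)\in\Labs{\li1\cmd_1,\ell_2}$; and as $\thread(s_0)=\thread(\bar s)\neq\thread(s)$, the ``furthermore'' clause of Lemma \ref{lemma:F} excludes $\lbl(s)=\ell_2$. Lemma \ref{lemma:G+} then gives $(s,s')\in\Tr_1$, and together with $(s_0,s)\in\col_1;\Sche$ and $s\in\after(s_0)$ this is exactly $(s,s')\in\Su_1$. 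In Case~2 I split on whether $s\in\after(s_1)$. When $s\in\after(s_1)$, the same Lemma \ref{lemma:F}/Lemma \ref{lemma:G+} pattern applied to $(s_1,s)\in(\Tr_2\cup\restrict{\concr{A}_1}{\compl{\after(s_1)}})^{\star}$, with $\lbl(s_1)=\ell_2\in\Labs{\li2\cmd_2,\ell_3}$, gives $\lbl(s)\in\Labs{\li2\cmd_2,\ell_3}$ and hence $(s,s')\in\Tr_2$; since $(s_1,s)\in\col_2;\Sche$, $s\in\after(s_1)$ and $s_1\in\concr{S}_1$, this yields $(s,s')\in\Su_2$.

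The remaining sub-case $s\in\after(s_0)\smallsetminus\after(s_1)$ is the one destined for $\Sud_1$, and it is the main obstacle. All the $\Sud_1$ bookkeeping is already in hand ($s_0\in\concr{S}_0$, $s_1\in\concr{S}_1$, $(s_0,s_1)\in\col_1$, and $s\in\after(s_0)\smallsetminus\after(s_1)$), so the only thing left to establish is $(s_1,s)\in\ext[_1]{s_0}{s_1}$. Since $(s_1,\bar s)\in\ext[_1]{s_0}{s_1}$ and $\ext[_1]{s_0}{s_1}$ is transitively closed, it suffices to absorb the trailing schedule $(\bar s,s)\in\Sche$ into a generator of $\ext[_1]{s_0}{s_1}$, which I do by a short case split on the position of $\bar s$, using $\Sche\subset\concr{G}_0\cap\concr{A}_0$ and $\Sche\subset\Tr_1$: if $\bar s\notin\after(s_0)$ then $(\bar s,s)\in\restrict{\concr{A}_0}{\compl{\after(s_0)}}$, while if $\bar s\in\after(s_0)$ then $(\bar s,s)\in\restrict{\concr{G}_0}{\after(s_0)}\cap\Tr_1$; either way $(\bar s,s)\in\ext[_1]{s_0}{s_1}$. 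Hence $(s_1,s)\in\ext[_1]{s_0}{s_1}$ and $(s,s')\in\Sud_1$, which completes $\Su'\subset\Su_1\cup\Su_2\cup\Sud_1$. The delicate point throughout is recognizing that the thread selected by the final schedule is either a future descendant of $s_1$ (a genuine parallel transition of $\cmd_2$, giving $\Su_2$) or a child spawned during $\cmd_1$ that survives into the execution of $\cmd_2$ (a surviving-subthread transition, giving $\Sud_1$), and then handling that trailing schedule uniformly in both the $\col$ and the $\ext$ settings.
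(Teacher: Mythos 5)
Your proof follows the paper's own argument almost step for step: the same unfolding of \(\Su'\), the same case analysis via Lemma \ref{lemma:Cbis}, the same route to \(\Su_1\) in the first case and to \(\Su_2\) when \(s\in\after(s_1)\), and your explicit split on whether \(\bar s\in\after(s_0)\) is exactly the justification that the paper compresses into the identity \(\ext[_1]{s_0}{s_1};\Sche=\ext[_1]{s_0}{s_1}\). Up to that point everything checks out.

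The gap is in the last sub-case, \(s\in\after(s_0)\smallsetminus\after(s_1)\). You assert that the ``\(\Sud_1\) bookkeeping'' plus \((s_1,s)\in\ext[_1]{s_0}{s_1}\) is all that membership in \(\Sud_1\) requires, i.e., that \(\Sud_1\) places no condition on the transition \((s,s')\) itself. That is what the \(\Sud\) clause of Definition \ref{def:concrsem} literally says, but it is not the definition the paper actually works with: the parallel clause for \(\Su\) explicitly requires \((\sinter,\sinterb)\in\TR{\lab \stmt,\ell'}\), the surrounding prose describes \(\Sud\) as collecting the \emph{transitions} executed by surviving children (hence transitions generated by the statement), and the equality claimed for \(\concr{G'}\) in Proposition \ref{prop:coroc} would be false if \(\Sud\) were allowed to contain pairs with arbitrary targets. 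Under that intended reading, concluding \((s,s')\in\Sud_1\) additionally requires \((s,s')\in\Tr_1\), and this is the genuinely non-trivial point of the whole case: a priori you only know \((s,s')\in\Tr_1\cup\Tr_2\), and you must rule out that a thread spawned during \(\cmd_1\) which has escaped \(\after(s_1)\) fires a transition of \(\Tr_2\smallsetminus\Sche\). This is exactly what Proposition \ref{prop:apres1} provides (such a thread still sits at a label of \(\li1\cmd_1,\ell_2\), so by Lemmas \ref{lemma:Abis} and \ref{lemma:G+} its non-schedule transitions lie in \(\Tr_1\)), and it is the one step the paper performs --- ``by Proposition \ref{prop:apres1}, \((s,s')\in\Tr_1\)'' --- that your proof omits. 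Your closing intuition (``a child spawned during \(\cmd_1\) that survives into the execution of \(\cmd_2\)'') is the right picture, but the proof needs this label/genealogy argument to turn that picture into \((s,s')\in\Tr_1\); without it the third case is incomplete.
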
%
\begin{proof}%
 Let \((s,s')\in\Su'\).
Therefore, \((s,s')\in\Tr\) and there exists \(s_0\in\concr{S_0}\) and \(s_2\) such that \((s_0,s_2)\in\col'\), \((s_2,s)\in\Sche\) and \(s\in\after(s_0)\).
According to Lemma \ref{lemma:Cbis} there are two cases:

First case: \((s_0,s_2)\in\col_1\) and \(\lbl(s_2)\neq\ell_2\).
Then, using the fact that \(\Sche\subset\Tr_1\), \((s_0,s)\in(\Tr_1\cup\restrict{{\concr{A}_0}}{\compl{\after(s_0)}})^{\star}\).
Because \(s\in\after(s_0)\), by Lemma \ref{lemma:F}, \(\lbl(s)\in\Labs{\li1\cmd_1,\ell_2}\smallsetminus \{\ell_2\}\). 
Hence, according to Lemma \ref{lemma:G+}, \((s,s')\in\Tr_1\).
 We conclude that \((s,s')\in\Su_1\).

Second case:
There exists \(s_1\in\concr{S}_1\) such that \((s_0,s_1)\in\col_1\), \((s_1,s_2)\in\col_2\) and
\((s_1,s_2)\in\ext[_1]{s_0}{s_1}\). 
Hence \((s_1,s)\in\ext[_1]{s_0}{s_1};\Sche=\ext[_1]{s_0}{s_1}\).

If \(s\in\after(s_1)\), then, because \((s_1,s)\in\col_2;\Sche\), by Lemma \ref{lemma:F}, \(\lbl(s)\in\Labs{\li2\cmd_2,\ell_3}\). So, in this case, by Lemma \ref{lemma:G+}, \((s,s')\in\Tr_2\) and then  \((s,s')\in\Su_2\).

Let us consider the case \(s\notin\after(s_1)\).
Given that \((s_0,s_1)\in\col\), \((s_1,s)\in\ext[_1]{s_1}{s_2}\), so by Proposition \ref{prop:apres1}, \((s,s')\in\Tr_1\).
Hence, \((s,s')\in\Sud_1\).
 \end{proof}

\begin{claim}\label{lemma:seq:sub}
 Using the notations of this section \(\Sud'\subset \Sud_1\cup\Sud_2 \).
\end{claim}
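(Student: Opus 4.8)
The plan is to reduce an orphan transition of the sequence to an orphan transition of either $\cmd_1$ or $\cmd_2$, using the two extension lemmas (Lemmas \ref{lemma:Ext1} and \ref{lemma:Ext2}) together with the reachability decomposition of Lemma \ref{lemma:Cter}. The guiding intuition is that a thread that survives the whole sequence $\li1\cmd_1;\li2\cmd_2,\ell_3$ was spawned during the part executed by $\cmd_1$ or during the part executed by $\cmd_2$, and the dividing line is whether the state producing the transition lies in $\after(s_1)$ for the intermediate state $s_1$ reached at label $\ell_2$.

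First I would unfold the definition of $\Sud'$: for $(s,s')\in\Sud'$ there exist $s_0\in\concr{S}_0$ and $s_2\in\concr{S}'$ with $(s_0,s_2)\in\col'$, $(s_2,s)\in\ext{s_0}{s_2}$ and $s\in\after(s_0)\smallsetminus\after(s_2)$. Since $s_2\in\concr{S}'$, Lemma \ref{lemma:Cter} supplies an intermediate state $s_1\in\concr{S}_1$ with $(s_0,s_1)\in\col_1$, $(s_1,s_2)\in\col_2$ and $(s_1,s_2)\in\ext[_1]{s_0}{s_1}$. These are exactly the hypotheses of Lemmas \ref{lemma:Ext1} and \ref{lemma:Ext2} (with their $s_0,s_1,s_2,s$ instantiated to the present $s_0,s_1,s_2,s$ and $(s_2,s)\in\ext{s_0}{s_2}$ the remaining hypothesis), so applying both lemmas gives $(s_1,s)\in\ext[_1]{s_0}{s_1}$ and $(s_2,s)\in\ext[_2]{s_1}{s_2}$.

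Then I would split on whether $s\in\after(s_1)$. If $s\notin\after(s_1)$, then $s\in\after(s_0)\smallsetminus\after(s_1)$, and together with $s_0\in\concr{S}_0$, $s_1\in\concr{S}_1$ (the final states of $\cmd_1$), $(s_0,s_1)\in\col_1$ and $(s_1,s)\in\ext[_1]{s_0}{s_1}$ this is precisely the defining condition of $\Sud_1$, so $(s,s')\in\Sud_1$. If $s\in\after(s_1)$, then since $s\notin\after(s_2)$ we get $s\in\after(s_1)\smallsetminus\after(s_2)$; combined with $s_1\in\concr{S}_1$, $s_2\in\concr{S}_2$ (which holds because $(s_1,s_2)\in\col_2$ with $s_1\in\concr{S}_1$ and $\lbl(s_2)=\ell_3$), $(s_1,s_2)\in\col_2$ and $(s_2,s)\in\ext[_2]{s_1}{s_2}$, this matches the definition of $\Sud_2$, so $(s,s')\in\Sud_2$. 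In both cases $(s,s')\in\Sud_1\cup\Sud_2$, establishing the claim. The routine part is merely checking that the extracted $s_0,s_1,s_2$ inhabit the correct state sets; the genuine content — converting the sequence's extended relation $\ext{s_0}{s_2}$ into the component relations $\ext[_1]{s_0}{s_1}$ and $\ext[_2]{s_1}{s_2}$ — is exactly what Lemmas \ref{lemma:Ext1} and \ref{lemma:Ext2} already provide, so the only remaining delicacy is the (exhaustive and disjoint) case analysis on $\after(s_1)$, which attributes the surviving thread's transition to the right subcommand.
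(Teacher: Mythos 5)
Your proof reproduces the paper's skeleton step for step --- unfold \(\Sud'\), invoke Lemma \ref{lemma:Cter} to obtain the intermediate state \(s_1\), convert \(\ext{s_0}{s_2}\) into \(\ext[_1]{s_0}{s_1}\) and \(\ext[_2]{s_1}{s_2}\) via Lemmas \ref{lemma:Ext1} and \ref{lemma:Ext2}, then split on \(s\in\after(s_1)\) --- but it stops one step short in each branch, and that step is not routine. Membership in \(\Sud_1\) (resp.\ \(\Sud_2\)) is not only the structural condition you verify: like \(\Se\) and \(\Su\), the set \(\Sud\) attached to a statement collects only pairs \((s,s')\) that are transitions \emph{generated by that statement}. (The display in Definition \ref{def:concrsem} indeed drops the clause ``\((s,s')\in\TR{\lab\stmt,\ell'}\)'' for \(\Sud\), but the surrounding prose --- ``\(\Sud\) collects these transitions'' --- and every place the paper uses \(\Sud_1\), \(\Sud_2\) in its own proofs include it; without it the definition would put no constraint on \(s'\) at all.) So after the case split you still owe the argument that \((s,s')\in\Tr_1\) when \(s\notin\after(s_1)\), and that \((s,s')\in\Tr_2\) when \(s\in\after(s_1)\); a priori you only know \((s,s')\in\Tr=\Tr_1\cup\Tr_2\), and the structural conditions by themselves do not rule out, say, a transition of \(\Tr_2\smallsetminus\Tr_1\) arising in the first branch.

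This attribution of the \emph{transition} (not just of the thread) to the right subcommand is exactly what the paper's proof adds: in the case \(s\notin\after(s_1)\) it applies Proposition \ref{prop:apres1} (a thread firing \((s,s')\) from \(\after(s_0)\smallsetminus\after(s_1)\) was spawned during the execution of \(\li1\cmd_1,\ell_2\), hence sits at a label of \(\cmd_1\) and its transition lies in \(\Tr_1\)); in the case \(s\in\after(s_1)\) it uses Lemma \ref{lemma:F} to get \(\lbl(s)\in\Labs{\li2\cmd_2,\ell_3}\) and then Lemma \ref{lemma:G+} to conclude \((s,s')\in\Tr_2\). Your closing remark that the case analysis on \(\after(s_1)\) ``attributes the surviving thread's transition to the right subcommand'' is precisely where the gap sits: the genealogical case split attributes the \emph{thread}, but the label/transition-set argument via Proposition \ref{prop:apres1} and Lemmas \ref{lemma:F}, \ref{lemma:G+} is still needed to attribute the transition itself. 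Everything else in your proposal matches the paper's proof.
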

\begin{proof}
 Let \((s,s')\in\Sud'\). Then, there exists \(s_0\) and \(s_2\) such that 
\((s_0,s_2)\in\col'\) and 
\((s_2,s)\in \ext{s_0}{s_2}\).
According to Lemma \ref{lemma:Cter}, there exists \(s_1\in\concr{S}_1\) such that
\((s_0,s_1)\in\col_1\) and \((s_1,s_2)\in\col_2\) and \((s_1,s_2)\in \ext[_1]{s_0}{s_1}\).

By Lemma \ref{lemma:Ext1} and Lemma \ref{lemma:Ext2}, \((s_1,s)\in\ext[_1]{s_0}{s_1}\) and \((s_2,s)\in\ext[_2]{s_1}{s_2}\).

Let us consider the case \(s\notin\after(s_1)\).
Because \(s\in\after(s_0)\), then \(s\in\after(s_0)\smallsetminus\after(s_1)\).
Furthermore, given that \((s_0,s_1)\in\col_1\) and \((s_1,s)\in\col_2\), by Proposition \ref{prop:apres1}, \((s,s')\in\Tr_1\).
We conclude that \((s,s')\in\Sud_1\).

Let us consider the case \(s\in\after(s_1)\). Because \(s\in\after(s_0)\smallsetminus\after(s_2)\), \(s\in\after(s_1)\smallsetminus\after(s_2)\).
By Lemma \ref{lemma:F}, \(\lbl(s)\in\Labs{\li2\cmd_2,\ell_2}\). Hence, by Lemma \ref{lemma:G+}, \((s,s')\in\Tr_2\) and therefore, \((s,s')\in\Sud_2\).
\end{proof}

\subsubsection{Proof of Property \ref{fif} of Theorem \ref{theorem:denot}} 
In this section, we consider a command \(\li1 \cifte{\guardp}{\{\li2\cmd_1\}}{\{\li3\cmd_2\}},\ell_4\) and an initial configuration \(\concr{Q}_0=\qcp{_0}\)\\

Let \(\qcp'=\osem{\li1 \cifte{\guardp}{\{\li2\cmd\}}{\{\li3\cmd\}},\ell_4}\qc\).\\
Let \(\qcp{_{{+}}}=\osem{\li1 \cguard{\guardp},\ell_2}\qc\).\\
Let \(\qcp{_1}=\osem{\li2 \cmd_1,\ell_4}\qcp{_{{+}}}\).\\
Let \(\qcp{_{\neg}}=\osem{\li1 \cguard{\guardn},\ell_3}\qc\).\\
Let \(\qcp{_2}=\osem{\li3 \cmd_1,\ell_4}\qcp{_{\neg}}\).\\
Let \(\Tr=\TR{\li1 \cifte{\guardp}{\{\li2\cmd\}}{\{\li3\cmd\}},\ell_4}\).

\begin{lemma}\label{lemma:Gif}
 \(\TR{\li1 \cifte{\guardp}{\{\li2\cmd\}}{\{\li3\cmd\}},\ell_4} =
     \TR{\li1 \cguard{\guardp},\ell_2}
 \cup\TR{\li2 \cmd_1,\ell_4}
 \cup\TR{\li1 \cguard{\guardn},\ell_3}
 \cup\TR{\li3 \cmd_1,\ell_4}\). 
\end{lemma}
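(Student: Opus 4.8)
The plan is to prove the two inclusions separately by a direct inversion on the rules of Fig.~\ref{figrules}, tracking which rules can derive a judgment whose conclusion mentions the statement $\li1\cifte{\guardp}{\{\li2\cmd_1\}}{\{\li3\cmd_2\}},\ell_4$. This is the conditional analogue of Lemma~\ref{lemma:G}, and since $\TR{\lab\stmt,\ell'}$ is defined purely by derivability of $\lab\stmt,\ell'\Vdash s\rightarrow s'$, the whole argument reduces to bookkeeping over the finite rule set, with no use of the $\after$/$\desce$ machinery.

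For the inclusion $\supseteq$ I would treat the four sets in turn. If $(s,s')\in\TR{\li2\cmd_1,\ell_4}$ then $\li2\cmd_1,\ell_4\Vdash s\rightarrow s'$ is derivable, and a single application of rule ``then body'' yields $\li1\cifte{\guardp}{\{\li2\cmd_1\}}{\{\li3\cmd_2\}},\ell_4\Vdash s\rightarrow s'$; symmetrically, rule ``else body'' absorbs $\TR{\li3\cmd_2,\ell_4}$. For $(s,s')\in\TR{\li1\cguard(\guardp),\ell_2}$, the derivation of $\li1\cguard(\guardp),\ell_2\Vdash s\rightarrow s'$ ends either with rule ``schedule''—in which case the very same instance of ``schedule'' applies to the conditional, as its conclusion is parametric in the statement—or with rule ``parallel'' applied to a local premise $\li1\cguard(\guardp),\ell_2\vdash(\ell_1,\m)\rightlocarrow(\ell_2,\m)$. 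In the latter case rule ``then'' lifts this local premise to $\li1\cifte{\guardp}{\{\li2\cmd_1\}}{\{\li3\cmd_2\}},\ell_4\vdash(\ell_1,\m)\rightlocarrow(\ell_2,\m)$, and reapplying rule ``parallel'' reconstructs the global transition for the conditional. The set $\TR{\li1\cguard(\guardn),\ell_3}$ is handled identically through rule ``else''.

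For the inclusion $\subseteq$ I would invert on the last rule of a derivation of $\li1\cifte{\guardp}{\{\li2\cmd_1\}}{\{\li3\cmd_2\}},\ell_4\Vdash s\rightarrow s'$. Only four global rules admit a conclusion unifiable with a conditional: ``schedule'', ``then body'', ``else body'', and ``parallel''. A ``schedule'' step yields a transition of $\Sche$, which is contained in every $\TR$ set and in particular in $\TR{\li1\cguard(\guardp),\ell_2}$; ``then body'' and ``else body'' have premises $\li2\cmd_1,\ell_4\Vdash s\rightarrow s'$ and $\li3\cmd_2,\ell_4\Vdash s\rightarrow s'$, placing $(s,s')$ in $\TR{\li2\cmd_1,\ell_4}$ or $\TR{\li3\cmd_2,\ell_4}$; and if ``parallel'' is used, its local premise $\li1\cifte{\guardp}{\{\li2\cmd_1\}}{\{\li3\cmd_2\}},\ell_4\vdash(\ell,\m)\rightlocarrow(\ell',\m')$ must itself be derived, where the only local rules with a conditional in their conclusion are ``then'' and ``else''. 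Inverting these exposes a local premise $\li1\cguard(\guardp),\ell_2\vdash\cdots$ or $\li1\cguard(\guardn),\ell_3\vdash\cdots$, to which ``parallel'' reassociates, giving membership in $\TR{\li1\cguard(\guardp),\ell_2}$ or $\TR{\li1\cguard(\guardn),\ell_3}$.

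The one point that demands care is the exhaustiveness of this inversion: I must confirm that no remaining rule of Fig.~\ref{figrules} has a conclusion matching a conditional—the ``spawn'', ``create'', ``while body'', ``sequence'', and ``child'' rules each fix a different outermost statement constructor, so they are excluded—and that schedule transitions are treated uniformly on both sides, which holds precisely because $\Sche$ sits inside each of the four component transition sets. Once this finite case check is settled, both inclusions are immediate and the equality follows.
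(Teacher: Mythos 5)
Your proof is correct. The paper states Lemma~\ref{lemma:Gif} without any proof at all (just as it does for the analogous sequence decomposition, Lemma~\ref{lemma:G}), treating it as immediate from the operational rules; your inversion argument---checking that the only global rules whose conclusion can match a conditional are ``parallel'', ``schedule'', ``then body'' and ``else body'', that the only local rules are ``then'' and ``else'', and that schedule transitions lie in every set \(\TR{\lab\stmt,\ell'}\) so both sides absorb them---is exactly the routine bookkeeping the paper leaves implicit.
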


\begin{lemma}\label{lemma:Cif}
 If \((s_0,s)\in\col\) and \(s_0\in\concr{S}_0\), then, one of the three folowing properties hold:
 \begin{enumerate}
  \item \(s\in\cinter{\concr{A_0}}(\{s_0\})\), 
  \item or there exists \(s_1\in\concr{S_{{+}}}\) such that \((s_1,s)\in\col_{1}\cap\ext[_{{+}}]{s_0}{s_1}\)
  \item or there exists \(s_1\in\concr{S_{{\neg}}}\) such that \((s_1,s)\in\col_{2}\cap\ext[_{{\neg}}]{s_0}{s_1}\)
 \end{enumerate}
\end{lemma}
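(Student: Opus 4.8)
The plan is to reproduce the structure of the proof of Lemma~\ref{lemma:Cbis}, prefixed by a short label analysis that decides which of the two guards the current thread fires first. By Lemma~\ref{lemma:Gif}, $\Tr$ is the union of the four transition sets $\TR{\li1\cguard(\guardp),\ell_2}$, $\TR{\li2\cmd_1,\ell_4}$, $\TR{\li1\cguard(\guardn),\ell_3}$ and $\TR{\li3\cmd_2,\ell_4}$, each of which contains $\Sche$. The first case to dispatch is $(s_0,s)\in(\restrict{\concr{A_0}}{\compl{\after(s_0)}}\cup\Sche)^{\star}$: here the definition of $\col$ gives $\thread(s_0)=\thread(s)$, so $s\in\cinter{\concr{A_0}}(\{s_0\})$ and property~1 holds.

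In the remaining case the path witnessing $(s_0,s)\in\col$ uses at least one transition of $\restrict{\concr{G_0}}{\after(s_0)}\cap\Tr\smallsetminus\Sche$; let $(s_a,s_1)$ be the first such transition, so that the prefix $(s_0,s_a)$ lies in $(\restrict{\concr{A_0}}{\compl{\after(s_0)}}\cup\Sche)^{\star}$. Since $s_a\in\after(s_0)$, Lemma~\ref{lemma:E'} gives $\thread(s_a)=\thread(s_0)$, and then Lemma~\ref{lemma:B} gives $\lbl(s_a)=\ell_1$. As the labels of the statement are pairwise distinct, $\ell_1\notin\Labs{\li2\cmd_1,\ell_4}\cup\Labs{\li3\cmd_2,\ell_4}$, so by Lemma~\ref{lemma:A} the transition $(s_a,s_1)$ cannot come from either branch and must lie in one of the two guard sets; Lemma~\ref{lemma:A'} then yields $\lbl(s_1)=\ell_2$ in the positive case and $\lbl(s_1)=\ell_3$ in the negative case, selecting property~2 or property~3 respectively.

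I would treat the positive-guard subcase and obtain the negative one symmetrically. Folding each schedule step of the prefix into $\restrict{\concr{G_0}}{\after(s_0)}\cap\TR{\li1\cguard(\guardp),\ell_2}$ or into $\restrict{\concr{A_0}}{\compl{\after(s_0)}}$ according to whether its source lies in $\after(s_0)$, and using $\thread(s_0)=\thread(s_1)$ together with $\lbl(s_0)=\ell_1$, shows $(s_0,s_1)\in\col_{{+}}$, whence $s_1\in\concr{S_{{+}}}$. It then remains to prove that the suffix $(s_1,s)$ lies in $\col_1\cap\ext[_{{+}}]{s_0}{s_1}$, and this is essentially the computation already performed for the sequence in Lemma~\ref{lemma:Cbis}, made lighter by the fact that the guard is a basic statement spawning no thread (so its $\Su$ and $\Sud$ components vanish): Proposition~\ref{prop:apres1}, applied to $\li1\cguard(\guardp),\ell_2$, relocates the descendants' transitions from $\after(s_0)$ to $\after(s_1)$, and Lemma~\ref{lemma:F} together with the distinctness of labels confines every current-thread transition after $\ell_2$ to $\TR{\li2\cmd_1,\ell_4}$. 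I expect this last point to be the main obstacle: one must rule out, via the disjointness $\Labs{\li2\cmd_1,\ell_4}\cap\Labs{\li3\cmd_2,\ell_4}=\{\ell_4\}$, that any transition of the else-branch $\li3\cmd_2,\ell_4$ is ever taken by the executing thread or one of its descendants once the positive guard has been crossed, while keeping careful track of the $\restrict{\concr{G_0}}{\after(\cdot)}$ components so that the suffix indeed falls inside $\ext[_{{+}}]{s_0}{s_1}$.
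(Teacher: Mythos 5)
Your proposal is correct and follows essentially the same route as the paper's own proof: the same dichotomy (an interference-and-schedule-only path giving property 1, versus a first non-schedule transition which Lemmas \ref{lemma:E'}, \ref{lemma:B}, \ref{lemma:A} and \ref{lemma:Gif} force to be one of the two guards), then \((s_0,s_1)\in\col_{+}\) with \(s_1\in\concr{S}_{+}\), and finally an induction on the suffix using Proposition \ref{prop:apres1} applied to the guard, the vanishing of \(\Sud\) for basic statements (Claim \ref{claim:SudV}), and Lemma \ref{lemma:F} together with pairwise-distinctness of labels to confine the suffix to \((\restrict{\concr{G_0}}{\after(s_1)}\cap\TR{\li2\cmd_1,\ell_4})\cup\restrict{\concr{A_0}}{\compl{\after(s_0)}}\), hence to \(\col_1\cap\ext[_{+}]{s_0}{s_1}\). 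The only difference is presentational: the paper carries out this suffix induction directly rather than citing the analogous computation of Lemma \ref{lemma:Cbis}, but the ingredients you name are exactly the ones it uses.
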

\begin{proof}
Let us consider the case \(s\notin\cinter{\concr{A_0}}(\{s_0\})\).
Because \((s_0,s)\in\col\), \((s_0,s)\in[(\restrict{\concr{G_0}}{\after(s_0)}\cap\Tr)\cup\restrict{\concr{A_0}}{\compl{\after(s_0)}}]^{\star}\).

Therefore, there exists \(s'_0\) and \(s_1\) such that \((s_0,s'_0)\in( \restrict{\concr{A_0}}{\compl{\after(s_0)}} \cup\Sche)^{\star} \), \((s'_0,s_1)\in \restrict{\concr{G_0}}{\after(s_0)}\cap\Tr\) and \((s_1,s)\in [(\restrict{\concr{G_0}}{\after(s_0)}\cap\Tr)\cup\restrict{\concr{A_0}}{\compl{\after(s_0)}}]^{\star} \).
Because \((s'_0,s_1)\in \restrict{\concr{G_0}}{\after(s_0)}\cap\Tr\), \(s\in\after(s_0)\). By Lemma \ref{lemma:E'}, \(\thread(s_0)=\thread(s'_0)\). 
By Lemma \ref{lemma:B}, \(\lbl(s_0)=\lbl(s'_0)=\ell_1\). Therefore, due to Lemmas \ref{lemma:A} and \ref{lemma:Gif}, \((s'_0,s_1)\in \TR{\li1 \cguard{\guardp},\ell_2}\cup\TR{\li1 \cguard{\guardn},\ell_3}\).
Either \((s'_0,s_1)\in \TR{\li1 \cguard{\guardp},\ell_2}\) or \((s'_0,s_1)\in \TR{\li1 \cguard{\guardn},\ell_3}\).

In the first case, by Lemma \ref{lemma:A}, \( \thread(s_0)=\thread(s_1)\) and \(\lbl(s_1)=\ell_2\).
Therefore, \((s_0,s_1)\in\col_{{+}}\) and \(s_1\in\concr{S_{{+}}}\).
There exists a sequence \(s_2,s_n\) such that \(s_n=s\) and \(\forall k\in\{1,\ldots n-1\}\), \((s_k,s_{k+1})\in(\restrict{\concr{G_0}}{\after(s_0)}\cap\Tr)\cup\restrict{\concr{A_0}}{\compl{\after(s_0)}} \).

Let us prove by induction on \(k\), that \(\forall k\in\{1,\ldots n\}\), \((s_k,s_{k+1})\in(\restrict{\concr{G_0}}{\after(s_1)}\cap\TR{\li2\cmd,\ell_4})\cup\restrict{\concr{A_0}}{\compl{\after(s_0)}} \).
Let us consider the case \((s_k,s_{k+1})\in \restrict{\concr{G_0}}{\after(s_0)}\cap\Tr\). By induction hypothesis \((s_1,s_k)\in [(\restrict{\concr{G_0}}{\after(s_1)}\cap\TR{\li2\cmd,\ell_4})\cup\restrict{\concr{A_0}}{\compl{\after(s_0)}}]^{\star} \).
Hence, by Proposition \ref{prop:apres1}, either \((s_k,s_{k+1})\in\TR{\li1\cguard(\guardp),\ell_2}\) or \(s_k\in\after(s_1)\).
If  \((s_k,s_{k+1})\in\TR{\li1\cguard(\guardp),\ell_2}\) and \(s_k\in\after(s_1)\) then \((s_k,s_{k+1})\in\Sud_{{+}}\). This is contradictory with Claim \ref{claim:SudV}.
Therefore \(s_k\in\after(s_1)\).
By Lemma \ref{lemma:F}, \(\lbl(s_k)\in\Labs{\li2\cmd_1,\ell_4}\). Hence, by Lemmas \ref{lemma:A} and \ref{lemma:Gif}, \((s_1,s_k)\in\TR{\li2\cmd,\ell_4}\).

We conclude that \((s_1,s)\in[ (\restrict{\concr{G_0}}{\after(s_1)}\cap\TR{\li2\cmd,\ell_4})\cup\restrict{\concr{A_0}}{\compl{\after(s_0)}}]^{\ast}\subset\col_1\cap \ext[_{{+}}]{s_0}{s_1}\).

The second case is similar.
\end{proof}

\begin{claim}\label{claim:ifS}
 \(\concr{S'}\subset\concr{S}_1\cup\concr{S}_2\)
\end{claim}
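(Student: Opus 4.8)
The plan is to reduce immediately to Lemma~\ref{lemma:Cif}, which was established for exactly this configuration, and then dispatch its three cases by tracking the label of the reached state $s$.

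First I would unfold the definition of $\concr{S'}$. By Definition~\ref{def:concrsem} applied to $\li1\cifte{\guardp}{\{\li2\cmd_1\}}{\{\li3\cmd_2\}},\ell_4$, the state component is $\concr{S'}=\{s\mid s\in\col\langle\concr{S}_0\rangle\wedge\lbl(s)=\ell_4\}$. So for any $s\in\concr{S'}$ there is an $s_0\in\concr{S}_0$ with $(s_0,s)\in\col$ and $\lbl(s)=\ell_4$; moreover the defining formula of $\col$ forces $\lbl(s_0)=\ell_1$.

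Next I would apply Lemma~\ref{lemma:Cif} to $(s_0,s)\in\col$, which yields three cases. In the first case $s\in\cinter{\concr{A_0}}(\{s_0\})$; the definition of $\cinter{}$ requires $\thread(s_0)=\thread(s)$, so Lemma~\ref{lemma:B} gives $\lbl(s)=\lbl(s_0)=\ell_1$, contradicting $\lbl(s)=\ell_4$. Hence this case is vacuous. In the second case there is $s_1\in\concr{S_{{+}}}$ with $(s_1,s)\in\col_1$ (the $\ext[_{{+}}]{s_0}{s_1}$ component is not needed here); since $s_1\in\concr{S_{{+}}}$ we get $s\in\col_1\langle\concr{S_{{+}}}\rangle$, and together with $\lbl(s)=\ell_4$ this is exactly the defining condition of the state component $\concr{S}_1$ of $\qcp{_1}=\osem{\li2\cmd_1,\ell_4}\qcp{_{{+}}}$, so $s\in\concr{S}_1$. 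The third case is symmetric, using $\concr{S_{{\neg}}}$, $\col_2$ and the definition of $\concr{S}_2$, and gives $s\in\concr{S}_2$. Collecting the cases yields $\concr{S'}\subset\concr{S}_1\cup\concr{S}_2$.

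The only delicate point, and the step I would treat most carefully, is discarding the first case: it relies on both that $\col$ pins $\lbl(s_0)=\ell_1$ and that pure interference never alters the label of the current thread (Lemma~\ref{lemma:B}), so that no state of label $\ell_4$ can be produced without a genuine transition of the $\cif$ statement. Once this is noted, cases two and three are just rereadings of the definitions of $\concr{S}_1$ and $\concr{S}_2$, so the bulk of the work has already been done in Lemma~\ref{lemma:Cif}.
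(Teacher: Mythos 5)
Your proof is correct and follows essentially the same route as the paper's: invoke Lemma~\ref{lemma:Cif}, eliminate the interference case via Lemma~\ref{lemma:B} because $\lbl(s)=\ell_4\neq\ell_1$, and read off $s\in\concr{S}_1$ (resp.\ $s\in\concr{S}_2$) from the definition of the state component of $\osem{\cdot}$. The only difference is cosmetic ordering — the paper rules out $s\in\cinter{\concr{A_0}}\{s_0\}$ before citing Lemma~\ref{lemma:Cif}, whereas you discard that case afterwards — which changes nothing of substance.
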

\begin{proof}
 Let \(s\in\concr{S'}\). Therefore there exists \(s_0\in\concr{S}_0\) such that \((s_0,s)\in\col\) and \(\lbl(s)=\ell_4\neq\ell_1\). Hence, due to Lemma \ref{lemma:B}, \(s\notin\cinter{\concr{A_0}}\{s_0\}\).
 
 According to Lemma \ref{lemma:Cif}, there exists \(s_1\) such that either \begin{inparaenum}[(1)]\item
 \(s_1\in\concr{S_{{+}}}\) and \((s_1,s)\in\col_{1}\cap\ext[_{{+}}]{s_0}{s_1}\),
 \item or, \(s_1\in\concr{S_{{\neg}}}\) and \((s_1,s)\in\col_{2}\cap\ext[_{{\neg}}]{s_0}{s_1}\).\end{inparaenum}

In the first case, by definition, \(s\in\concr{S_1}\) and in the second case \(s\in\concr{S_2}\)
\end{proof}

\begin{claim}\label{claim:ifSe}\(\Se\subset\Se_{{+}}\cup\Se_1\cup\Se_{{\neg}}\cup\Se_2\).
\end{claim}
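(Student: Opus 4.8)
The plan is to follow the same scheme as the proof of Claim~\ref{lemma:seq:par} (the $\Se$-part of the sequence case), using the if-specific decomposition of $\col$ provided by Lemma~\ref{lemma:Cif} in place of Lemma~\ref{lemma:Cbis}. First I would fix an arbitrary $(s,s')\in\Se$; by the definition of $\Se$ this means $(s,s')\in\Tr$ and there is some $s_0\in\concr{S}_0$ with $(s_0,s)\in\col$. Applying Lemma~\ref{lemma:Cif} to $(s_0,s)$ then yields three cases, which I would treat in turn, the goal in each being to pin down both which of the four sub-statement transition sets of Lemma~\ref{lemma:Gif} contains $(s,s')$ and from which reachable-state set $s$ is reached.

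In the first case, $s\in\cinter{\concr{A_0}}(\{s_0\})$. Since $\cinter{}$ preserves the current thread, Lemma~\ref{lemma:B} gives $\lbl(s)=\lbl(s_0)=\ell_1$. I would then observe that every transition used in the $\cinter{}$-path from $s_0$ to $s$ lies in the relation defining $\col_{{+}}$ (and likewise $\col_{{\neg}}$): the assumed interferences $\restrict{\concr{A_0}}{\compl{\after(s_0)}}$ appear verbatim, while a schedule step $(t,t')$ with $t\in\after(s_0)$ lies in $\restrict{\concr{G_0}}{\after(s_0)}\cap\TR{\li1 \cguard{\guardp},\ell_2}$ and with $t\notin\after(s_0)$ in $\restrict{\concr{A_0}}{\compl{\after(s_0)}}$, using $\Sche\subset\concr{G_0},\concr{A_0}$. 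Together with $\thread(s_0)=\thread(s)$ and $\lbl(s_0)=\ell_1$, this gives $s\in\col_{{+}}\langle\concr{S}_0\rangle\cap\col_{{\neg}}\langle\concr{S}_0\rangle$. Because $\lbl(s)=\ell_1$ and all labels of the statement are pairwise distinct, Lemma~\ref{lemma:Abis} rules out $(s,s')\in\TR{\li2\cmd_1,\ell_4}\smallsetminus\Sche$ and $(s,s')\in\TR{\li3\cmd_2,\ell_4}\smallsetminus\Sche$, so by Lemma~\ref{lemma:Gif} the transition is a guard step, lying in $\TR{\li1 \cguard{\guardp},\ell_2}$ or in $\TR{\li1 \cguard{\guardn},\ell_3}$ (schedule steps lie in both). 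Hence $(s,s')\in\Se_{{+}}$ or $(s,s')\in\Se_{{\neg}}$.

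In the second case there is $s_1\in\concr{S_{{+}}}$ with $(s_1,s)\in\col_1$, whence $s\in\col_1\langle\concr{S_{{+}}}\rangle$. By Lemma~\ref{lemma:R}, $\lbl(s)\in\Labs{\li2\cmd_1,\ell_4}$; for non-schedule $(s,s')$ Lemma~\ref{lemma:A} moreover forces $\lbl(s)\neq\ell_4$. Since all labels are pairwise distinct, $\lbl(s)$ then lies outside $\Labs{\cdot}\smallsetminus\{\cdot\}$ for each of the three remaining summands of $\Tr$, so Lemma~\ref{lemma:Abis} excludes their non-schedule transitions; thus $(s,s')\in\TR{\li2\cmd_1,\ell_4}$ (schedule steps lying in this set trivially), and therefore $(s,s')\in\Se_1$. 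The third case is symmetric, with $\li3\cmd_2,\ell_4$, $\col_2$ and $\concr{S_{{\neg}}}$ replacing $\li2\cmd_1,\ell_4$, $\col_1$ and $\concr{S_{{+}}}$, giving $(s,s')\in\Se_2$. Collecting the three cases establishes the inclusion.

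I expect the main obstacle to be the bookkeeping of labels: deducing, from $\lbl(s)$ alone together with the pairwise-distinctness of the statement's labels, exactly which of the four transition sets of Lemma~\ref{lemma:Gif} can contain $(s,s')$ (the shared return labels $\ell_2$ and $\ell_4$ requiring the careful use of $\Labs{\cdot}\smallsetminus\{\cdot\}$ in Lemma~\ref{lemma:Abis}), and in particular verifying in the first case that pure-interference reachability $\cinter{\concr{A_0}}(\{s_0\})$ is already subsumed by the collecting relations $\col_{{+}}$ and $\col_{{\neg}}$ of the two basic guard statements.
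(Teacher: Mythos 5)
Your proof is correct and follows essentially the same route as the paper's: apply Lemma \ref{lemma:Cif} to split into the pure-interference case and the two branch cases, then use label bookkeeping (Lemmas \ref{lemma:B}, \ref{lemma:A}/\ref{lemma:Abis} and \ref{lemma:R} together with the decomposition of Lemma \ref{lemma:Gif}) to place \((s,s')\) in the right transition set and conclude membership in the corresponding \(\Se_i\). The only notable difference is that you spell out the inclusion \(\cinter{\concr{A_0}}(\{s_0\})\subset\col_{+}\langle\concr{S}_0\rangle\cap\col_{\neg}\langle\concr{S}_0\rangle\) (via \(\Sche\subset\concr{G_0}\cap\concr{A_0}\)), a step the paper leaves implicit in this claim and states explicitly only in the analogous while-loop claim.
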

\begin{proof}
 Let \((s,s') \in \Se \).
 Then, there exists \(s_0\in\concr{S_0})\) such that \((s_0,s)\in\col\).
 
 Let us consider the case \(s\in\cinter{\concr{A_0}}(\{s_0\})\). By Lemma \ref{lemma:B}, \(\lbl(s)=\ell_1\).
 Hence, by Lemmas \ref{lemma:A} and \ref{lemma:Gif}, \((s,s') \in \TR{\li1 \cguard{\guardp},\ell_2}\cup\TR{\li1 \cguard{\guardn},\ell_3}\).
 Hence, \((s,s')\in \Se_{{+}}\cup\Se_{{\neg}}\).
 
 According to Lemma \ref{lemma:Cif}, if \(s\notin\cinter{\concr{A_0}}(\{s_0\})\), then, there exists \(s_1\) such that either \begin{inparaenum}[(1)]\item
 \(s_1\in\concr{S_{{+}}}\) and \((s_1,s)\in\col_{1}\cap\ext[_{{+}}]{s_0}{s_1}\),
 \item or, \(s_1\in\concr{S_{{\neg}}}\) and \((s_1,s)\in\col_{2}\cap\ext[_{{\neg}}]{s_0}{s_1}\).\end{inparaenum}
 
 In the first case, by Lemma \ref{lemma:F}, \(\lbl(s_k)\in\Labs{\li2\cmd_1,\ell_4}\). Hence, by Lemmas \ref{lemma:A} and \ref{lemma:Gif}, \((s_1,s_k)\in\TR{\li2\cmd,\ell_4}\) and therefore \((s,s')\in\Se_1\).
 
In the second case, we similarly conclude that  \((s,s')\in\Se_2\).
\end{proof}

\begin{claim}\label{claim:ifSu}\(\Su\subset\Su_1\cup\Su_2\).
\end{claim}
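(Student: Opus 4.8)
The plan is to mirror the proof of Claim~\ref{lemma:seq:par2}, but to exploit that here the first sub-statement is the \emph{basic} guard $\li1\cguard{\guardp},\ell_2$ (resp. $\li1\cguard{\guardn},\ell_3$), so that the summand that played the role of $\Sud_1$ in the sequence case collapses to $\emptyset$. Write $\Tr_1=\TR{\li2\cmd_1,\ell_4}$ and $\Tr_{{+}}=\TR{\li1\cguard{\guardp},\ell_2}$ for brevity. Take $(s,s')\in\Su$. By definition of $\Su$ there are $s_0\in\concr{S_0}$ and an intermediate state $s_2$ with $(s_0,s_2)\in\col$, $(s_2,s)\in\Sche$, $(s,s')\in\Tr$ and $s\in\after(s_0)$. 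I would then apply Lemma~\ref{lemma:Cif} to $(s_0,s_2)$, splitting into its three cases.

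The first case, $s_2\in\cinter{\concr{A_0}}(\{s_0\})$, I expect to rule out entirely. Unfolding $\cinter{}$ gives $(s_0,s_2)\in(\restrict{\concr{A_0}}{\compl{\after(s_0)}}\cup\Sche)^{\star}$ with $\thread(s_0)=\thread(s_2)$; composing with $(s_2,s)\in\Sche$ keeps $(s_0,s)$ in $(\restrict{\concr{A_0}}{\compl{\after(s_0)}}\cup\Sche)^{\star}$. Since $s\in\after(s_0)$, Lemma~\ref{lemma:E'} (with $T=\Sche$, which preserves genealogies) forces $\thread(s)=\thread(s_0)=\thread(s_2)$, contradicting $(s_2,s)\in\Sche$. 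Hence this case does not arise.

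The remaining two cases being symmetric, consider the second: there is $s_1\in\concr{S_{{+}}}$ with $(s_0,s_1)\in\col_{{+}}$ and $(s_1,s_2)\in\col_1\cap\ext[_{{+}}]{s_0}{s_1}$. The crux is to show $s\in\after(s_1)$. First I would absorb the trailing schedule: since $\Sche\subset\concr{G_0}$, $\Sche\subset\concr{A_0}$ and $\Sche\subset\Tr_{{+}}$, every element of $\Sche$ lies in the generating relation of $\ext[_{{+}}]{s_0}{s_1}$ (in $\restrict{\concr{G_0}}{\after(s_0)}\cap\Tr_{{+}}$ when its source is in $\after(s_0)$, in $\restrict{\concr{A_0}}{\compl{\after(s_0)}}$ otherwise), so $\ext[_{{+}}]{s_0}{s_1};\Sche=\ext[_{{+}}]{s_0}{s_1}$ and therefore $(s_1,s)\in\ext[_{{+}}]{s_0}{s_1}$. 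Now $\li1\cguard{\guardp},\ell_2$ is basic, so the computation carried out in the proof of Claim~\ref{claim:SudV} (taking $\sinit=s_0$, $\send=s_1$) shows that $s\in\after(s_0)$ together with $(s_1,s)\in\ext[_{{+}}]{s_0}{s_1}$ entails $s\in\after(s_1)$; equivalently, the emptiness of $\Sud_{{+}}$ leaves no room for the case $s\notin\after(s_1)$.

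Finally, with $s\in\after(s_1)$ and $(s_1,s)\in\col_1;\Sche$ (from $(s_1,s_2)\in\col_1$ and $(s_2,s)\in\Sche$), I would determine the transition's type exactly as in Claim~\ref{claim:ifSe}: Lemma~\ref{lemma:F} gives $\lbl(s)\in\Labs{\li2\cmd_1,\ell_4}$, and then Lemmas~\ref{lemma:A} and~\ref{lemma:Gif} together with the pairwise distinctness of labels yield $(s,s')\in\Tr_1$ (the schedule sub-case being immediate as $\Sche\subset\Tr_1$). Since moreover $s_1\in\concr{S_{{+}}}$, this is precisely the membership condition for $\Su_1$, so $(s,s')\in\Su_1$; the third case yields $(s,s')\in\Su_2$ symmetrically, whence $\Su\subset\Su_1\cup\Su_2$. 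The main obstacle is the middle step, establishing $s\in\after(s_1)$: one must recognise that the would-be $\Sud_{{+}}$ contribution vanishes because the guard is basic, which is exactly what removes the extra $\Sud$ summand present in the sequence analogue.
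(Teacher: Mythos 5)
Your proof is correct, and its skeleton coincides with the paper's: decompose via Lemma \ref{lemma:Cif}, kill the interference-only case by the thread-identity contradiction from Lemma \ref{lemma:E'}, and conclude \((s,s')\in\TR{\li2\cmd_1,\ell_4}\) (resp.\ \(\TR{\li3\cmd_2,\ell_4}\)) via Lemmas \ref{lemma:F}, \ref{lemma:Gif} and \ref{lemma:A}. The one place where you genuinely diverge is the middle step \(s\in\after(s_1)\). The paper gets it in one line from Lemma \ref{lemma:H}, its general genealogy-bookkeeping lemma; you instead absorb the trailing schedule into the extension relation (the observation \(\ext[_{{+}}]{s_0}{s_1};\Sche\subset\ext[_{{+}}]{s_0}{s_1}\), valid because \(\Sche\) is contained in \(\concr{G_0}\), in \(\concr{A_0}\) and in \(\TR{\li1\cguard(\guardp),\ell_2}\)) and then argue that \(s\notin\after(s_1)\) would make \((s,s')\) a member of \(\Sud_{{+}}\), contradicting Claim \ref{claim:SudV}. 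Both arguments are sound and ultimately rest on the same descendant computation (Lemma \ref{lemma:Ndescew}); also, both rely on the fact, established in the proof of Lemma \ref{lemma:Cif} but absent from its statement, that the witness \(s_1\) satisfies \((s_0,s_1)\in\col_{{+}}\) for the given \(s_0\) (the paper needs it to meet the hypotheses of Lemma \ref{lemma:H}, you need it to meet the membership condition of \(\Sud_{{+}}\)). What your route buys is a conceptual explanation of the contrast with the sequence case (Claim \ref{lemma:seq:par2}): the extra \(\Sud_1\) summand there is exactly the \(s\notin\after(s_1)\) branch, and it vanishes here precisely because the first sub-statement is a basic guard. The paper's route is shorter and does not need the schedule-absorption observation, but it hides this structural point.
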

\begin{proof}
 Let \((s,s')\in\Su\).
 Therefore, there exists \(s_0\in\concr{S_0}\) and \(s_2\) such that \((s_0,s_2)\in\col\) and \((s_2,s)\in\Sche\) and \(s\in\after(s_0)\). Notice that \(\thread(s_0)=\thread(s_2)\neq\thread(s)\).
 
 Assume by contradiction that \(s_2\in\cinter(\{s_0\})\). Hence, due to Lema \ref{lemma:E'}, \(\thread(s)=\thread(s_0)\). This is contradictory.
 
 Therefore, according to Lemma \ref{lemma:Cif}, there exists \(s_1\) such that either \begin{inparaenum}[(1)]\item
 \(s_1\in\concr{S_{{+}}}\) and \((s_1,s)\in\col_{1}\cap\ext[_{{+}}]{s_0}{s_1}\),
 \item or, \(s_1\in\concr{S_{{\neg}}}\) and \((s_1,s)\in\col_{2}\cap\ext[_{{\neg}}]{s_0}{s_1}\).\end{inparaenum}
  In the two cases, by Lemma \ref{lemma:H}, \(s\in\after(s_1)\).
 
 In the first case, by Lemma \ref{lemma:F}, \(\lbl(s)\in\Labs{\li2\cmd_1,\ell_4}\) and therefore, by Lemmas \ref{lemma:Gif} and \ref{lemma:A}, \((s,s')\in\TR{\li2\cmd_1,\ell_4}\).
 Hence, \((s,s')\in\Su_1\)
 
 In the second case, we similarly conclude that  \((s,s')\in\Su_2\).
 \end{proof}
 
\begin{claim}\label{claim:ifSud}\(\Sud\subset\Sud_1\cup\Sud_2\).
\end{claim}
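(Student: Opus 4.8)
The plan is to follow the template of the sequence case, Claim~\ref{lemma:seq:sub}, with the if-specific decomposition Lemma~\ref{lemma:Cif} playing the role of Lemma~\ref{lemma:Cter}, and reusing the after-relation bookkeeping already done in Claim~\ref{claim:ifSu}. First I would unfold the definition of $\Sud$: given $(s,s')\in\Sud$ there are $s_0\in\concr{S}_0$ and $s_2\in\concr{S'}$ with $(s_0,s_2)\in\col$, $(s_2,s)\in\ext{s_0}{s_2}$ and $s\in\after(s_0)\smallsetminus\after(s_2)$. Since $s_2\in\concr{S'}$ forces $\lbl(s_2)=\ell_4\neq\ell_1$, Lemma~\ref{lemma:B} rules out $s_2\in\cinter{\concr{A_0}}(\{s_0\})$, so the first alternative of Lemma~\ref{lemma:Cif} cannot occur and we land in one of its two symmetric branches. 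I treat the $\guardp$ branch: there is an intermediate state $s_1\in\concr{S_{{+}}}$ with $(s_0,s_1)\in\col_{{+}}$ (as established inside the proof of Lemma~\ref{lemma:Cif}) and $(s_1,s_2)\in\col_1\cap\ext[_{{+}}]{s_0}{s_1}$; moreover $\lbl(s_2)=\ell_4$ together with $(s_1,s_2)\in\col_1$ gives $s_2\in\concr{S}_1$.

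The goal then reduces to verifying the four defining conditions of $\Sud_1$, taken with respect to the configuration $\qcp{_{{+}}}$: namely $s_1\in\concr{S_{{+}}}$, $s_2\in\concr{S}_1$, $(s_1,s_2)\in\col_1$ (all three already in hand), and the two substantive points $s\in\after(s_1)\smallsetminus\after(s_2)$ and $(s_2,s)\in\ext[_1]{s_1}{s_2}$. For the after-membership I would reuse the argument of Claim~\ref{claim:ifSu}: from $(s_0,s_1)\in\col_{{+}}$ with $\thread(s_0)=\thread(s_1)$ and the path $(s_1,s)\in\ext[_{{+}}]{s_0}{s_1};\ext{s_0}{s_2}$, Lemma~\ref{lemma:H} yields $s\in\after(s_1)$; combined with $s\notin\after(s_2)$ this gives $s\in\after(s_1)\smallsetminus\after(s_2)$.

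The main obstacle is the last condition, $(s_2,s)\in\ext[_1]{s_1}{s_2}$: I must rewrite the if-level extension $\ext{s_0}{s_2}$ as the $\li2\cmd_1,\ell_4$-level extension based at $(s_1,s_2)$. This is exactly what Lemmas~\ref{lemma:Ext1}--\ref{lemma:Ext2} accomplish in the sequence proof, but here there is no composition, only a guard prefix, so I would argue directly with Proposition~\ref{prop:apres1} applied to the statement $\li1\cguard(\guardp),\ell_2$ and $(s_0,s_1)\in\col_{{+}}$. Along the path realizing $(s_2,s)\in\ext{s_0}{s_2}$, expand $\Tr$ via Lemma~\ref{lemma:Gif}; every transition whose source lies in $\after(s_0)$ is, by Proposition~\ref{prop:apres1}, either a $\TR{\li1\cguard(\guardp),\ell_2}$-transition or has source in $\after(s_1)$. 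A guard transition with source in $\after(s_1)$ would be a member of $\Sud_{{+}}$, which is empty by Claim~\ref{claim:SudV}; hence transitions with source in $\after(s_1)$ are never guard transitions, and by the label confinement (Lemmas~\ref{lemma:A}, \ref{lemma:A'} and pairwise-distinctness of the if-labels, via Lemma~\ref{lemma:Gif}) such a transition in $\Tr$ must lie in $\TR{\li2\cmd_1,\ell_4}$. Since $\cguard$ is basic we have $\Su_{{+}}=\Sud_{{+}}=\emptyset$ by Claims~\ref{claim:SuV}--\ref{claim:SudV}, whence $\concr{A_{{+}}}=\concr{A_0}$ and $\concr{G_0}\subset\concr{G_{{+}}}$; thus the interferences $\restrict{\concr{A_0}}{\compl{\after(s_0)}}$ and the tail transitions $\restrict{\concr{G_0}}{\after(s_2)}$ are already permitted inside $\ext[_1]{s_1}{s_2}$, and the guard-prefix transitions (confined to $\after(s_0)\smallsetminus\after(s_1)$, hence to $\compl{\after(s_1)}$) are absorbed as interferences. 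Collecting these, $(s_2,s)\in\ext[_1]{s_1}{s_2}$ follows, so $(s,s')\in\Sud_1$.

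The $\guardn$ branch of Lemma~\ref{lemma:Cif} is entirely symmetric, with $s_1\in\concr{S_{{\neg}}}$, $\col_2$, $\ext[_{{\neg}}]{s_0}{s_1}$ and the statement $\li3\cmd_2,\ell_4$ in place of their $\guardp$-counterparts, yielding $(s,s')\in\Sud_2$. Together the two branches give $\Sud\subset\Sud_1\cup\Sud_2$, as claimed.
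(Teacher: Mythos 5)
Your proposal takes the same route as the paper's proof: unfold \(\Sud\), rule out the interference-only alternative of Lemma~\ref{lemma:Cif} via Lemma~\ref{lemma:B} (since \(\lbl(s_2)=\ell_4\neq\ell_1\)), apply Lemma~\ref{lemma:Cif} to \((s_0,s_2)\), obtain \(s\in\after(s_1)\) by Lemma~\ref{lemma:H}, and conclude membership in \(\Sud_1\) (resp.\ \(\Sud_2\)) using Proposition~\ref{prop:apres1}. You are in fact more thorough than the paper on one point: the paper's proof never verifies the witness condition \((s_2,s)\in\ext[_1]{s_1}{s_2}\) demanded by the definition of \(\Sud_1\) (it only derives \(s\in\after(s_1)\smallsetminus\after(s_2)\) and \((s,s')\in\TR{\li2\cmd_1,\ell_4}\) before concluding), whereas you identify this as the main obstacle and argue it explicitly, in the spirit of Lemmas~\ref{lemma:Ext1} and~\ref{lemma:Ext2} for sequences.

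One step of that argument is garbled, though the conclusion is salvageable. You assert that a guard transition with source in \(\after(s_1)\) would belong to \(\Sud_{{+}}\); this is backwards, since membership in \(\Sud_{{+}}\) requires the source to lie in \(\after(s_0)\smallsetminus\after(s_1)\). Guard transitions with source in \(\after(s_1)\) are instead excluded by label confinement alone: Lemma~\ref{lemma:F} places such sources in \(\Labs{\li2\cmd_1,\ell_4}\), which does not contain \(\ell_1\), so Lemma~\ref{lemma:Abis} applies. Conversely, the transitions you propose to ``absorb as interferences''---guard transitions with source in \(\after(s_0)\smallsetminus\after(s_1)\)---cannot be absorbed directly: they lie in \(\concr{G_0}\), and, as you yourself note, \(\concr{A_{{+}}}=\concr{A_0}\), so nothing places them in \(\restrict{\concr{A_{{+}}}}{\compl{\after(s_1)}}\). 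The correct treatment is that these are exactly the transitions handled by Claim~\ref{claim:SudV}: with the Ext1-style bookkeeping \((s_1,s_k)\in\ext[_{{+}}]{s_0}{s_1}\) along the path, they satisfy the witness conditions of \(\Sud_{{+}}\), which is empty, so this case is vacuous. Swapping these two justifications repairs your proof; for what it is worth, the paper's own proof of Lemma~\ref{lemma:Cif} contains the same inversion.
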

\begin{proof}
 Let \((s,s')\in\Sud\).
 Therefore, there exists \(s_0\in\concr{S_0}\) and \(s_2\in\concr{S'}\) such that \((s_0,s_2)\in\col\) and \((s_2,s)\in\ext{s_0}{s_2}\) and \(s\in\after(s_0)\smallsetminus\after(s_2)\). Notice that \(\thread(s_0)=\thread(s_2)\neq\thread(s)\).
 
 Assume by contradiction that \(s_2\in\cinter(\{s_0\})\). Hence, due to Lemma \ref{lemma:B}, \(\lbl(s_2)=\ell_1\). This is contradictory with \(s_2\in\concr{S'}\).
 
 Therefore, according to Lemma \ref{lemma:Cif}, there exists \(s_1\) such that either \begin{inparaenum}[(1)]\item
 \(s_1\in\concr{S_{{+}}}\) and \((s_1,s)\in\col_{1}\cap\ext[_{{+}}]{s_0}{s_1}\),
 \item or, \(s_1\in\concr{S_{{\neg}}}\) and \((s_1,s)\in\col_{2}\cap\ext[_{{\neg}}]{s_0}{s_1}\).\end{inparaenum}
  In the two cases, by Lemma \ref{lemma:H}, \(s\in\after(s_1)\).
 
 In the first case, because \(s\notin\after(s_2)\), by Proposition \ref{prop:apres1}, \((s,s')\in\TR{\li1\cmd_1,\ell_2}\). Hence, \((s,s')\in\Sud_1\)

 In the second case, we similarly conclude that  \((s,s')\in\Sud_2\).
 \end{proof}
 
 Property \ref{fif} of Theorem \ref{theorem:denot} is a straightforward consequence of Claims \ref{claim:ifS}, \ref{claim:ifSe},  \ref{claim:ifSu}, \ref{claim:ifSud}.

\subsubsection{Proof of Property \ref{fwhile} of Theorem \ref{theorem:denot}}

In this section, we consider a command \(\li1 \cwhile(\guardp)\{\li2\cmd\},\ell_3\) and an initial configuration \(\concr{Q}_0=\qcp{_0}\).\\
Let \(\concr{Q}'=\qcp{'}=\av{\li1 \cwhile(\guardp)\{\li2\cmd\},\ell_3 }\concr{Q}_0\).\\
Let \(\concr{Q}_{\omega}=\qcp{_{\omega}}=\loopw\eomega(\concr{Q}_0)\).\\
Let \(\concr{Q}''=\qcp{''}=\av{\li1 \cwhile(\guardp)\{\li2\cmd\},\ell_3 }\concr{Q}_{\omega}\).\\
Let \(\concr{K}=\nqp{}=\ssem{\li1 \cwhile(\guardp)\{\li2\cmd\},\ell_3 }\concr{Q}_{\omega}\).\\
Let \(\concr{Q}_{+}=\qcp{_{+}}=\av{\li1\cguard(\guardp),\ell_2}(\concr{Q}_{\omega})\).\\
Let \(\concr{K}_{+}=\nqp{_{+}}=\ssem{\li1\cguard(\guardp),\ell_2}(\concr{Q}_{\omega})\).\\
Let \(\concr{K}_{\cmd}=\nqp{_{\cmd}}=\ssem{  \li2\cmd,\ell_1}(\concr{Q}_{+})\).\\
Let \(\concr{Q}_{\neg}=\qcp{_{\neg}}= \av{\li1 \cguard(\guardn),\ell_3}\concr{Q}_{\omega}\).\\
Let \(\concr{K}_{\neg}=\nqp{_{\neg}}= \ssem{\li1 \cguard(\guardn),\ell_3}\concr{Q}_{\omega}\).\\
Let \(\Tr = \TR{\li1 \cwhile(\guardp)\{\li2\cmd\},\ell_3}\).

\begin{lemma}\label{lemma:Gw}
 \[\TR{\li1 \cwhile(\guardp)\{\li2\cmd\},\ell_3}=\TR{\li1 \cguard(\guardn),\ell_3}\cup\TR{\li1\cguard(\guardp),\ell_2}\cup\TR{\li2\cmd,\ell_1}\]
\end{lemma}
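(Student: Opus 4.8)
The plan is to prove the set equality by double inclusion, arguing purely by inversion on the inductive rules of Fig.~\ref{figrules}. Write $W$ for the statement $\li1 \cwhile(\guardp)\{\li2\cmd\},\ell_3$. Everything rests on two structural facts read directly off the rules: among the global rules ($\Vdash$), $W$ can occur in a conclusion only through ``parallel'', ``while body'' or ``schedule''; and among the local rules ($\vdash$), a $\cwhile$ can occur in a conclusion only through ``while entry'' or ``while exit''. In both local rules the transition relayed starts at $\ell_1$ (the guards $\li1\cguard(\guardp),\ell_2$ and $\li1\cguard(\guardn),\ell_3$ both begin at $\ell_1$), so whenever ``parallel'' fires for $W$ the current label $P(i)$ is forced to be $\ell_1$.

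For the inclusion $\supseteq$ I would treat each summand in turn. If $(s,s')\in\TR{\li2\cmd,\ell_1}$, i.e. $\li2\cmd,\ell_1\Vdash s\rightarrow s'$, then ``while body'' lifts this directly to $W\Vdash s\rightarrow s'$. If $(s,s')\in\TR{\li1\cguard(\guardp),\ell_2}$ is not a schedule transition, inversion on ``parallel'' gives $s=(i,P,\m,\h)$ with $P(i)=\ell_1$ and a local transition $\li1\cguard(\guardp),\ell_2\vdash(\ell_1,\m)\rightarrow(\ell_2,\m)$; feeding this into ``while entry'' yields $W\vdash(\ell_1,\m)\rightarrow(\ell_2,\m)$, and re-applying ``parallel'' reconstructs $W\Vdash s\rightarrow s'$. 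The case $(s,s')\in\TR{\li1\cguard(\guardn),\ell_3}$ is identical using ``while exit''. Finally every schedule transition $(s,s')\in\Sche$ lies in $\TR{W}$ directly by ``schedule''.

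For the inclusion $\subseteq$ I take $(s,s')\in\TR{W}$ and invert the last rule of $W\Vdash s\rightarrow s'$. If it is ``while body'', its premise is $\li2\cmd,\ell_1\Vdash s\rightarrow s'$, so $(s,s')\in\TR{\li2\cmd,\ell_1}$. If it is ``schedule'', then $(s,s')\in\Sche$, which is contained in each summand (say $\TR{\li1\cguard(\guardn),\ell_3}$) because ``schedule'' applies to every statement. If it is ``parallel'', its premise is a local judgment $W\vdash(\ell,\m)\rightarrow(\ell',\m')$ with $\ell=P(i)$; inverting this local derivation it must come from ``while entry'' or ``while exit''. In the former case the relayed local transition is already derivable for $\li1\cguard(\guardp),\ell_2$, so applying ``parallel'' to that guard statement gives $(s,s')\in\TR{\li1\cguard(\guardp),\ell_2}$; in the latter case the same reasoning yields $(s,s')\in\TR{\li1\cguard(\guardn),\ell_3}$.

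This lemma is the while-analogue of Lemmas~\ref{lemma:G} and \ref{lemma:Gif} and is essentially mechanical. The only point demanding care is keeping the two levels of judgment separate: the ``parallel'' rule relays a \emph{local} $\vdash$-derivation, so one must descend into that local derivation—where only ``while entry''/``while exit'' apply—rather than reasoning at the global level, and one must check that $P(i)=\ell_1$ so that the guard statements, which both start at $\ell_1$, genuinely regenerate the recovered transition. The ubiquity of $\Sche$ in every $\TR{\cdot}$ is what makes the schedule case harmless for the equality.
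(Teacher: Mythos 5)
Your proof is correct. The paper states Lemma~\ref{lemma:Gw} without any proof at all (as it does for Lemmas~\ref{lemma:G} and~\ref{lemma:Gif}), treating it as immediate from the operational rules, and your double-inclusion argument by inversion on the last rule---``while body'', ``schedule'', or ``parallel'' followed by inversion of the local derivation into ``while entry''/``while exit''---is exactly the mechanical justification the paper leaves implicit, including the two points that genuinely need checking: that $\Sche$ is derivable for every statement, and that the guard statements regenerate the relayed local transitions via ``parallel''.
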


Notice that, by definition, \(\concr{Q}_0\leqslant \concr{Q}_{\omega}\)

\begin{lemma}\label{lemma:apresw} We use the above notations.
 Let \(s_0,s_1,\ldots,s_n,\ldots,s_m\) a sequence of states such that \((s_0,s_m)\in\col_{\omega}\), \((s_0,s_n)\in\col_{\omega}\) , \(s_n\in\concr{S}_{\omega} \) and for all \(k\in\{0,\ldots,m-1\}\), \((s_k,s_{k+1})\in (\restrict{{\concr{G}_{\omega}}}{\after(s_0)}\cap\Tr)\cup\restrict{\concr{A_{\omega}}}{\compl{\after(s_0)}}\).
 
 Therefore, \((s_n,s_{m})\in\col_{\omega}\) .
\end{lemma}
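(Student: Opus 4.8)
The plan is to verify the three conditions that define membership of $(s_n,s_m)$ in $\col_\omega$: that $\lbl(s_n)=\ell_1$, that $\thread(s_n)=\thread(s_m)$, and that $(s_n,s_m)$ belongs to the reflexive-transitive closure of $(\restrict{\concr{G}_\omega}{\after(s_n)}\cap\Tr)\cup\restrict{\concr{A}_\omega}{\compl{\after(s_n)}}$. The first two are immediate. Since $s_n\in\concr{S}_\omega$ and, by the shape of $\loopw$, every state collected in $\concr{S}_\omega$ sits at the loop head $\ell_1$ (both the loop-entry states of $\concr{S}_0$ and the outputs of the body $\av{\li2\cmd,\ell_1}$ carry label $\ell_1$), we get $\lbl(s_n)=\ell_1$. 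From $(s_0,s_n)\in\col_\omega$ and $(s_0,s_m)\in\col_\omega$ we read off $\thread(s_0)=\thread(s_n)=\thread(s_m)$, and from Lemma \ref{lemma:R} applied to $(s_0,s_n)$ we record $s_n\in\after(s_0)$ and $\after(s_n)\subset\after(s_0)$.

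For the closure condition I argue step by step along the suffix $s_n,\ldots,s_m$, showing that each $(s_k,s_{k+1})$ with $n\leqslant k<m$ lies in $(\restrict{\concr{G}_\omega}{\after(s_n)}\cap\Tr)\cup\restrict{\concr{A}_\omega}{\compl{\after(s_n)}}$; composing these gives the required closure. By hypothesis $(s_k,s_{k+1})\in T=(\restrict{\concr{G}_\omega}{\after(s_0)}\cap\Tr)\cup\restrict{\concr{A}_\omega}{\compl{\after(s_0)}}$. If $(s_k,s_{k+1})\in\restrict{\concr{A}_\omega}{\compl{\after(s_0)}}$ then $s_k\in\compl{\after(s_0)}\subset\compl{\after(s_n)}$, using $\after(s_n)\subset\after(s_0)$, so the step already has the desired form. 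Otherwise $s_k\in\after(s_0)$ and $(s_k,s_{k+1})\in\concr{G}_\omega\cap\Tr$, and I split on whether $s_k\in\after(s_n)$: if it is, the step is directly of the form $\restrict{\concr{G}_\omega}{\after(s_n)}\cap\Tr$.

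The main obstacle is the remaining subcase, $s_k\in\after(s_0)\smallsetminus\after(s_n)$ with $(s_k,s_{k+1})\in\concr{G}_\omega\cap\Tr$: this transition is fired by a thread spawned during the loop before $s_n$ and surviving past it, and I must relocate it into $\restrict{\concr{A}_\omega}{\compl{\after(s_n)}}$. The structural half is clean: since $k\geqslant n$ the genealogy of $s_k$ extends that of $s_n$, and as $\thread(s_n)$ is always one of its own descendants, $s_k\notin\after(s_n)$ forces $\thread(s_k)\neq\thread(s_n)=\thread(s_0)$, so the transition is fired by a thread distinct from the loop's tracked thread. For the other half I use that, by definition of $\av{\cdot}$ and of $\loopw$, the contributions $\Su$ and $\Sud$ of descendant threads in each guard-then-body iteration are added to both the guarantee and the assume, whereas only the current thread's transitions $\Se$ are added to the guarantee alone; since the tracked thread is $\thread(s_0)$ throughout the iteration, $\concr{G}_\omega$ and $\concr{A}_\omega$ differ only by transitions of $\thread(s_0)$. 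Consequently a transition in $\concr{G}_\omega\cap\Tr$ whose source thread is not $\thread(s_0)$ lies in $\concr{A}_\omega$, and with $s_k\in\compl{\after(s_n)}$ this gives $(s_k,s_{k+1})\in\restrict{\concr{A}_\omega}{\compl{\after(s_n)}}$, completing the case analysis.

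I expect the delicate point to be exactly this bookkeeping of guarantee versus assume: making rigorous that $\concr{G}_\omega\smallsetminus\concr{A}_\omega$ contains only transitions of $\thread(s_0)$ requires unfolding $\concr{Q}_\omega=\loopw\eomega(\concr{Q}_0)$ as a union of iterates (so the ``descendant contributions go to both components'' observation can be applied at each stage, using $\concr{Q}_0\leqslant\concr{Q}_\omega$ and the extensivity of $\loopw$) and maintaining this property as an invariant already satisfied by $\concr{Q}_0$. Once that is in place the three conditions---$\lbl(s_n)=\ell_1$, $\thread(s_n)=\thread(s_m)$, and the closure---combine to give $(s_n,s_m)\in\col_\omega$.
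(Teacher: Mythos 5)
Your overall skeleton coincides with the paper's proof: the reduction to showing that every step of the suffix \(s_n,\ldots,s_m\) lies in \((\restrict{\concr{G}_\omega}{\after(s_n)}\cap\Tr)\cup\restrict{\concr{A}_\omega}{\compl{\after(s_n)}}\), the two easy cases, and the structural observation that a guarantee step fired from \(\after(s_0)\smallsetminus\after(s_n)\) must have source thread distinct from \(\thread(s_n)=\thread(s_0)\) are all fine and match the paper. The genuine gap is in how you put that remaining kind of step into \(\concr{A}_\omega\). You invoke the invariant that \(\concr{G}_\omega\smallsetminus\concr{A}_\omega\) contains only transitions of \(\thread(s_0)\), claiming it is ``already satisfied by \(\concr{Q}_0\)''. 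Nothing supports this: \(\concr{Q}_0\) is an arbitrary configuration (Property \ref{fwhile} of Theorem \ref{theorem:denot}, which this lemma serves, is asserted for every \(\concr{Q}\)), so \(\concr{G}_0\smallsetminus\concr{A}_0\) is unconstrained. Worse, \(s_0\) is not tied to \(\concr{Q}_0\) at all --- it is just a state with \(\lbl(s_0)=\ell_1\) and \((s_0,s_n),(s_0,s_m)\in\col_\omega\); two different admissible choices of \(s_0\) may carry different threads, so your invariant is not even expressible as a single property of \(\concr{Q}_\omega\). The inductive step is also off: what one pass of \(\loopw\) adds to the guarantee but not to the assume is the \(\Se\) component, whose transitions are those of the threads of states of the current \(\concr{S}\)-component, i.e.\ of the whole set \(\thread(\concr{S}_0)\), not of the single thread \(\thread(s_0)\) (which need not even belong to \(\thread(\concr{S}_0)\), since \(s_0\) is not required to lie in \(\concr{S}_0\) or \(\concr{S}_\omega\)).

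The idea you are missing is that the relocation into the assume can be done locally along the path, using the \(\extname\)/\(\Sud\) machinery of Definition \ref{def:concrsem}, rather than by a global separation of \(\concr{G}_\omega\) and \(\concr{A}_\omega\). Let \(k\geqslant n\) be an index whose step is a guarantee step fired from \(\after(s_0)\smallsetminus\after(s_n)\). Every step between \(s_n\) and \(s_k\) lies in \((\restrict{\concr{G}_\omega}{\after(s_0)}\cap\Tr)\cup\restrict{\concr{A}_\omega}{\compl{\after(s_0)}}\), which is contained in the union defining \(\ext[_{\omega}]{s_0}{s_n}\); hence \((s_n,s_k)\in\ext[_{\omega}]{s_0}{s_n}\), and combined with \(s_k\in\after(s_0)\smallsetminus\after(s_n)\) and \((s_k,s_{k+1})\in\Tr\) this is exactly the membership pattern defining \(\Sud_\omega\). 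The paper then concludes with \(\Sud_\omega\subset\concr{A}_\omega\). That containment is itself a fixpoint-flavoured fact which the paper states tersely, but it concerns only the specific set \(\Sud_\omega\) --- transitions reached through \(\extname_\omega\) from threads spawned along the portion of the path from \(s_0\) to \(s_n\) --- and is far weaker than your claim that every transition of \(\concr{G}_\omega\cap\Tr\) whose source thread differs from \(\thread(s_0)\) lies in \(\concr{A}_\omega\); the latter is false for arbitrary input configurations. Replacing your invariant by this \(\Sud_\omega\)-classification makes the case analysis close as in the paper.
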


\begin{proof}
For all \(k\), 
 \((s_k,s_{k+1})\in
 (\restrict{{\concr{G}_{\omega}}}{\after(s_n)}\cap\Tr)\cup (\restrict{{\concr{G}_{\omega}}}{\after(s_0)\smallsetminus\after(s_n)}\cap\Tr)\cup\restrict{\concr{A_{\omega}}}{\compl{\after(s_0)}}\).
 
 Let \(k_0\geqslant n\) such that \((s_{k_0},s_{k_0+1})\in
 (\restrict{{\concr{G}_{\omega}}}{\after(s_0)\smallsetminus\after(s_n)}\cap\Tr)\).
 Notice that \((s_n,s_{k_0})\in\ext[_{\omega}]{s_0}{s_n}\) and \(s_{k_0}\in \after(s_0)\smallsetminus\after(s_n)\).
 Hence, \((s_{k_0},s_{k_0+1}) \in \Sud_{\omega}\subset \concr{A_{\omega}}\).
 Therefore \((s_{k_0},s_{k_0+1}) \in \restrict{\concr{A_{\omega}}}{\compl{\after(s_1)}}\).
 
 In addition to this, according to Lemma \ref{lemma:R}, \(\after(s_n)\subset\after(s_0)\), so, for all \(k\geqslant n\), 
 \((s_k,s_{k+1})\in
 (\restrict{{\concr{G}_{\omega}}}{\after(s_n)}\cap\Tr)\cup\restrict{\concr{A_{\omega}}}{\compl{\after(s_0)}}\).
 \end{proof}

\begin{lemma}\label{lemma:Cw}
 Using the notations of this section, if \(s\in\col\langle\concr{S}_{0}\rangle\), then, there exists \(s_0\in\concr{S}_{\omega}\) such that:
 \begin{enumerate}
  \item either \((s_0,s)\in\col_{\neg}\),
  \item or there exists \(s_1\in\concr{S}_{+}\) such that \((s_0,s_1)\in\col_{+}\) and \((s_1,s)\in \col_{\cmd}\) and \(\lbl(s)\neq\ell_1\).
 \end{enumerate}
\end{lemma}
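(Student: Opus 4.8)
The plan is to prove a mildly stronger statement by induction, namely that the conclusion holds for every $s_0'\in\concr{S}_{\omega}$ with $(s_0',s)\in\col$, not merely for $s_0'\in\concr{S}_{0}$; this is legitimate because $\concr{Q}_0\leqslant\concr{Q}_{\omega}$ gives $\concr{S}_0\subset\concr{S}_{\omega}$, and it is exactly the generality the induction needs when it restarts from a state reached after one loop turn. So I would fix $s_0'\in\concr{S}_{\omega}$ with $\lbl(s_0')=\ell_1$ and a witnessing path $s_0'=t_0\rightarrow\cdots\rightarrow t_N=s$ whose steps lie in $(\restrict{{\concr{G}_{\omega}}}{\after(s_0')}\cap\Tr)\cup\restrict{{\concr{A}_{\omega}}}{\compl{\after(s_0')}}$. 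By Lemma \ref{lemma:Gw} every step of $\Tr\smallsetminus\Sche$ is a $\cguard(\guardp)$-step, a $\cguard(\guardn)$-step, or a body step of $\li2\cmd,\ell_1$; by Lemma \ref{lemma:A'} the two guards are the only real steps whose source sits at $\ell_1$. I would then induct on the number $p$ of real guard steps occurring along the path.

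If $p=0$, then before any guard the executing thread stays at $\ell_1$, so no real step can fire and the whole path is interference and scheduling; hence by Lemma \ref{lemma:B} we get $\lbl(s)=\ell_1$, and taking $s_0=s_0'$ lands in case (1), because $\col_{\neg}$ already contains every purely interfering path out of $\ell_1$ (recall $\Sche\subset\TR{\li1\cguard(\guardn),\ell_3}$). If $p\geqslant1$, consider the first guard step, at index $k_1$; the prefix $s_0'\rightarrow t_{k_1}$ is interference only. Should this guard be $\cguard(\guardn)$, it is forced to be the last real step, since nothing leaves $\ell_3$, and again $s_0=s_0'$ gives case (1) via $\col_{\neg}$. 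Should it be $\cguard(\guardp)$, put $s_1=t_{k_1+1}$, which sits at $\ell_2$; the segment $s_0'\rightarrow s_1$ uses only interference and one $\cguard(\guardp)$-step, so $(s_0',s_1)\in\col_{+}$ and $s_1\in\concr{S}_{+}$.

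I then split on whether the body launched at $s_1$ returns to $\ell_1$ before reaching $s$. If it does not, then $\lbl(s)\in\Labs{\li2\cmd,\ell_1}\smallsetminus\{\ell_1\}$, and Proposition \ref{prop:apres1} lets me replace the $\after(s_0')$-restrictions on the suffix by $\after(s_1)$-restrictions exactly as in the sequence and conditional cases, so $(s_1,s)\in\col_{\cmd}$; then $s_0=s_0'$ and $s_1$ witness case (2) with $\lbl(s)\neq\ell_1$. If the body does return, let $t_{k'}$ (with $k'>k_1$) be its first return to $\ell_1$; then $(s_1,t_{k'})\in\col_{\cmd}$ ends at the body's return label, so $t_{k'}$ belongs to the final-state component of $\av{\li2\cmd,\ell_1}(\concr{Q}_{+})=\av{\li2\cmd,\ell_1}\circ\av{\li1\cguard(\guardp),\ell_2}(\concr{Q}_{\omega})$, which is $\leqslant\concr{Q}_{\omega}$ because the $\loopw$-iteration defining $\concr{Q}_{\omega}$ is stable (so $\loopw(\concr{Q}_{\omega})\leqslant\concr{Q}_{\omega}$); hence $t_{k'}\in\concr{S}_{\omega}$. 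Lemma \ref{lemma:apresw}, applied with $s_n=t_{k'}$, then yields $(t_{k'},s)\in\col$, and as the suffix carries strictly fewer guard steps the induction hypothesis—now genuinely used for a point of $\concr{S}_{\omega}$ rather than of $\concr{S}_0$—closes this case.

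The delicate step is the last one: I must be certain that a completed loop turn is absorbed into $\concr{S}_{\omega}$, which rests on the stability of the $\loopw$-iteration together with a transition-set-by-transition-set matching of the two phases, identifying the interference-then-$\cguard(\guardp)$ segment with $\col_{+}$ and the body segment with $\col_{\cmd}$. That matching is where Proposition \ref{prop:apres1} and Lemma \ref{lemma:apresw} do the real work: converting guarantees restricted to $\after(s_0')$ into guarantees restricted to $\after(s_1)$, and recognizing any stray guard step falling inside $\after(s_1)$ as an element of the (empty, by Claim \ref{claim:SudV}) sub-guarantee, in the spirit of the argument of Lemma \ref{lemma:Cif}. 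The bookkeeping there, rather than any single inequality, is the main obstacle.
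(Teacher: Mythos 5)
Your proposal is correct and is essentially the paper's own argument: both identify the first non-schedule step out of \(\ell_1\) as a guard step (via Lemmas \ref{lemma:E'}, \ref{lemma:B}, \ref{lemma:Abis} and \ref{lemma:Gw}), split on \(\cguard(\guardn)\) versus \(\cguard(\guardp)\), re-base the body segment from \(\after(s_0)\)-restrictions to \(\after(s_1)\)-restrictions with Proposition \ref{prop:apres1}, and absorb a completed loop turn back into \(\concr{S}_{\omega}\) via Lemma \ref{lemma:apresw}. The only differences are organizational: the paper replaces your strong induction on the number of guard steps by a minimal-length witnessing path plus a contradiction (the same well-founded descent), and the stability \(\loopw(\concr{Q}_{\omega})\leqslant\concr{Q}_{\omega}\) that you invoke explicitly is exactly what the paper uses tacitly when it applies Lemma \ref{lemma:apresw}, whose hypothesis that the intermediate state lies in \(\concr{S}_{\omega}\) it leaves unjustified.
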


\begin{proof}
 Let \(s\in\col\langle\concr{S}_0\rangle\).
 We consider a sequence \(s_0,\ldots,s_n\) of minimal length such that the following properties hold:
 \begin{inparaenum}[(1)]
  \item \(s_n=s\), 
  \item \(s_0\in\concr{S}_{\omega}\), 
  \item for all \(k\in\{0,\ldots,n-1\}\), \((s_k,s_{k+1}) \in (\restrict{{\concr{G}_{\omega}}}{\after(s_0)} \cap \Tr)\cup \restrict{{\concr{A}_{\omega}}}{\compl{\after(s_0)}}\)
 \end{inparaenum}. 
A such sequence exists because \(\concr{S}_0\subset\concr{S}_{\omega}\).

If for all \(k\in\{0,\ldots,n-1\}\), \((s_k,s_{k+1})\in\Sche\cup\restrict{{\concr{A}_{\omega}}}{\compl{\after(s_0)}} \) then \((s_0,s)\in\col_{+}\cap \col_{\neg} \subset \col_{\neg}\).

Let us assume, from now, that there exists \(k\in\{0,\ldots,n-1\}\) such that  \((s_k,s_{k+1})\in\restrict{{\concr{G}_{\omega}}}{\after(s_0)} \cap \TR{\li1 \cwhile(\guardp)\{\li2\cmd\},\ell_3}\smallsetminus\Sche\).
Let \(k_0\) the smallest such \(k\).

Therefore \((s_{k_0},s_{k_0+1})\in \restrict{{\concr{G}_{\omega}}}{\after(s_0)}\), so, \(s_{k_{0}}\in\after(s_0)\). According to Lemma \ref{lemma:E'}, \(\thread(s_0)=\thread(s_{k_0})\).
By Lemma \ref{lemma:B}, \(\lbl(s_0)=\lbl(s_{k_0})\).
But \(\lbl(s_0)=\ell_1\), therefore, by Lemma \ref{lemma:Abis}, \((s_{k_0},s_{k_0+1})\notin\TR{\li2\cmd,\ell_1}\).
Therefore, by Lemma \ref{lemma:Gw}, 
either \((s_{k_0},s_{k_0+1})\in\TR{\li1\cguard(\guardn),\ell_3}\) or \((s_{k_0},s_{k_0+1})\in\TR{\li1\cguard(\guardp),\ell_2}\).

In the first case, by Lemma \ref{lemma:A'}, \(\lbl(s_{k_0+1})=\ell_3\). Let us prove by induction on \(k\) that for all \(k>k_0\), \((s_k,s_{k+1})\in \restrict{\concr{A_{\omega}}}{\compl{\after(s_0)}} \cup \Sche\).
By induction hypothesis \((s_{k_0},s_{k})\in[\restrict{\concr{A_{\omega}}}{\compl{\after(s_0)}} \cup \Sche]^{\star}\). Let us consider the case \((s_k,s_{k+1})\in \restrict{\concr{G_{\omega}}}{\after(s_0)}\cap \Tr\). Therefore \(s_k\in\after(s_0)\), then by Lemma \ref{lemma:E'}, \(\thread(s_k)=\thread(s_{k_0+1})\). By Lemma \ref{lemma:B}, \(\lbl(s_k)=\lbl(s_{k_0+1})=\ell_3\). So, by Lemma \ref{lemma:Abis}, \((s_k,s_{k+1})\in\Sche\).
Hence \((s_0,s)\in \col_{\neg}\).

In the second case, \((s_0,s_{k_0+1})\in\col_{{+}}\) and therefore, by Lemma \ref{lemma:A'}, \(s_{k_0+1}\in\concr{S}_{{+}}\).
Either there exists \(k_1>k_0\) such that \((s_{k_1},s_{k_1+1})\in \restrict{\concr{G}}{\after(s_0)}\cap(\TR{\li1\cguard(\guardn),\ell_3}\cup\TR{\li1\cguard(\guardp),\ell_3})\) or there does not exists a such \(k_1\).

Assume by contradiction that \(k_1\) exists, therefore, by Lemma \ref{lemma:A'}, \(\lbl(s_{k_0})=\ell_1\). According to Lemma \ref{lemma:F}, \(\thread(s)=\thread(s_0)\). Hence, \((s_0,s_{k_1})\in\col_{\omega}\).
So, by Lemma \ref{lemma:apresw}, \((s_{k_1},s_{n})\in\col_{\omega}\). This is contradictory with the minimality of the path \(s_1,\ldots,s_n\). Therefore \(k_1\) does not exists. 

Hence, 
for all \(k>k_0\), \((s_{k},s_{k+1})\in (\restrict{\concr{G_{\omega}}}{\after(s_0)}\cap\TR{\li2\cmd,\ell_1})\cup\restrict{\concr{A_{\omega}}}{\compl{\after(s_0)}}\).
According to proposition \ref{prop:apres1}, for all \(k>k_0\), \((s_{k},s_{k+1})\in (\restrict{\concr{G_{\omega}}}{\after(s_1)}\cap\TR{\li2\cmd,\ell_1})\cup\restrict{\concr{A_{\omega}}}{\compl{\after(s_0)}}\).
Therefore, \((s_{k_0},s)\in\col_{\omega}\)
\end{proof}

\begin{claim}\label{claim:WS}
 Using the notation of this section \(\concr{S'}\subset\concr{S}_{\neg}\).
 \end{claim}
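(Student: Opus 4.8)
The plan is to unfold the definition of \(\concr{S}'\), feed \(s\) into the case analysis of Lemma \ref{lemma:Cw}, and discard the ``still inside the loop body'' alternative by a label argument, so that only the alternative placing \(s\) in \(\concr{S}_\neg\) survives.

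First I would take \(s\in\concr{S}'\), the state component of \(\concr{Q}'=\av{\li1\cwhile(\guardp)\{\li2\cmd\},\ell_3}\concr{Q}_0\). By Definition \ref{def:concrsem}, specialised to the return label \(\ell_3\) of the while statement, this means exactly \(\lbl(s)=\ell_3\) together with \(s\) being reachable from \(\concr{S}_0\) through the loop's reachability relation. Since \(\concr{Q}_0\leqslant\concr{Q}_\omega\) and the construction of \(\col\) in Definition \ref{def:concrsem} is monotone in the state, guarantee and assume components (more starting states and more transitions produce more reachable pairs), this reachability fact transfers to the relation \(\col\) attached to \(\concr{K}=\ssem{\li1\cwhile(\guardp)\{\li2\cmd\},\ell_3}\concr{Q}_\omega\), giving \(s\in\col\langle\concr{S}_0\rangle\). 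This is precisely the hypothesis required by Lemma \ref{lemma:Cw}.

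Lemma \ref{lemma:Cw} then yields some \(s_0\in\concr{S}_\omega\) falling under one of two alternatives. In the first, \((s_0,s)\in\col_\neg\); combined with \(s_0\in\concr{S}_\omega\) and \(\lbl(s)=\ell_3\), this says \(s\in\col_\neg\langle\concr{S}_\omega\rangle\) with the correct exit label, which is exactly the definition of membership in \(\concr{S}_\neg\) for \(\concr{Q}_\neg=\av{\li1\cguard(\guardn),\ell_3}\concr{Q}_\omega\); so we are done in this case. In the second alternative there is \(s_1\in\concr{S}_+\) with \((s_1,s)\in\col_\cmd\). Here I would invoke Lemma \ref{lemma:R} applied to the body statement \(\li2\cmd,\ell_1\): it forces \(\lbl(s)\in\Labs{\li2\cmd,\ell_1}\). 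But the labels of the program are pairwise distinct, and the loop's exit label \(\ell_3\) is not among the labels of the body, i.e. \(\ell_3\notin\Labs{\li2\cmd,\ell_1}\); this contradicts \(\lbl(s)=\ell_3\). Hence the second alternative cannot occur, only the first survives, and \(\concr{S}'\subset\concr{S}_\neg\) follows.

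The substantive step is ruling out the in-loop alternative, and it rests entirely on the label discipline (Lemma \ref{lemma:R} together with distinctness of labels): a state that has reached the loop's exit label \(\ell_3\) cannot simultaneously be certified as sitting inside the body \(\cmd\). Everything else is read off directly from the definitions, and I expect the only genuinely fiddly point to be the bookkeeping of \emph{which} configuration each reachability relation is taken with respect to — the unsubscripted \(\col\) comes from \(\concr{Q}_\omega\) while \(\concr{S}'\) is generated from \(\concr{Q}_0\) — which is why the monotonicity step using \(\concr{Q}_0\leqslant\concr{Q}_\omega\) is needed to align the hypothesis of Lemma \ref{lemma:Cw}.
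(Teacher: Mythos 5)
Your proof is correct and takes essentially the same route as the paper's: both unfold \(s\in\concr{S'}\) into \(s\in\col\langle\concr{S}_0\rangle\) with \(\lbl(s)=\ell_3\), apply Lemma \ref{lemma:Cw}, and eliminate the in-body alternative by the label argument of Lemma \ref{lemma:R} (the paper writes \(\col_{\omega}\) where it means the body relation \(\col_{\cmd}\)). Your write-up is in fact slightly more careful than the paper's, since it makes explicit both the monotonicity step \(\concr{Q}_0\leqslant\concr{Q}_{\omega}\) needed to align the hypothesis of Lemma \ref{lemma:Cw} and the distinctness-of-labels fact \(\ell_3\notin\Labs{\li2\cmd,\ell_1}\).
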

\begin{proof}
 Let \(s\in\concr{S'}\), therefore, \(s\in\col\langle \concr{S}_0 \rangle\). 
 Furthermore, \(\lbl(s)=\ell_3\). Hence, according to Lemma \ref{lemma:R}, for all \(s_1\), \((s_1,s)\notin \col_{\omega}\).
 Therefore, according to Lemma \ref{lemma:Cw}, there exists \(s_0\in\concr{S_{\omega}}\) such that \((s_0,s)\in\col_{\neg}\). Hence \(s\in\concr{S}_{\neg}\). 
\end{proof}

\begin{claim}\label{claim:WSe}
 \(\Se\subset\Se_{\neg}\cup\Se_{{+}}\cup\Se_{\cmd}\)
\end{claim}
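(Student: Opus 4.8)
The plan is to mirror the proof of Claim~\ref{claim:ifSe}, replacing the if-decomposition of Lemma~\ref{lemma:Cif} by the while-decomposition of Lemma~\ref{lemma:Cw}. Fix $(s,s')\in\Se$. Unfolding the definition of $\Se$ for $\concr{K}=\ssem{\li1 \cwhile(\guardp)\{\li2\cmd\},\ell_3}\concr{Q}_{\omega}$, we obtain $(s,s')\in\Tr$ and $s\in\col\langle\concr{S}_{\omega}\rangle$, so there is $s_0\in\concr{S}_{\omega}$ with $(s_0,s)\in\col$. The argument then has two ingredients: (i) use Lemma~\ref{lemma:Cw} to place $s$ inside one of the reach sets $\col_{\neg}\langle\concr{S}_{\omega}\rangle$, $\col_{+}\langle\concr{S}_{\omega}\rangle$, $\col_{\cmd}\langle\concr{S}_{+}\rangle$, and (ii) use Lemma~\ref{lemma:Gw} together with the label Lemmas~\ref{lemma:A} and~\ref{lemma:A'} to decide which of $\TR{\li1\cguard(\guardn),\ell_3}$, $\TR{\li1\cguard(\guardp),\ell_2}$, $\TR{\li2\cmd,\ell_1}$ contains $(s,s')$; matching (i) with (ii) yields membership in $\Se_{\neg}$, $\Se_{+}$ or $\Se_{\cmd}$. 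Recall that schedule transitions lie in every $\TR$, so they are absorbed into whichever set the reachability of $s$ dictates.

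Lemma~\ref{lemma:Cw} gives two cases. In the first, $(s_0,s)\in\col_{\neg}$; since $\li1\cguard(\guardn),\ell_3$ is basic, Lemma~\ref{lemma:C} tells us that either $\lbl(s)=\ell_3$, or $\lbl(s)=\ell_1$ with $s\in\cinter{\concr{A}_{\omega}}(\{s_0\})$. If $\lbl(s)=\ell_3$, then by Lemma~\ref{lemma:A} no non-schedule transition of the loop issues from $\ell_3$, so $(s,s')\in\Sche\subset\TR{\li1\cguard(\guardn),\ell_3}$; together with $s\in\col_{\neg}\langle\concr{S}_{\omega}\rangle$ this gives $(s,s')\in\Se_{\neg}$. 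If $\lbl(s)=\ell_1$, then by Lemmas~\ref{lemma:A'} and~\ref{lemma:Gw} the transition is a guard, $(s,s')\in\TR{\li1\cguard(\guardp),\ell_2}\cup\TR{\li1\cguard(\guardn),\ell_3}$ (the body cannot start at $\ell_1$), and I would conclude $(s,s')\in\Se_{+}\cup\Se_{\neg}$ by observing that the interference-only reach $\cinter{\concr{A}_{\omega}}(\{s_0\})$ at $\ell_1$ sits in both guards' reach sets.

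In the second case of Lemma~\ref{lemma:Cw} there is $s_1\in\concr{S}_{+}$ with $(s_1,s)\in\col_{\cmd}$ and $\lbl(s)\neq\ell_1$, hence $s\in\col_{\cmd}\langle\concr{S}_{+}\rangle$. Applying Lemma~\ref{lemma:R} to the body statement $\li2\cmd,\ell_1$ gives $\lbl(s)\in\Labs{\li2\cmd,\ell_1}\smallsetminus\{\ell_1\}$; since both guards are basic and issue from $\ell_1$ (Lemma~\ref{lemma:A'}), any non-schedule transition out of $s$ cannot be a guard, so by Lemma~\ref{lemma:Gw} we get $(s,s')\in\TR{\li2\cmd,\ell_1}$ (schedule transitions lie in this set as well). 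Therefore $(s,s')\in\Se_{\cmd}$. Combining the three cases establishes the inclusion.

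I expect the delicate step to be the sub-case $\lbl(s)=\ell_1$ of the first case: there $(s,s')$ may be either the entry or the exit guard, and to send it to the right $\Se_{+}$ or $\Se_{\neg}$ one must know that $\cinter{\concr{A}_{\omega}}(\{s_0\})\subset\col_{+}\langle\{s_0\}\rangle\cap\col_{\neg}\langle\{s_0\}\rangle$. This holds because the two guards share the entry label $\ell_1$ and leave the assumption $\concr{A}_{\omega}$ unchanged, so their $\col$-relations coincide on paths that never fire the guard; checking that each schedule step of the $\cinter$-path is admissible in either $\col$ uses only $\Sche\subset\concr{G}_{\omega}\cap\concr{A}_{\omega}$. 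A minor caveat is the mismatch between $\concr{S}_{0}$ in the statement of Lemma~\ref{lemma:Cw} and the $\concr{S}_{\omega}$ arising in $\Se$; this is harmless, since the proof of Lemma~\ref{lemma:Cw} only exploits a $\col$-path issuing from $\concr{S}_{\omega}$, so its conclusion applies verbatim to any $s\in\col\langle\concr{S}_{\omega}\rangle$.
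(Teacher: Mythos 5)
Your proof is correct and rests on exactly the same ingredients as the paper's own argument---Lemmas~\ref{lemma:Gw}, \ref{lemma:Cw}, \ref{lemma:C} and the label Lemmas~\ref{lemma:A}, \ref{lemma:A'}, \ref{lemma:R}---the only difference being that you case-split on the decomposition of Lemma~\ref{lemma:Cw} first and then identify which \(\TR{}\) contains \((s,s')\), whereas the paper splits on the transition type first and then invokes Lemma~\ref{lemma:Cw}. If anything, your version is slightly more careful than the paper's: you explicitly absorb schedule transitions (including those fired from states at label \(\ell_3\), after loop exit) into the appropriate \(\Se\)-set using \(\Sche\subset\TR{}\), a case the paper's proof silently skips by applying Lemmas~\ref{lemma:A'} and~\ref{lemma:A} without excluding \(\Sche\); your remark on the \(\concr{S}_0\) versus \(\concr{S}_{\omega}\) mismatch in Lemma~\ref{lemma:Cw} is likewise accurate and harmless.
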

\begin{proof}
 Let \((s,s')\in\Se\). According to Lemma \ref{lemma:Gw}, \((s,s')\in \TR{\li1 \cguard(\guardn),\ell_3}\cup\TR{\li1\cguard(\guardp),\ell_2}\cup\TR{\li2\cmd,\ell_1} \).
 
 Let us consider the case \((s,s')\in \TR{\li1 \cguard(\guardn),\ell_3}\cup\TR{\li1\cguard(\guardp),\ell_2}\). Due to Lemma \ref{lemma:A'}, \(\lbl(s)=\ell_1\)
 Hence, according to Lemma \ref{lemma:Cw}, either \((s_0,s)\in\col_{\neg}\) or there exists \(s_1\in\concr{S}_{+}\) such that \((s_1,s)\in \col_{\cmd}\) (contradiction with Lemma \ref{lemma:R} and \(\lbl(s)=\ell_1\)).
 According to Lemma \ref{lemma:C}, either \(\lbl(s)=\ell_2\neq\ell_1\) (contradiction) or \(s\in\cinter{\concr{A}_0}(\concr{S_0})\subset\col_{\neg}\langle\concr{S_\omega}\rangle\cap\col_{{+}}\langle\concr{S_\omega}\rangle\). Therefore either \((s,s')\in \Se_{\neg}\) or \((s,s')\in \Se_{{+}}\).

 Let us consider the case \((s,s')\in \TR{\li2\cmd,\ell_1} \). Therefore, according to Lemma \ref{lemma:A}, \(\lbl(s)\in \Labs{\li2\cmd,\ell_1}\smallsetminus\{\ell_1\} \).
If \(s''\in\col_{\neg}\langle\concr{S_{\omega}}\rangle\), then, by Lemma \ref{lemma:C}, \(\lbl(s'')\in\{\ell_1,\ell_3\}\). Hence, \(s\notin\col_{\neg}\langle\concr{S_{\omega}}\rangle\).
So, by Lemma \ref{lemma:Cw}, there exists \(s\in\concr{S}_0\) and \(s_1\in\concr{S}_{{+}}\) such that \((s_0,s_1)\in\col_{{+}}\) and \((s_1,s)\in\col_{\cmd}\). According to Proposition \ref{prop:apres1}, \((s,s')\in\after(s_1)\) and therefore \((s,s')\in\Se_{\cmd}\).
\end{proof}

\begin{claim}\label{claim:WSu}
 \(\Su\subset\Su_{\cmd}\)
\end{claim}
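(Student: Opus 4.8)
The plan is to imitate Claim~\ref{claim:ifSu}, the analogue of this statement for the conditional, while reusing the label bookkeeping of Claim~\ref{claim:WSe}. Let \((s,s')\in\Su\). Unfolding the definition of \(\Su\) in Definition~\ref{def:concrsem} for the while statement, there exist \(s_0\in\concr{S}_\omega\) and an intermediate state \(s_2\) with \((s_0,s_2)\in\col\), \((s_2,s)\in\Sche\), \(s\in\after(s_0)\), and \((s,s')\in\Tr\). Since \(\col\) forces \(\thread(s_0)=\thread(s_2)\) and a schedule step forces \(\thread(s_2)\neq\thread(s)\), we record \(\thread(s_0)\neq\thread(s)\): the transition \((s,s')\) is fired by a strict descendant of the current thread that became active only through a scheduling step.

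First I would show that \((s,s')\) is a body transition. Because \(\Sche\subset(\restrict{\concr{G}_\omega}{\after(s_0)}\cap\Tr)\) (schedule steps lie in every \(\concr{G}\) and in \(\Tr\), and \(s_2\in\after(s_0)\) by Lemma~\ref{lemma:R}), the two pieces compose to \((s_0,s)\in[(\restrict{\concr{G}_\omega}{\after(s_0)}\cap\Tr)\cup\restrict{\concr{A}_\omega}{\compl{\after(s_0)}}]^{\star}\) with \(s\in\after(s_0)\) and \(\lbl(s_0)=\ell_1\in\Labs{\li1\cwhile(\guardp)\{\li2\cmd\},\ell_3}\). Lemma~\ref{lemma:F} then gives \(\lbl(s)\in\Labs{\li1\cwhile(\guardp)\{\li2\cmd\},\ell_3}\), and by its second clause \(\lbl(s)=\ell_1\) or \(\lbl(s)=\ell_3\) would force \(\thread(s)=\thread(s_0)\), which we excluded; hence \(\lbl(s)\notin\{\ell_1,\ell_3\}\). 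Since every guard transition has source label \(\ell_1\) (Lemma~\ref{lemma:A'}), Lemma~\ref{lemma:Gw} leaves only \((s,s')\in\TR{\li2\cmd,\ell_1}\), which is the first condition for membership in \(\Su_{\cmd}\).

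It remains to produce a body-entry witness \(s_1\in\concr{S}_+\) with \((s_1,s)\in\col_{\cmd};\Sche\) and \(s\in\after(s_1)\). The tempting move is to feed \(s_2\) to Lemma~\ref{lemma:Cw} and read \(s_1\) off its clause~(2), but this is exactly where the main obstacle lies: Lemma~\ref{lemma:Cw} decomposes \(s_2\) along the current thread's \emph{last} pass through the body (or, in clause~(1), its exit through \(\col_{\neg}\), which exhibits no body pass at all), whereas \(\thread(s)\)---a future descendant of \(\thread(s_0)\)---may have been spawned by a \(\ccreate\) in an \emph{earlier} iteration, so for the head \(s_0'\) the lemma returns neither \(s\in\after(s_0')\) nor the iteration match need hold. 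I would therefore not read \(s_1\) off \(s_2\); instead I would walk a path witnessing \((s_0,s)\) and isolate the iteration in which the lineage of \(\thread(s)\) is created. In that iteration the current thread sits at a body label just after entering through \(\cguard(\guardp)\), so the loop head of the iteration belongs to \(\concr{S}_\omega\) and its \(\cguard(\guardp)\)-successor yields \(s_1\in\concr{S}_+\). The existence of such an iteration is forced by \(s\in\after(s_0)\) with \(\thread(s)\neq\thread(s_0)\), together with the no-creation bookkeeping of Lemmas~\ref{lemma:C} and~\ref{lemma:Ndescew} (no descendant is born during a guard or during pure interference). Once \(s_1\) is chosen this way, \((s_1,s)\in\col_{\cmd};\Sche\) and, via Lemma~\ref{lemma:H}, \(s\in\after(s_1)\); together with \((s,s')\in\TR{\li2\cmd,\ell_1}\) this gives \((s,s')\in\Su_{\cmd}\). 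The remaining interference- and label-bookkeeping is identical to that of Claims~\ref{claim:ifSu} and~\ref{claim:WSe}.
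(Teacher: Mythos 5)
Your reconstruction of the first half is fine and matches the paper in substance: unfold \(\Su\), record \(\thread(s)\neq\thread(s_0)\), and obtain \((s,s')\in\TR{\li2\cmd,\ell_1}\) from Lemmas \ref{lemma:F}, \ref{lemma:Gw} and \ref{lemma:A'}. On the witness question, however, you should know that the paper's own proof \emph{is} the ``tempting move'' you reject: it applies Lemma \ref{lemma:Cw} to \((s_0,s_2)\), kills the \(\col_{\neg}\) branch by the thread contradiction (Lemma \ref{lemma:E'} against the schedule step), and in the other branch reads off \(s_1\in\concr{S}_{+}\) and invokes Proposition \ref{prop:apres1} to get \(s\in\after(s_1)\); it silently identifies the loop head returned by Lemma \ref{lemma:Cw} with the original \(s_0\) of the \(\Su\)-witness, which is precisely what your earlier-iteration scenario challenges. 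So your diagnosis of a real subtlety is correct. The problem is that your repair asserts, without justification, exactly the statement that fails.

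The unsupported step is ``Once \(s_1\) is chosen this way, \((s_1,s)\in\col_{\cmd};\Sche\)''. Take \(\cmd\) to be \(\ccreate(\li4 x:=x+1)\), let the parent \(i_0\) complete two iterations, spawning \(j_1\) and then \(j_2\), and let \((s,s')\) be the increment fired by \(j_1\) afterwards, so that \(s_2\) (the state before the schedule step) has both children alive. Choosing \(s_1\) as the body entry of the first iteration, where \(j_1\) is born, does give \(s\in\after(s_1)\), but \((s_1,s_2)\in\col_{\cmd}\) is false: the genealogy of \(s_2\) records that \(j_2\) was created by \(i_0\) at \(\ell_2\), so any admissible path must bring \(i_0\) from \(\ell_1\) (where the first spawn leaves it) back to \(\ell_2\); a transition changing \(i_0\)'s control point is fired by \(i_0\) itself, hence from a source inside \(\after(s_1)\), so the restriction \(\restrict{\concr{A}_{+}}{\compl{\after(s_1)}}\) bars it from being interference, and it would have to lie in \(\TR{\li2\cmd,\ell_1}\), which is impossible since by Lemma \ref{lemma:A} no transition of \(\TR{\li2\cmd,\ell_1}\smallsetminus\Sche\) has source label \(\ell_1\). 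Symmetrically, the last iteration's entry satisfies \(\col_{\cmd}\) but not \(s\in\after(s_1)\), as you yourself observed. So no single witness satisfies both conjuncts in the definition of \(\Su_{\cmd}\): such a transition lies instead in \(\Sud_{\cmd}\), whose definition decouples the two requirements through \(\extname\), which readmits the parent's transitions fired after the body's exit. What your path-splitting argument can actually deliver is \(\Su\subset\Su_{\cmd}\cup\Sud_{\cmd}\); that weaker inclusion is the one to prove, and it still yields Property \ref{fwhile} of Theorem \ref{theorem:denot}, since both \(\Su_{\cmd}\) and \(\Sud_{\cmd}\) are absorbed into the guarantee and assume accumulated by \(\loopw\eomega\).
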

\begin{proof}
 Let \((s,s')\in\Su\). There exists \(s_0\) and \(s_2\) such that \((s_0,s_2)\in \col_{\omega}\).
 By Lemma \ref{lemma:C}, either \((s_0,s_2)\in\col_{\neg}\) or there exists \(s_1\in\concr{S}_{{+}}\) such that \((s_0,s_1)\in\col_{{+}}\) and \((s_1,s_2)\in\col_{\cmd}\) and \(\lbl(s_2)\neq\ell_2\).
  
  In the first case, because \(s\in\after(s_0)\), by Lemma \ref{lemma:E'}, \(\thread(s)=\thread(s_0)\). But, by definition of \(\Sche\) and \(\col_{{\neg}}\), \(\thread(s_2)\neq\thread(s)\) and \(\thread(s_0)=\thread(s_2)\). This is contradictory.
  
  In the second case, by Proposition \ref{prop:apres1}, \(s\in\after{s_1}\). Because \(\thread(s)\neq\thread(s_0)=\thread(s_2)\), by Lemma \ref{lemma:F}, \(\lbl(s)\in\Labs{\li2\cmd,\ell_1}\smallsetminus\{\ell_2\}\). Therefore, by Lemmas \ref{lemma:Gw} and \ref{lemma:A'}, \((s,s')\in\TR{\li2\cmd,\ell_1}\). Hence \((s,s')\in \Su_{\cmd}\)
\end{proof}

\begin{claim}\label{claim:WSud}
 \(\Sud\subset\Sud_{\neg}\)
\end{claim}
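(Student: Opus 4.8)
I would start by noticing that the target $\Sud_{\neg}$ is empty. Indeed $\li1\cguard(\guardn),\ell_3$ is a basic statement, so Claim~\ref{claim:SudV} applied to $\concr{K}_{\neg}=\ssem{\li1\cguard(\guardn),\ell_3}(\concr{Q}_{\omega})$ yields $\Sud_{\neg}=\emptyset$. The inclusion $\Sud\subset\Sud_{\neg}$ is therefore equivalent to $\Sud=\emptyset$, and the whole argument becomes a proof by contradiction: I assume some $(s,s')\in\Sud$ and drive it to an absurdity.

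Unfolding the definition of $\Sud$ for $\concr{K}=\ssem{\li1\cwhile(\guardp)\{\li2\cmd\},\ell_3}(\concr{Q}_{\omega})$, this assumption produces $s_0\in\concr{S}_{\omega}$ and $s_2\in\concr{S}'$ with $(s_0,s_2)\in\col_{\omega}$, $(s_2,s)\in\ext{s_0}{s_2}$ and $s\in\after(s_0)\smallsetminus\after(s_2)$. Since $s_2\in\concr{S}'$ we have $\lbl(s_2)=\ell_3$, and the definition of $\col_{\omega}$ gives $\thread(s_0)=\thread(s_2)$. The first structural fact I would extract is that $s_2$ is reached through the \emph{exit guard}, not through the body: by Claim~\ref{claim:WS} we already know $\concr{S}'\subset\concr{S}_{\neg}$, so there is $s_0^{\dagger}\in\concr{S}_{\omega}$ with $(s_0^{\dagger},s_2)\in\col_{\neg}$. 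Alternatively, feeding $s_2$ to Lemma~\ref{lemma:Cw} and discarding its body alternative works too, since that alternative forces $\lbl(s_2)\in\Labs{\li2\cmd,\ell_1}$ by Lemma~\ref{lemma:R}, whereas $\ell_3\notin\Labs{\li2\cmd,\ell_1}$ by Lemma~\ref{lemma:Asub2}. Either way, $s_2$ is produced by the basic statement $\li1\cguard(\guardn),\ell_3$, which creates no thread.

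I would then replay the genealogy computation of Claim~\ref{claim:SudV}. Writing $\cstate{_2}=s_2$ and $(i,P,\m,\h_2\cdot\h)=s$ with $i=\thread(s)$, the aim is to prove that the only thread able to fire a $\Tr\smallsetminus\Sche$ transition from a state of $\ext{s_0}{s_2}$ still lying in $\after(s_0)\smallsetminus\after(s_2)$ is $\thread(s_2)$ itself. Concretely I would show $\desce_{\h}(\{i_2\})=\{i_2\}$ along the relevant tail using Lemma~\ref{lemma:Ndescew}, and then invoke Lemma~\ref{lemma:E'} to conclude $\thread(s)=\thread(s_2)=\thread(s_0)$. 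Lemma~\ref{lemma:F+}(1) then gives $s\in\after(s_2)$, contradicting $s\in\after(s_0)\smallsetminus\after(s_2)$ and closing the proof.

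The delicate point I expect to fight with is the bookkeeping that bridges the \emph{original} anchor $s_0$, against which $s\in\after(s_0)\smallsetminus\after(s_2)$ is measured, and the guard-reached witness $s_0^{\dagger}$ furnished above: one must guarantee that every descendant of $\thread(s_0)$ spawned strictly inside the loop and surviving past $s_2$ is already a \emph{past} descendant of $s_2$, hence sits outside $\after(s_2)$ for a trivial reason and cannot supply a fresh source for a $\Sud$ transition. This is exactly where Lemmas~\ref{lemma:Descendant}, \ref{lemma:Ndesce} and \ref{lemma:E'} carry the load, just as in the basic-statement Claim~\ref{claim:SudV}; transplanting that reasoning from a single basic step to the fixpoint $\concr{Q}_{\omega}$, where $\concr{G}_{\omega}$ and $\concr{A}_{\omega}$ already absorb all body and sub-thread transitions, is the real content of the claim and the place where I would spend most of the care.
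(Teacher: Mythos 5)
There is a genuine gap, and it sits exactly where you yourself flag discomfort. Your opening reduction is valid --- \(\li1\cguard(\guardn),\ell_3\) is a basic statement, so Claim~\ref{claim:SudV} does give \(\Sud_{\neg}=\emptyset\) (the paper itself invokes Claim~\ref{claim:SudV} for guards in the proof of Lemma~\ref{lemma:Cif}) --- but the target it leaves you with, \(\Sud=\emptyset\), is false as soon as the body \(\li2\cmd\) contains a \(\ccreate\). Take \(s_0\in\concr{S}_{\omega}\) at \(\ell_1\) with \(i_0=\thread(s_0)\), run one iteration in which the body spawns a thread \(k\), and exit through the negative guard at \(s_2\) with \(\lbl(s_2)=\ell_3\); at the fixpoint all these transitions lie in \(\concr{G}_{\omega}\), so \((s_0,s_2)\in\col\) and \(s_2\in\concr{S}'\). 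A schedule transition from \(s_2\) to the same state with current thread \(k\) lies in \(\restrict{\concr{G}_{\omega}}{\after(s_2)}\), hence \((s_2,s)\in\ext{s_0}{s_2}\); and \(s\in\after(s_0)\smallsetminus\after(s_2)\), because \(k\in\desce_{\h_1}(\{i_0\})\) for the genealogy segment \(\h_1\) between \(s_0\) and \(s_2\) while \(k\) is not created after \(s_2\). The next step \((s,s')\) of \(k\)'s code is in \(\TR{\li1\cwhile(\guardp)\{\li2\cmd\},\ell_3}\), so \((s,s')\in\Sud\): this is precisely the population \(\Sud\) was designed to collect (Fig.~\subref{figsub:sub}). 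Consequently the genealogy computation of Claim~\ref{claim:SudV} cannot be ``replayed'': it derives \(\desce_{\h_1}(\{i_0\})=\{i_0\}\) from Lemmas~\ref{lemma:Ndescew} and \ref{lemma:E'}, whose hypothesis is that the statement's non-interference transitions preserve the genealogy --- true for a guard, false for a while loop whose transition set contains spawns. Your closing patch is backwards: a state of \(\after(s_0)\smallsetminus\after(s_2)\) reachable through \(\ext{s_0}{s_2}\) is not excluded ``for a trivial reason''; it is an admissible source of a \(\Sud\) transition. Re-anchoring at \(s_0^{\dagger}\) does not help, because \(s\in\after(s_0)\smallsetminus\after(s_2)\) is measured at the original \(s_0\), and relative to \(s_0^{\dagger}\) (where \(k\) already exists) the thread \(k\) is a past thread, about which Claim~\ref{claim:SudV}'s argument says nothing.

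For comparison, the paper's own proof never asserts emptiness: it uses Lemma~\ref{lemma:Cw} to place \((s_0,s_2)\) in \(\col_{\neg}\) and Proposition~\ref{prop:apres1} to push \(\ext{s_0}{s_2}\) into \(\ext[_{\neg}]{s_0}{s_2}\), reading \((s,s')\) as an element of \(\Sud_{\neg}\). You should be aware, though, that your (correct) observation \(\Sud_{\neg}=\emptyset\), together with the counterexample above, shows the claim as printed cannot be literally right, and it localizes the defect in the paper's argument as well: Lemma~\ref{lemma:Cw} furnishes \emph{some} initial state in \(\concr{S}_{\omega}\) --- in the scenario above one in which \(k\) already exists, so that \(s\) is no longer in its \(\after\)-cone --- not the \(s_0\) witnessing \((s,s')\in\Sud\), and the proof never checks \((s,s')\in\TR{\li1\cguard(\guardn),\ell_3}\). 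What Property~\ref{fwhile} of Theorem~\ref{theorem:denot} actually needs, and what is true, is that loop-spawned transitions are absorbed as \(\Su_{\cmd}\cup\Sud_{\cmd}\) into \(\concr{A}_{\omega}\) and \(\concr{G}_{\omega}\) at the fixpoint, so a repaired claim would read \(\Sud\subset\Sud_{\neg}\cup\Su_{\cmd}\cup\Sud_{\cmd}\) or similar. So your reduction is sharper than the paper's reading of its own claim, but the proof you build on it fails at its central step: no genealogy bookkeeping will make \(\Sud\) empty for a spawning loop body.
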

\begin{proof}
 Let  \((s,s')\in\Sud\). Therefore, there exists \(s_0\concr{S}_{\omega}\) and \(s_1\in\concr{S}'\) such that \((s_0,s_1)\in\col\) and \((s_1,s)\in\ext{s_0}{s_1}\).
 
 Notice that \(\lbl(s_1)=\ell_3\), therefore, according to Lemma \ref{lemma:R}, \(s_1\notin\col_{{+}};\col_{\cmd}\langle\concr{S_\omega}\rangle\).
 hence, by Lemma \ref{lemma:Cw}, \((s_0,s_1)\in\col_{{\neg}}\).
 
 \((s_1,s)\in\ext{s_0}{s_1}\subset (\restrict{\concr{G_{\omega}}}{\after(s_0)}\cap\Tr)\cup\restrict{\concr{A_{\omega}}}{\compl{\after(s_0)}}\cup\restrict{\concr{G_{\omega}}}{\after(s_1)}\).
 By Proposition \ref{prop:apres1},  \((s_1,s)\in(\restrict{\concr{G_{\omega}}}{\after(s_0)}\cap\TR{\li1\cguard(\guardn),\ell_2})\cup (\restrict{\concr{G_{\omega}}}{\after(s_1)}\cap\Tr\smallsetminus\TR{\li1\cguard(\guardn),\ell_2})\cup\restrict{\concr{A_{\omega}}}{\compl{\after(s_0)}}\cup\restrict{\concr{G_{\omega}}}{\after(s_1)}=\ext[_{\neg}]{s_1}{s_2}\). 
\end{proof}

Property \ref{fwhile} of Theorem \ref{theorem:denot} is a straightforward consequence of Claims \ref{claim:WS}, \ref{claim:WSe}, \ref{claim:WSu} and \ref{claim:WSud}.

\subsubsection{Proof of Property \ref{fcreate} of Theorem \ref{theorem:denot}} 

Let \(\concr{Q}_0 =\qcp{_0}\) a configuration.\\
Let \(\concr{Q}'=\qcp{'}=\av{\li1\ccreate(\li2\cmd),\ell_3 }(\concr{Q}_0)\)\\
Let \(\concr{K}=\nqp{}=\ssem{\li1\ccreate(\li2\cmd),\ell_3 }(\concr{Q}_0)\)\\
Let \(\concr{Q}_1=\qcp{_1}=\av{\li1\crcreate(\ell_2),\ell_3}(\concr{Q}_0)\)\\
Let \(\concr{K}_1=\nqp{_1}=\ssem{\li1\crcreate(\ell_2),\ell_3}(\concr{Q}_0)\)\\
Let \(\concr{Q}_2=\qcp{_2}= \cinit_{\ell_2}(\concr{Q}_1)\)\\
Let \(\concr{G}_{\infty}=\afp{\li2 \cmd,\lend}(\concr{Q}_2)\)\\
Let \(\concr{K}_3=\nqp{_3} ={\ssem{\li2 \cmd,\lend}}\langle \concr{S}_2,\concr{G}_{\infty},\concr{A}_2\rangle\)\\
Let \(\concr{Q}_3=\qcp{_3}= \ccomb_{\concr{Q}_0} (\concr{G}_{\infty})\)
Let \( \Tr=\TR{\li1\ccreate(\li2\cmd),\ell_3}\)

\begin{lemma}\label{lemma:Gcr}
 \(\TR{\li1\ccreate(\li2\cmd),\ell_3} = \TR{\li1\cspawn(\ell_2),\ell_3}\cup\TR{\li2\cmd,\lend}\)
\end{lemma}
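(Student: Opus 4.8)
The plan is to prove the set equality by a direct inversion on the global-transition judgment, exactly in the spirit of Lemmas \ref{lemma:G}, \ref{lemma:Gif} and \ref{lemma:Gw} for sequences, conditionals and loops. Recall that $\TR{\lab\stmt,\ell'}=\{(s,s')\mid \lab\stmt,\ell'\Vdash s\rightarrow s'\}$, so it suffices to characterise which pairs $(s,s')$ admit a derivation of $\li1\ccreate(\li2\cmd),\ell_3\Vdash s\rightarrow s'$. Inspecting Fig.~\ref{figrules}, the only rules whose conclusion carries the statement $\li1\ccreate(\li2\cmd),\ell_3$ to the left of $\Vdash$ are ``create'', ``child'', and the generic ``schedule'' rule (which applies to every statement).

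For the inclusion $\subseteq$, I would take $(s,s')\in\TR{\li1\ccreate(\li2\cmd),\ell_3}$ and invert the last rule of its derivation. If it is ``create'', the premise is $\li1\crcreate(\ell_2),\ell_3\Vdash s\rightarrow s'$, i.e.\ $(s,s')\in\TR{\li1\cspawn(\ell_2),\ell_3}$ (the primitives $\crcreate$ and $\cspawn$ denote the same $\comm{spawn}$ statement). If it is ``child'', the premise is $\li2\cmd,\lend\Vdash s\rightarrow s'$, i.e.\ $(s,s')\in\TR{\li2\cmd,\lend}$. If it is ``schedule'', then $(s,s')\in\Sche$; since the schedule rule fires for an arbitrary statement, $\li1\cspawn(\ell_2),\ell_3\Vdash s\rightarrow s'$ holds as well, so $(s,s')\in\TR{\li1\cspawn(\ell_2),\ell_3}$. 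In every case $(s,s')$ lands in the right-hand union.

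The reverse inclusion $\supseteq$ is immediate from the same two rules read top-down: any $(s,s')\in\TR{\li1\cspawn(\ell_2),\ell_3}$ is a valid premise of ``create'', yielding $\li1\ccreate(\li2\cmd),\ell_3\Vdash s\rightarrow s'$; and any $(s,s')\in\TR{\li2\cmd,\lend}$ is a valid premise of ``child'', yielding the same conclusion. Hence both members of the union are contained in $\TR{\li1\ccreate(\li2\cmd),\ell_3}$, and the two inclusions give the equality.

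I do not expect any genuine obstacle here: this is a purely syntactic inversion argument. The only point requiring care is the ubiquitous ``schedule'' rule, which silently injects $\Sche$ into each of the three transition sets; this is harmless precisely because $\Sche$ is a common subset of all three, so it never breaks the equality.
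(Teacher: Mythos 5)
Your proof is correct. The paper itself gives no proof of Lemma \ref{lemma:Gcr} at all: like its companions Lemmas \ref{lemma:G}, \ref{lemma:Gif} and \ref{lemma:Gw}, it is stated as immediate from the operational semantics, and your rule-inversion argument is exactly the justification the authors leave implicit. One small point of care: the rules whose conclusion can unify with \(\li1\ccreate(\li2\cmd),\ell_3\Vdash s\rightarrow s'\) are not only ``create'', ``child'' and ``schedule'' but also ``parallel'', whose conclusion is likewise generic in the statement; it is harmless here only because its premise requires a local transition \(\li1\ccreate(\li2\cmd),\ell_3\vdash(\ell,\m)\rightarrow(\ell',\m')\), and no local rule of Fig.~\ref{figrules} applies to a \(\ccreate\) statement, so that case is vacuous. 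Once that rule is explicitly discharged, your two inclusions go through exactly as written, including the observation that \(\Sche\) is contained in \(\TR{\lab\stmt,\ell'}\) for every statement.
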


\begin{lemma}\label{lemma:apresC0}
Let \(T\) a set of transitions.
 Let \(s_0\), \(s_1\), \(s_2\), \(s\) and \(s'\) such that \((s_0,s_1)\in\col_1\), \(s_2\in\cschedule\{s_1\}\), \(\lbl(s_1)=\ell_3\), \((s_2,s)\in T^{\star}\) and \(s\in\after(s_0)\).
 
 Therefore, \(s\in\after(s_1) \cup \after(s_2)\).
\end{lemma}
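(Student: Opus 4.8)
The plan is to turn the statement into a computation on genealogies, relying on the decomposition of a basic-statement run given by Lemma \ref{lemma:C} together with the descendant-invariance Lemma \ref{lemma:Ndescew}. Since \(\li1\crcreate(\ell_2),\ell_3\) is a basic statement, I first fix the shape of the three states. Writing \(s_0=\cstate{_0}\), the definition of \(\col\) (and Lemma \ref{lemma:R}) gives \(\thread(s_0)=\thread(s_1)=i_0\) and \(s_1\in\after(s_0)\), so \(s_1=(i_0,P_1,\m_1,\h_0\cdot\h_1')\) for some \(\h_1'\). The hypothesis \(s_2\in\cschedule\{s_1\}\) and the definition of \(\cschedule\) then say that the genealogy of \(s_1\) ends with a creation performed by \(\thread(s_1)=i_0\), i.e. \(\h_0\cdot\h_1'\) ends with a letter \((i_0,\ell,j)\), and that \(s_2=(j,P_1,\m_1,\h_0\cdot\h_1')\).

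Next I would determine \(\h_1'\) precisely. As \(\lbl(s_1)=\ell_3\neq\ell_1\), the second case of Lemma \ref{lemma:C} provides \(s_a\in\cinter{\concr{A_0}}(\{s_0\})\) and \(s_b\) with \((s_a,s_b)\in\TR{\li1\crcreate(\ell_2),\ell_3}\smallsetminus\Sche\) and \(s_1\in\cinter{\concr{A_0}}(\{s_b\})\). The current thread is \(i_0\) throughout both interference segments, so Lemma \ref{lemma:Ndescew} applied to \((s_0,s_a)\) gives \(\desce_{\h_a}(\{i_0\})=\{i_0\}\), where \(s_a=(i_0,P_a,\m_a,\h_0\cdot\h_a)\); the spawn rule gives \(s_b=(i_0,P_b,\m_a,\h_0\cdot\h_a\cdot(i_0,\ell_2,j'))\) for the fresh child \(j'\); and Lemma \ref{lemma:Ndescew} applied to \((s_b,s_1)\) gives \(\desce_{\h_c}(\{i_0\})=\{i_0\}\) for the trailing genealogy \(\h_c\). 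The crucial observation is that neither interference segment can contain a creation whose creator is \(i_0\): such a transition would be fired from a state with current thread \(i_0\), and every such state lies in the relevant \(\after(\cdot)\) (since \(i_0\in\desce_{\epsilon}(\{i_0\})\)), hence cannot be an \(\restrict{\concr{A_0}}{\compl{\after(\cdot)}}\)-transition. Because \(\cschedule\) forces the last letter of \(s_1\)'s genealogy to be a creation by \(i_0\), this rules out \(\h_c\neq\epsilon\); thus \(\h_c=\epsilon\), \(j=j'\), \(\ell=\ell_2\), and \(\h_1'=\h_a\cdot(i_0,\ell_2,j)\).

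The conclusion is then a direct genealogy calculation. Since \((s_2,s)\in T^{\star}\), transitions only append to genealogies, so \(s=(i',P,\m,\h_0\cdot\h_1'\cdot\h'')\) for some \(\h''\), and \(s\in\after(s_0)\) reads \(i'\in\desce_{\h_1'\cdot\h''}(\{i_0\})\). Using \(\desce_{\h_a}(\{i_0\})=\{i_0\}\) and the composition property of \(\desce\), \(\desce_{\h_1'\cdot\h''}(\{i_0\})=\desce_{(i_0,\ell_2,j)\cdot\h''}(\{i_0\})=\desce_{\h''}(\{i_0,j\})\); and since \(\desce\) distributes over unions of seeds (an immediate induction on the genealogy), \(\desce_{\h''}(\{i_0,j\})=\desce_{\h''}(\{i_0\})\cup\desce_{\h''}(\{j\})\). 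Hence either \(i'\in\desce_{\h''}(\{i_0\})\), which is exactly \(s\in\after(s_1)\), or \(i'\in\desce_{\h''}(\{j\})\), which is exactly \(s\in\after(s_2)\).

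The main obstacle is the middle step: proving that the child scheduled to by \(\cschedule\) is precisely the thread spawned by the statement, i.e. that the interferences following the spawn add no creation by \(i_0\) and therefore \(\h_c=\epsilon\). This is where the hypothesis \(s_2\in\cschedule\{s_1\}\) is genuinely used and where the localisation of interferences to \(\compl{\after(\cdot)}\) must be combined carefully with Lemma \ref{lemma:Ndescew}; once \(\h_1'=\h_a\cdot(i_0,\ell_2,j)\) with \(\desce_{\h_a}(\{i_0\})=\{i_0\}\) is in hand, the remaining computation is routine.
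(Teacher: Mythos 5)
Your proof is correct and follows essentially the same route as the paper's own: decompose \((s_0,s_1)\) by Lemma \ref{lemma:C} into interference, one spawn transition, and interference, then compute descendants via Lemma \ref{lemma:Ndescew} and the composition and union properties of \(\desce\), finally splitting \(\desce_{\h''}(\{i_0,j\})\) into the \(\after(s_1)\) and \(\after(s_2)\) parts. If anything, your middle step is more careful than the paper's: you derive from the definition of \(\cschedule\) that the trailing interference segment can add no creation letters, so that \(\thread(s_2)\) is exactly the spawned thread \(j\), whereas the paper's proof keeps a general post-spawn genealogy suffix and uses \(\thread(s_2)=j\) tacitly.
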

\begin{proof}
According to Lemma \ref{lemma:C}, there exists \(s'_0\) and \(s'_1\) such that, \(s'_0\in\cinter{\concr{A_0}}\{s_0\}\), \((s'_0,s'_1)\in \TR{\li1\cspawn(\ell_2),\ell_3}\smallsetminus\Sche\), and  \(s_1\in\cinter{\concr{A_0}}\{s'_1\}\).

By Lemmas \ref{lemma:E'} and \ref{lemma:A}, \(\thread(s_0)=\thread(s'_0)=\thread(s'_1)=\thread(s_1)\).

Let \(i_0=\thread(s_0)\) and \(i=\thread(s)\).

Let \(\h_0\), \(\h'_0\), \(j\), \(\h_1\) and \(\h\) such that, respectively, the genealogy of 
\(s_0\), \(s'_0\), \(s''_0\), \(s_1\), \(s_2\), \(s\) is \(\h_0\), \(\h_0\cdot\h'_0\), \(\h_0\cdot\h'_0\cdot(i_0,\ell_2,j)\), \(\h_0\cdot\h'_0\cdot(i_0,\ell_2,j)\cdot\h_1\), \(\h_0\cdot\h'_0\cdot(i_0,\ell_2,j)\cdot\h_1\),  \(\h_0\cdot\h'_0\cdot(i_0,\ell_2,j)\cdot\h_1\cdot\h\).
Notice that \(s_1\) and \(s_2\) have the same genealogy.

Because \((s_0,s'_{0})\in [\restrict{\concr{A_0}}{\compl{\after(s_0)}} \cup\Sche]^{\ast}\), by Lemma \ref{lemma:Ndescew}, \(\desce_{\h'_0}\{i_0\}= \{i_0\}\).

 Because \((s''_1,s_1)\in [\restrict{\concr{A_0}}{\compl{\after(s_0)}} \cup\Sche]^{\ast}\), by Lemma \ref{lemma:Ndescew}, \(\desce_{(i_0,\ell_2,j)\cdot\h_1}\{i_0\}= \desce_{(i_0,\ell_2,j)}\{i_0\}=\{i_0,j\}\).
 
 By definition of \(\desce\), \(\desce_{\h'_0\cdot(i_0,\ell_2,j)\cdot\h_1\cdot\h}(\{i_0\}) = \desce_{\h}[\desce_{(i_0,\ell_2,j)\cdot\h_1}(\desce_{\h'_0}\{i_0\})]=\desce_{\h}\{i_0,j\}\)
 By definition of \(\desce\), \(\desce_{\h'_0\cdot(i_0,\ell_2,j)\cdot\h_1\cdot\h}(\{i_0\}) = \desce_{\h}(\{i_0\})\cup\desce_{\h}(\{j\})\).
  
 Because \(s\in\after(s_0)\), \(i\in\desce_{\h'_0\cdot(i_0,\ell_2,j)\cdot\h_2\cdot\h}(\{i_0\})\). Therefore either \(i\in\desce_{\h}(\{i_0\}) \) or \(i\in\desce_{\h}(\{j\}) \).
 If \(i\in\desce_{\h}(\{i_0\}) \) then \(s\in\after(s_1)\).
 If \(i\in\desce_{\h}(\{j\}) \) then \(s\in\after(s_2)\).
\end{proof}

\begin{lemma}\label{lemma:apresC}
 Let \(s_0\), \(s_1\), \(s_2\), \(s\) and \(s'\) such that \((s_0,s_1)\in\col_1\), \(s_2\in\cschedule\{s_1\}\), \(\lbl(s_1)=\ell_3\), \((s_2,s)\in\restrict{(\concr{G_0}\cup\concr{A_0})}{\compl{\after(s_1)}}^{\star}\) and \((s,s') \in\restrict{\concr{G_0}}{\after(s_0)}\cap\Tr \).
 
 Therefore, \(s\in\after(s_2)\) (i.e.,  \((s,s') \in\restrict{\concr{G_0}}{\after(s_2)}\cap\Tr \)).
\end{lemma}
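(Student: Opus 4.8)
The plan is to combine Lemma~\ref{lemma:apresC0} with a label argument that exploits the extra hypothesis $(s,s')\in\Tr$. Applying Lemma~\ref{lemma:apresC0} with $T=\restrict{(\concr{G_0}\cup\concr{A_0})}{\compl{\after(s_1)}}$ (its hypotheses hold verbatim: $(s_2,s)\in T^{\star}$, and $s\in\after(s_0)$ because $(s,s')\in\restrict{\concr{G_0}}{\after(s_0)}$) yields $s\in\after(s_1)\cup\after(s_2)$. It therefore suffices to rule out $s\in\after(s_1)$: once this is done $s\in\after(s_2)$ is immediate, and the reformulation $(s,s')\in\restrict{\concr{G_0}}{\after(s_2)}\cap\Tr$ follows from the hypothesis $(s,s')\in\restrict{\concr{G_0}}{\after(s_0)}\cap\Tr$.

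So I would assume for contradiction that $s\in\after(s_1)$ and write $i_1=\thread(s_1)$. The path witnessing $(s_2,s)\in T^{\star}$ is $s_2=t_0\to\cdots\to t_m=s$ with every \emph{source} $t_k$ $(k<m)$ in $\compl{\after(s_1)}$; note $s_2=t_0\in\compl{\after(s_1)}$ as well, since scheduling the fresh child moved the running thread off $i_1$. The key point is that no $t_k$ $(k<m)$ has $i_1$ or a future descendant of $i_1$ as running thread. Hence along the path $P(i_1)$ is never modified (a transition changes only the running thread's label) and, crucially, $i_1$ never fires a $\cspawn$, so no proper future descendant of $i_1$ is ever created. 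Reusing the genealogy bookkeeping of Lemma~\ref{lemma:apresC0} together with Lemma~\ref{lemma:Ndescew}, the genealogy fragment $\h$ appended after $s_1$ thus satisfies $\desce_{\h}(\{i_1\})=\{i_1\}$. Consequently $s\in\after(s_1)$ forces $\thread(s)=i_1$, which can only be reached by a final scheduling step back onto $i_1$; and since $P(i_1)$ stayed equal to its value in $s_1$, this gives $\lbl(s)=P(i_1)=\ell_3$.

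To close the contradiction I would use $(s,s')\in\Tr\smallsetminus\Sche$. By Lemma~\ref{lemma:Gcr}, $\Tr=\TR{\li1\cspawn(\ell_2),\ell_3}\cup\TR{\li2\cmd,\lend}$. If $(s,s')\in\TR{\li1\cspawn(\ell_2),\ell_3}\smallsetminus\Sche$, then Lemma~\ref{lemma:A'} gives $\lbl(s)=\ell_1$; if $(s,s')\in\TR{\li2\cmd,\lend}\smallsetminus\Sche$, then Lemma~\ref{lemma:A} gives $\lbl(s)\in\Labs{\li2\cmd,\lend}\smallsetminus\{\lend\}$. As all labels of $\li1\ccreate(\li2\cmd),\ell_3$ are pairwise distinct, neither $\ell_1$ nor any label in $\Labs{\li2\cmd,\lend}$ equals $\ell_3$, so $\lbl(s)\neq\ell_3$ — equivalently, Lemma~\ref{lemma:Abis} already forbids a non-scheduling transition of $\Tr$ from issuing at $\ell_3$. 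This contradicts $\lbl(s)=\ell_3$, so $s\notin\after(s_1)$ and hence $s\in\after(s_2)$.

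The hard part will be the middle step: showing that, under the restriction of every source to $\compl{\after(s_1)}$, the idle parent $i_1$ stays frozen at $\ell_3$ and spawns nothing, so that the \emph{only} way to land in $\after(s_1)$ is a trailing schedule back onto $i_1$ at label $\ell_3$. The reflexive--transitive closure makes it easy to forget that the final state alone may escape $\compl{\after(s_1)}$, and one must check that such an escape is necessarily this harmless schedule and not a genuine move placing $s$ at a statement label; this is exactly where the hypothesis $(s,s')\in\Tr$ earns its keep. I restrict to $(s,s')\notin\Sche$ throughout, since a pure scheduling step carries no label information and is not among the genuine child-subtree transitions for which the lemma is invoked.
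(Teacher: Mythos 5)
Your proof is correct and takes essentially the same route as the paper: Lemma~\ref{lemma:apresC0} reduces the claim to ruling out \(s\in\after(s_1)\), which is then refuted by showing \(\thread(s)=\thread(s_1)\) and \(\lbl(s)=\ell_3\) and contradicting the label lemmas. Your ``hard middle step'' (the frozen parent \(i_1\), via \(\desce\)-bookkeeping and Lemma~\ref{lemma:Ndescew}) is exactly what the paper discharges by citing Lemmas~\ref{lemma:E'} and~\ref{lemma:B} on the path \(s_1\rightarrow s_2\rightarrow\cdots\rightarrow s\), so it can be replaced by those citations; your explicit restriction to \((s,s')\notin\Sche\) is a fair point, since the paper's own use of Lemma~\ref{lemma:A} tacitly assumes the same.
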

\begin{proof}
 Due to Lemma \ref{lemma:apresC0}, \(s\in\after(s_1) \cup \after(s_2)\).
 Assume by contradiction that \(s\in\after(s_1)\). Therefore, by Lemma \ref{lemma:E'}, \(\thread(s)=\thread(s_1)\) and by Lemma \ref{lemma:B}, \(\lbl(s)=\lbl(s_1)=\ell_3\).
 This is contradictory with Lemma \ref{lemma:A} which implies \(\lbl(s)\neq\ell_3\).
\end{proof}

\begin{lemma}\label{lemma:Ccr}
 If \((s_0,s)\in\col\) then:
 \begin{itemize}
  \item either \(s\in\cinter{\concr{A_0}}(s_0)\) and \(\lbl(s)=\ell_1\)
  \item or there exists \(s_1,s_2,s_3\) such that \((s_0,s_1)\in\col_1\), \((s_1,s_2)\in\Sche\), \((s_2,s_3)\in\col_3\cap\ext[_1]{s_0}{s_1}\), \((s_3,s)\in\Sche\) and \(s_2\in\cschedule\{s_1\}\). Furthermore \(\lbl(s_1)=\lbl(s)=\ell_3\) and \(s\in\cinter{\concr{G_0}\cup\concr{A_0}}\{s_1\}\).
 \end{itemize}
\end{lemma}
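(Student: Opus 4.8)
The plan is to decompose a \(\col\)-path for the create statement exactly as in the proofs of Lemmas \ref{lemma:Cif} and \ref{lemma:Cw}, isolating the single spawn step and then the child's run. First I would unfold \((s_0,s)\in\col\): there is a path from \(s_0\) to \(s\) whose steps lie in \((\restrict{\concr{G_0}}{\after(s_0)}\cap\Tr)\cup\restrict{\concr{A_0}}{\compl{\after(s_0)}}\), with \(\thread(s_0)=\thread(s)\) and \(\lbl(s_0)=\ell_1\). If the whole path stays in \((\restrict{\concr{A_0}}{\compl{\after(s_0)}}\cup\Sche)^{\star}\), then \(s\in\cinter{\concr{A_0}}(\{s_0\})\) and Lemma \ref{lemma:B} gives \(\lbl(s)=\lbl(s_0)=\ell_1\), the first alternative. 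Otherwise I split the path at its first step in \(\restrict{\concr{G_0}}{\after(s_0)}\cap\Tr\smallsetminus\Sche\), so \((s_0,s'_0)\in(\restrict{\concr{A_0}}{\compl{\after(s_0)}}\cup\Sche)^{\star}\) followed by \((s'_0,s_1)\in\restrict{\concr{G_0}}{\after(s_0)}\cap\Tr\smallsetminus\Sche\). Lemma \ref{lemma:E'} preserves the current thread along the prefix, so \(\thread(s'_0)=\thread(s_0)\), and Lemma \ref{lemma:B} yields \(\lbl(s'_0)=\ell_1\). Using Lemma \ref{lemma:Gcr} together with \(\ell_1\notin\Labs{\li2\cmd,\lend}\) (Lemma \ref{lemma:Asub2}), this first real step is forced to be the spawn step, so Lemma \ref{lemma:A'} gives \(\thread(s_1)=\thread(s_0)\) and \(\lbl(s_1)=\ell_3\); hence \((s_0,s_1)\in\col_1\).

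Next I would fix the two schedule states. Let \(s_2=\cschedule\{s_1\}\), that is \(s_1\) with the freshly created child \(j\) made current; then \((s_1,s_2)\in\Sche\), \(s_2\in\cschedule\{s_1\}\) and \(\lbl(s_2)=\ell_2\). Let \(s_3\) be \(s\) with its current thread reset to \(j\); since \(\thread(s)=\thread(s_0)=i_0\neq j\) this gives \((s_3,s)\in\Sche\). The heart of the argument is to classify the tail \(s_1\to s\). Every real step there (one in \(\restrict{\concr{G_0}}{\after(s_0)}\cap\Tr\smallsetminus\Sche\)) must depart from a state of \(\after(s_2)\): by Lemma \ref{lemma:apresC0} its source lies in \(\after(s_1)\cup\after(s_2)\), and Lemma \ref{lemma:apresC} rules out \(\after(s_1)\), because a source in \(\after(s_1)\) would carry label \(\ell_3\) (Lemmas \ref{lemma:E'} and \ref{lemma:B}) whereas a real step changes the label (Lemma \ref{lemma:A}). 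As the parent sits at \(\ell_3\) and can fire no further spawn step, Lemma \ref{lemma:Gcr} shows these real steps all lie in \(\TR{\li2\cmd,\lend}\), and the interleaved steps are interferences or schedules.

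From this classification I would read off the conclusions about \((s_2,s_3)\). For \((s_2,s_3)\in\col_3\): the child steps belong to \(\concr{G}_{\infty}\) by Proposition \ref{proposition:guarantee} and sit in \(\after(s_2)\), while the interferences fall into \(\restrict{\concr{A_2}}{\compl{\after(s_2)}}\) since \(\concr{A_0}\subset\concr{A_2}\) and \(\after(s_2)\subset\after(s_0)\); taking \(j\) as reference thread turns the schedules into \(\cinter{\cdot}\)-steps, so the projected sequence is a \(\col_3\)-path from \(s_2\) (where \(\lbl(s_2)=\ell_2\)) to \(s_3\). For \((s_2,s_3)\in\ext[_1]{s_0}{s_1}\) I would re-route the same steps through the components of that extension relation, the child steps emanating from \(\after(s_2)\subset\after(s_0)\) and the interferences from \(\restrict{\concr{A_0}}{\compl{\after(s_0)}}\). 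Finally \(\lbl(s)=\ell_3\) holds because no step after \(s_1\) is executed by \(i_0\), so \(P(i_0)\) is never touched; and \(s\in\cinter{\concr{G_0}\cup\concr{A_0}}\{s_1\}\) because the whole tail lies in \(\restrict{(\concr{G_0}\cup\concr{A_0})}{\compl{\after(s_1)}}\cup\Sche\) (child steps are in \(\after(s_2)\), disjoint from \(\after(s_1)\) by Lemma \ref{lemma:F-}, and interferences are in \(\compl{\after(s_0)}\subset\compl{\after(s_1)}\) by Lemma \ref{lemma:F+}) with \(\thread(s_1)=\thread(s)\).

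The main obstacle I anticipate is exactly the extension-membership step. The child's execution genuinely lives in \(\after(s_2)\), not in \(\after(s_1)\), so it is not absorbed by the parent's own post-return cone \(\restrict{\concr{G_0}}{\after(s_1)}\); making the bookkeeping succeed forces one to route the child steps through the full create-statement transition set \(\Tr\) via Lemma \ref{lemma:Gcr} and to keep the two cones \(\after(s_1)\) and \(\after(s_2)\) rigorously separated, which is precisely what the auxiliary Lemmas \ref{lemma:apresC0} and \ref{lemma:apresC} are tailored to provide. Everything else is label and genealogy bookkeeping of the same flavour already carried out for the sequence, conditional, and loop cases.
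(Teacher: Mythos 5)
Your proof is correct and follows essentially the same route as the paper's: the same case split on whether the path contains a non-schedule step of \(\restrict{\concr{G_0}}{\after(s_0)}\cap\Tr\), the same identification of that first step as the spawn transition (via Lemmas \ref{lemma:E'}, \ref{lemma:B}, \ref{lemma:Gcr}, \ref{lemma:A'}), the same construction of \(s_2\in\cschedule\{s_1\}\) and of \(s_3\) by resetting the current thread of \(s\), the same induction through Lemmas \ref{lemma:apresC0} and \ref{lemma:apresC} to push the tail into \(\after(s_2)\), and the same appeal to Proposition \ref{proposition:guarantee} to obtain \((s_2,s_3)\in\col_3\). You are in fact more thorough than the paper, whose proof never verifies \(\lbl(s)=\ell_3\) or \(s\in\cinter{\concr{G_0}\cup\concr{A_0}}\{s_1\}\); and the obstacle you flag at the extension-membership step is equally present in the paper's own one-line assertion of \((s_2,s_3)\in\ext[_1]{s_0}{s_1}\), which, exactly like your re-routing via Lemma \ref{lemma:Gcr}, goes through only when the first component of the extension relation is taken to be \(\restrict{\concr{G_0}}{\after(s_0)}\cap\Tr\) with \(\Tr\) the create statement's transition set (i.e., reading \(\ext[_1]{s_0}{s_1}\) as \(\ext{s_0}{s_1}\)), since the child's steps lie in \(\TR{\li2\cmd,\lend}\cap\restrict{\concr{G_0}}{\after(s_2)}\) and hence in none of the three components of \(\ext[_1]\) as literally defined.
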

\begin{proof}
 
 If \((s_0,s)\in [\restrict{\concr{A_0}}{\compl{\after(s_0)}} \cup\Sche]^{\ast}\) then \(s\in\cinter{\concr{A_0}}(s_0)\) and by Lemma \ref{lemma:B}, \(\lbl(s)=\ell_1\).

Then, let us consider the other case: \((s_0,s)\notin [\restrict{\concr{A_0}}{\compl{\after(s_0)}} \cup\Sche]^{\ast}\). Therefore, there exists \(s'_0\) and \(s_1\) such that \((s_0,s'_{0})\in [\restrict{\concr{A_0}}{\compl{\after(s_0)}} \cup\Sche]^{\ast}\), \((s'_{0},s_{1})\in (\restrict{\concr{G_0}}{\after(s_0)}\cap\Tr)\) and  \((s_1,s)\in [(\restrict{\concr{G_0}}{\after(s_0)}\cap\Tr)\cup\restrict{\concr{A_0}}{\compl{\after(s_0)}}]^{\star}\).

 Due to Lemma \ref{lemma:E'}, because \(s'_{0}\in\after(s_0)\), \(\thread(s'_{0})=\thread(s_0)\).
 According to Lemma \ref{lemma:A'}, \(\thread(s_{1})=\thread(s'_{0})=\thread(s_0)\) and \(\lbl(s_{1})=\ell_3\). Therefore \((s_0,s_{1})\in\col_1\).
 
 Let \(\cstate{_1}=s_1\). Let \(\h'_1\) and \(j\) such that \(\h'_1\cdot(i,\ell_2,j)=\h_1\).
 Let \(s_2=(j,P_1,\m_1,\h_1)\). Therefore, \(s_2\in\cschedule\{s_1\}\) and \((s_1,s_2)\in\Sche\).
 Let \((i,P,\m,\h)=s\) and \(s_3=(j,P,\m,\h)\). Therefore, \((s_3,s)\in\Sche\).
 
 Given that \(\Sche\subset\concr{A}_0\cap\concr{G}_0\cap\Tr\), we conclude that \((s_2,s_3)\in [(\restrict{\concr{G_0}}{\after(s_0)}\cap\Tr)\cup\restrict{\concr{A_0}}{\compl{\after(s_0)}}]^{\star}\).
 Using Lemma \ref{lemma:apresC} and a straightforward induction, \((s_2,s_3)\in [(\restrict{\concr{G_0}}{\after(s_2)}\cap\Tr)\cup\restrict{\concr{A_0}}{\compl{\after(s_0)}}]^{\star}\).
 Then \((s_2,s_3)\in\ext[_1]{s_0}{s_1}\).
 Furthermore by Lemma \ref{lemma:F+}, \(\after(s_2)\subset\after(s_0)\). Hence \((s_2,s_3)\in [(\restrict{\concr{G_0}}{\after(s_2)}\cap\Tr)\cup\restrict{\concr{A_0}}{\compl{\after(s_2)}}]^{\star}\). Therefore, by Proposition \ref{proposition:guarantee}, \((s_2,s_3)\in\col_3\).
 \end{proof}

\begin{claim}
 \(\concr{S'}\subset\cinter{\concr{G_0}\cup\concr{A_0}}(\concr{S_1})\).
\end{claim}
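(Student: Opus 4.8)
The plan is to read this claim off directly from Lemma~\ref{lemma:Ccr}, which already does all the heavy lifting. First I would take an arbitrary $s\in\concr{S}'$. By the definition of the \Cname semantics (Definition~\ref{def:concrsem}) instantiated on $\li1\ccreate(\li2\cmd),\ell_3$, membership in $\concr{S}'$ means precisely that $\lbl(s)=\ell_3$ and that there is some $s_0\in\concr{S}_0$ with $(s_0,s)\in\col$.

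Next I would apply Lemma~\ref{lemma:Ccr} to the pair $(s_0,s)$. The first alternative of that lemma forces $\lbl(s)=\ell_1$; since the labels occurring inside a single statement are pairwise distinct, $\ell_1\neq\ell_3$, so this alternative is ruled out and we are necessarily in the second alternative. That alternative furnishes a state $s_1$ with $(s_0,s_1)\in\col_1$, $\lbl(s_1)=\ell_3$, and $s\in\cinter{\concr{G_0}\cup\concr{A_0}}\{s_1\}$.

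It then remains only to recognise that $s_1\in\concr{S}_1$. Indeed, $s_0\in\concr{S}_0$, $(s_0,s_1)\in\col_1$, and $\lbl(s_1)=\ell_3$ are exactly the three conditions defining the state component of $\concr{Q}_1=\av{\li1\crcreate(\ell_2),\ell_3}(\concr{Q}_0)$, i.e.\ of $\concr{S}_1=\{s'\mid s'\in\col_1\langle\concr{S}_0\rangle\wedge\lbl(s')=\ell_3\}$. Finally, since $\cinter{\cdot}(\cdot)$ is monotone in its set argument (its definition quantifies existentially over the source state), the inclusion $\{s_1\}\subset\concr{S}_1$ yields $s\in\cinter{\concr{G_0}\cup\concr{A_0}}\{s_1\}\subset\cinter{\concr{G_0}\cup\concr{A_0}}(\concr{S}_1)$, which is the desired inclusion.

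Essentially all the difficulty is already absorbed into Lemma~\ref{lemma:Ccr}, so I do not expect any real obstacle here. The only two points needing care are (i)~justifying that the first case of that lemma cannot occur, which is a pure label argument resting on the distinctness of $\ell_1$ and $\ell_3$, and (ii)~checking that the intermediate state $s_1$ genuinely satisfies the definition of $\concr{S}_1$ rather than merely lying in $\col_1\langle\concr{S}_0\rangle$ --- this is where the extra information $\lbl(s_1)=\ell_3$ from Lemma~\ref{lemma:Ccr} is used. Neither step involves a genuine calculation.
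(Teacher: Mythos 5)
Your proof is correct and is essentially identical to the paper's: both take $s\in\concr{S'}$, use the definition of the semantics to get $s_0\in\concr{S}_0$ with $(s_0,s)\in\col$ and $\lbl(s)=\ell_3$, rule out the first alternative of Lemma~\ref{lemma:Ccr} because $\ell_3\neq\ell_1$, and then read off $s_1\in\concr{S}_1$ and $s\in\cinter{\concr{G_0}\cup\concr{A_0}}\{s_1\}\subset\cinter{\concr{G_0}\cup\concr{A_0}}(\concr{S_1})$ from the second alternative. Your write-up merely makes explicit two points the paper leaves implicit (the label-distinctness argument and the monotonicity of $\cinter{}$ in its set argument), which is fine.
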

\begin{proof}
 Let \(s\in\concr{S'}\). Therefore there exists \(s_0\in\concr{S_0}\) such that \((s_0,s)\in\col\) and \(\lbl(s)=\ell_3\neq\ell_1\).
 According to Lemma \ref{lemma:Ccr} there exists \(s_1\) such that \((s_0,s_1)\in\col_1\), \(\lbl(s_1)=\ell_3\) and \(s\in\cinter{\concr{G_0}\cup\concr{A_0}}\{s_1\}\).
 Therefore \(s_1\in\concr{S}_1\) and \(s\in\cinter{\concr{G_0}\cup\concr{A_0}}(\concr{S_1})\).
\end{proof}

\begin{claim}
 \(\Se\subset\Se_1\).
\end{claim}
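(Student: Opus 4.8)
The plan is to fix an arbitrary $(s,s')\in\Se$, extract from the definition of $\Se$ a state $s_0\in\concr{S}_0$ with $(s_0,s)\in\col$, and then establish the two membership facts that characterise $\Se_1$: that $(s,s')\in\TR{\li1\cspawn(\ell_2),\ell_3}$ and that $s\in\col_1\langle\concr{S}_0\rangle$. First I would use Lemma~\ref{lemma:Gcr} to write $(s,s')\in\TR{\li1\cspawn(\ell_2),\ell_3}\cup\TR{\li2\cmd,\lend}$, and Lemma~\ref{lemma:Ccr} to learn that the reachable source obeys either $\lbl(s)=\ell_1$ with $s\in\cinter{\concr{A}_0}(s_0)$, or $\lbl(s)=\ell_3$. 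Since neither $\ell_1$ nor $\ell_3$ belongs to $\Labs{\li2\cmd,\lend}$, Lemma~\ref{lemma:A} excludes $(s,s')\in\TR{\li2\cmd,\lend}\smallsetminus\Sche$; hence in every case $(s,s')\in\TR{\li1\cspawn(\ell_2),\ell_3}$, the schedule transitions being covered by $\Sche\subset\TR{\li1\cspawn(\ell_2),\ell_3}$. The remaining work is to place $s$ in $\col_1\langle\concr{S}_0\rangle$, which I would argue separately for the two labels.

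In the case $\lbl(s)=\ell_1$, Lemma~\ref{lemma:Ccr} gives $(s_0,s)\in(\restrict{\concr{A}_0}{\compl{\after(s_0)}}\cup\Sche)^{\star}$ with $\thread(s_0)=\thread(s)$, and the definition of $\col$ forces $\lbl(s_0)=\ell_1$; this is exactly the defining data of $\col_1$, except that $\col_1$ replaces the bare $\Sche$ by $(\restrict{\concr{G}_0}{\after(s_0)}\cap\TR{\li1\cspawn(\ell_2),\ell_3})\cup\restrict{\concr{A}_0}{\compl{\after(s_0)}}$. This discrepancy is harmless because every schedule step already lies in that union: using $\Sche\subset\concr{G}_0$, $\Sche\subset\concr{A}_0$ and $\Sche\subset\TR{\li1\cspawn(\ell_2),\ell_3}$, a schedule step leaving a state of $\after(s_0)$ falls into $\restrict{\concr{G}_0}{\after(s_0)}\cap\TR{\li1\cspawn(\ell_2),\ell_3}$, while one leaving a state of $\compl{\after(s_0)}$ falls into $\restrict{\concr{A}_0}{\compl{\after(s_0)}}$. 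Hence $(s_0,s)\in\col_1$, so $s\in\col_1\langle\concr{S}_0\rangle$ and $(s,s')\in\Se_1$.

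In the case $\lbl(s)=\ell_3$, Lemma~\ref{lemma:A} forces $(s,s')\in\Sche$, since a non-schedule transition of the statement cannot leave its return label $\ell_3$. Lemma~\ref{lemma:Ccr} now supplies $s_1$ with $(s_0,s_1)\in\col_1$, $\lbl(s_1)=\ell_3$ (so that $s_1\in\concr{S}_1\subset\col_1\langle\concr{S}_0\rangle$) and $s\in\cinter{\concr{G}_0\cup\concr{A}_0}\{s_1\}$. The remaining step is also, I expect, the \emph{main obstacle}: I must show that this interference phase does not push $s$ out of $\col_1\langle\concr{S}_0\rangle$, even though $\cinter{\concr{G}_0\cup\concr{A}_0}$ lets the guarantee $\concr{G}_0$ act, whereas $\col_1$ only admits $\concr{A}_0$-interferences off $\after(s_0)$ together with $\TR{\li1\cspawn(\ell_2),\ell_3}$-steps on $\after(s_0)$. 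The only guarantee-transitions that could move $s_1$ are those of the thread just spawned at $\ell_2$, which inhabits $\after(s_0)\smallsetminus\after(s_1)$; I would rule them out by freshness, observing that this child's identifier does not occur in $s_0$, so the fixed set $\concr{G}_0$, formed before the spawn, contains no transition whose running thread is that child. Thus the realizable interferences from $s_1$ are exactly the $\concr{A}_0$-interferences already allowed by $\col_1$, giving $s\in\col_1\langle\concr{S}_0\rangle$ and $(s,s')\in\Se_1$. This freshness reconciliation is precisely the point where the semantics defers actually executing the child to the fixpoint $\concr{G}_{\infty}$ on the right-hand side of Theorem~\ref{theorem:denot}, which is why I anticipate it to be the delicate part of the argument.
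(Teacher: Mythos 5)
Your first case (\(\lbl(s)=\ell_1\)) is correct and is, in substance, the paper's entire proof: the paper takes \((s,s')\in\Se\), applies Lemma \ref{lemma:A} to get \(\lbl(s)\neq\ell_3\), then Lemma \ref{lemma:Ccr} to get \(s\in\cinter{\concr{A_0}}\{s_0\}\), hence \((s_0,s)\in\col_1\) and \(\lbl(s)=\ell_1\) by Lemma \ref{lemma:B}, and concludes with Lemmas \ref{lemma:Abis} and \ref{lemma:Gcr}, exactly as you do. Note, however, that the appeal to Lemma \ref{lemma:A} is only legitimate when \((s,s')\notin\Sche\); the paper is tacitly proving \(\Se\smallsetminus\Sche\subset\Se_1\) and ignoring schedule transitions altogether.

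Your second case --- the one you flag as the main obstacle --- is where the genuine gap lies, and it cannot be closed the way you propose. The freshness argument is unsound: \(\concr{G_0}\) is an arbitrary component of the input configuration, a bare set of pairs of states, and nothing prevents it from containing transitions whose current thread is the identifier chosen for the child; indeed, in the intended use of this claim (inside the fixpoint \(\fpd{}\), and in Proposition \ref{prop:coroc}) the guarantee \emph{does} contain the child's transitions after one iteration of \(\cexe\) --- that is the whole point of iterating. Worse, the inclusion you are after is false in general. Take \(\concr{A_0}=\Sche\) and let \(\concr{G_0}\) contain the spawn transition, all schedule transitions, and a store-modifying transition of \(\li2\cmd\) executed by the child. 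The parent's state \(s\) at label \(\ell_3\) whose store records that write satisfies \((s_0,s)\in\col\), because the child's step lies in \(\restrict{\concr{G_0}}{\after(s_0)}\cap\TR{\li1\ccreate(\li2\cmd),\ell_3}\); but \(s\notin\col_1\langle\concr{S}_0\rangle\), since the relation defining \(\col_1\) admits on \(\after(s_0)\) only transitions of \(\TR{\li1\cspawn(\ell_2),\ell_3}\) (spawn and schedule steps, which never change the store) and off \(\after(s_0)\) only \(\concr{A_0}=\Sche\), so every state of \(\col_1\langle\concr{S}_0\rangle\) carries a store already present in \(\concr{S}_0\). A schedule transition leaving such an \(s\) is then in \(\Se\) but not in \(\Se_1\), so the literal inclusion \(\Se\subset\Se_1\) fails on \(\Sche\). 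The correct repair is not a cleverer argument but a weakening: one only needs \(\Se\smallsetminus\Sche\subset\Se_1\), which is exactly what your first case establishes, and which suffices for Property \ref{fcreate} of Theorem \ref{theorem:denot} because \(\Sche\subset\concr{G_0}\) by the definition of configurations, so the schedule transitions of \(\Se\) are absorbed by \(\concr{G_0}\) when the guarantee components are compared.
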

\begin{proof}
Let \((s,s')\in\Se\). According to Lemma \ref{lemma:A}, \(\lbl(s)\neq\ell_3\).
There exists \(s_0\in\concr{S_0}\) such that \((s_0,s)\in \col\).
Therefore, according to lemma \ref{lemma:Ccr}, \(s\in\cinter{\concr{A_0}}\{s_0\}\). Therefore \((s_0,s)\in\col_1\) and, by Lemma \ref{lemma:B}, \(\lbl(s)=\ell_1\). Due to Lemmas \ref{lemma:Abis} and \ref{lemma:Gcr}, \((s,s')\in\TR{\li1\cspawn(\ell_2),\ell_3}\).
Hence \((s,s')\in\Se_1\).
\end{proof}

\begin{claim}
 \(\Su\subset\Se_3\cup\Su_3\).
\end{claim}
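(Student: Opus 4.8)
The plan is to unfold the definition of $\Su$ from Definition~\ref{def:concrsem}: a pair $(s,s')\in\Su$ comes with $s_0\in\concr{S}$ and an intermediate state $t$ such that $(s_0,t)\in\col$, $(t,s)\in\Sche$, $s\in\after(s_0)$ and $(s,s')\in\Tr$. Writing $i_0=\thread(s_0)$ and $i=\thread(s)$, the schedule $(t,s)\in\Sche$ gives $i\neq\thread(t)$. I then apply Lemma~\ref{lemma:Ccr} to $(s_0,t)\in\col$. The first alternative ($t\in\cinter{\concr{A_0}}(\{s_0\})$ with $\lbl(t)=\ell_1$) is discarded immediately: by Lemma~\ref{lemma:Ndescew} the genealogy tail $\h'$ of $t$ over $s_0$ satisfies $\desce_{\h'}(\{i_0\})=\{i_0\}$, and since $\Sche$ preserves the genealogy, $s\in\after(s_0)$ would force $i\in\desce_{\h'}(\{i_0\})=\{i_0\}$, i.e. $i=i_0=\thread(t)$, contradicting $(t,s)\in\Sche$.

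So I am in the second alternative of Lemma~\ref{lemma:Ccr}, giving $s_1,s_2,s_3$ with $(s_0,s_1)\in\col_1$, $(s_1,s_2)\in\Sche$, $s_2\in\cschedule\{s_1\}$, $(s_2,s_3)\in\col_3\cap\ext[_1]{s_0}{s_1}$, $(s_3,t)\in\Sche$, and $\lbl(s_1)=\lbl(t)=\ell_3$ with $t\in\cinter{\concr{G_0}\cup\concr{A_0}}\{s_1\}$. Since $\col_1$ and $\cinter{\cdot}$ preserve the current thread, $\thread(t)=\thread(s_1)=\thread(s_0)=i_0$, so $i\neq i_0$. Because $\lbl(s_1)=\ell_3$ and $s_0\in\concr{S}$, we get $s_1\in\concr{S}_1$, hence $s_2\in\cschedule(\concr{S}_1)\subset\cinter{\cdots}\circ\cschedule(\concr{S}_1)=\concr{S}_2$ by extensiveness of $\cinter{\cdot}$. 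Next I pin down the label of $s$: from $(s_0,t)\in\col$ and $(t,s)\in\Sche\subset\Tr$ one has $(s_0,s)\in(\Tr\cup\restrict{\concr{A_0}}{\compl{\after(s_0)}})^{\star}$ with $\lbl(s_0)=\ell_1\in\Labs{\li1\ccreate(\li2\cmd),\ell_3}$ and $s\in\after(s_0)$, so Lemma~\ref{lemma:F} yields $\lbl(s)\in\Labs{\li1\ccreate(\li2\cmd),\ell_3}$; moreover its ``furthermore'' clause, together with $\thread(s)=i\neq i_0$, rules out $\lbl(s)\in\{\ell_1,\ell_3\}$, leaving $\lbl(s)\in\Labs{\li2\cmd,\lend}$. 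Then Lemmas~\ref{lemma:Gcr} and~\ref{lemma:Abis} (the labels of $\li1\cspawn(\ell_2),\ell_3$ reduce to $\{\ell_1\}$) give $(s,s')\in\TR{\li2\cmd,\lend}$, the schedule sub-case being immediate since $\Sche\subset\TR{\li2\cmd,\lend}$.

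The main obstacle, and the heart of the argument, is to show $s\in\after(s_2)$, i.e. that $s$ really belongs to the \emph{future of the child} and not to the future of the parent. Lemma~\ref{lemma:apresC0} (with $(s_2,s)$ living in $T^\star$ for a sufficiently large $T$ containing $\col_3$ and $\Sche$) gives only the dichotomy $s\in\after(s_1)\cup\after(s_2)$. I exclude $s\in\after(s_1)$ by repeating the descendant computation of the first paragraph one level down: since $t\in\cinter{\concr{G_0}\cup\concr{A_0}}\{s_1\}$, Lemma~\ref{lemma:Ndescew} forces the genealogy tail $\h''$ of $t$ over $s_1$ to satisfy $\desce_{\h''}(\{i_0\})=\{i_0\}$, and as $(t,s)\in\Sche$ keeps the genealogy fixed, $s\in\after(s_1)$ would again force $i=i_0$, a contradiction; hence $s\in\after(s_2)$.

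Finally I read off membership using that $s_3,t,s$ share the same control point, store and genealogy and differ only in the current thread. If $\thread(s)=\thread(s_3)$ then $s=s_3$, so $s\in\col_3\langle\concr{S}_2\rangle$ and, with $(s,s')\in\TR{\li2\cmd,\lend}$, we obtain $(s,s')\in\Se_3$. Otherwise $(s_3,s)\in\Sche$, so $(s_2,s)\in\col_3;\Sche$ with $s_2\in\concr{S}_2$ and $s\in\after(s_2)$, giving $(s,s')\in\Su_3$. In all cases $(s,s')\in\Se_3\cup\Su_3$, which is the desired inclusion.
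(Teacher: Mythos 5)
Your proof is correct, and its skeleton matches the paper's: unfold \(\Su\) to obtain the intermediate scheduled state, apply Lemma~\ref{lemma:Ccr}, discard its first alternative by a thread contradiction, and finish by a case split on whether \(\thread(s)\) equals the child's thread \(\thread(s_2)=\thread(s_3)\), yielding \(\Se_3\) or \(\Su_3\). You diverge from the paper exactly at the two points where its write-up is weakest, and in both places your version is tighter. For the central step \(s\in\after(s_2)\), the paper invokes Lemma~\ref{lemma:F-}, asserts without justification that the path from \(s_2\) to \(s\) lies in \(\restrict{(\concr{G_0}\cup\concr{A_0})}{\compl{\after(s_1)}}^{\star}\), and then cites Lemma~\ref{lemma:apresw} --- a lemma stated for the while-loop proof, so the reference is surely a typo; you instead obtain the dichotomy \(s\in\after(s_1)\cup\after(s_2)\) from Lemma~\ref{lemma:apresC0} (whose hypotheses you check, noting that its \(T\) is unconstrained) and rule out \(s\in\after(s_1)\) by an explicit descendant computation through Lemma~\ref{lemma:Ndescew}, using that schedule transitions preserve genealogies; this is self-contained and avoids the gap. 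Second, you establish \((s,s')\in\TR{\li2\cmd,\lend}\) via the label argument of Lemma~\ref{lemma:F} and the fact that non-schedule transitions of the spawn statement start at \(\ell_1\); this step is genuinely necessary --- \(\Se_3\) and \(\Su_3\) are by definition subsets of \(\TR{\li2\cmd,\lend}\), whereas membership in \(\Su\) only gives \((s,s')\in\TR{\li1\ccreate(\li2\cmd),\ell_3}\) --- and the paper's proof omits it entirely. So your argument is not just a rephrasing: it repairs the paper's proof at its two sketchy points while reaching the same conclusion.
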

\begin{proof}
Let \((s,s')\in \Su\).
Therefore, there exists \(s_0\in\concr{S}_0\) such that \((s_0,s)\in\col;\Sche\) and \(s\in\after(s_0)\).
Notice that by definition of \(\Sche\), \(\thread(s_0)\neq\thread(s)\).

Assume by contradiction, that \(s\in\Sche\langle\cinter{\concr{A_0}}\{s_0\}\rangle\).
Due to Lemma \ref{lemma:E'}, \(\thread(s_0)=\thread(s)\). This is contradictory.

Hence, by Lemma \ref{lemma:Ccr}, there exists \(s_1, s_2,s_3\) such that \((s_0,s_1)\in\col_1\), \((s_1,s_2)\in\Sche\), \((s_2,s_3)\in\col_3\), \((s_3,s)\in\Sche\), \(s_2\in\cschedule\{s_1\}\),  and  \(\lbl(s_1)=\lbl(s)=\ell_3\).

 Hence, \(s_1\in\concr{S_1}\), \(s_2\in\concr{S}_2\). 

 According to Lemma \ref{lemma:F-} \(\after(s_1)\cap\after(s_2)=\emptyset\). Given that \((s_2,s)\in\col;\Sche;\Sche\), \((s_2,s)\in \restrict{(\concr{G_0}\cup\concr{A_0})}{\compl{\after(s_1)}}^{\star}\).
 Hence, du to Lemma \ref{lemma:apresw}, \(s\in\after(s_2)\).
 
 If \(\thread(s)=\thread(s_2)\), then \((s_2,s)\in\col_3\) and \((s,s')\in\Se_3\).
 If \(\thread(s)\neq\thread(s_2)\), then \((s,s')\in\Su_3\).
\end{proof}

\begin{claim}
 \(\Sud\subset\Se_3\cup\Su_3\).
\end{claim}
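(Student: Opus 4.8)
The plan is to follow the preceding proof of $\Su\subset\Se_3\cup\Su_3$ almost verbatim, the only genuinely new work being the passage from the return state to the surviving transition along an $\extname$-path. First I would take $(s,s')\in\Sud$ and unfold the definition: there are $s_0\in\concr{S}_0$ and a return state $s_f\in\concr{S}'$ with $(s_0,s_f)\in\col$, $(s_f,s)\in\ext{s_0}{s_f}$ and $s\in\after(s_0)\smallsetminus\after(s_f)$. Since $s_f\in\concr{S}'$ forces $\lbl(s_f)=\ell_3\neq\ell_1$, the first alternative of Lemma~\ref{lemma:Ccr} is excluded and the second applies to $(s_0,s_f)$, producing $s_1,s_2,s_3$ with $(s_0,s_1)\in\col_1$, $(s_1,s_2)\in\Sche$, $(s_2,s_3)\in\col_3\cap\ext[_1]{s_0}{s_1}$, $(s_3,s_f)\in\Sche$, $s_2\in\cschedule\{s_1\}$, $\lbl(s_1)=\ell_3$ and $s_f\in\cinter{\concr{G_0}\cup\concr{A_0}}\{s_1\}$; in particular $s_1\in\concr{S}_1$ and $s_2\in\concr{S}_2$.

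The key step, and the one I expect to be the main obstacle, is to show $s\in\after(s_2)$, i.e.\ that the surviving transition belongs to the child's subtree rather than to the continuation of the parent. Concatenating the three segments gives $(s_2,s)\in\col_3;\Sche;\ext{s_0}{s_f}$, which lies in $T^{\star}$ for $T$ the union of all the transition sets involved, so Lemma~\ref{lemma:apresC0} (instantiated with $(s_0,s_1)\in\col_1$, $s_2\in\cschedule\{s_1\}$, $\lbl(s_1)=\ell_3$ and $s\in\after(s_0)$) yields $s\in\after(s_1)\cup\after(s_2)$. To discard $s\in\after(s_1)$ I would compare $\after(s_1)$ and $\after(s_f)$ on $s$: writing $i_1=\thread(s_1)=\thread(s_f)$ and letting $\h'$ be the genealogy increment from $s_1$ to $s_f$, the relation $s_f\in\cinter{\concr{G_0}\cup\concr{A_0}}\{s_1\}$ together with Lemma~\ref{lemma:Ndescew} gives $\desce_{\h'}\{i_1\}=\{i_1\}$; since the genealogy of $s$ extends that of $s_f$, this makes $\thread(s)$ a future descendant of $i_1$ measured from $s_1$ exactly when it is one measured from $s_f$, whence $s\in\after(s_1)\Leftrightarrow s\in\after(s_f)$. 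As $s\notin\after(s_f)$ we conclude $s\notin\after(s_1)$ and therefore $s\in\after(s_2)$ (consistent with $\after(s_1)\cap\after(s_2)=\emptyset$ from Lemma~\ref{lemma:F-}).

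Once $s\in\after(s_2)$ is secured the remainder is identical to the $\Su$ case. The transition $(s,s')$ is the last $\extname$-generator fired from $s\in\after(s_0)\smallsetminus\after(s_f)$, so it can only come from $\restrict{\concr{G_0}}{\after(s_0)}\cap\Tr$, whence $(s,s')\in\Tr$; and $s\in\after(s_2)$ gives $\lbl(s)\in\Labs{\li2\cmd,\lend}$ by Lemma~\ref{lemma:F}, so Lemma~\ref{lemma:Gcr} forces $(s,s')\in\TR{\li2\cmd,\lend}$. Finally I split on $\thread(s)$: if $\thread(s)=\thread(s_2)$ I extend $(s_2,s_3)\in\col_3$ by the tail $(s_3,s)$, which stays in $\after(s_2)$ by Lemma~\ref{lemma:apresC} and hence lands in $\col_3$ via Proposition~\ref{proposition:guarantee} exactly as in the proof of Lemma~\ref{lemma:Ccr}, giving $(s_2,s)\in\col_3$ and $(s,s')\in\Se_3$; if $\thread(s)\neq\thread(s_2)$ the same path factors as $\col_3;\Sche$ from $s_2\in\concr{S}_2$ with $s\in\after(s_2)$, giving $(s,s')\in\Su_3$. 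Either way $(s,s')\in\Se_3\cup\Su_3$.
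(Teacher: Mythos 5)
Your proof is correct, and its skeleton is the paper's own: unfold \(\Sud\), apply Lemma~\ref{lemma:Ccr} to obtain \(s_1,s_2,s_3\), invoke Lemma~\ref{lemma:apresC0} for the dichotomy \(s\in\after(s_1)\cup\after(s_2)\), and close with the re-threading split into \(\Se_3\) (when \(\thread(s)=\thread(s_2)\)) and \(\Su_3\) (otherwise), resting on Lemma~\ref{lemma:apresC} and Proposition~\ref{proposition:guarantee}. The genuine divergence is how the branch \(s\in\after(s_1)\) is eliminated. The paper argues by contradiction that \(s\in\after(s_1)\smallsetminus\after(s_4)\) would put \((s,s')\) in \(\Sud_1\), which is empty by Claim~\ref{claim:SudV}; but that step is shaky as written, since membership in \(\Sud_1\) requires exhibiting a final state \(t\in\concr{S}_1\) of the spawn statement with \(s\in\after(s_0)\smallsetminus\after(t)\), and neither \(t=s_1\) (which contradicts the very hypothesis \(s\in\after(s_1)\)) nor \(t=s_4\) (your \(s_f\), which is not known to lie in \(\concr{S}_1\)) fits. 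Your replacement is a direct genealogy computation: \(\thread(s_1)=\thread(s_4)\), the path from \(s_1\) to \(s_4\) lies in \((\restrict{(\concr{G_0}\cup\concr{A_0})}{\compl{\after(s_1)}}\cup\Sche)^{\star}\), so Lemma~\ref{lemma:Ndescew} yields \(\desce_{\h'}\{i_1\}=\{i_1\}\) for the genealogy increment \(\h'\); by compositionality of \(\desce\) this gives \(\desce_{\h'\cdot\h''}(\{i_1\})=\desce_{\h''}(\{i_1\})\), so \(s\) belongs to \(\after(s_1)\) exactly when it belongs to \(\after(s_4)\), and since \(s\notin\after(s_4)\) the branch is void. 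This is a real gain in rigor: you prove precisely the needed exclusion from lemmas whose hypotheses visibly hold, where the paper leans on a claim whose hypotheses do not literally apply; what the paper's shortcut buys is only brevity and reuse of Claim~\ref{claim:SudV}. The remainder of your argument (establishing \((s,s')\in\TR{\li2\cmd,\lend}\) and the final \(\col_3\) membership) is compressed, but no more so than the paper's own chain of rewritings, and you cite the same tools (Lemmas~\ref{lemma:F}, \ref{lemma:Gcr}, \ref{lemma:F-} and Proposition~\ref{proposition:guarantee}) that chain relies on.
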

\begin{proof}
 Let \((s,s')\in \Sud\). There exists \(s_0,s_4\) such that \((s_0,s_4)\in\col\) and \((s_4,s)\in\ext{s_0}{s_4}\) and \(s_4\in\concr{S'}\).
 By Lemma \ref{lemma:Ccr}, there exists \(s_1,s_2,s_3\) such that \((s_0,s_1)\in\col_1\), \(s_2\in\cschedule_{\concr{A}}(\{s_1\})\), \((s_2,s_3)\in \col_3\cap\ext[_1]{s_0}{s_1}\) and \((s_3,s_4)\in\Sche\).
  
Furthermore, \(s\in\after(s_0)\smallsetminus\after(s_4)\). Due to Lemma \ref{lemma:apresC0}, either
 \(s\in\after(s_1)\smallsetminus\after(s_4)\) or \(s\in\after(s_2) \smallsetminus\after(s_4)\).

Assume by contradiction that \(s\in\after(s_1)\smallsetminus\after(s_4)\). Therefore \((s,s')\in \Sud_1\). But, by Claim \ref{claim:SudV}, \(\Sud_1=\emptyset\).
Therefore \(s\in\after(s_2) \smallsetminus\after(s_4)\).

Let \(\cstate{}=s\) and \(s_{5}=(\thread(s_2),P_5,\m_5,\h_5)\).

Given that \((s_4,s)\in\ext{s_0}{s_4}\), \((s_4,s)\in[(\restrict{\concr{G_0}}{\after(s_0)}\cap\Tr)\cup\restrict{{\concr{A}_2}}{\compl{\after(s_0)}}]^{\ast}\) and by Lemma \ref{lemma:apresC0}, \((s_4,s)\in[(\restrict{\concr{G_0}}{\after(s_1)\cup\after(s_2)}\cap\Tr)\cup\restrict{{\concr{A}_2}}{\compl{\after(s_0)}}]^{\ast}\).

By definition of \(\cpost\), \(\after(s_1)\subset\cextract{\ell_2}\). Furthermore by Lemma \ref{lemma:F-}, \(\after(s_1)\cap\after(s_2)=\emptyset\). Therefore \(\after(s_1)\subset\cextract{\ell_2}\smallsetminus\after(s_2)\).
Hence, \((s_4,s)\in [(\restrict{\concr{G_0}}{\after(s_2)}\cap\Tr)\cup\restrict{{\concr{A}_2}}{\compl{\after(s_0)}}\cup\restrict{\concr{G_0}}{\cextract{\ell_2}\smallsetminus\after(s_2)}]^{\ast}\).
By Lemma \ref{lemma:F+}, \(\after(s_2)\subset\after(s)\), therefore \((s_4,s)\in [(\restrict{\concr{G_0}}{\after(s_2)}\cap\Tr)\cup\restrict{({\concr{A}_2}\cup\restrict{\concr{G_0}}{\cextract{\ell_2}})}{\compl{\after(s_0)}}]^{\ast}\).
By Proposition \ref{proposition:guarantee}, \((s_4,s)\in [(\restrict{\concr{G_\infty}}{\after(s_2)}\cap\Tr)\cup\restrict{({\concr{A}_2}\cup\restrict{\concr{G_0}}{\cextract{\ell_2}})}{\compl{\after(s_0)}}]^{\ast}\).

Let \(\cstate{}=s\) and \(s_{5}=(\thread(s_2),P,\m,\h)\).
Therefore, \((s_2,s_5)\in\col_3\).

If \(i=\thread(s_2)\), then \(s_5=s\) and  \((s,s')\in\Se_3\).
If \(i\neq\thread(s_2)\), then \((s_5,s)\in\Sche\) and \((s,s')\in\Su_3\).
\end{proof}
 
\subsection{Overapproximation of the Execution of a Program}

\begin{lemma}\label{lemma:init}
 For all \(P\) and \(\m\), \(\after((\main,P,\m,\epsilon))=\St\).
 
 In particular, if \(\ini\) is the set of initial states of a program and \(s\in\ini\), then \(\after(s)=\St\).
\end{lemma}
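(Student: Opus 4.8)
The plan is to prove the equality by double inclusion. The inclusion $\after((\main,P,\m,\epsilon))\subset\St$ is immediate from the definition of $\after$, so the work is in the reverse inclusion. First I would fix an arbitrary state $(j,P',\m',\h')\in\St$ and show it lies in $\after((\main,P,\m,\epsilon))$. Since the genealogy of the initial state is the empty word $\epsilon$, we have $\epsilon\cdot\h'=\h'$, so the defining condition of $\after$ reduces to the single requirement $j\in\desce_{\h'}(\{\main\})$.

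The heart of the argument is the identity $\desce_{\h'}(\{\main\})=\dom(P')$. By the definition of a state, $\dom(P')$ is the disjoint union of $\{\main\}$ and the set of threads created in $\h'$; hence it suffices to show that $\desce_{\h'}(\{\main\})$ consists of exactly $\main$ together with every thread created in $\h'$. I would prove the following more general statement by induction on the length of a genealogy $\h$: for any set $X$, if every creator occurring in $\h$ lies in $X$ or is created strictly earlier in $\h$, then $\desce_{\h}(X)=X\cup\{k\mid k\text{ created in }\h\}$. The base case $\h=\epsilon$ is trivial. For $\h=(i,\ell,k)\cdot\h''$, the hypothesis forces the first creator $i$ to lie in $X$ (nothing is created before the first tuple), so the first clause of the definition of $\desce$ applies and $\desce_{\h}(X)=\desce_{\h''}(X\cup\{k\})$; as $\h''$ then satisfies the same hypothesis relative to $X\cup\{k\}$, the induction hypothesis gives $\desce_{\h''}(X\cup\{k\})=X\cup\{k\}\cup\{k'\mid k'\text{ created in }\h''\}=X\cup\{k'\mid k'\text{ created in }\h\}$. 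Instantiating $X=\{\main\}$ yields $\desce_{\h'}(\{\main\})=\dom(P')$, so $j\in\dom(P')=\desce_{\h'}(\{\main\})$ and $(j,P',\m',\h')\in\after((\main,P,\m,\epsilon))$, which proves $\St\subset\after((\main,P,\m,\epsilon))$.

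I expect the main obstacle to be discharging the hypothesis of the generalized induction when $X=\{\main\}$, namely that in the genealogy of a genuine state every creating thread is $\main$ or was created earlier, so that the creation relation forms a forest rooted at $\main$ and creation chains trace back to $\main$. This well-formedness is exactly what prevents $\desce$ from losing a created thread; without it the identity $\desce_{\h'}(\{\main\})=\dom(P')$ fails. I would justify it from the way genealogies are built: each tuple records a creation performed by the currently running thread (the ``spawn'' rule and the definition of a transition require $P(i)=\ell$ for the creating thread $i$), so any thread appearing as a creator already exists and is, inductively, a descendant of $\main$.

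Finally, the second assertion is a direct specialization: every $s\in\ini$ has the form $(\main,P_0,\m,\epsilon)$ with $\dom(P_0)=\{\main\}$, so $\after(s)=\St$ by the first part.
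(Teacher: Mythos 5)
Your skeleton is sound, and the generalized induction you set up for \(\desce\) is correct as far as it goes: if every creator occurring in a genealogy lies in \(X\) or is created strictly earlier in that genealogy, then indeed \(\desce_{\h}(X)=X\cup\{k\mid k\text{ created in }\h\}\). (For what it is worth, the paper states this lemma with no proof at all, so there is no ``official'' argument to compare against.) The problem is precisely the step you flag as the main obstacle: discharging that well-formedness hypothesis for an \emph{arbitrary} element of \(\St\). You propose to derive it from the way genealogies are built during execution (the ``spawn'' rule and the requirement \(P(i)=\ell\) in the definition of a transition). But \(\St\) is not defined as the set of states reachable from \(\ini\): by the paper's definitions, a genealogy is \emph{any} finite sequence of tuples with pairwise distinct third components in which \(\main\) never occurs third, and a state is \emph{any} tuple \((i,P,\m,\h)\) with \(i\in\dom(P)\) and \(\dom(P)\) the disjoint union of \(\{\main\}\) and the threads created in \(\h\). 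Nothing in these conditions forces a creator to be \(\main\) or to have been created earlier, so the operational rules you invoke say nothing about such states, and the argument does not close.

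Concretely, the identity \(\desce_{\h'}(\{\main\})=\dom(P')\) on which your proof rests is false on \(\St\) as literally defined: take distinct \(a,b\neq\main\), let \(\h'=[(a,\ell,b)]\) (a legal genealogy), and let \(P'\) have domain \(\{\main,b\}\). Then \(s'=(b,P',\m',\h')\) satisfies both state conditions, so \(s'\in\St\), yet \(\desce_{\h'}(\{\main\})=\{\main\}\) because the creator \(a\) is a phantom that never belongs to the accumulated set; hence \(b\notin\desce_{\h'}(\{\main\})\) and \(s'\notin\after((\main,P,\m,\epsilon))\). A genealogy in which a creator occurs \emph{before} the tuple that creates it gives a similar counterexample. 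So, read literally, the lemma itself fails, and what your induction actually proves is the lemma restricted to states whose genealogies are chronologically consistent (every creator is \(\main\) or was created earlier) --- a set that contains all states reachable from \(\ini\) and is evidently what the authors implicitly intend. To repair the write-up you must either add this consistency requirement to the definition of \(\Hist\) or of \(\St\) (after which your induction goes through verbatim), or weaken the statement to reachable states; an appeal to the spawn rule cannot substitute for either, and this discharging step is a genuine gap as written.
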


The following proposition shows the connection between the operational and the \cname semantics.

\n{\labprog}{\lab }
\begin{proposition}[Connection with the operational semantics]\label{prop:coroc}
 Consider a program \(\labprog \cmd,\lend\) and its set of initial states \(\ini\).  
 Let:\vspace{-2mm}%
 \[\qcp'\egdef\osem{\labprog \cmd,\lend}\langle \ini,\concr{G}_{\infty},\Sche\rangle\]\vspace{-6.5mm}%
 \[\text{with } \concr{G}_{\infty} = \fp{\av{\labprog \cmd,\lend}}\confinit \vspace{-2mm}\]%
 \noindent Then:\vspace{-4mm}
 \begin{eqnarray*}
    \concr{S'} &=&\{(\main,P,\m,\h)\in\TRi{\labprog \cmd,\lend}^{\star} \langle\ini\rangle\mid  P(\main)=\lend\} \\ 
  \concr{G'}&=&\concr{G}_{\infty}=\{(s,s')\in\TRi{\labprog \cmd,\lend} \mid   s\in\TRi{\labprog \cmd,\lend}^{\star} \langle\ini\rangle \}\cup\Sche\\
  \concr{A'}&=&
   \{(s,s')\in\TRi{\labprog \cmd,\lend} \mid  
   s\in\TRi{\labprog \cmd,\lend}^{\star} \langle\ini\rangle \wedge \thread(s)\neq\main \}\\
   & &
   \cup\Sche
 \end{eqnarray*}
\end{proposition}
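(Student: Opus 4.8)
Throughout, write $T = \TRi{\lab\cmd,\lend}$ and $R = T^{\star}\langle\ini\rangle$ for the set of reachable states. The whole argument rests on Lemma~\ref{lemma:init}: every initial state $s_0\in\ini$ satisfies $\after(s_0)=\St$, so $\compl{\after(s_0)}=\emptyset$. Feeding this into Definition~\ref{def:concrsem} collapses all \emph{assume}-restricted terms $\restrict{\concr{A}}{\compl{\after(s_0)}}$ to $\emptyset$ and turns every \emph{guarantee}-restriction $\restrict{\concr{G}}{\after(s_0)}$ into plain $\concr{G}$. Concretely, starting from any configuration $\langle\ini,\concr{G},\Sche\rangle$ one obtains the simplified forms
\[
\col = \{(s_0,s)\mid (s_0,s)\in[\concr{G}\cap T]^{\star}\wedge\thread(s_0)=\thread(s)\},\qquad \ext{s_0}{s_1}\subseteq[\concr{G}\cap T]^{\star},
\]
where I used that $\lbl(s_0)=\ell$ holds automatically on $\ini$. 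Since $\Sche\subseteq T$, whenever $\concr{G}\subseteq T$ we moreover have $\concr{G}\cap T=\concr{G}$; this will be reused at each step.

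First I would pin down the guarantee fixpoint $\concr{G}_{\infty}=\fp{\av{\lab\cmd,\lend}}\confinit$ and show $\concr{G}_{\infty}=\{(s,s')\in T\mid s\in R\}\cup\Sche$. For $\subseteq$, I argue by induction on the iterates $\concr{G}_k$ of $\concr{G}\mapsto\cexe_{\av{\lab\cmd,\lend},\ini,\Sche}(\concr{G})$ (so $\concr{G}_0=\Sche$) that $\concr{G}_k\subseteq\{(s,s')\in T\mid s\in R\}\cup\Sche$. The base case holds. For the step, the transitions added by one application lie in $\Se\cup\Su\cup\Sud$ computed from $\langle\ini,\concr{G}_k,\Sche\rangle$; each of these sets consists of $T$-transitions whose sources lie in $R$, because $\col\langle\ini\rangle\subseteq[\concr{G}_k]^{\star}\langle\ini\rangle\subseteq T^{\star}\langle\ini\rangle=R$ (using $\concr{G}_k\subseteq T$ from the induction hypothesis and $\Sche\subseteq T$). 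For $\supseteq$, the inclusion $\Sche\subseteq\concr{G}_{\infty}$ is immediate from $\concr{G}_0=\Sche$, and $\{(s,s')\in T\mid s\in R\}\subseteq\concr{G}_{\infty}$ is exactly Proposition~\ref{proposition:guarantee}: for $s\in R$ pick $s_0\in\ini$ with $s\in T^{\star}\langle\{s_0\}\rangle$; then the hypothesis $(s_0,s)\in[\restrict{T}{\after(s_0)}\cup\restrict{\Sche}{\compl{\after(s_0)}}]^{\star}=[T]^{\star}$ holds and $s\in\after(s_0)=\St$ is trivial, so every $(s,s')\in T$ with reachable source lands in $\concr{G}_{\infty}$.

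Next I would read off $\concr{S}'$, $\concr{G}'$ and $\concr{A}'$ from one further application $\osem{\lab\cmd,\lend}\langle\ini,\concr{G}_{\infty},\Sche\rangle$, now using $\concr{G}_{\infty}\cap T=\concr{G}_{\infty}$. For $\concr{S}'$: its members are the $s_1\in\col\langle\ini\rangle$ with $\lbl(s_1)=\lend$, and since $\col$ forces $\thread(s_1)=\thread(s_0)=\main$ these have the stated shape $(\main,P,\m,\h)$ with $P(\main)=\lend$; the inclusion $\col\langle\ini\rangle\subseteq R$ is clear, and conversely any $(\main,P,\m,\h)\in R$ is reached by a $T$-path each of whose transitions has a source in $R$, hence in $\concr{G}_{\infty}$, so the path lies in $[\concr{G}_{\infty}]^{\star}$ and, the endpoints sharing the thread $\main$, the state is in $\col\langle\ini\rangle$. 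For $\concr{A}'=\Sche\cup\Su\cup\Sud$: soundness ($\subseteq$) follows as above, the final $\Sche$-step (for $\Su$) or the exit from $\after(s_1)$ combined with Lemma~\ref{lemma:F+} (for $\Sud$) forcing $\thread\neq\main$. The key completeness observation is that $\Su$ already captures \emph{every} $(s,s')\in T$ with $s\in R$ and $\thread(s)\neq\main$: writing $s=(j,P,\m,\h)$, scheduling to $\main$ gives $(\main,P,\m,\h)\in R$, which by the $\concr{S}'$ argument lies in $\col\langle\ini\rangle$; scheduling back yields $(s_0,s)\in\col;\Sche$ with $s\in\after(s_0)=\St$, so $(s,s')\in\Su$. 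Finally $\concr{G}'=\concr{G}_{\infty}\cup\Se\cup\Su\cup\Sud=\concr{G}_{\infty}$, since each added set is contained in $\{(s,s')\in T\mid s\in R\}\cup\Sche=\concr{G}_{\infty}$ while the union trivially contains $\concr{G}_{\infty}$.

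The main obstacle is the characterisation of $\concr{G}_{\infty}$ in the second paragraph: the soundness inclusion demands a careful induction maintaining the invariant $\concr{G}_k\subseteq T$ (so that the $\after(s_0)=\St$ simplification is available at every iterate), while the completeness inclusion is precisely Proposition~\ref{proposition:guarantee} and must be invoked with the right instantiation. The one genuinely new device, beyond bookkeeping enabled by $\after(s_0)=\St$, is the ``schedule to $\main$ and back'' trick showing that $\Su$ alone is complete for all non-main transitions.
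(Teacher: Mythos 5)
Your proof is correct, and its two pillars are exactly the paper's: Lemma \ref{lemma:init}, which gives \(\after(s_0)=\St\) for \(s_0\in\ini\) and so collapses every term \(\restrict{\cdot}{\compl{\after(s_0)}}\) in Definition \ref{def:concrsem}, and Proposition \ref{proposition:guarantee}, instantiated with \(\concr{S}=\ini\), \(\concr{A}=\Sche\), for the completeness half of the guarantee. The difference is one of completeness rather than of route: the paper's own proof is a sketch that reduces the proposition to the identity \(\col\langle\ini\rangle=\{s\in\TRi{\lab \cmd,\lend}^{\star}\langle\ini\rangle \mid \thread(s)=\main\}\), proves the hard inclusion of that identity by exactly your argument (Proposition \ref{proposition:guarantee} together with Lemma \ref{lemma:init}), and declares the rest straightforward. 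Two steps that you spell out are genuinely absent from the paper and are worth having: (i) the inclusion \(\concr{G}_{\infty}\subseteq\{(s,s')\in\TRi{\lab \cmd,\lend}\mid s\in\TRi{\lab \cmd,\lend}^{\star}\langle\ini\rangle\}\cup\Sche\), which does require your induction on the iterates \(\concr{G}_k\) carrying the auxiliary invariant \(\concr{G}_k\subseteq\TRi{\lab \cmd,\lend}\) --- that invariant is what bounds the un-intersected term \(\restrict{\concr{G}_k}{\after(s_1)}\) occurring inside \(\extname\); and (ii) the completeness of \(\concr{A}'\), namely that every reachable transition of a thread other than \(\main\) already lies in \(\Su\), which your ``schedule to \(\main\) and back'' device establishes and which the paper nowhere argues. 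In short: same strategy and same key lemmas, but your write-up is the completed version of what the paper only sketches.
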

\begin{proof}
 We only have to prove that \(\col=\{s\in\TRi{\labprog \cmd,\lend}^{\star}\langle\ini\rangle\mid \thread(s)=\main\}\).
\end{proof}
\begin{proof}
 Let \(s_1\in \{s\in\TRi{\labprog \cmd,\lend}^{\star}\langle\ini\rangle\mid \thread(s)=\main\}\).

There exists \(s_0\in\concr{S}\) such that \((s_0,s)\in\TRi{\labprog \cmd,\lend}^{\star} \)
By proposition \ref{proposition:guarantee}, \((s_0,s)\in\concr{G}_{\infty}\cap\TRi{\labprog \cmd,\lend}^{\star}\)

By Lemma \ref{lemma:init},
 \((s_0,s) (\restrict{{\concr{G}_{\infty}}}{\after(s_0)}\cap\TRi{\labprog \cmd,\lend}^{\star})\cup\restrict{\Sche}{\compl{\after(s_0)}} \).
Hence \((s_0,s)\).

It is straightforward to check that \(\col\subset\{s\in\TRi{\labprog \cmd,\lend}^{\star}\langle\ini\rangle\mid \thread(s)=\main\}\).
\end{proof}

Recall that \(\TRi{\labprog \cmd,\lend}^{\star} (\ini)\) is the set of states that occur on paths starting from \(\ini\).
\(\concr{S'}\) represents all final states reachable by the whole program from an initial state. \(\concr{G'}\) represents all transitions that may be done during any execution of the program and \(\concr{A}'\) represents transitions of children of \(\main\).

\section{Abstract Semantics}
\label{abstract}

\subsection{Abstraction}
\tableabstraction

Recall from the theory of abstract interpretation \cite{CousotCousot04-WCC} that  
a \emph{Galois connection} \cite{ContinuousLatticesandDomains} between a concrete complete lattice \(X\) and an abstract complete lattice \(Y\) is a pair of monotonic functions \(\alpha : X \rightarrow  Y\) and \(\gamma : Y \rightarrow  X\) 
such that \(\forall x\in X , \forall y\in Y,  \alpha(x)\leqslant y\Leftrightarrow x\leqslant \gamma(y)\); \(\alpha\) is called the \emph{abstraction} function and \(\gamma\) the \emph{concretization} function.  
Product lattices are ordered by the product ordering and sets of functions from \(X\) to a lattice \(L\) are ordered by the pointwise ordering \( f\leqslant g \Leftrightarrow \forall x \in X, f(x)\leqslant g(x)\).
A monotonic function \(f^{\sharp}\) is an \emph{abstraction} of a monotonic function \(f^{\flat}\) if and only if \(\alpha\circ f^{\flat}\circ\gamma  \leqslant f^{\sharp}\). 
It is a classical result \cite{ContinuousLatticesandDomains} that an adjoint uniquely determines the other in a Galois connection; therefore, we sometimes omit the abstraction function (lower adjoint) or the concretization function (upper adjoint).

Our concrete lattices are the powersets \(\D\) and \(\DI\) ordered by inclusion.
Remember, our goal is to adapt any given single-thread analysis in a multithreaded setting. Accordingly, we are given an abstract complete lattice \(\Da\) of abstract states and an abstract complete lattice \(\DIa\) of abstract transitions.
These concrete and abstract lattices are linked by two Galois connections, respectively \(\alpha\Dval,\gamma\Dval\) and \(\alpha\DIval,\gamma\DIval\). 
We assume that abstractions of states and  transitions depend only on stores and that all the transitions that leave the store unchanged are in \(\gamt(\bot)\). This assumption allows us to abstract \(\cguard\) and \(\crcreate\) as the least abstract transition \(\bot\).

We also assume we are given the abstract operators of Table \ref{fig:abstractions}, which are correct abstraction of the corresponding concrete functions.
We assume \(\fin\in\Lab\) a special label which is never used in statements. Furthermore, we define \(\cextract{\fin}\egdef\St\).

We define a Galois connection between 
 \( \wp(\St) \) and \( \wp(\Lab) \):
\(\alphl(\concr{S}) = \{\ell \in\Lab\mid  \concr{S} \cap \cextract{\ell}\neq \emptyset \} \) and 
\(\gaml(\abstr{L}) = \bigcap_{\ell\in\Lab\smallsetminus\abstr{L}} \compl{\cextract{\ell}} \) (by convention, this set is \(\St\) when \(\abstr{L} = \Lab\)). The set \(\alphl(\concr{S})\) represents the set of labels that may have been encountered before reaching this point of the program.

Note that we have two distinct ways of abstracting states \(\cstate{}\), either by using \(\alpha\Dval\), which only depends  on the store \(\m\), or by using \(\alphl\) which only depends  on the genealogy \(\h\) and the current thread \(i\). The latter is specific to the multithreaded case, and is used to infer information about possible interferences.

Just as \(\alpha\Dval\) was not enough to abstract states in the multithreaded setting, \(\alpha\DIval\) is not enough, and lose the information that a given transition is or not in a given \(\cextract{\ell}\). This information is needed because \(\restrict{\concr{G}}{\cextract{\ell}}\) is used in Theorem \ref{theorem:denot} and Fig.~\ref{concrfcts}.
Let us introduce the following Galois connection between the concrete lattice \(\DI\) and the abstract lattice \(\DIa^{\Lab}\), the product of \( |\Lab| \) copies of \(\DI\), to this end:
\(\alphk(\concr{G}) = \func{\ell}{\alpha\DIval(\restrict{\concr{G}}{\cextract{\ell}})}\)\\
\(\gamk(\abstr{K}) = \{(s,s')\in\Tra\mid  \forall \ell\in\Lab, s\in\cextract{\ell} \Rightarrow (s,s') \in\gamma\DIval(\abstr{K}(\ell))  \} \).\\
\(\abstr{K}=\alphk(\concr{G})\) is an abstraction of the ``guarantee condition'': \(\abstr{K}(\fin)\) represents the whole set \(\concr{G}\), and \(\abstr{K}(\ell)\) represents the interferences of a child with its parent, i.e., abstracts  \(\restrict{\concr{G}}{\cextract{\ell}}\).

\emph{Abstract configurations} are tuples \(\qu \in \Da \times \wp(\Lab) \times  \DIa^{\Lab} \times \DIa \) such that \(\acinter{\abstr{I}}\abstr{C}=\abstr{C}\) and \(\fin\in\abstr{L}\). 
The meaning of each component of an abstract configuration is given by the Galois connection \(\alphconf,\gamconf\):
 \begin{align*}
  \alphconf \qc &\egdef \langle\acinter{\alpha\DIval(\concr{A})}(\alpha\Dval (\concr{S})), \alphl(\concr{S}),\alphk(\concr{G}),\alpha\DIval(\concr{A})  \rangle\\
  \gamconf \qc &\egdef \langle\gamma\Dval (\abstr{C})\cap\gaml(\abstr{L}),
\gamk(\abstr{K}),\gamma\DIval(\abstr{I})  \rangle
 \end{align*}
\(\abstr{C}\) abstracts the possible current stores \(\concr{S}\). \(\abstr{L}\) abstracts the labels encountered so far in the execution. \(\abstr{I}\) is an abstraction of interferences \(\concr{A}\).

\subsection{Applications: Non-Relational Stores and Gen/Kill Analyses}\label{sec:nonrelational}

As an application, we show
some concrete and abstract stores that can be used in practice. We define a Galois connection \(\alpha\galstore,\gamma\galstore\) between concrete and abstract stores and encode both \emph{abstract states} and \emph{abstract transitions} as abstract stores, i.e., \(\Da=\DIa\). Abstract states are concretized by: \[\gamma\Dval(\am)\egdef\{(i,P,\m,\h) \mid  \m\in\gamma\galstore(\am) \}.\]

\paragraph{Non-relational store} Such a store is a map from the set of variables \(\va\) to some set \(\mathcal{V}^{\flat}\) of \emph{concrete values}, and abstract stores are maps from \(\va\) to some complete lattice \(\mathcal{V}^{\sharp}\) of \emph{abstract values}. Given a Galois connection \(\alpha\Vval,\gamma\Vval\) between \(\mathcal{V}^{\flat}\) and \(\mathcal{V}^{\sharp}\), the following is a classical, so called non-relational abstraction of stores:
\[\alpha\galstore (\m) \egdef \func{x}{\alpha\Dval(\m(x))}\text{ and }\gamma\galstore (\am) \egdef \{\m \mid  \forall x, \m(x) \in \gamma\Dval (x)\}.\]

Let \(\valeur{\abstr{C}}{e}\) and \(\addr{\abstr{C}}{lv}\) be the abstract value of the expression \(e\) and the set of variables that may be represented by \(lv\), respectively, in the context \(\abstr{C}\). 
 \begin{eqnarray*}
 \gamma\DIval(\am) &\egdef& 
\big\{((i,P,\m,h),(i',P',\m',h'))\mid \forall x, \m'(x) \in \gamma\Dval(\am (x))\cup\{\sigma(x)\}\big\}
\\
\aecra{x:=e}{\abstr{C}}&\egdef&\fx{\abstr{C}}{x}{\valeur{\abstr{C}}{e}}\\
\aecra{lv:=e}{\abstr{C}}&\egdef& \bigcup_{x\in \addr{\abstr{C}}{lv} } \aecra{x:=e}{\abstr{C}}\\
  \iaecra{lv:=e}{\abstr{C}} &\egdef&\func{x}{ \text{if } x\in\addr{\abstr{C}}{lv} \text{ then } \valeur{\abstr{C}}{e}\text{ else }\bot }\\
  \acinter{\abstr{I}}(\abstr{C})&\egdef&\abstr{I}\sqcup\abstr{C}\\
  \force_{x}(\m)&\egdef&\fx{\m}{x}{\vrai^{\sharp}}\text{ and } \force_{\neg x}(\m)=\fx{\m}{x}{\faux^{\sharp}}
 \end{eqnarray*}

\paragraph{Gen/kill analyses} In such analyses \cite{DBLP:conf/concur/LammichM07}, stores are sets, e.g., sets of initialized variables, sets of edges of a point-to graph. The set of stores is \(\mathcal{P}(X)\) for some set \(X\), \(\Da=\DIa=\mathcal{P}(X)\), and the abstraction is trivial \(\alpha\galstore=\gamma\galstore=\mathit{id}\). Each gen/kill analysis gives, for each assignment, two sets: \(\gen(lv:=e,\m)\) and \(\ki(lv:=e,\m)\). These sets may take the current store \(\m\) into account (e.g. Rugina and Rinard's ``strong flag'' \citerinard); \(\gen\) (resp.\ \(\ki\)) is monotonic (resp.\ decreasing) in \(\m\). We define the concretization of transitions and the abstract operators:\\
\begin{eqnarray*}
 \gamma\DIval(\am) &\egdef& 
\big\{(i,P,\m,h)\rightarrow(i',P',\m',h')\mid  \m' \subset \m\cup\am  \big\}\\
\aecra{lv:=e}{\abstr{C}}&\egdef&(\abstr{C}\smallsetminus \ki(lv:=e,\m)) \cup \gen(lv:=e,\m)  \\
\iaecra{lv:=e}{\abstr{C}}&\egdef& \gen(lv:=e,\m) \\
\acinter{\abstr{I}}(\abstr{C})&\egdef&\abstr{I}\cup\abstr{C}\\
\force_{x}(\m)&\egdef&\m
\end{eqnarray*}

\subsection{Semantics of Commands}

\figbasicabstractsem

\begin{lemma}\label{lemma:L}
 \(\alphl(\concr{S})=\alphl(\cinter{\concr{A}}(\concr{S}))\).
\end{lemma}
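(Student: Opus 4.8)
The plan is to prove the two inclusions of $\alphl(\concr{S})=\alphl(\cinter{\concr{A}}(\concr{S}))$ separately. Since by definition $\ell\in\alphl(\concr{S})$ iff $\concr{S}\cap\cextract{\ell}\neq\emptyset$, the statement is just the equivalence $\concr{S}\cap\cextract{\ell}\neq\emptyset\Leftrightarrow\cinter{\concr{A}}(\concr{S})\cap\cextract{\ell}\neq\emptyset$ for every label $\ell$. The inclusion $\alphl(\concr{S})\subseteq\alphl(\cinter{\concr{A}}(\concr{S}))$ is immediate: the relation $(\restrict{\concr{A}}{\compl{\after(s)}}\cup\Sche)^{\star}$ is reflexive, so $\cinter{\concr{A}}$ is extensive, $\concr{S}\subseteq\cinter{\concr{A}}(\concr{S})$, and $\alphl$ is monotonic.

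The content is the reverse inclusion. I would take $\ell\in\alphl(\cinter{\concr{A}}(\concr{S}))$, i.e. some $s'\in\cinter{\concr{A}}(\concr{S})\cap\cextract{\ell}$, and unfold $\cinter{}$ to get $s_0\in\concr{S}$ with $(s_0,s')\in(\restrict{\concr{A}}{\compl{\after(s_0)}}\cup\Sche)^{\star}$ and $\thread(s_0)=\thread(s')=:i_0$. Every transition only appends to genealogies, so writing $g_0$ for the genealogy of $s_0$, the genealogy of $s'$ is $g_0\cdot g'$ for some $g'$. Membership $s'\in\cextract{\ell}$ unfolds to a split of this genealogy as $h_1\cdot(a,\ell,b)\cdot h_2$ with $i_0\in\desce_{h_2}(\{a\})$. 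I would then show $s_0\in\cextract{\ell}$, which gives $s_0\in\concr{S}\cap\cextract{\ell}$ and hence $\ell\in\alphl(\concr{S})$.

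The argument rests on two elementary facts read off the inductive definition of $\desce$ in Fig.~\ref{fig:opfunc}: it is extensive ($X\subseteq\desce_g(X)$), and every element of $\desce_g(X)\smallsetminus X$ is a thread created in $g$. Together with the genealogy invariant that a thread of $\dom(P_0)$, in particular $i_0$, is never created in the later fragment $g'$, I would split on where the creating tuple $(a,\ell,b)$ sits. If it lies inside the interference fragment $g'$, then $h_2$ is a suffix of $g'$; from $i_0\in\desce_{h_2}(\{a\})$ and $i_0$ not created in $g'$ the two facts force $i_0=a$, so the tuple is $(i_0,\ell,b)$. By the definition of a transition the step appending it has $i_0$ as current thread, and such a state (genealogy $g_0$ extended by a prefix of $g'$, current thread $i_0$) lies in $\after(s_0)$ by extensiveness of $\desce$; this contradicts that every genealogy-changing step of the path comes from $\restrict{\concr{A}}{\compl{\after(s_0)}}$, since the $\Sche$ steps leave genealogies untouched. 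Hence this case cannot occur. In the remaining case $(a,\ell,b)$ lies inside $g_0$, say $g_0=h_1\cdot(a,\ell,b)\cdot k$ with $h_2=k\cdot g'$; from $i_0\in\desce_{k\cdot g'}(\{a\})=\desce_{g'}(\desce_k(\{a\}))$ and $i_0$ not created in $g'$ the ``created-in'' fact yields $i_0\in\desce_k(\{a\})$, exhibiting $s_0\in\cextract{\ell}$ with witness $(a,\ell,b)$ and suffix $k$.

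The main obstacle is exactly the impossibility argument of the first case: one must use the interplay between $\after(s_0)$, its complement, and the structure of transitions to rule out that a creation at label $\ell$ happens during the interference phase. Everything else reduces to monotonicity and extensiveness of $\desce$ and the no-double-creation invariant of genealogies; one may alternatively phrase parts of the bookkeeping through Lemma~\ref{lemma:Ndescew}, but the elementary genealogy facts above already suffice.
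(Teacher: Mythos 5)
Your proof is correct. There is, however, nothing in the paper to compare it against: Lemma~\ref{lemma:L} is stated there without any proof at all (as are the neighbouring Lemmas~\ref{lemma:LL} and~\ref{lemma:Kunion}), the authors evidently treating it as immediate, so your argument fills a gap rather than replacing a printed proof.

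On the merits: the easy inclusion $\alphl(\concr{S})\subseteq\alphl(\cinter{\concr{A}}(\concr{S}))$ is exactly as you say (reflexivity of the star, monotonicity of $\alphl$), and the converse is correctly reduced to showing $s_0\in\cextract{\ell}$ for the starting state $s_0$ of the interference path. Your case split on whether the creating tuple $(a,\ell,b)$ sits in $g_0$ or in the appended fragment $g'$ is the right crux, and both branches check out. In the $g'$ branch, the created-in fact plus the fact that $i_0$ cannot be created in $g'$ (it is $\main$ or already created in $g_0$, third components are pairwise distinct, $\main$ is never created) force $a=i_0$; then the transition constraint that a creating step has the creator as current thread places the source of that step in $\after(s_0)$ by extensiveness of $\desce$, which is incompatible with that step lying in $\restrict{\concr{A}}{\compl{\after(s_0)}}$, and $\Sche$ steps indeed leave genealogies unchanged, so the case is impossible. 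In the $g_0$ branch, the identity $\desce_{k\cdot g'}(\{a\})=\desce_{g'}(\desce_{k}(\{a\}))$ and the same two facts pull $i_0$ into $\desce_{k}(\{a\})$; the only point you leave implicit is the witness state required by the definition of $\cextract{\ell}$, but the state witnessing $s'\in\cextract{\ell}$ itself serves (its genealogy $h_1\cdot(a,\ell,b)$ is a prefix of $g_0$ and its current thread is $a$), so $s_0\in\after$ of that witness and the conclusion follows. This is a complete and correct argument, at a level of detail comparable to, in fact above, the proofs the paper does include.
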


\begin{lemma}
 \label{lemma:LL}
  \(\alphl(\cschedule(\concr{S}))=\func{\ell}{\bot}\).
\end{lemma}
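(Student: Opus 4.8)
The plan is to read $\func{\ell}{\bot}$ as the least element of the abstract lattice $\wp(\Lab)$, i.e.\ under the identification of a label set with its characteristic function into the two-point lattice $\{\bot\le\top\}$ it is the empty set of labels. By the defining equation $\alphl(\concr{S}) = \{\ell\in\Lab\mid \concr{S}\cap\cextract{\ell}\neq\emptyset\}$, the statement is then equivalent to the assertion that $\cschedule(\concr{S})$ meets no extraction set: $\cschedule(\concr{S})\cap\cextract{\ell}=\emptyset$ for every $\ell\in\Lab$. So I would fix an arbitrary label $\ell$ and argue that no scheduled state can lie in $\cextract{\ell}$.

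First I would unfold both memberships. A state of $\cschedule(\concr{S})$ has the shape $s'=(j,P,\m,\h)$ with $\h=\h''\cdot(i,\ell'',j)$ and $(i,P,\m,\h)\in\concr{S}$; its distinguishing feature is that the current thread $j$ is precisely the child created by the terminal letter of the genealogy, and—by well-formedness of genealogies (each identifier is a third component at most once, and $\main$ never)—the identifier $j$ occurs in $\h$ only there. A state of $\cextract{\ell}$ has, by definition, the shape $s'\in\after(s)$ with $s=(i',P'',\m'',\h_s)$ where $\h_s$ ends in a spawn $(i',\ell,j^{\ast})$ performed by $s$'s own current thread $i'$. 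Unfolding $\after$ gives a split $\h=\h_s\cdot g^{\ast}$ together with the descendant condition $\thread(s')=j\in\desce_{g^{\ast}}(\{i'\})$.

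Then I would run the genealogical computation: since $j$ is created only by the final letter $(i,\ell'',j)$ of $\h$, and that letter lies in the suffix $g^{\ast}$, the recursive definition of $\desce$ forces $j\in\desce_{g^{\ast}}(\{i'\})$ to hold exactly when the creator $i$ already belongs to $\desce_{g^{\ast-}}(\{i'\})$, where $g^{\ast-}$ is $g^{\ast}$ with its last letter removed. I would feed this into Lemma~\ref{lemma:Descendant} (containment/disjointness of descendant sets) together with Lemma~\ref{lemma:F-}, aiming to show that the creation chain $i'\rightarrow\cdots\rightarrow i\rightarrow j$ cannot live entirely inside $g^{\ast}$ and so to contradict the split; combined with Lemma~\ref{lemma:L}, which guarantees that the $\cinter{\concr{A}}(\cdot)$ applied afterwards in $\cinit$ leaves $\alphl$ unchanged, this would close the reduction.

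The hard part is precisely this last contradiction, and here I am doubtful the equality holds verbatim for arbitrary $\concr{S}$. The chain $i'\rightarrow\cdots\rightarrow i\rightarrow j$ can lie wholly within $g^{\ast}$: if $\concr{S}\ni(\main,P,\m,(\main,\ell',j')\cdot(\main,\ell'',j))$, then scheduling to $j$ yields a state of $\cextract{\ell'}$, so $\ell'\in\alphl(\cschedule(\concr{S}))$, which is already non-empty. Hence the equality to the bottom element seems to require a restriction on $\concr{S}$, whereas what the surrounding development actually needs for the preserved $\abstr{L}$-component of $\afinit{\ell}$ is the weaker containment $\alphl(\cschedule(\concr{S}))\subset\alphl(\concr{S})$, obtained through Lemma~\ref{lemma:L}. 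The main obstacle is therefore to pin down the implicit hypothesis on $\concr{S}$—namely that it consists of freshly spawned states whose genealogy terminates exactly at the spawn creating the current thread, with no intervening creation chain—under which $g^{\ast}$ collapses to its single terminal letter, the descendant computation yields $j\notin\desce_{g^{\ast}}(\{i'\})$, and the claimed emptiness goes through.
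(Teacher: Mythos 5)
You have correctly identified that this lemma is false as stated, and there is in fact no proof in the paper to compare against: Lemma \ref{lemma:LL} is asserted without proof. Two observations confirm your diagnosis. Trivially, since the paper sets \(\cextract{\fin}\egdef\St\), every nonempty set meets \(\cextract{\fin}\), so \(\fin\in\alphl(\cschedule(\concr{S}))\) whenever \(\cschedule(\concr{S})\neq\emptyset\), and the equality with the bottom element could only hold vacuously. More robustly, your counterexample with a genuine label is valid: for \(\concr{S}\ni(\main,P,\m,(\main,\ell',j')\cdot(\main,\ell'',j))\) --- a state that actually arises when \(\main\) performs two successive spawns --- the scheduled state \(s'=(j,P,\m,(\main,\ell',j')\cdot(\main,\ell'',j))\) lies in \(\cextract{\ell'}\), witnessed by \(s_w=(\main,P_w,\m_w,(\main,\ell',j'))\), because \(j\in\desce_{(\main,\ell'',j)}(\{\main\})=\{\main,j\}\). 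So \(\ell'\in\alphl(\cschedule(\concr{S}))\) even in the exact situation where the lemma is invoked (the \(\afinit{}\) case of Proposition \ref{prop:abstract}): the initial state of a second child lies in the \(\cpost\)-set of the first spawn. The ``hard part'' you flag is thus not a gap in your argument but a genuine flaw in the statement.

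Your proposed repair is essentially the right one, with one caveat. The restriction you float at the end --- genealogies of states in \(\concr{S}\) terminating exactly at the spawn creating the current thread, with no creation history before it --- is too strong to hold at the point of use, since post-spawn states carry the entire execution history as a genealogy prefix. What is both true and sufficient is the containment you identify: for \(\concr{S}\) consisting of states whose genealogy ends with the letter \((i,\ell,j)\) creating the scheduled thread (as \(\cschedule\) requires), \(\alphl(\cschedule(\concr{S}))\subset\alphl(\concr{S})\). Your split analysis proves it: a witness split at the terminal letter fails because \(j\notin\desce_{\epsilon}(\{i\})=\{i\}\), while any witness split inside the prefix \(\h\) certifying \(s'\in\cextract{\ell''}\) forces \(i\in\desce_{\h_3}(\{i'\})\) for the corresponding suffix \(\h_3\), and hence certifies that the unscheduled parent state already lies in \(\cextract{\ell''}\); combined with Lemma \ref{lemma:L} and the invariant \(\fin\in\abstr{L}\), this containment is exactly what the \(\abstr{L}\)-component of the soundness of \(\afinit{}\) needs. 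Finally, note the type mismatch you silently repaired by reading \(\func{\ell}{\bot}\) as \(\emptyset\): \(\alphl\) lands in \(\wp(\Lab)\), whereas \(\func{\ell}{\bot}\) is an element of \(\DIa^{\Lab}\), namely the \(\abstr{K}\)-component of \(\afinit{}\). The lemma has plausibly conflated the (false) \(\abstr{L}\)-claim with the true \(\abstr{K}\)-claim \(\alphk(\Sche)=\func{\ell}{\bot}\), which follows at once from the standing assumption that store-preserving transitions lie in \(\gamt(\bot)\), since \(\cinit_{\ell}\) resets the guarantee component to \(\Sche\).
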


\begin{lemma}\label{lemma:Kunion}
 Let \(\concr{G_1}\) and \(\concr{G_2}\) two set of transitions and \(\concr{S}_2=\{s\mid \exists s': (s,s')\in\concr{G_2}\}\).\\
 Hence, \(\alphk(\concr{G_1}\cup\concr{G_2})\leqslant\func{\ell}{\text{if } \ell\in\alphl(\concr{S}_2) 
 \text{ then } \abstr{K}(\ell)\sqcup \iaecra{lv:=e}{\abstr{C}} \text{ else } \abstr{K}(\ell)    }\)
\end{lemma}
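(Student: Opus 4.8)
The plan is to prove the inequality pointwise in $\ell\in\Lab$ (the order on $\DIa^{\Lab}$ being the pointwise one), exploiting that the lower adjoint $\alpha\DIval$ of the Galois connection $(\alpha\DIval,\gamma\DIval)$ is a complete join-morphism, hence preserves unions and sends $\emptyset$ to $\bot$. Throughout I write $\abstr{K}=\alphk(\concr{G_1})$, so that by definition of $\alphk$ we have $\abstr{K}(\ell)=\alpha\DIval(\restrict{\concr{G_1}}{\cextract{\ell}})$.

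First I would unfold the left-hand side. By definition of $\alphk$, for each $\ell$,
\[\alphk(\concr{G_1}\cup\concr{G_2})(\ell)=\alpha\DIval\big(\restrict{(\concr{G_1}\cup\concr{G_2})}{\cextract{\ell}}\big).\]
Since restriction distributes over union, $\restrict{(\concr{G_1}\cup\concr{G_2})}{\cextract{\ell}}=\restrict{\concr{G_1}}{\cextract{\ell}}\cup\restrict{\concr{G_2}}{\cextract{\ell}}$, and join-preservation of $\alpha\DIval$ rewrites the above as $\abstr{K}(\ell)\sqcup\alpha\DIval(\restrict{\concr{G_2}}{\cextract{\ell}})$. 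The whole argument then reduces to bounding the second summand $\alpha\DIval(\restrict{\concr{G_2}}{\cextract{\ell}})$ according to whether $\ell\in\alphl(\concr{S}_2)$.

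For the else branch ($\ell\notin\alphl(\concr{S}_2)$) I would show the restriction is empty. By definition of $\alphl$, $\ell\notin\alphl(\concr{S}_2)$ means $\concr{S}_2\cap\cextract{\ell}=\emptyset$; and every transition $(s,s')\in\concr{G_2}$ has source $s\in\concr{S}_2$ by the very definition of $\concr{S}_2$, so none of them satisfies $s\in\cextract{\ell}$. Hence $\restrict{\concr{G_2}}{\cextract{\ell}}=\emptyset$, $\alpha\DIval(\emptyset)=\bot$, and $\alphk(\concr{G_1}\cup\concr{G_2})(\ell)=\abstr{K}(\ell)$, which is exactly the else value. For the then branch ($\ell\in\alphl(\concr{S}_2)$) I would use monotonicity of $\alpha\DIval$ with $\restrict{\concr{G_2}}{\cextract{\ell}}\subset\concr{G_2}$ to get $\alpha\DIval(\restrict{\concr{G_2}}{\cextract{\ell}})\leqslant\alpha\DIval(\concr{G_2})\leqslant\iaecra{lv:=e}{\abstr{C}}$, so $\alphk(\concr{G_1}\cup\concr{G_2})(\ell)\leqslant\abstr{K}(\ell)\sqcup\iaecra{lv:=e}{\abstr{C}}$. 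Combining the two branches yields the claimed pointwise bound.

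The main obstacle is the last step of the then branch: the inequality $\alpha\DIval(\concr{G_2})\leqslant\iaecra{lv:=e}{\abstr{C}}$ is not a consequence of the bare hypotheses on $\concr{G_1},\concr{G_2}$, but is supplied by the intended instantiation—where $\concr{G_2}$ is the set $\Gnew$ of new transitions generated by $\li1 lv:=e,\ell_2$ and $\abstr{C}$ abstracts their sources—through the correctness of the given abstraction $\iaecr{lv:=e}$ of Table~\ref{fig:abstractions}. I would make this dependence explicit so that everything else (join-preservation, distributivity of restriction over union, and the emptiness argument) remains purely structural.
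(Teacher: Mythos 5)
Your proof is correct, and it fills a genuine gap: the paper states Lemma \ref{lemma:Kunion} with no proof at all, so there is no argument of the paper to compare yours against. Your derivation is the natural one --- pointwise evaluation of $\alphk$, distribution of the restriction over the union, join-preservation of the lower adjoint $\alpha\DIval$ (which also gives $\alpha\DIval(\emptyset)=\bot$), and emptiness of $\restrict{\concr{G_2}}{\cextract{\ell}}$ when $\ell\notin\alphl(\concr{S}_2)$, since every transition of $\concr{G_2}$ has its source in $\concr{S}_2$. Your closing remark is also the right diagnosis of the statement's looseness: $\abstr{K}$ and $\iaecra{lv:=e}{\abstr{C}}$ occur free in the lemma, and the then-branch bound requires the extra hypotheses $\alphk(\concr{G_1})\leqslant\abstr{K}$ and $\alpha\DIval(\concr{G_2})\leqslant\iaecra{lv:=e}{\abstr{C}}$; these hold in the lemma's only use, inside the proof of Proposition \ref{prop:abstract}, where $\concr{G_2}$ is the set $\concr{G}_{\text{new}}$ of Proposition \ref{prop:basic} and the bound $\alpha\DIval(\concr{G}_{\text{new}})\leqslant\iaecra{lv:=e}{\abstr{C}}$ is established separately. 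Making those hypotheses explicit, as you do, is exactly what is needed for the lemma to be provable as stated.
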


The functions of Fig.~\ref{fig:basicabstract} abstract the corresponding functions of the \cname semantics (See Fig.~\ref{concrfcts}). 
\begin{proposition}\label{prop:abstract}
The abstract functions \(\assign{lv:=e}\), 
\(\guard{\guardp}\), \(\aspawn{\ell_2}\), \(\afinit{\ell_2}\), \(\glue{}\) and \(\afp{\lab \cmd}\) are abstractions of the concrete functions \(\osem{\lab lv:=e,\ell'}\), 
\(\osem{\lab\cguard(\guardp),\ell'}\), \(\osem{\li1\crcreate(\ell_2),\ell_3}\), \(\cinit_{\ell_1}\circ \osem{\li1\crcreate(\ell_2),\ell_3}\), \(\ccomb\) and \(\fpd{\osem{\lab \cmd,\lend}}\) respectively.
\end{proposition}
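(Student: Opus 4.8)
The plan is to unfold the definition of abstraction: a monotone \(f^{\sharp}\) abstracts \(f^{\flat}\) exactly when \(\alphconf\circ f^{\flat}\circ\gamconf\leqslant f^{\sharp}\). So for each of the six pairs I would fix an abstract configuration \(\qu\), put \(\qc=\gamconf(\qu)\) (hence \(\concr{S}=\gamma\Dval(\abstr{C})\cap\gaml(\abstr{L})\), \(\concr{G}=\gamk(\abstr{K})\), \(\concr{A}=\gamma\DIval(\abstr{I})\)), compute \(f^{\flat}(\qc)\), apply \(\alphconf\), and check \(\alphconf(f^{\flat}(\qc))\leqslant f^{\sharp}(\qu)\) separately in each of the four components \(\Da\), \(\wp(\Lab)\), \(\DIa^{\Lab}\), \(\DIa\). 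The recurring facts I would reuse are: the given operators \(\aecr{lv:=e}\), \(\acinter{}\), \(\iaecr{lv:=e}\), \(\force\) are correct abstractions (Table~\ref{fig:abstractions}); the Galois inequalities \(\alpha\Dval(\concr{S})\leqslant\abstr{C}\), \(\alphl(\concr{S})\leqslant\abstr{L}\), \(\alphk(\concr{G})\leqslant\abstr{K}\), \(\alpha\DIval(\concr{A})\leqslant\abstr{I}\) coming from \(\qc=\gamconf(\qu)\); the identity \(\cextract{\fin}=\St\), which makes \(\abstr{K}(\fin)\) an abstraction of the whole guarantee; and the standing assumption that store-preserving transitions (in particular \(\Sche\) and all \(\cguard\)/\(\crcreate\) transitions) lie in \(\gamt(\bot)\), so they abstract to \(\bot\).

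For the three basic statements \(\osem{\lab lv:=e,\ell'}\), \(\osem{\lab\cguard(\guardp),\ell'}\), \(\osem{\li1\crcreate(\ell_2),\ell_3}\) I would first invoke Proposition~\ref{prop:basic} to bound \(\osem{\cdot}\qc\) above by the explicit configuration \(\langle\concr{S}'',\concr{G}\cup\Gnew,\concr{A}\rangle\), then abstract component by component. The store component \(\alpha\Dval(\concr{S}'')\) is handled by correctness of \(\aecr{lv:=e}\) (resp.\ \(\force\)) together with that of \(\acinter{}\), since \(\concr{S}''=\cinter{\concr{A}}(\TR{\li1 basic,\ell_2}\smallsetminus\Sche\langle\cinter{\concr{A}}(\concr{S})\rangle)\). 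The kill component \(\alphk(\concr{G}\cup\Gnew)\) follows from Lemma~\ref{lemma:Kunion} (the source states of \(\Gnew\) lie in \(\cinter{\concr{A}}(\concr{S})\), whose label set is \(\alphl(\concr{S})\leqslant\abstr{L}\) by Lemma~\ref{lemma:L}) together with correctness of \(\iaecr{lv:=e}\); the label component from Lemma~\ref{lemma:L}, and the assume component from \(\alpha\DIval(\concr{A})\leqslant\abstr{I}\). For the guard and the spawn the new transitions in \(\Gnew\) preserve the store, so they abstract to \(\bot\) and leave \(\abstr{K}\), \(\abstr{I}\) untouched; for the spawn I would additionally note that each resulting state lies in \(\cextract{\ell_2}\) (its genealogy ends with a creation at \(\ell_2\) and \(s\in\after(s)\)), so its label set is at most \(\alphl(\concr{S})\cup\{\ell_2\}\), matching the \(\abstr{L}\cup\{\ell_2\}\) of \(\aspawn{\ell_2}\).

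For the two combinators I would simply expand the concrete definitions. For \(\cinit_{\ell_2}\circ\osem{\li1\crcreate(\ell_2),\ell_3}\) (reading the subscript as \(\ell_2\), matching the spawn), since the spawn preserves stores the composite reduces, after Proposition~\ref{prop:basic}, to scheduling onto the fresh child and interfering with \(\concr{A}\cup\restrict{\concr{G}}{\cextract{\ell_2}}\); then \(\alpha\DIval(\concr{A}\cup\restrict{\concr{G}}{\cextract{\ell_2}})\leqslant\abstr{I}\sqcup\abstr{K}(\ell_2)\), \(\cschedule\) preserves stores so the store component is \(\leqslant\acinter{\abstr{I}\sqcup\abstr{K}(\ell_2)}(\abstr{C})\), the guarantee resets to \(\Sche\) which abstracts to \(\func{\ell}{\bot}\), and the label component is handled by Lemmas~\ref{lemma:L} and~\ref{lemma:LL}; this matches \(\afinit{\ell_2}\). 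For \(\ccomb\) versus \(\glue{}\), taking the extra argument as \(\concr{G}'=\gamk(\abstr{K}')\), the identity \(\cextract{\fin}=\St\) gives \(\gamk(\abstr{K}')\subseteq\gamma\DIval(\abstr{K}'(\fin))\) and hence \(\alpha\DIval(\concr{G}')\leqslant\abstr{K}'(\fin)\); then each of the four components of \(\ccomb_{\qc}(\concr{G}')=\langle\cinter{\concr{A}\cup\concr{G}'}(\concr{S}),\concr{G}\cup\concr{G}',\concr{A}\cup\concr{G}'\rangle\) abstracts directly to the corresponding component of \(\glue{\qu}(\abstr{K}')\) via correctness of \(\acinter{}\), via \(\alphk(\concr{G}\cup\concr{G}')\leqslant\abstr{K}\sqcup\abstr{K}'\), and via Lemma~\ref{lemma:L}.

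Finally, the \(\afp{\lab\cmd}\) versus \(\fpd{\osem{\lab\cmd,\lend}}\) case is the delicate one and the main obstacle. Both are iterations of a single-step operator: \(\cexe_{\osem{\lab\cmd,\lend},\concr{S},\concr{A}}\) returns the guarantee component of \(\osem{\lab\cmd,\lend}\qc\), while \(\aexe\) returns the \(\abstr{K}\)-component of \(\asem{\lab\cmd,\lend}\qu\). Relying on the outer structural-induction hypothesis that \(\asem{\lab\cmd,\lend}\) abstracts \(\osem{\lab\cmd,\lend}\), and on the fact that \(\alphconf\) sends the concrete guarantee to the abstract \(\abstr{K}\) through \(\alphk\), I would first show that the single step \(\aexe\) abstracts \(\cexe\). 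The remaining work is a fixpoint-transfer argument: writing the iteration as \(g\eomega=\bigcup_{n}g^{n}\), I would push the single-step abstraction through the union by induction on \(n\), using monotonicity and the fact that \(\alphk\) commutes with the directed unions involved, to obtain \(\alphk(\cexe\eomega(\concr{G}))\leqslant\aexe\eomega(\abstr{K})\). Getting the iteration invariant right---and ensuring that the auxiliary components \(\abstr{C}\), \(\abstr{L}\), \(\abstr{I}\) are correctly threaded through \(\aexe\), which updates only \(\abstr{K}\)---is where the care is needed; the basic-statement and combinator cases are essentially bookkeeping once Proposition~\ref{prop:basic} and the given-operator correctness are in hand.
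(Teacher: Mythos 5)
Your proposal follows the same route as the paper's own proof: concretize an abstract configuration via \(\gamconf\), bound the concrete semantics of the basic statements by Proposition \ref{prop:basic}, and verify the inequality component-wise using the correctness of the given operators of Table \ref{fig:abstractions} together with Lemmas \ref{lemma:L}, \ref{lemma:LL} and \ref{lemma:Kunion}, plus the facts that guard/spawn transitions abstract to \(\bot\) and that basic transitions alter \(\cextract{\ell}\)-membership only by possibly adding \(\ell_2\) for a spawn. The only difference is one of detail: the cases \(\afinit{}\), \(\glue{}\) and \(\afp{}\) that you work out explicitly (including the fixpoint-transfer argument and its reliance on the structural induction hypothesis that \(\asem{\cdot}\) abstracts \(\osem{\cdot}\), which makes this proposition mutually dependent with Theorem \ref{sounda}) are exactly the ones the paper dismisses as straightforward.
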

\begin{proof}
The cases of \(\glue{}\) and \(\afp{\lab \cmd}\) are straightforward.
The case of \(\afinit{\ell_2}\) is a straightforward consequence of Lemma \ref{lemma:LL}.

 Let \(\qu\) an abstract configuration and \(\qc=\gamconf\qu\). Therefore \(\concr{S}=\cinter{\concr{A}}(\concr{S})\).
 
 Let \(\qcp'= \osem{\lab lv:=e,\ell'}\) and \(\qup'=\assign{lv:=e}\qu \).
 
 Therefore, by definition, \(\acinter{\abstr{I}}\circ \aecr{\lab lv:=e,\ell'} \circ\acinter{\abstr{I}}\).
 By Proposition \ref{prop:basic}, 
 \(\concr{S}' = \cinter{\concr{A}} \big(\TR{\lab lv:=e,\ell'}\smallsetminus \Sche \langle  \cinter{\concr{A}}(\concr{S})\rangle\big)\).
  Hence \(\alphs(\concr{S'})\leqslant \abstr{C'}\).
 
 According to Proposition \ref{prop:basic}, \(\concr{G'}\subset\concr{G}\cup\Gnew\) with \(\Gnew = \{(s,s')\in\TR{\li1 basic,\ell_2}\mid s\in \cinter{\concr{A}}(\concr{S})\} = \{(s,s')\in\TR{\li1 basic,\ell_2}\mid s\in \concr{S}\}\).
 Hence \(\alphr(\Gnew)\leqslant\iaecr{ \lab lv:=e,\ell' }(\abstr{C})\).
 
 Therefore by Lemma \ref{lemma:Kunion}:\\ \(\alphk(\concr{G}')\leqslant\func{\ell}{\text{if } \ell\in\abstr{L} 
 \text{ then } \abstr{K}(\ell)\sqcup \iaecra{lv:=e}{\abstr{C}} \text{ else } \abstr{K}(\ell)    }\)
 
 If \((s,s')\in\TR{\lab lv:=e,\ell'}\) then,
  \(s'\in\cextract{\ell}\Leftrightarrow s\in\cextract{\ell}\).
  Therefore, by Lemma \ref{lemma:L}, \(\alphl(S)=\alphl(S')\).
 
 Hence \(\alphconf(\qcp')\leqslant\qup'\).
Given that \(\alphr(\TR{\lab\cguard(\guardp),\ell'})=\bot\) and   \(\forall(s,s')\in\TR{\lab\cguard(\guardp),\ell'},s'\in\cextract{\ell}\Leftrightarrow s\in\cextract{\ell}\), we prove in the same way that \(\guard{\guardp}\) is an abstraction of \(\osem{\lab\cguard(\guardp),\ell'}\).

Given that \(\alphr(\TR{\li1\crcreate(\ell_2),\ell_3})=\bot\) and    \(\forall(s,s')\in\TR{\li1\crcreate(\ell_2),\ell_3},s'\in\cextract{\ell}\Leftrightarrow s\in\cextract{\ell}\vee \ell=\ell_2\), 
we prove in the same way that\(\aspawn{\ell_2}\) is an abstraction of\(\osem{\li1\crcreate(\ell_2),\ell_3}\).
\end{proof}

The \(\assign{lv:=e}\) function updates \(\abstr{K}\) by adding the modification of the store to all labels encountered so far (those which are in \(\abstr{L}\)).
It does not change \(\abstr{L}\) because no thread is created.
Notice that in the case of a non-relational store, we can simplify function \(\assign{}\) using the fact that \( \acinter{\abstr{I}}\circ\aecra{x:=e}{\abstr{C}} = \fx{\abstr{C}}{x}{\mathit{val}_{\abstr{C}}(e)\sqcup \abstr{I}(x)} \). 

The abstract semantics is defined by induction on syntax, see Fig.~\ref{fig:abstract}, and, with Prop.\ref{prop:abstract}, it is straightforward to check the soundness of this semantics:
\begin{theorem}[Soundness]\label{sounda}
 \(\asem{\cmd,\ell}\) is an abstraction of \( \osem{\cmd,\ell} \).
\end{theorem}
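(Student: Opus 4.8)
The plan is to prove the equivalent inequality $\alphconf\circ\osem{\cmd,\ell}\circ\gamconf\leqslant\asem{\cmd,\ell}$ (this is what ``$\asem{\cmd,\ell}$ is an abstraction of $\osem{\cmd,\ell}$'' means, per Section~\ref{abstract}) by structural induction on $\cmd$. Before starting the induction I would record three generic preservation facts, each an immediate consequence of the Galois connection $\alphconf,\gamconf$ and of the monotonicity of the concrete maps involved (which is clear from their definitions): \emph{(i)} if $f^{\sharp}$ abstracts $f^{\flat}$ and $g^{\sharp}$ abstracts $g^{\flat}$ then $g^{\sharp}\circ f^{\sharp}$ abstracts $g^{\flat}\circ f^{\flat}$, obtained by inserting the inequality $\mathrm{id}\leqslant\gamconf\circ\alphconf$ between $g^{\flat}$ and $f^{\flat}$ and using monotonicity; \emph{(ii)} abstraction is preserved by pointwise joins $\sqcup$, because the lower adjoint $\alphconf$ preserves joins; and \emph{(iii)} combining \emph{(i)} and \emph{(ii)}, if $f^{\sharp}$ abstracts $f^{\flat}$ then $(f^{\sharp})\eomega$ abstracts $(f^{\flat})\eomega$, since $\alphconf$ may be pushed through the union $\bigcup_{n}$ defining $\eomega$. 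These are the only abstract-interpretation principles the argument needs; everything else is supplied by Theorem~\ref{theorem:denot} and Proposition~\ref{prop:abstract}.

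The base case is the assignment $\lab lv:=e$, for which $\asem{\lab lv:=e}=\assign{lv:=e}$ is an abstraction of $\osem{\lab lv:=e,\ell'}$ directly by Proposition~\ref{prop:abstract}. In every inductive case the recipe is the same: apply the matching clause of Theorem~\ref{theorem:denot} to bound $\osem{\cmd,\ell}$ from above by an expression built from the concrete semantics of the immediate subcommands, from the basic statements $\cguard$ and $\crcreate$, and from the helpers $\cinit$, $\fpd{}$, $\ccomb$; then precompose with $\gamconf$ and postcompose with $\alphconf$, and discharge the resulting bound using facts \emph{(i)}--\emph{(iii)}, the induction hypothesis for the subcommands, and Proposition~\ref{prop:abstract} for the basic pieces. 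For a sequence this is part~\ref{fcomposition} of the theorem together with fact~\emph{(i)}; for the conditional, part~\ref{fif} with fact~\emph{(i)} on each branch and fact~\emph{(ii)} for their join. For the loop (part~\ref{fwhile}) I would first prove that $\loopwa$ abstracts $\loopw$ --- using fact~\emph{(i)}, the induction hypothesis on $\cmd$, the abstraction of $\guard{\guardp}$ from Proposition~\ref{prop:abstract}, and the inequality $\alphconf\circ\gamconf\leqslant\mathrm{id}$ to handle the $\sqcup\,\concr{Q}'$ summand --- then lift it to the $\omega$-iterates by fact~\emph{(iii)}, and finally compose with $\guard{\guardn}$ via fact~\emph{(i)}.

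The delicate case is thread creation (part~\ref{fcreate}), where Theorem~\ref{theorem:denot} bounds $\osem{\li1\ccreate(\li2\cmd),\ell_3}$ by $\ccomb_{\concr{Q}'}\circ\fp{\osem{\li2\cmd,\lend}}\circ\cinit_{\ell_2}(\concr{Q}')$ with $\concr{Q}'=\osem{\li1\crcreate(\ell_2),\ell_3}(\concr{Q})$, to be compared with $\glue{\abstr{Q}'}\circ\afp{\li2\cmd}\circ\afinit{\ell_2}(\abstr{Q})$ where $\abstr{Q}'=\aspawn{\ell_2}(\abstr{Q})$. The point requiring care is that $\afp{\li2\cmd}$ is defined through the \emph{abstract} command semantics $\asem{\li2\cmd}$ (Fig.~\ref{fig:basicabstract}), so the clause of Proposition~\ref{prop:abstract} stating that $\afp{\li2\cmd}$ abstracts $\fpd{\osem{\li2\cmd,\lend}}$ becomes usable only once the induction hypothesis $\alphconf\circ\osem{\li2\cmd,\lend}\circ\gamconf\leqslant\asem{\li2\cmd}$ is in hand; I would therefore invoke the induction hypothesis first and feed it into Proposition~\ref{prop:abstract}. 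The remaining ingredients are again from Proposition~\ref{prop:abstract}: $\aspawn{\ell_2}$ abstracts $\osem{\li1\crcreate(\ell_2),\ell_3}$ (so $\abstr{Q}'$ abstracts $\concr{Q}'$), $\afinit{\ell_2}$ abstracts $\cinit_{\ell_2}\circ\osem{\li1\crcreate(\ell_2),\ell_3}$, and $\glue{}$ abstracts $\ccomb$; chaining them in the order in which $\cinit$, $\fpd{}$ and $\ccomb$ occur, via fact~\emph{(i)}, yields the inequality and closes the induction. I expect the main obstacle to be precisely this bookkeeping for the creation case: making sure the ``conditional'' clause of Proposition~\ref{prop:abstract} is correctly discharged by the induction hypothesis and that the four helper abstractions compose in exactly the right order. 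Once that is arranged, the theorem reduces entirely to the three generic preservation facts, as Section~\ref{abstract} already anticipates when it calls the soundness check ``straightforward''.
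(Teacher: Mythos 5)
Your proposal is correct and takes essentially the same route the paper intends: the paper's own ``proof'' of Theorem~\ref{sounda} is just the remark that soundness follows by induction on syntax from Theorem~\ref{theorem:denot} and Proposition~\ref{prop:abstract}, and your argument is exactly that induction with the generic Galois-connection facts (composition, joins, \(\eomega\)-iteration) made explicit. Your point that in the \(\ccreate\) case the \(\afp{}\) clause of Proposition~\ref{prop:abstract} only becomes usable once the induction hypothesis for the child command is available is a real subtlety the paper glosses over, and you resolve it in the right way.
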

\figabstractsemantics

\subsection{Example}

Consider Fig.~\ref{fig:example} and the non-relational store of ranges \cite{CousotCousot04-WCC}.
We will apply our algorithm on this example.

Our algorithm computes a first time \(\aexe\), then, the fixpoint is not reached, and then, \(\aexe\) is computed another time.

\begin{enumerate}
 \item Initial configuration : \(\abstr{Q}_0=\langle\abstr{C}_0,\{\fin\},\abstr{K}_0,\bot\rangle \) where \(\abstr{C}_0=[y=?, z=?]\) and \(\abstr{L}_0=\{\fin\}\) and \(\abstr{K}_0=\func{\ell}{\bot}\) and \(\abstr{I}_0=\bot\).
 \item The configuration \(\abstr{Q}_1=\osem{\li1 y:=0;\li2 z:=0,\ell_3}(\abstr{Q}_0)\) is computed. \(\abstr{Q}_1=\langle \abstr{C}_1,\{\fin\},\abstr{K}_1,\bot \rangle \) where \(\abstr{C}_1=[y=0, z=0]\) and \(\abstr{K}_1=\fin\mapsto [y=0,z=0]\).
 The \(\abstr{L}\) and \(\abstr{I}\) componnents are not changed because no new thread is created.
 \item\label{item:child-spawn}  The configuration \(\abstr{Q}_2=\afinit{\ell_3}(\abstr{Q}_1)\) is computed. \(\abstr{Q}_2=\langle \abstr{C}_2, \{\fin\},\abstr{K}_2,\bot \rangle \) where \(\abstr{C}_2=\abstr{C}_1\) and \(\abstr{K}_2=\func{\ell}{\bot}\). Notice that because \(\abstr{K}_1(\ell_3)=\bot\) the equality \(\abstr{C}_2=\abstr{C}_1\) holds.
 \item  The configuration  \(\abstr{Q}_3=\osem{\li4y:=y+z,\lend}(\abstr{Q}_2)\) is computed.   \(\abstr{Q}_3=\langle\abstr{C}_3,\{\fin\},\abstr{K}_3,\bot \rangle\) where \(\abstr{C}_3=[y=0, z=0]\) and \(\abstr{K}_3=\fin\mapsto [y=0] \).
 \item  The configuration  \(\abstr{Q}_4=\glue{\aspawn{\ell_3}(\abstr{Q}_2)}(\abstr{Q}_3) \) is computed.
 \(\abstr{Q}_4=\langle \abstr{C}_4, \{\fin,\ell_3\},\abstr{K}_4,\abstr{I}_4 \rangle\).
  \(\abstr{C}_4=[y=0, z=0]\) and \(\abstr{K}_4=[\fin\mapsto [y=0,z=0]]\) and \(\abstr{I}_4=[y=0]\).
  \item  The configuration  \(\abstr{Q}_5=\osem{\li5 z:=3,\lend} \abstr{Q}_4\) is computed.
   \(\abstr{Q}_5=\langle \abstr{C}_5, \{\fin,\ell_3\},\abstr{K}_5,\abstr{I}_5 \rangle\).
  \(\abstr{C}_5=[y=0, z=3]\) and \(\abstr{K}_5=[\fin\mapsto [y=0,z=[0,3]]\) and \(\abstr{I}_5=\abstr{I}_4\).
\end{enumerate}

Then, we compute a second time \(\aexe\), on a new initial configuration \(\langle \abstr{C}_0,\abstr{L}_0,\abstr{K}_5,\abstr{I}_0 \rangle \).

\myexample
 Noting change, except at the step \ref{item:child-spawn}, when \(\afinit{}\) is applied.
 The configuration obtained is then \(\abstr{Q}'_2=\langle \abstr{C}'_2, \{\fin\},\abstr{K}_5, \abstr{I}'_2 \rangle \) where \(\abstr{C}'_2= [y=0,z=[0,3]] \) and \(\abstr{I}'_2 = [z=3] \).
 Then, the algorithm discovers that the value of \(y\) may be 3.
 
The details of the execution of the algorithm is given in the following tabular:

\n{\bigK}[1]{\begin{array}{lcc}\fin&\mapsto& y=#1, z=[0,3]\\\ell_3&\mapsto& z=3\end{array}}
\n{\ExStore}[2]{\begin{array}{lcc} y &=&#1\\z&=&#2  \end{array}}

\noindent\(
\begin{array}{l|c|c|c|c|}
 & \abstr{C} & \abstr{L} & \abstr{K} & \abstr{I}\\ 
 \hline
 \text{Initial configuration} & \ExStore{?}{?} & \{\fin\} & \func{\ell}{\bot} & \bot \\
 \hline
 \osem{\li1 y:=0,\ell_2} & \ExStore{0}{?} & \{\fin\} & \fin\mapsto y=0 &\bot\\
 \hline
 \osem{\li2 z:=0,\ell_3} & \ExStore{0}{0} & \{\fin\} &\fin\mapsto y=0, z=0 &\bot\\
 \hline
 \afinit{\ell_3}&\ExStore{0}{0} & \{\fin\} &\func{\ell}{\bot} &\bot \\
 \hline
 \osem{\li4y:=y+z,\lend} & \ExStore{0}{0} &  \{\fin\} &\fin\mapsto y=0 &\bot\\
\hline
 \glue{\aspawn{\ell_3}(\cdot)} & \ExStore{0}{0} & \{\fin,\ell_3\} & \fin\mapsto y=0, z=0 & y=0\\
\hline
 \osem{\li5 z:=3,\lend} & \ExStore{0}{3} &\{\fin,\ell_3\} & \bigK{0} &y=0 \\
\hline
\hline
 \text{Initial configuration} & \ExStore{?}{?} & \{\fin\} & \bigK{0} & \bot \\
 \hline
 \osem{\li1 y:=0,\ell_2} & \ExStore{0}{?} & \{\fin\} & \bigK{0} &\bot\\
 \hline
 \osem{\li2 z:=0,\ell_3} & \ExStore{0}{0} & \{\fin\} &\bigK{0} &\bot\\
 \hline
 \afinit{\ell_3}&\ExStore{0}{[0,3]} & \{\fin\} &\func{\ell}{\bot} &z=3 \\
 \hline
 \osem{\li4y:=y+z,\lend} & \ExStore{[0,3]}{[0,3]} &  \{\fin\} &\fin\mapsto y=[0,3] &z=3\\
\hline
 \glue{\aspawn{\ell_3}(\cdot)} & \ExStore{[0,3]}{0} & \{\fin,\ell_3\} & \bigK{[0,3]} & y=[0,3]\\
\hline
 \osem{\li5 z:=3,\lend} & \ExStore{[0,3]}{3}& \{\fin,\ell_3\} & \bigK{[0,3]} &y=[0, 3] \\
\hline
\end{array}\)

\section{Practical Results}\label{algosm}

The abstract semantics is denotational, so we may compute it recursively. 
 This requires to compute fixpoints and may fail to terminate.
For this reason, each time we have to compute \(f\eomega(X)\) we compute instead the overapproximation \(f\ewiden\), where \(\widen\) is a widening operator, in the following way:
 \begin{inparaenum}
 \item Assign \(X_1:=X\)
 \item\label{algo:whilebody} Compute \(X_2:=f(X_1)\)
 \item If \(X_2\leqslant X_1\) then returns \(X_2\), otherwise,
 \item Assign \(X_1:=X_1\widen X_2\) and go back to \ref{algo:whilebody}.
\end{inparaenum} 
Our final algorithm is to compute recursively \(\afp{\lab \cmd,\lend}\) applied to the initial configuration \(\langle \top,\{\fin\},\func{\ell}{\bot},\bot \rangle \), overapproximating all fixpoint computations.

\mytable

We have implemented two tools, \progint{} and \progP{}, in Ocaml with the front-end C2newspeak, 
with two different abstract stores. 
The first one maps variables to integer intervals and computes an overapproximation of the values of the variables. The second one extends the analysis of Allamigeon et al.\@ \cite{AllamigeonGodardHymansSAS06}, which focuses on pointers, integers, C-style strings and structs and detects array overflows.
It analyzes programs in full fledged C (except for dynamic memory allocation library routines) that use the Pthreads multithread library.
We ignore mutexes and condition variables in these implementations. This is sound because mutexes and condition variables only restrict possible transitions. We lose precision if mutexes are used to create atomic blocks, but not if they are used only to prevent data-races.

In Table \ref{table:benchmarks} we show some results on benchmarks of differents sizes. \loc{} means ``Lines of Code''.
``\messag'' is a C file, with 3 threads: one thread sends an integer message to another through a shared variable. ``\embarque'' is extracted from embedded C code with two threads. ``\douze'' and ``\quinze'' are sets of 12 and 15 files respectively, each one focusing on a specific thread interaction.

To give an idea of the precision of the analysis, we indicate how many false alarms were raised. Our preliminary experiments show that our algorithm loses precision in two ways:
\begin{inparaenum}
 \item %
 through the (single-thread) abstraction on stores
 \item %
 by abstraction on interferences.
\end{inparaenum} 
 Indeed, even though our algorithm takes the order of transitions into account for the current thread, it considers that interference transitions may be executed in an arbitrary order and arbitrary many times. This does not cause any loss in ``\messag'', since the thread which send the message never put an incorrect value in the shared variable. Despite the fact that ``\embarque'' is a large excerpt of an actual industrial code, the loss of precision is moderate: 7 false alarms are reported on a total of 27 100 lines.
Furthermore, because of this arbitrary order, our analysis straightforwardly extends to models with "relaxed-consistency" and "temporary" view of thread memory due to the use of cache, e.g., OpenMP.

\section{Complexity}\label{section:complexity}
The complexity of our algorithm greatly depends on widening and narrowing operators. Given a program \(\li0prog,\lend\), the \emph{slowness} of the widening and narrowing in an integer \(w\) such that:
widening-narrowing stops in always at most \(w\) steps on each loop and whenever \(\afp{}\) is computed (which also requires doing an abstract fixpoint computation).
 Let the \emph{nesting depth} of a program be the nesting depth of \(\cwhile\) and of \(\ccreate\) which\footnote{In our Semantics, each \(\ccreate\) needs a fixpoint computation, except \(\ccreate\) with no subcommand \(\ccreate\).} have a subcommand \(\ccreate\).
 
\begin{proposition}\label{prop:complexity}
Let \(d\) be the nesting depth, \(n\) the number of commands of our program, and, \(w\) the slowless of our widening. The time complexity of our analysis is \(O(n w^{d+1})\) assuming operations on  abstract stores are done in constant time.
\end{proposition}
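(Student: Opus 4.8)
The plan is to bound the running time by a structural recurrence on the syntax of $\lab\cmd,\lend$, reading off one clause per equation of the abstract semantics in Fig.~\ref{fig:abstract} together with the widening/narrowing loop of Section~\ref{algosm}. Write $n(\stmt)$ for the number of commands in $\stmt$ and $d(\stmt)$ for its nesting depth, and let $T(\stmt)$ be the cost of analysing $\stmt$ once. Each application of a basic abstract transfer function — $\assign{lv:=e}$, $\guard{\cond}$, $\aspawn{\ell}$, $\afinit{\ell}$, $\glue{}$ — involves only a bounded number of store operations and componentwise lattice operations, hence costs $O(1)$ under the stated assumption; so $T(\mathit{basic})=O(1)$, $T(\cmd_1;\cmd_2)=T(\cmd_1)+T(\cmd_2)+O(1)$, and likewise for $\cif$.

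The multiplicative factors come from the two $\eomega$ in Fig.~\ref{fig:abstract}. Since every fixpoint $f\eomega$ is replaced by the widening iteration of Section~\ref{algosm}, which re-evaluates its body at most $w$ times, a loop satisfies $T(\cwhile(\cond)\{\cmd\}) \le w\,(T(\cmd)+O(1))$, and a $\ccreate(\cmd)$ with a nested $\ccreate$, whose analysis calls $\afp{\li2\cmd,\lend}=\cexe\eomega$, satisfies $T(\ccreate(\cmd)) \le w\,(T(\cmd)+O(1))$. For a $\ccreate(\cmd)$ with no nested $\ccreate$ I would invoke the observation recorded in the footnote to the definition of nesting depth: the child has no descendant, so the mutual child/descendant interference that $\afp{}$ iterates to a fixpoint is vacuous and the widening loop stabilises after a constant number of steps, giving $T(\ccreate(\cmd)) = O(1)\,(T(\cmd)+O(1))$ with no extra factor of $w$. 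Establishing this last point rigorously — that $\cexe$ reaches its fixpoint in $O(1)$ iterations precisely when the spawned command contains no further $\ccreate$ — is the main obstacle, and it is exactly what makes the nesting depth, rather than the raw number of $\ccreate$s, the relevant parameter.

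With these clauses I would prove by structural induction that $T(\stmt)=O\!\big(n(\stmt)\,w^{d(\stmt)}\big)$, using $w\ge 1$: the additive cases preserve the bound since $w^{d(\cmd_i)}\le w^{d(\stmt)}$, while each of the two genuine fixpoint cases multiplies $n'\,w^{d'}$ by $w$ and simultaneously increments $d$ by one, so $w\cdot n'\,w^{d'} = n'\,w^{d'+1}$, and the non-iterating $\ccreate$ case leaves both $d$ and the $O$-bound unchanged. Finally, the complete analysis is $\afp{\lab\cmd,\lend}$ applied to the initial configuration, i.e.\ one further widening loop of at most $w$ iterations wrapped around $\asem{\lab\cmd,\lend}$; this contributes the extra factor accounting for the ``$+1$'' in the exponent, yielding a total of $O(w)\cdot O(n\,w^{d}) = O(n\,w^{d+1})$.

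One point the cost model must settle, and which I would address before the induction, is the price of manipulating the label-indexed guarantee component $\abstr{K}\in\DIa^{\Lab}$: operations such as the pointwise join in $\glue{}$ or the update over $\abstr{L}$ in $\assign{}$ touch $|\Lab|=O(n)$ entries if implemented naively. To keep each transfer-function application at $O(1)$ — as the advertised bound requires — I would either subsume these under the hypothesis that operations on abstract stores are constant, or represent $\abstr{K}$ by a persistent structure supporting $O(1)$ updates and deferred joins; this is the other place where care is needed to avoid an unwanted extra factor of $n$.
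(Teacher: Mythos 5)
Your proposal is correct and follows essentially the same route as the paper's own proof: a structural recurrence in which sequences add, each \(\cwhile\) and each \(\ccreate\) containing a nested \(\ccreate\) multiplies by \(w\), a \(\ccreate\) with no nested \(\ccreate\) has its \(\afp{}\) fixpoint stabilise in \(O(1)\) steps (the paper asserts this via the same footnote, with no more rigor than you offer), and the top-level \(\afp{\lab\cmd,\lend}\) widening loop supplies the extra factor \(w\) giving the exponent \(d+1\). Your closing concern about the cost of operations on \(\abstr{K}\in\DIa^{\Lab}\) is also the paper's concern: it is exactly what the proposition's ``constant time'' hypothesis assumes away, and what the paper's subsequent subsection on representations of \(\DIa^{\Lab}\) discusses (yielding the weaker bounds \(O(tnw^{d})\) or \(O(n(w+t)w^{d-1})\) otherwise).
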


This is comparable to the \(O(n w^{d})\) complexity of the corresponding single-thread analysis, and certainly much better that the combinatorial explosion of interleaving-based analyses. Furthermore, this is beter than polynomial in an exponential number of states \cite{DBLP:conf/spin/FlanaganQ03}.

\begin{proof}

Let \(c(\lab \cmd,\ell')\), \(n(\lab \cmd,\ell')\) and \(d(\lab \cmd,\ell')\) and \(w(\lab \cmd,\ell')\) be the complexity of analyzing \(\lab \cmd,\ell'\), the size of \(\lab \cmd,\ell'\) and the nesting depth of \(\lab \cmd,\ell'\), the slowless of the widening and narrowing on \(\lab \cmd,\ell'\) respectively. 
Let \(a\) and \(k\) the complexity of assign and of reading \(\abstr{K}(\ell)\) respectively.

Proposition \ref{prop:complexity} is a straightforward consequence of the following lemma\footnote{The functions arguments are omitted in the name of simplicity.}:
\begin{lemma}
 The complexity of computing \(\asem{\lab \cmd,\ell'}\abstr{Q}\) is \(O(an(w+k)w^{d-1})\)
\end{lemma}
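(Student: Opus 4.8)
The plan is to prove the lemma by structural induction on the command $\cmd$, following the defining clauses of the abstract semantics in Fig.~\ref{fig:abstract}. Proposition~\ref{prop:complexity} then follows at once: the final algorithm applies one outermost fixpoint $\afp{\lab\cmd,\lend}$, whose widening stabilises in at most $w$ steps, multiplying the per-command bound by a further factor $w$; this turns $w^{d-1}$ into $w^{d}$, so the $O(an(w+k)w^{d-1})$ of the lemma becomes $O(nw^{d+1})$ once $a$ and $k$ are taken constant. Throughout I assume $w\geq 1$, so that $w^{d-1}$ is monotone in the nesting depth $d$; this is exactly what lets the additive constructs be bounded by the maximum of their subterms' depths.

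First I would settle the base and additive cases. For a basic statement the work is one abstract-store operation, of cost $O(a)$, together with maintenance of the guarantee component $\abstr{K}$: the clause for $\assign{lv:=e}$ rewrites $\abstr{K}(\ell)$ for every $\ell\in\abstr{L}$, and each such read-and-join costs $O(k)$. This $\abstr{K}$-maintenance, summed over the $O(n)$ basic statements and the labels they touch, is the origin of the additive $k$-term in the $(w+k)$ factor, while the store operations give the $a$-term. For a sequence $\li1\cmd_1;\li2\cmd_2$ the two subcommands are analysed once each and their costs add, the size adds, and the depth is the maximum of the two; by the induction hypothesis and monotonicity of $w^{d-1}$ the total stays within $O(an(w+k)w^{d-1})$. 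The conditional is identical, the two guards contributing only an additive $O(a)$ each.

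The two genuinely recursive cases are \cwhile{} and \ccreate{}. For $\li1\cwhile(\guardp)\{\li2\cmd\},\ell_3$ the semantics is $\guard{\guardn}\circ\loopwa\eomega$, and $\loopwa\eomega$ is evaluated by widening, which by the definition of the slowness $w$ reaches its fixpoint in at most $w$ iterations; each iteration recomputes $\asem{\li2\cmd,\ell_1}$ and two guards, so the cost is $\le w\cdot c(\cmd)+O(wa)$. Since the \cwhile{} increments the depth, $d(\cwhile)=d(\cmd)+1$, and its size grows only by $O(1)$, the extra factor $w$ is absorbed exactly by raising the exponent, giving $O(an(\cwhile)(w+k)w^{d(\cwhile)-1})$. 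The create case is $\glue{\abstr{Q}'}\circ\afp{\li2\cmd,\lend}\eomega\circ\afinit{\ell_2}$, where $\afinit{\ell_2}$, $\glue{}$ and $\aspawn{}$ each touch the store and the $\abstr{K}$ map once, at cost $O(a+nk)$, and $\afp{\li2\cmd,\lend}$ is again a widened fixpoint iterating $\aexe$.

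The main obstacle, and the only subtle point, is making the create case match the \emph{definition} of nesting depth, which counts a \ccreate{} towards $d$ only when its subcommand itself contains a \ccreate{} (the footnote). I would therefore split on whether $\cmd$ contains a \ccreate{}. If it does, $\afp{\li2\cmd,\lend}$ is a genuine fixpoint that may need up to $w$ iterations of $\aexe$, each costing $c(\cmd)$, contributing a factor $w$; simultaneously $d$ increases by one, so the factor is absorbed just as in the \cwhile{} case. If $\cmd$ contains no \ccreate{}, then re-executing the child cannot enlarge the guarantee it exports: for a create-free body the clauses show that $\aexe(\abstr{K})=\abstr{K}\sqcup\Delta$ for a $\Delta$ fixed by the ambient $\abstr{C},\abstr{I}$, so $\aexe$ is idempotent after one step and the fixpoint stabilises in $O(1)$ iterations, contributing no factor $w$ and, consistently, leaving $d$ unchanged. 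Proving this stabilisation rigorously (that the second iterate of $\aexe$ equals the first on a create-free body) is the delicate step; it is precisely what keeps the exponent equal to the nesting depth $d$ rather than to the total number of \ccreate{} statements, and hence what yields the advertised $O(nw^{d+1})$ overall bound.
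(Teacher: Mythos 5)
Your proof is correct and follows essentially the same route as the paper's: structural induction with the recurrences $c(lv{:=}e)=a$, additive costs for sequence and conditional, a factor $w$ for \(\cwhile\) absorbed by the depth increment, and the split in the \(\ccreate\) case according to whether the body contains a nested \(\ccreate\). The only difference is that you additionally justify the one-step stabilisation of \(\aexe\) on create-free bodies (idempotence, since \(\abstr{C}\) and \(\abstr{I}\) do not depend on \(\abstr{K}\) there), a fact the paper merely asserts in a footnote.
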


This lemma is proven by induction.\\
\(c(lv:=e) = a \)\\
\(c(\li1\cmd_1;\li2\cmd_2,\ell_3)=c(\li1\cmd_1,\ell_2)+c(\li2\cmd_2,\ell_3)\)\\
\(c(\li1\cwhile(\cond)\{\li2\cmd\},\ell_3) \leqslant w(\li1\cwhile(\cond)\{\li2\cmd\},\ell_3) \times c(\li2\cmd,\ell_1) \)\\

If \(\li2 \cmd\) does not contain any subcommand \(\ccreate\), then the fixpoint computation terminates in one step: 
\(c(\li1\ccreate(\li2 \cmd),\ell_3) = k + c(\li2 \cmd)\)\\
Else: 
\(c(\li1\ccreate(\li2 \cmd),\ell_3) = k + w(\li1\ccreate(\li2 \cmd),\ell_3)) \times c(\li2 \cmd)\)
\qed
\end{proof}

 \subsection{Complexity of Operations on \(\DIa^{\Lab}\)}
Notice that we have assumed that operation on \(\DIa^{\Lab}\) are done in constant time in Proposition \ref{prop:complexity}.
 This abstract store may be represented in different ways. The main problem is the complexity of the \(\assign{}\) function, which computes a union for each element in \(\abstr{L}\). The naive approach is to represent \(\abstr{K}\in\DIa^{\Lab}\) as a map from \(\wp(\Lab)\) to \(\DIa\). Assuming that operations on maps are done in constant time, this approach yields a \(O(t n  w^{d})\) complexity where \(t\) is the number\footnote{This is different to the number of threads since an arbitrary number of threads may be created at the same location.} of \(\ccreate\)s in the program.
 We may also represent \(\abstr{K}\in\DIa^{\Lab}\) as some map \(\abstr{K}_{M}\) from \(\wp(\Lab)\) to \(\DIa\) such that \(\abstr{K}(\ell)=\bigcup_{\abstr{L}\ni\ell}\abstr{K}_{M}(\abstr{L})\) and the function \(\assign{}\) is done in constant time : \( \assign{lv:=e} \qu\egdef\langle \acinter{\abstr{I}}\circ\aecra{lv:=e}{\abstr{C}}, \abstr{L},\fx{\abstr{K}_{M}}{\abstr{L}}{\abstr{K}_{M}(\abstr{L})\sqcup\iaecra{lv:=e}{\abstr{C}}} ,\abstr{I}\rangle\). Nevertheless, to access to the value \(\abstr{K}(\ell)\) may need up to \(t\) operations, which increases the complexity of \(\afinit{}\) and \(\glue{}\). The complexity is then \(O(n (w+t) w^{d-1})\).

\subsection{Compexity of Widdenning} 
The slowness of the widening and narrowing operators, \(w\), depends on the abstraction. Nevertheless, a widening is supposed to be fast. 

Consider the naive widening on intervals : \([x,x']\widen[y,y'] = [z',z']\) where \(z=
\begin{cases}
 x &\text{if }y\geqslant x\\
 - \infty &\text{else }
\end{cases}
\)
and  \(z'=
\begin{cases}
 x' &\text{if }y\leqslant x\\
 + \infty &\text{else }
\end{cases}
\).\\
This widening never widen more than two times on the same variable. Therefore this naive widening is linear in the worst case.

\subsection{Other form of parallelism}\label{improvement}

Our technique also applies to other forms of concurrency, Fig.~\ref{mfeatures} displays how Rugina and Rinard's \(\cpar\) constructor \citerinard{} would be computed with our abstraction. Correctness is a straightforward extension of the techniques described in this paper.

Our model handle programs that use \(\ccreate\) and \(\cpar\). Then, it can handle OpenMP programs with ``parallel'' and ``task'' constructors.

\extendedsyntax

\section{Conclusion}\label{section:conclusion}

We have described a generic static analysis technique for multithreaded programs parametrized by a single-thread analysis framework and based on a form of rely-guarantee reasoning. To our knowledge, this is the first such \emph{modular} framework: all previous analysis frameworks concentrated on a particular abstract domain. Such modularity allows us to leverage any static analysis technique to the multithreaded case. We have illustrated this by applying it to two abstract domains: an interval based one, and a richer one that also analyzes array overflows, strings, pointers \cite{AllamigeonGodardHymansSAS06}. Both have been implemented.

We have shown that our framework only incurred a moderate (low-degree polynomial) amount of added complexity. In particular, we avoid the combinatorial explosion of all interleaving based approaches.

Our analyses are always correct, and produce reasonably precise information on the programs we tested. Clearly, for some programs, taking locks/mutexes and conditions into account will improve precision. We believe that is an orthogonal concern: the non-trivial part of our technique is already present \emph{without} synchronization primitives, as should be manifest from the correctness proof of our \cname semantics. We leave the integration of synchronisation primitives with our technique as future work. However, locks whose sole purpose are to prevent data races (e.g. ensuring that two concurrent accesses to the same variable are done in some arbitrary sequential order) have no influence on precision. Taking locks into account may be interesting to isolate atomic blocks.

\section{Acknowledgment}
We thank Jean Goubault-Larrecq for helpful comments.

\end{document}